\newtheorem{theorem}{Theorem}
\newtheorem{lemma}[theorem]{Lemma}
\newtheorem{definition}[theorem]{Definition}
\newtheorem{proposition}[theorem]{Proposition}
\newtheorem{claim}[theorem]{Claim}
\newcommand\bra[1] {
	\langle#1|}
\newcommand\ket[1] {
	|#1\rangle}
\newcommand{\braket}[2]{\langle #1 | #2 \rangle}
\newcommand{\ketbra}[2]{|#1\rangle\!\langle #2|}
\newcommand{\QFT}{\text{\normalfont{QFT}}}
\begin{document}

\title{Efficient Quantum Simulation Algorithms in the Path Integral Formulation}
\author{
    Serene Shum$^{1}$, Nathan Wiebe$^{2,3,4}$
}
\affiliation{$^{1}$University of Toronto, Department of Physics, Toronto Canada}
\affiliation{$^{2}$University of Toronto, Department of Computer Science, Toronto Canada}
\affiliation{$^{3}$Pacific Northwest National Laboratory, High Performance Computing Division, Richland USA}
\affiliation{$^{4}$Canadian Institute for Advanced Study, Toronto Canada}

\begin{abstract}
    We provide a new paradigm for quantum simulation that is based on path integration that allows quantum speedups to be observed for problems that are more naturally expressed using the path integral formalism rather than the conventional sparse Hamiltonian formalism. 
    We provide a quantum algorithm for Lagrangians of the form $\sum_\gamma\frac{m_\gamma}{2}\dot{x}^2_\gamma - V(x)$, as well as two novel quantum algorithms based on Hamiltonian versions of the path integral formulation.  The Lagrangian path integral algorithm is based on a new rigorous derivation of a discrete version of the Lagrangian path integral. We show that for the time evolution of a system with $\eta$ particles in $D+1$ dimensions for time $T$, our Lagrangian simulation algorithm requires a number of queries to an oracle that computes the discrete Lagrangian that scales in the continuum limit as $\widetilde{O}(\eta D T^2/\epsilon)$, if $V(x)$ is bounded and finite and the wave function obeys appropriate position and momentum cutoffs.  This shows that Lagrangian dynamics can be efficiently simulated on quantum computers and opens up the possibility for quantum field theories for which the Hamiltonian is unknown to be efficiently simulated on quantum computers. Our first Hamiltonian path integral method breaks up the paths into short time steps. It is efficient under appropriate sparsity assumptions and requires a number of queries to  oracles that give the eigenvalues and overlaps between the eigenvectors of the Hamiltonian terms that scales as $t^{o(1)}/\epsilon^{o(1)}$ for simulation time $t$ and error $\epsilon$. The second approach uses long-time path integrals for near-adiabatic systems and has query complexity that scales as $O(1/\sqrt{\epsilon})$ if the energy eigenvalue gaps and simulation time is sufficiently long. 
\end{abstract}

\maketitle
\section{Introduction}
\setlength{\parindent}{15pt}

The central approach behind traditional algorithms for quantum simulation involves finding novel ways to compile, for a given Hamiltonian $\hat{H} \in \mathbb{C}^{2^n\times 2^n}$; evolution time $t$; uncertainty $\epsilon$ and distance measure dist$(\cdot)$, a quantum channel $\Lambda$ such that for all input states $\rho \in \mathbb{C}^{2^n\times 2^n}$ 
\begin{equation}
{\rm dist}(e^{-i \hat{H} t} \rho e^{i \hat{H} t} - \Lambda(\rho)) \le \epsilon.
\end{equation}
Traditionally, the aim is then to tailor the construction of $\Lambda$ such that the number of gates, ancillae and queries are minimized subject to the required accuracy constraint.  Such simulation of quantum systems was in fact the initial motivation for the proposal of a quantum model of computation in the first place. The fact that nature itself is quantum led Y. Manin and R. Feynman to both independently propose a new type of computing for the simulation of physical processes -- a \textit{quantum} computer rather than a classical one. In his 1980 book \textit{Computable and Non-Computable} \cite{manin1980}, Manin suggested mapping biomolecular processes to unitary operators in a finite-dimensional Hilbert space. A year later, Feynman conjectured in a keynote speech \cite{feynman1981} that quantum computers could be programmed to simulate any local quantum system efficiently. Feynman's claim was verified by S. Lloyd fifteen years later \cite{lloyd1996}. Since Lloyd's groundbreaking work many other simulation algorithms have since been constructed, such as the row-sparse Hamiltonian method \cite{aharonov2003}, multi-product Trotter formulas \cite{cw2012,lkw2019}, linear combination of unitaries \cite{cw2012}, quantum walks \cite{bn2016} and qubitization \cite{lc2019}.

A significant challenge arises, however, when we try to apply quantum simulation algorithms to quantum field theories.  Often in such applications the Hamiltonian is not easy to find, although the Lagrangian density may be.  The lack of a Hamiltonian formalism is not a significant drawback in these settings because the path integral formalism is often used to describe quantum dynamics in that setting.  This begs the question of whether efficient quantum algorithms can be designed to perform quantum simulation in the path integral formalism.

The primary motivation behind our work is to bridge this gap by providing quantum algorithms designed to simulate quantum dynamics on quantum computers within the path integral formalism.  We consider two different types of path integral simulation algorithms that Lagrangian-based and Hamiltonian path-integral based. 

Our Lagrangian-based algorithm assumes that we are given an oracle that can compute a discretized approximation to the Lagrangian functional $\mathscr{L}(x(t),\dot{x}(t))$, which is a function of position and velocity.  In this latter case, it is worth noting that the Lagrangian is not an operator like the Hamiltonian is, so showing that we can efficiently simulate dynamics within this such an input is a substantial departure from conventional quantum simulation literature.  Interestingly, our approach requires that the discretization chosen for space and time for our paths needs to be taken to be proportional to each other for our approach.  This is because the rigorous discrete Lagrangian path integral proof that we provide needs to have this requirement to have the precise classical Lagrangian to show up in discrete systems.  This new derrivation is needed because discretization of continuum Lagrangian path integration methods can lead to highly non-unitary dynamics that can lead to impractical quantum algorithms even with techniques like oblivious amplitude amplification and block-encodings~\cite{berry2015hamiltonian}.  This discretization requirement means that the Lagrangian methods can only yield arbitrarily small error in the continuum limit.

Our first class of Hamiltonian simulation algorithms works within a Hamiltonian formalism wherein we assume that we can write the Hamiltonian as a sum of terms
$   \hat{H} = \sum_j \hat{H}_j,$
where the eigenvalues for each $\hat{H}_j$ can be efficiently computed and the magnitudes of the inner products between eigenvectors of different $\hat{H}_j$ terms can also be efficiently computed.  The current inability to perform such simulations using quantum computers renders certain quantum dynamical systems, such as systems that are nearly adiabatic, difficult to simulate using existing quantum simulation techniques. On the other hand, our approach can simulate such systems efficiently because near-adiabatic evolutions can be concisely described using Hamiltonian path integrals~\cite{cheung2011improved,mackenzie2006perturbative}.

Our paper is laid out as follows.  First, we provide a review of the path integral formalism and discuss how the Lagrangian emerges out of the Schr\"{o}dinger equation for continuous variable dynamical systems in Section~\ref{sec:lagrange}. In Section~\ref{sec:LagAlg} we provide a derivation of a discrete version of the Lagrangian path integral formalism and then show how to use to efficiently simulate these dynamics.  Next, we review the Hamiltonian version of the path integral in Section~\ref{sec:HamPathInt}.  We further provide in Section~\ref{sec:HamPathInt} the details of our quantum algorithm for simulating quantum dynamics within the Hamiltonian path integral formalism that works by dividing the paths that comprise the dynamics into a large number of short paths and demonstrate that we can perform the simulation using a near-optimal number of queries to our oracles through the linear combination of unitaries (LCU) method and robust oblivious amplitude amplification (ROAA).
Section~\ref{sec:LTHamPathInt} provides an alternative grouping of paths wherein we break up our dynamics into a sum of long, rather than short paths, for cases that are near-adiabatic and propose an LCU-based strategy for performing such simulation.  We then conclude and discuss future applications.

\section{Review of Lagrangian Path Integral Formalism}\label{sec:lagrange}

The path integral formulation of quantum mechanics was worked out by R. Feynman in his PhD thesis in 1942 \cite{feynman1942}, based on ideas previously developed by P. Dirac \cite{dirac1932}. The Feynman path integral presents an alternate way of viewing the time evolution of a quantum state. The initial motivation was to formulate a quantum analogue to the stationary action principle (also called the least action principle) in classical mechanics, which states that the trajectories $x(t)$ that solve the Lagrange equations of motion are the stationary points of the action functional 
\begin{equation}
S = \int \mathscr{L}(x(t), \dot{x}(t)) dt,
\end{equation}
where $\mathscr{L}(x(t), \dot{x}(t))$ is the Lagrangian of the system. 

The stationary action principle is useful for solving the dynamics of many classical systems, especially those for which the Hamiltonian is not known and the coordinates and their conjugate momenta cannot be defined. However, if we want to quantize such systems, the canonical quantization procedure cannot be applied here due to the inability to describe them in the Hamiltonian formulation. The path integral method overcomes this issue because it does not require a Hamiltonian or momentum operator. 

The transition amplitude of an initial state $\ket{x_a}$ at time $t_a$ to a final state $\ket{x_b}$ at time $t_b$ is given in the path integral formulation as
\begin{equation}
    (x_a, t_a | x_b, t_b) = \int \mathcal{D} x \;e^{\frac{i}{\hbar} S(x, \dot{x})}
\end{equation}
where the functional integral $\int \mathcal{D} x$ is an integral over all trajectories $x(t)$ such that $x(t_a) = x_a$ and $x(t_b) = x_b$. Unlike the classical case where a particle can only travel along a single trajectory, the quantum particle's ``motion'' can be viewed as an integral over \textit{all} possible trajectories, where each trajectory $x(t)$ picks up a corresponding phase $\frac{S}{\hbar}$ that interferes with the phases of other trajectories. The transition amplitude comes from the interference from integrating over all these paths with varying phases. As $\hbar \rightarrow 0$ (i.e., taking the classical limit), the integrand $e^{\frac{i}{\hbar} S (x, \dot{x})}$ oscillates extremely rapidly, that is, small perturbations in $x$ will generally produce very large changes in $e^{\frac{i}{\hbar} S (x, \dot{x})}$. Most regions of the integral will go to zero because of the phase cancellation from this wildly fluctuating phase, but the regions of the integral that will contribute the most will be those where the phase is stationary, since there will mostly be constructive interference in those regions. This gives back the classical trajectory in the small $\hbar$ limit. This intuition can be made more rigorous using the stationary phase approximation \cite{ah1977}.

While Feynman \cite{feynman1942} originally began with the stationary action principle and then showed that the Schr\"odinger equation is recovered in the case where the Lagrangian is $\mathscr{L} = \frac{1}{2} m \dot{x}^2 - V(x)$, it is also possible to begin with the Schr\"odinger formulation and derive the expression for the path integral from there. Quantum mechanics usually starts from the Schr\"odinger equation which also applies in discrete as well as in the continuum, so this is the approach that we will take. We will show this derivation in Appendix \ref{sec:appendix_a} because the quantum algorithms in this paper closely follow the ideas shown here.

The unitary yielded by an arbitrary quantum circuit can also be expressed as a sum over paths as well~\cite{aaronson2010bqp}. The amplitude of a basis state after a unitary circuit is applied can be found by summing over all the possible ``paths'' to arrive at that state. If we have a quantum circuit acting on $n$ qubits, composed of a sequence of unitaries $\hat{U}_1, \hat{U}_2, \ldots, \hat{U}_N$, the unitary transformation corresponding to the full circuit is
\begin{equation}
    \hat{U} = \hat{U}_N \hat{U}_{N-1} \ldots \hat{U}_1
\end{equation}

We can insert a resolution of identity between each unitary in the sequence, as in the Feynman path integral derivation. Let $\ket{j}$ states ($j \in \{0, \ldots, 2^n-1\}$) be $n$-qubit computational basis states, then we can write 
\begin{align}
    \hat{U} = \sum_{j_1=0}^{2^n-1} \ldots \sum_{j_{N-1}=0}^{2^n-1} \hat{U}_N \ket{j_{N-1}} \bra{j_{N-1}} \hat{U}_{N-1} \ldots \ket{j_1} \bra{j_1} \hat{U}_1.
\end{align}
The matrix element $\bra{k} \hat{U} \ket{\ell}$, where $\ket{k}$ and $\ket{\ell}$ are computational basis states, is
\begin{equation}
    \sum_{j_1=0}^{2^n-1} \ldots \sum_{j_{N-1}=0}^{2^n-1} \bra{k} \hat{U}_N \ket{j_{N-1}} \bra{j_{N-1}} \hat{U}_{N-1} \ldots \ket{j_1} \bra{j_1} \hat{U}_1 \ket{\ell}.
\end{equation}
If $\hat{U}_j$ are row-computable---meaning that there exists an efficient algorithm for computing the values of each of the non-zero matrix elements in every row---then we can compute each of the terms in the path integral expansion as a product of $\bra{j_i} \hat{U}_i \ket{j_{i-1}}$ matrix elements. This idea of decomposing quantum circuits into paths was used in \cite{hl2021} as a method of building a classical simulator for quantum circuits. While there are exponentially many paths, we can estimate the mean value over these paths by Monte Carlo simulation.  Specifically, we can randomly draw paths from the sum and then compute the mean over the paths so drawn and then estimate the sum by multiplying the mean by the total number of paths.  This allows quantum circuits to be simulated using path integrals in polynomial space, which in turn is instrumental to showing that $\BQP$ is contained in $\PSPACE$ \cite{bv1993}. 

Despite this, polynomial time simulation is generally not possible because of the variance over the values of the paths.  Specifically, the fact that many complex numbers with wildly varying phases are summed over leads to large variance in the estimate in the mean; in the worst case, the number of required samples scales exponentially in the length of the circuit \cite{bv1993}.  This issue, known as a sign problem, is a ubiquitous problem with path integral methods and a great hope of quantum computers is their ability to sidestep sign problems in simulations.

\section{Lagrangian Path Integral Simulations}\label{sec:LagAlg}
We propose a quantum algorithm here to simulate the dynamics of a system given oracle access to a discretized version of the infinitesimal action $\mathscr{L} dt = \left( \frac{m}{2} \dot{x}^2 - V(x) \right) dt$ as a phase.
The Lagrangian is invariant under Lorentz transformations, while the Hamiltonian changes in different reference frames. This makes the Lagrangian much easier to work with in relativistic theories such as quantum field theories. There are also cases where the Lagrangian is easier to derive than the Hamiltonian. In the case where we do not know a system's Hamiltonian, it would be useful to still be able to simulate it. While we begin with a Hamiltonian in order to derive the expression for the Lagrangian path integral, the derived expression can then be used assuming oracle access to the Lagrangian, without requiring knowledge about the Hamiltonian terms.

One of the major downsides of the Lagrangian formalism, especially in the context of quantum computing, is that the unitarity of the time evolution is not readily apparent. We know that the evolution should be unitary because it is derived from the solution to the Schr\"odinger equation, but this derivation was done in infinite, continuous space. One of the major challenges that we solve here is that we derive a unitary Lagrangian evolution in the discrete setting in a restricted domain.

The Hamiltonian for a system of $\eta$ particles in $D$ spatial dimensions is
\begin{equation}
    \hat{H} = \sum_{\gamma=0}^{\eta D-1} \frac{1}{2m_\gamma} \hat{P}^2_\gamma + \hat{V}(\vec{x})
\end{equation}
where $\hat{P}_\gamma$ is the momentum operator for the $\gamma^{\rm th}$ direction/particle, $m_\gamma$ is the mass of particle corresponding to the $\gamma^{\rm th}$ index, and $\hat{V}$ is a potential that depends on time $t$ and the particle position vector $\vec{x} \in \mathbb{R}^{\eta D}$. We will have to redefine these continuous operators so that they act on a finite, discrete space. 

Let $[0,x_{{\rm max}, \gamma})$ be the position domain for each of the $\eta D$ coordinates. We will discretize the space and represent the positions with the computational basis states of $ \eta D n$ qubits, where we divide the Hilbert space into $\eta D$ registers of $n$ qubits. We have $2^n$ positions that we can represent for each of the $\eta D$ coordinates, where the computational basis state $\ket{q_\gamma}_\gamma$ ($q_\gamma \in \{0,...,2^n-1\}$) in the $\gamma^{\rm th}$ $n$-qubit register represents the position of the $\gamma^{\rm th}$ coordinate
\begin{equation}
    x_\gamma = \frac{q_\gamma x_{{\rm max},\gamma}}{2^n}.
\end{equation}
For simplicity, let
\begin{align}
    \ket{q} := \bigotimes_{\gamma=0}^{\eta D-1} \ket{q_\gamma}_\gamma.
\end{align}
Let 
\begin{equation}
    \Delta_{x, \gamma} = \frac{x_{\max,\gamma}}{2^n}
\end{equation} be the spacing between the positions represented by the computational basis states.  It may seem sensible that we would choose the spacing of all the grids for each of the coordinates to be the same, however, we will see later that in order accommodate arbitrary masses it will be useful to vary the spacing of each of the coordinates separately.

Firstly, we need to rewrite the Hamiltonian $\sum_\gamma \frac{1}{2m_\gamma} \hat{P}_\gamma^2 + \hat{V}$ so that it acts on our qubit space. The potential operator $\hat{V}$ only depends on position, so it is diagonal in the position basis (the computational basis). The computational basis states represent positions from $x = 0$ to $x = x_{\max,\gamma} - \Delta_x$, so the position eigenvalue corresponding to each computational basis state $\ket{q}$ is $q \Delta_x$. Thus, we can define the discrete position operator as follows:

\begin{definition}[Discrete position operator]\label{def:X}
    Let $n$ be a positive integer, $\gamma \in \{0, \ldots \eta D-1 \}$ where $\eta$ and $D$ are the number of particles and dimensions respectively, and $x_{\max,\gamma} \in \mathbb{R}$.  We then define the discrete position operator acting on the $\gamma^{\rm th}$ coordinate to be
        \begin{equation}
            \hat{X}_\gamma := \sum_{q=0}^{2^n-1} q \Delta_x \ket{q} \bra{q}
        \end{equation}
    acting on the $\gamma^{\rm th}$ $n$-qubit register, where $\Delta_{x,\gamma} = x_{\max, \gamma} / 2^n$.
\end{definition}
We similarly define the discrete momentum operator to be given by the Fourier dual of the position operator. 

\begin{definition}[Discrete momentum operator]\label{def:discP}
    Let $n$ be a positive integer and let $\gamma \in \{0, \ldots \eta D -1 \}$ for integer $\gamma$ where $\eta$ and $D$ are the number of particles and dimensions, respectively. We then define the discrete momentum operator for the $\gamma^{\rm th}$ particle/direction to be
        \begin{equation}
            \hat{P}_\gamma := \frac{2 \pi}{x_{\max,\gamma} \Delta_{x,\gamma}} \QFT_\gamma \hat{X}_\gamma \QFT_\gamma ^{\dagger},
        \end{equation}
    where $\QFT_\gamma$ is the quantum Fourier transform acting on the $\gamma^{\rm th}$ $n$-qubit register and $\hat{X}_\gamma$ is the discrete position operator acting on the $\gamma^{\rm th}$ coordinate as given in Definition~\ref{def:X}.
\end{definition}

In order to justify the discrete momentum operator $\hat{P}$, we will demonstrate that the above discrete momentum is equal to the generator of position translations,
that is, we will define $\hat{P}_\gamma$ as a Hermitian operator such that
\begin{align}
    e^{-i\hat{P}_\gamma \Delta_{x,\gamma}} \ket{q}_\gamma = \ket{q+1}_\gamma \label{p_generator}
\end{align}
where the computational basis states are taken mod $2^n$, so $e^{-i \hat{P}_\gamma \Delta_{x,\gamma}} \ket{2^n-1}_\gamma = \ket{0}_\gamma$. We will not impose boundary periodic boundary conditions on the wave function; the assumption is that $x_{\max,\gamma}$ has been chosen to be large enough that the dynamics do not take the state close to the edges of the position domain. It turns out that by defining the momentum operator this way, the discrete momentum operator can be written (with some constants) as the quantum Fourier transform of the discrete position operator, similar to the continuous case.

Recall that the quantum Fourier transform of an $n$-qubit computational basis state $\ket{j}$ is defined as
    \begin{align}
       \QFT \ket{j} &= \frac{1}{\sqrt{2^n}} \sum_{k=0}^{2^n-1} e^{i \frac{2 \pi jk}{2^n}} \ket{k}
    \end{align}
Let $\QFT_\gamma$ be the QFT acting on the $\gamma^{\rm th}$ $n$-qubit register. It is then easy to see that $\QFT_\gamma \ket{q}_\gamma$ is an eigenvector of $e^{-i\hat{P}_\gamma \Delta_{x,\gamma}}$:
\begin{align}
    e^{-i\hat{P}_\gamma \Delta_{x,\gamma}} \QFT_\gamma \ket{q}_\gamma &= \frac{1}{\sqrt{2^n}} e^{-i \hat{P}_\gamma \Delta_x}\sum_{j=0}^{2^n-1} e^{i \frac{2 \pi qj}{2^n}} \ket{j}_\gamma \\
    &= \frac{1}{\sqrt{2^n}} \sum_{j=0}^{2^n-1} e^{i \frac{2 \pi qj}{2^n}} \ket{j+1}_\gamma \\
    &= \frac{1}{\sqrt{2^n}}e^{-i \frac{2 \pi q}{2^n}}\sum_{j=0}^{2^n-1} e^{i \frac{2 \pi q(j+1)}{2^n}} \ket{j+1}_\gamma \\
    &= e^{-i \frac{2 \pi q}{2^n}} \QFT_\gamma \ket{q}_\gamma
\end{align}
This means that QFT$\ket{q}_\gamma$ is also an eigenvector of $\Delta_{x,\gamma} \hat{P}_\gamma$ with eigenvalue $\frac{2 \pi q}{2^n}$, so it is an eigenvector of $\hat{P}_\gamma$ with eigenvalue $\frac{2 \pi q}{2^n \Delta_{x,\gamma}}$. Thus, the definition of $\hat{P}_\gamma$ given in Def. \ref{def:discP} matches the definition of the momentum operator as the generator of position translations.

We will follow the Feynman path integral derivation, but adapted for this discrete, truncated space. The unitary time evolution operator for time $T \in \mathbb{R}$ is
\begin{align}
    \hat{U}(T) =e^{-i \hat{H} T } =e^{-i T \left(  \sum_\gamma \frac{1}{2m_\gamma} \hat{P}^2_\gamma + \hat{V}(\vec{x}) \right) }
\end{align}
where now $\hat{P}_\gamma$ is the discrete momentum operator defined above, and $\hat{V}$ is a function of $\vec{x}$, the length-$\eta D$ vector of discrete position operators $\hat{X}_\gamma$. Given some evolution time $T$, the transition amplitude from some initial state $\ket{q_a} := \bigotimes_\gamma \ket{q_{a,\gamma}}_\gamma$ at time $0$ to state $\ket{q_b} := \bigotimes_\gamma \ket{q_{b,\gamma}}_\gamma$ at $T$ is 
\begin{align}
    \bra{q_b} e^{-i T \sum_{\gamma'} \left( \frac{1}{2m_{\gamma'}} \hat{P}^2_{\gamma'} +  \hat{V}(\vec{x}) \right) }  \ket{q_a}
\end{align}
Following the standard path integral derivation, we will divide the total evolution time $T$ into $r$ time steps $T/r$:
\begin{align}
    & \quad\bra{q_b}  e^{-i T \sum_{\gamma'} \left( \frac{1}{2m_{\gamma'}} \hat{P}^2_{\gamma'} +  \hat{V}(\vec{x}) \right) } \ket{q_a} \\
    & = \bra{q_b}  \left( \prod_{k=0}^{r-1} e^{-i \frac{T}{r} \sum_\gamma\left( \frac{1}{2m_\gamma} \hat{P}^2_\gamma +  \hat{V}(\vec{x})  \right) } \right) \ket{q_a}
\end{align}
We will see that in order to preserve the unitarity of the evolution and still get back the Lagrangian in the phase in the continuum limit, we need to specifically choose the duration of the time steps such that
\begin{equation}
    r = \frac{2 \pi T}{m_\gamma x_{\max,\gamma} \Delta_{x,\gamma}}~\forall~\gamma
\end{equation}
It may seem a bit strange that such a specific value of $r$ needs to be chosen, since in the continuous case we just take any large value of $r$ and take the limit as $r \rightarrow \infty$. However, in the case of discrete position and momentum, as we will see, it turns out that we are also need to choose $r$ to take only specific values, in order to keep the evolution unitary while still ending up with something in the phase that resembles a Lagrangian. Since this is a quantum algorithm, keeping the evolution exactly unitary is advantageous because it keeps the algorithm straightforward and easy to analyze; we will not have to use things like linear combinations of unitaries and amplitude amplification to implement non-unitary transitions. 

The central reason for this choice of spacing arises from the following result for generalized Gauss sums.  This result allows us to, under these assumptions, perform a mathematical trick similar to the integration of the Gaussian integral used to remove the momentum from the continuous path integral derivation.
\begin{theorem}[Reciprocity theorem for generalized Gauss sums~\cite{berndt1998}] For integers $a, b, c$ such that $ac\neq0$ and $ac+b$ is even, the following holds:
    \begin{align}
        \sum_{n=0}^{|c|-1} e^{i \frac{\pi}{c} (an^2 + bn)} = \left| \frac{c}{a} \right|^{\frac{1}{2}} e^{i \frac{\pi}{4} \left( \text{\normalfont{sgn}}(ac) - \frac{b^2}{ac} \right)} \sum_{n=0}^{|a|-1} e^{-i \frac{\pi}{a} (cn^2 + bn)} 
    \end{align}
    \label{gauss_sum}
\end{theorem}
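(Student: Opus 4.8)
The plan is to prove this reciprocity law by contour integration, the standard route for quadratic Gauss sum reciprocity. First I would reduce to the case $a,c>0$: conjugating the identity sends $c\mapsto -c$, and combining conjugation with the substitution $(a,b)\mapsto(-a,-b)$ flips the sign of $a$, so the four sign sectors of $(a,c)$ are interchanged by these symmetries (at the harmless cost of $b\mapsto -b$, which enters the final formula only through $b^2$ and a reindexing of the reciprocal sum). It is exactly this sign bookkeeping that will produce the $\mathrm{sgn}(ac)$ in the phase and the absolute values $|c/a|^{1/2}$, $|c|$, $|a|$; so I fix $a,c>0$ from here on.

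With $\phi(z):=az^2+bz$, I would integrate the meromorphic kernel
\begin{equation}
F(z):=\frac{e^{i\pi\phi(z)/c}}{e^{2\pi i z}-1}
\end{equation}
counterclockwise around a parallelogram whose two long sides $L$ and $L+c$ are the lines through $-\tfrac12$ and $c-\tfrac12$ in the direction $e^{i\pi/4}$, closed off by short arcs pushed to $\pm\infty$. The denominator has simple poles at the integers with residue $1/(2\pi i)$, so the enclosed poles are exactly $n=0,1,\dots,c-1$ and the residue theorem gives $\oint F=\sum_{n=0}^{c-1}e^{i\pi\phi(n)/c}$, the left-hand side of the claim.

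The crux is relating the two slanted sides. A direct computation gives $\phi(z+c)=\phi(z)+2acz+c(ac+b)$, hence
\begin{equation}
e^{i\pi\phi(z+c)/c}=e^{2\pi i a z}\,e^{i\pi(ac+b)}\,e^{i\pi\phi(z)/c}=e^{2\pi i a z}\,e^{i\pi\phi(z)/c},
\end{equation}
where the parity hypothesis ``$ac+b$ even'' is precisely what makes $e^{i\pi(ac+b)}=1$ and kills the spurious cross phase. Since $e^{2\pi i(z+c)}=e^{2\pi i z}$, substituting $z=w+c$ on the side $L+c$ and subtracting yields $F(w+c)-F(w)=e^{i\pi\phi(w)/c}\sum_{k=0}^{a-1}e^{2\pi i kw}$ via the finite geometric series $\tfrac{e^{2\pi i aw}-1}{e^{2\pi i w}-1}=\sum_{k=0}^{a-1}e^{2\pi i kw}$. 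The connecting arcs vanish because along $e^{i\pi/4}$ one has $z^2\sim i|z|^2$, so $e^{i\pi a z^2/c}$ decays like the Gaussian $e^{-\pi a|z|^2/c}$ for $a,c>0$. Collecting terms gives
\begin{equation}
\sum_{n=0}^{c-1}e^{i\pi\phi(n)/c}=\sum_{k=0}^{a-1}\int_{L} e^{\frac{i\pi}{c}\left(aw^2+(b+2ck)w\right)}\,dw.
\end{equation}
Completing the square as $aw^2+(b+2ck)w=a\big(w+\tfrac{b+2ck}{2a}\big)^2-\tfrac{(b+2ck)^2}{4a}$ and evaluating the Fresnel integral $\int_L e^{i\pi a u^2/c}\,du=e^{i\pi/4}\sqrt{c/a}$ (parametrize $u=se^{i\pi/4}$), then expanding $\tfrac{(b+2ck)^2}{4ac}=\tfrac{b^2}{4ac}+\tfrac{ck^2+bk}{a}$, separates the $k$-independent phase $e^{-i\pi b^2/(4ac)}$ from the reciprocal sum $\sum_{k=0}^{a-1}e^{-i\pi(ck^2+bk)/a}$, reproducing the prefactor $\sqrt{c/a}\,e^{i\frac{\pi}{4}(1-b^2/(ac))}$ that matches the claim for $a,c>0$.

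The main obstacle I anticipate is not any single algebraic step but the careful orientation and sign bookkeeping: choosing the slant $e^{\pm i\pi/4}$ according to $\mathrm{sgn}(c)$ so the arcs decay, tracking the contour orientation so no stray minus sign survives, and assembling the four sign sectors into the single uniform factor $\mathrm{sgn}(ac)$. The only genuinely analytic point is a uniform Gaussian tail bound certifying that the connecting arcs vanish in the limit; everything else reduces to the quasi-periodicity identity, a finite geometric series, and one Fresnel integral.
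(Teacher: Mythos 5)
The paper does not actually prove this theorem; it imports the result wholesale from the cited reference (Berndt, Evans and Williams), so there is no internal proof to compare against. Your contour-integration argument is correct — the quasi-periodicity computation where the hypothesis that $ac+b$ is even kills the cross phase $e^{i\pi(ac+b)}$, the geometric-series splitting of the side difference into $a$ Fresnel integrals, and the reduction of the four sign sectors via conjugation and reindexing (which again uses the parity hypothesis) all check out — and it is essentially the classical proof given in that cited reference, the only gloss being the routine contour shift needed to evaluate the Fresnel integral after completing the square.
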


With this result in hand we can prove the following result, which gives a new derivation of a discrete analogue of the Feynman path integral.  As a further benefit, the derivation also is fully rigorous as all the operators used are bounded operators on finite-dimensional Hilbert spaces and thereby avoid the technical challenges posed in the continuous case.
\begin{lemma}\label{lem:lagrangian}
Let $\hat{P}_\gamma$ be the discrete momentum operator given in Definition~\ref{def:discP} and let $\hat{V}$ be a function of the discrete position operators $\hat{X}_\gamma$ given in Def. \ref{def:X}.  Further, let the duration of each of a short time step, $\tau: = T/r$, satisfy 
$\tau = m_\gamma x_{\max,\gamma} \Delta_{x,\gamma} /(2\pi)$ for all $\gamma$, let $\vec{x}_k\in \mathbb{R}^{\eta d}$ be a vector with entries $x_{k,\gamma} = q_{k,\gamma} \Delta_{x,\gamma}$ for $q_{k,\gamma} \in \{ 0, \ldots, 2^n-1\}$. Let the discrete Lagrangian be for any positions $\vec{x}_{k+1},\vec{x}_k \in \mathbb{R}^{\eta D}$
\begin{equation}
\mathscr{L}(\vec{x}_{k+1}, \vec{x}_k):=\sum_\gamma\frac{m_\gamma}{2} \left(\frac{x_{k+1,\gamma} - x_{k,\gamma}}{\tau} \right)^2 - V(x_{k,\gamma}).  
\end{equation}   Then, the time evolution operator can be expressed as the sum over all paths for each of the $\eta D$ coordinates:
\begin{align}
e^{-i(\sum_\gamma\frac{1}{2m_\gamma} \hat{P}^2_\gamma +\hat{V})\tau} &= \left( \frac{e^{-i \frac{\pi}{4}}}{\sqrt{2^n}} \right)^r \sum_{q_{a,\gamma}=0}^{2^n-1} \sum_{q_{1,\gamma}=0}^{2^n-1}  \cdots \sum_{q_{b,\gamma}=0}^{2^n-1}  e^{i \sum_{k=0}^{r-1} \mathscr{L}(\vec{x}_{k+1} ,\vec{x}_k ) \tau} \ket{q_b}\bra{q_a} \nonumber\\
&\qquad+\mathcal{O}(\eta D \max_{\vec{x}} |V(\vec{x})|T^2/(\min_\gamma \Delta_{x,\gamma}^2 \min_\gamma m_\gamma r)).
\end{align}
Here we take for notational convenience $q_0:=q_a$ and $q_r:=q_b$.
\end{lemma}

\begin{proof}
Our first step involves inserting resolutions of the identity $\openone = \bigotimes_{\gamma} \sum_{q_{k,\gamma}} \ket{q_{k,\gamma}} \bra{q_{k,\gamma}}:= \sum_{q_k} \ketbra{q_k}{q_k} $ between the time evolution operators in each time step of duration $\tau := T/r$:
\begin{align}
    & \quad \bra{q_b} \left( \prod_{k=0}^{r-1} e^{-i \tau \sum_{\gamma}\left( \frac{1}{2m_{\gamma}} \hat{P}^2_{\gamma} +  \hat{V}(\vec{x}_k)  \right) } \right) \ket{q_a}  \\
    &= \prod_{k=0}^{r-1} \left( \sum_{q_k} \bra{q_{k+1}} e^{-i \tau \sum_{\gamma} \left( \frac{1}{2m_{\gamma}} \hat{P}^2_{\gamma} +  \hat{V}(\vec{x}_k)  \right) } \ket{q_{k}} \right)
\end{align}
Consider one time slice $\bra{q_{k+1}} e^{-i \tau \hat{H}} \ket{q_k}$:
\begin{align}
    & \quad \bra{q_{k+1}} e^{-i \tau \sum_{\gamma'} \left( \frac{1}{2m_{\gamma'}} \hat{P}^2_{\gamma'} + \hat{V} (\vec{x}) \right)} \ket{q_k} \\
    &= \bra{q_{k+1}} e^{-i \tau \sum_{\gamma} \frac{1}{2m_{\gamma}} \hat{P}^2_{\gamma} } e^{-i \tau \hat{V}(\vec{x})} \ket{q_k} + \epsilon_{Trot_1} \\
    &=\bra{q_{k+1}} \left( e^{-i \tau  \sum_\gamma \frac{1}{2m_\gamma} \hat{P}^2_\gamma } \bigotimes_{\gamma'=0}^{\eta D-1} \sum_{j_{\gamma'}=0}^{2^n-1} \QFT \ket{j_\gamma} \bra{j_\gamma} \QFT^\dagger\right) e^{-i \tau V(\vec{x}_k)} \ket{q_k} + \epsilon_{Trot_1}\label{derivation1} \\
    &= e^{-i \tau V(\vec{x}_k)} \bigotimes_{\gamma=0}^{\eta D-1} \sum_{j_\gamma=0}^{2^n-1}\bra{q_{k+1, \gamma}} \left( e^{-i \tau \frac{1}{2m_\gamma} \left(\frac{2 \pi j_\gamma}{x_{\max,\gamma}} \right)^2 } \QFT \ket{j_\gamma} \bra{j_\gamma} \QFT^\dagger\right)  \ket{q_{k, \gamma}} + \epsilon_{Trot_1}\\
    &= \frac{1}{2^{n\eta D}} e^{-i \tau V(\vec{x}_k)}\prod_{\gamma=0}^{\eta D-1}\sum_{j_\gamma=0}^{2^n-1} e^{-i \tau  \frac{2 \pi^2}{m_\gamma x_{\max,\gamma}^2} j_\gamma^2 } e^{i \frac{2 \pi (q_{k+1,\gamma}-q_{k,\gamma})}{2^n}j_\gamma }+ \epsilon_{Trot_1}\\
    &= \frac{1}{2^{n\eta D}} e^{-i \tau V(\vec{x}_k)}\prod_{\gamma=0}^{\eta D-1} \sum_{j=0}^{2^n-1} e^{i \frac{\pi}{2^n} \left( -\frac{2 \pi \tau}{m_\gamma x_{\max,\gamma} \Delta_{x,\gamma} } j_\gamma^2 + 2(q_{k+1,\gamma}-q_{k,\gamma})j_\gamma \right)}+ \epsilon_{Trot_1}
\end{align}
$\epsilon_{Trot_1}$ is the Trotter error from one time step.  This can be bounded by the sum of the norms of the commutators multiplied by $T^2/r$~\cite{childs2021theory}.  Upper bounding all commutators by the worst commutator norm, we find
\begin{equation}
    |\epsilon_{Trot_1}| \in \mathcal{O}\left(\max_x |V(x)| \eta D \|P_\gamma\|^2 t^2/(\min_\gamma m_\gamma r) \right)=\mathcal{O}(\eta D \max_x |V(x)|t^2/(\min_\gamma \Delta_{x,\gamma}^2 \min_\gamma m_\gamma r))
\end{equation}

Since $r = \frac{2 \pi t}{m_\gamma x_{\max,\gamma} \Delta_{x,\gamma}}$ for all $\gamma$, for each $\gamma \in \{0, \ldots, \eta D-1\}$ we get a factor of
\begin{equation}
    \frac{1}{2^n}\sum_{j_\gamma=0}^{2^n-1} e^{i \frac{\pi}{2^n} \left( -\frac{2 \pi \tau}{m_\gamma x_{\max,\gamma} \Delta_{x,\gamma} } j_\gamma^2 + 2(q_{k+1,\gamma}-q_{k,\gamma})j \right)} = \frac{1}{2^n} \sum_{j=0}^{2^n-1} e^{i \frac{\pi}{2^n} \left( -j^2 + 2 (q_{k+1,\gamma} - q_{k,\gamma}) j_\gamma \right)} \label{derivation2}
\end{equation}
The above is a generalized quadratic Gauss sum, and we can use  Theorem~\ref{gauss_sum} to evaluate the summation.  Specifically,
In \eqref{derivation2}, we have $a=-1$, $b = 2(q_{k+1,\gamma} - q_{k,\gamma})$ and $c=2^n$. Since $-ac = 2^n \neq 0$ and $ac+b = 2^n + 2(q_{k+1,\gamma}-q_{k,\gamma})$ is even, we can use Theorem \eqref{gauss_sum}:
\begin{align}
    \frac{1}{2^n} \sum_{j_\gamma=0}^{2^n-1} e^{i \frac{\pi}{2^n} \left( - j_\gamma^2 + 2(q_{k+1,\gamma}-q_{k,\gamma})j_\gamma \right)} &= \frac{1}{2^n}  \sqrt{2^n} e^{i \frac{\pi}{4} \left( -1 + \frac{4(q_{k+1,\gamma}-q_{k,\gamma})^2}{2^n} \right) } \\
    &= \frac{e^{-i \frac{\pi}{4}}}{\sqrt{2^n}}   e^{i\frac{\pi}{4} \frac{4 m_\gamma \Delta_{x,\gamma}^2 (q_{k+1,\gamma}-q_{k,\gamma})^2}{2 \pi \tau} } \\
    &=  \frac{e^{-i \frac{\pi}{4}}}{\sqrt{2^n}} e^{i \tau \left( \frac{m_\gamma}{2} \left(\frac{x_{k+1,\gamma} - x_{k,\gamma}}{\tau} \right)^2\right)}
\end{align}
We then have that
\begin{align}
    \frac{1}{2^{n\eta D}} e^{-i \tau V(\vec{x}_k)}\prod_{\gamma=1}^{\eta D-1}\sum_{j=0}^{2^n-1} e^{i \frac{\pi}{2^n} \left( -\frac{2 \pi \tau}{m_\gamma x_{\max,\gamma} \Delta_{x,\gamma} } j_\gamma^2 + 2(q_{k+1,\gamma}-q_{k,\gamma})j_\gamma \right)} &= \left(\frac{e^{-i \frac{\pi}{4}}}{\sqrt{2^n}}\right)^{\eta D} e^{-iV(\vec{x}_k) \tau}\prod_{\gamma=0}^{\eta D-1} e^{i \tau \left( \frac{m_\gamma}{2} \left(\frac{x_{k+1,\gamma} - x_{k,\gamma}}{\tau} \right)^2\right)}\nonumber\\
    &= \left(\frac{e^{-i \frac{\pi}{4}}}{\sqrt{2^n}}\right)^{\eta D} e^{i
    \tau \left(\sum_{\gamma} \frac{m_\gamma}{2} \left(\frac{x_{k+1,\gamma} - x_{k,\gamma}}{\tau} \right)^2-V(\vec{x}_k)\right)} 
\end{align}
The phase 
\begin{equation}
\sum_\gamma\frac{m_\gamma}{2} \left(\frac{x_{k+1,\gamma} - x_{k,\gamma}}{\tau} \right)^2 - V(\vec{x}_k) = \mathscr{L}(\vec{x}_{k+1}, \vec{x}_k)
\end{equation}is the discrete Lagrangian evaluated at position, which is exactly the same as the result we get in the continuous, infinite case before we take the $r \rightarrow \infty$ continuum limit. Thus, by picking a specific value for the number of time steps, we arrive at a very similar expression to the standard path integral, even in our discrete, truncated space, with no approximations made other than the Trotter expansion. This expression is exactly equal to \eqref{derivation1}, so the total evolution is still unitary. The full expression for $\bra{q_b} e^{-i T \sum_\gamma \frac{\hat{P}^2_\gamma}{2m_\gamma} } e^{-i T \hat{V} } \ket{q_a}$ is
\begin{align}
    \bra{q_b}_\gamma e^{-i T \sum_\gamma \frac{\hat{P}^2_\gamma}{2m_\gamma} } e^{-i T \hat{V} } \ket{q_a} &= \left( \frac{e^{-i \frac{\pi}{4}}}{\sqrt{2^n}} \right)^{\eta D r} \sum_{q_1} \cdots \sum_{q_{r-1}}  e^{i \tau \sum_{k=0}^{r-1}\sum_{\gamma=0}^{\eta D-1} \left( \frac{m_\gamma}{2} \left(\frac{x_{k+1,\gamma} - x_{k,\gamma}}{\tau} \right)^2 - V(\vec{x}_k)\right)}
\end{align}
with $q_0 := q_a$ and $q_r := q_b$ (recall above that $x_{k,\gamma} = q_{k,\gamma}\Delta_{x,\gamma}$, and $q_k$ is the vector of $q_{k,\gamma}$ values). This can be seen as a sum over all the possible paths from $\ket{q_a}$ to $\ket{q_b}$, with a phase associated with each one. Thus, we can write the time evolution operator as
\begin{align}
    \hat{U}_{\mathscr{L}} = \left( \frac{e^{-i \frac{\pi}{4}}}{\sqrt{2^n}} \right)^{r\eta D} \sum_{q_{a}} \sum_{q_{b}} \sum_{q_1} \cdots \sum_{q_{r-1}}  e^{i \tau \sum_{k=0}^{r-1}\sum_{\gamma=0}^{\eta D-1} \left( \frac{m_\gamma}{2} \left(\frac{x_{k+1,\gamma} - x_{k,\gamma}}{\tau} \right)^2 - V(\vec{x}_k)\right)} \ketbra{q_{b}}{q_{a}} \label{U_lagrange}
\end{align}
\end{proof}
This result shows a discrete analogue of the Lagrangian path integral.
\subsection{Algorithm}
The Lagrangian for the $k^{\rm th}$ time slice is a function of positions $\vec{x}_k$ and $\vec{x}_{k+1}$. We will define the action oracle $O_S$, which gives the action for one time slice in the phase:
\begin{definition}[Action oracle]
Let $O_S \in \mathcal{H}_{2^{n \eta D}} \otimes \mathcal{H}_{2^{n \eta D}})$ be a unitary operator such that its action on an arbitrary computational basis state is
$$
    O_S:\ket{q_{k}} \ket{q_{k+1}} \mapsto e^{i  \left( \sum_{\gamma} \frac{m_\gamma}{2 \tau} (x_{k+1,\gamma}-x_{k,\gamma})^2 - \tau V(\vec{x}_{k}) \right)} \ket{q_{k}} \ket{q_{k+1}} $$\label{def:O_S}
\end{definition}

In order to implement the Lagrangian version of the path integral, we need more than just the infinitesimal action.  We also need to perform the transition from position $q_k$ to position $q_{k+1}$.  The following theorem is our main theorem that provides us with a construction for performing the required transitions weighted by the discretized action integrals for the paths.

\begin{theorem}\label{thm:mainLagrangian}
    Let $\frac{2 \pi T}{m_\gamma x_{\max,\gamma} \Delta_{x,\gamma}}$ be constant for all $\gamma$ and let $r=\frac{2 \pi T}{m_\gamma x_{\max,\gamma} \Delta_{x,\gamma}}$, where $\Delta_{x,\gamma}$ is the spacing size in the space discretization for the $\gamma^{\rm th}$ coordinate. Let $\vec{q}_k \in \{0, \ldots, 2^n-1\}^{\eta D}$ and let $\vec{x}_k$ have entries $x_{k, \gamma} = \Delta_{x, \gamma} q_{k,\gamma}$. There exists a quantum algorithm that exactly implements the unitary operator $\hat{W} \in \mathcal{L}(\mathcal{H}(2^{n\eta D}))$
    \begin{align}
        \hat{W}=\left( \frac{e^{-i \frac{\pi}{4}}}{\sqrt{2^n}} \right)^{r\eta D} \sum_{\vec{q}_{a}}\sum_{q_{b}}\sum_{q_1} \cdots \sum_{\vec{q}_{r-1}}  e^{i \tau \sum_{k=0}^{r-1}\sum_{\gamma=0}^{\eta D-1} \left( \frac{m_\gamma}{2} \left(\frac{x_{k+1,\gamma} - x_{k,\gamma}}{\tau} \right)^2 - V(\vec{x}_k)\right)} \ketbra{\vec{q}_{b}}{\vec{q}_{a}},
    \end{align}
    up to a global phase with $O(r)$ queries to the action oracle defined in Def. \ref{def:O_S} and $O(r\eta D n^2)$ additional two-qubit gates.
\end{theorem}
\begin{proof}

We can implement the transition at each time step by:
\begin{align}
    \ket{q_k} \ket{0} \xrightarrow{O_S} & e^{i \left( \sum_{\gamma=0}^{\eta D-1}\frac{m_\gamma}{2 \tau} (0-x_{k,\gamma})^2 - \tau V(\vec{x}_k) \right)} \ket{q_k} |0\rangle\\
    & \quad =  e^{i \sum_{\gamma=0}^{\eta D-1}\frac{2 \pi q_{k,\gamma}^2}{2^{n+1}} - \tau V(\vec{x}_k) } \ket{q_k} \ket{0} \\
\xrightarrow{(\QFT^\dagger)^{\otimes \eta D} \otimes \openone} & e^{i \frac{2 \pi \sum_\gamma q_{k,\gamma}^2}{2^{n+1}} - \tau V(\vec{x}_k) } \left( \frac{1}{\sqrt{2^n}} \right)^{\eta D} \bigotimes_{\gamma=0}^{\eta D-1}\left(\sum_{q_{k+1,\gamma}=0}^{2^n-1} e^{-i \frac{2 \pi q_{k,\gamma} q_{k+1,\gamma}}{2^n}} \ket{q_{k+1,\gamma}}_\gamma \ket{0}\right)\\
    \xrightarrow{O_S} & e^{i \frac{2 \pi \sum_\gamma q_{k,\gamma}^2}{2^{n+1}} - \tau V(\vec{x}_k) } \left( \frac{1}{\sqrt{2^n}} \right)^{\eta D} e^{-i \frac{2 \pi \sum_{\gamma=0}^{\eta D-1} q_{k,\gamma} q_{k+1,\gamma}}{2^n}} e^{i \left( \frac{2 \pi \sum_{\gamma=0}^{\eta D-1} q_{k+1,\gamma}^2}{2^{n+1}} - \tau V(\vec{0}) \right)} \ket{q_{k+1}} \ket{0}
\end{align}
The process above maps each $\ket{q_k}$ state to a superposition of all $\ket{q_{k+1}}$ states. The associated phase on each $\ket{q_{k+1}}$ state is
\begin{align}
    & \frac{2 \pi}{2^n} \sum_\gamma\left( \frac{q_{k,\gamma}^2}{2} - q_{k,\gamma} q_{k+1,\gamma} + \frac{q_{k+1,\gamma}^2}{2} \right) - \tau V(\vec{x}_k) - \tau V(\vec{0}) \\
    & \quad = \frac{2 \pi}{2^n} \frac{1}{2} \sum_\gamma\left( q_{k+1,\gamma} - q_{k,\gamma} \right)^2 - \tau V(\vec{x}_k) - V(\vec{0}) \label{phase}
\end{align}
Since $\tau = \frac{m_\gamma x_{\max,\gamma} \Delta_{x,\gamma}}{2 \pi} = \frac{m_\gamma 2^n (\Delta_{x,\gamma})^2}{2 \pi}$, \eqref{phase} (for all $\gamma$) the total phase is  equal to
\begin{align}
    & \sum_\gamma\frac{m_\gamma (\Delta_{x,\gamma})^2}{2 \tau}  \left( q_{k+1,\gamma} - q_{k,\gamma} \right)^2 - \tau V(\vec{x}_k) - \tau V(\vec{0})\\
    & \quad = \sum_\gamma\frac{m_\gamma}{2 \tau}  \left( x_{k+1,\gamma} - x_{k,\gamma} \right)^2 - \tau V(\vec{x}_k) - \tau V(\vec{0})
\end{align}
which is exactly what we want, except for the extra $-\tau V(\vec{0})$ term. However, this term will appear on every transition, so it will contribute a global phase that can be factored out and ignored at the end.

The above implements the transition for one time step of length $\tau = \frac{T}{r}$, and then in order to implement the entire evolution we simply apply it $r$ times.
    Each time step of length $\tau = \frac{T}{r}$ takes two oracle queries and $\eta D$ iterations of the $n$-qubit quantum Fourier transform to implement, which gives us the required complexity scaling for implementing the path integral using the exact quantum Fourier transform which consists of $n^2$ two-qubit gates.  As this is repeated $r$ times, the claimed complexity follows.
\end{proof}

\subsection{Time and Query Complexity of Lagrangian Path Integral Simulation}

The algorithm of Theorem~\ref{thm:mainLagrangian} exactly implements $\hat{U}_{\mathscr{L}}$ defined in Eq. \eqref{U_lagrange}. If we consider the Lagrangian path integral expression to be the exact time evolution that we want to implement, then there is no error to consider, other than any discretization error if the path integral is being used to approximate a continuous system. If we would like to compare the path integral-evolved state to the exact Hamiltonian time evolution, then we do have to consider the Trotter error. This error comes from applying the Trotter decomposition in \eqref{derivation1}. For example, in the one-dimensional case the approximation replaces the exact expression
\begin{align}
    U(\tau):=e^{-i \tau \left( \sum_\gamma\frac{\hat{P_\gamma}^2}{2m_\gamma} + \hat{V} \right)}
\end{align} with the first-order Trotter approximation
\begin{align}
    \tilde{U}(\tau):=e^{-i \tau \frac{\hat{P}^2}{2m}} e^{-i \tau \hat{V}}
\end{align}
This approximation incurs an additive error of 
\begin{equation}
    \| U(\tau) - \tilde{U}(\tau) \| \le \|[ \hat{P}^2/2m, \hat{V} ]\| \tau^2
\end{equation} per time step \cite{cstwz2021}. The error scales with the commutator between the two terms of the Hamiltonian. Using the naive upper bound by simply taking the norms of the operators does not give ideal error scaling, since the norm of $\hat{P}$ grows with $2^n$. The issue of the Trotter error being unbounded arises in the standard continuous Feynman path integral derivation as well. Many commonly used potentials, such as the Coulomb potential, are unbounded, and the Trotter decomposition does not hold in those cases. We will need to systematically enforce specific cutoffs to ensure that the analysis is well posed. This is a valid assumption for many physically realistic systems. The overall error will end up scaling with the maximum momentum $P_{max}$ and the maximum potential $V_{\max}$.


\begin{definition} \label{def:S_n} 
    Let $\hat{K}:= \sum_{\gamma=0}^{\eta D-1} \frac{1}{2m_\gamma} \hat{P}_\gamma^2 \in \mathcal{L}(\mathcal{H}_{ 2^{n}}^{\otimes \eta D})$ with $\hat{P}$ defined as in Def \ref{def:discP}, let $\hat{V}\in \mathcal{L}(\mathcal{H}_{ 2^{n}}^{\otimes \eta D})$ be diagonal in the computational basis, and let $\hat{H} = \hat{K} + \hat{V}$. Let the ``feasible subspace'' $S_n$ be the set of $\eta D n$-qubit states such that
 for all $\ket{\psi}\in S_n$ ($|\psi \rangle = \sum_{k_0}^{2^n-1} \cdots \sum_{k_{\eta D-1}}^{2^n-1} a_{k_0, \ldots, k_{\eta D-1}} \bigotimes_{\gamma=0}^{\eta D-1} \ket{p_{k_\gamma}}$ where $\ket{p_{k_\gamma}}$ are the eigenstates of $\hat{P}_\gamma$ with eigenvalues $p_{k_\gamma} = \frac{2 \pi k}{x_{\max}}$), the following are satisfied:
\begin{enumerate} 
\item There exists $P_{\max}$ such that $a_k =0$ when $\max_\gamma |p_{k_\gamma}| > P_{\max}$ 
\item For all $|p_{k_\gamma}| \le P_{\max}$, $\| \hat{K} e^{-it \hat{V}} |p_k\rangle \|\le \sum_\gamma \frac{P_{\max}^2}{2m_\gamma}:= K_{\max}$   for all $t \in [0,T]$
\item For all $t \in [0,T]$, $e^{-i \hat{H} t} |\psi\rangle:= \sum_{k_0}^{2^n-1} \cdots \sum_{k_{\eta D-1}}^{2^n-1} a_{k_0, \ldots, k_{\eta D-1}}(t) \bigotimes_{\gamma=0}^{\eta D-1} \ket{p_{k_\gamma}}$ where $a_{k_0, \ldots, k_{\eta D-1}}(t) =0$ if $|p_{k_\gamma}| > P_{\max}$ for any $k_\gamma \in \{k_0, \ldots, k_{\eta D-1} \}$
\item There exists $V_{\max}\ge 0$ such that for all $n\ge 1$, $|\bra{\psi} \hat{V} \ket{\psi}| \le V_{\max}$ for all $\ket{\psi} \in S_n$.
\end{enumerate}
\end{definition}
With these definitions in place, we can bound the Trotter error within the truncated space $S_n$. We formally describe the Trotter error by describing the maximum error as the error in measuring the discrepancies in probabilities of the outcomes of any POVM; however, at an intuitive level one can simply think of this result as bounding the maximum observable error seen for any input state conforming to Definition~\ref{def:S_n}.
\begin{lemma}[Lagrangian Trotter Error]\label{lem:Trotter_Lagrange}
Let $\hat{H}\in \mathcal{L}(\mathcal{H}_{2^n}^{\otimes \eta D})$ be a Hamiltonian operator conforming to Definition~\ref{def:S_n} that acts on a feasible subspace $S_n$.  Let ${M} = \{M_k: k =1,\ldots N \}$ be a POVM such that $M_k$ are POVM elements that maps the subspace $S_n$ onto itself.  We then define
$$    \hat{U} = e^{-\frac{iT}{r} \left({\hat{K}} + \hat{V} \right) },\qquad     \widetilde{U} = e^{-\frac{iT}{r} \hat{K} } e^{-\frac{iT}{r}\hat{V}}, $$
then the Trotter-Suzuki error in the expectation value within this space can be be bounded above by
$$
D(\hat{U}^r,\tilde{U}^r):=\max_{k}\sup_{\ket{\psi}\in S_n}\left| \langle \psi | (\hat{U}^\dagger)^r \hat{M}_k (\hat{U})^r |\psi\rangle - \langle \psi | (\widetilde{U}^\dagger)^r \hat{M}_k (\widetilde{U})^r |\psi\rangle \right| \le     \frac{2 T^2 V_{\max} K_{\max}}{r} 
$$
where $K_{\max} := \sum_\gamma P_{\max}^2/2 m_\gamma$.
\end{lemma}

Proof of this result is provided in Appendix \ref{sec:t_error_proof}.  This result will be used to compare the error in the propagator from the discretized Lagrangian path integral and the error from the idealized Hamitonian evolution below.

\begin{theorem}
For any positive integer $r$, let $\hat{K}_r, \hat{V}_r \in \mathcal{L}(\mathcal{H}_{ 2^{n}}^{\otimes \eta D})$ be elements of sequences of Hermitian operators acting on input states $\ket{\psi} \in S_n$ where $S_n$ conforms to Def.~\ref{def:S_n}, and further assume that $x_{\max, \gamma}$ and $\Delta_{x, \gamma}$ are chosen such that for evolution time $T\in \mathbb{R}$, position cutoff $x_{\max}$
    $$
r=\frac{2 \pi T}{m x_{\max, \gamma} \Delta_{x, \gamma}}, x_{\max, \gamma} = \Delta_x 2^n.
    $$
Then, for any $\epsilon>0$, there exists positive integer $r^*$ such that for all $r\ge r^*$, there exists an algorithm that can implement an evolution operator $\hat{W}$ such that the distance measure described in Lemma~\ref{lem:Trotter_Lagrange} obeys
$${\rm D}\left(\hat{W},e^{-i T \left( \hat{K}_r + \hat{V}_r \right)}\right)\le \epsilon$$
using a number of queries to the action oracle defined in Def. \ref{def:O_S} that scale as
    \begin{equation}
    N_{\rm queries} \in O(r^{*})\subseteq O\left( \frac{\eta D T^2 P_{\max}^2  V_{\max}}{\min_\gamma m_\gamma\epsilon}  \right)
\end{equation}
and the number of additional two-qubit gate operations required is
\begin{align}
    N_{\rm gates} \in     O\left( \frac{\eta^2 D^2 P_{\max}^2 T^2}{\min_\gamma m_\gamma \epsilon}  V_{\max}  \left( \log \left( \frac{T ( \max_\gamma x_{\max,\gamma}^{2} P_{\max}^2 \eta D V_{\max}}{\min_\gamma m_\gamma\epsilon} \right) \right)^2 \right).
\end{align}
\end{theorem}
\begin{proof}
From Theorem~\ref{thm:mainLagrangian}, we have that we can implement a unitary operation $\hat{W}$ that is of the form 
\begin{equation}
     \hat{W}=\left( \frac{e^{-i \frac{\pi}{4}}}{\sqrt{2^n}} \right)^{r\eta D} \sum_{\vec{q}_{a}}\sum_{\vec{q}_{b}}\sum_{q_1} \cdots \sum_{\vec{q}_{r-1}}  e^{i \tau \sum_{k=0}^{r-1}\sum_{\gamma=0}^{\eta D-1} \left( \frac{m_\gamma}{2} \left(\frac{x_{k+1,\gamma} - x_{k,\gamma}}{\tau} \right)^2 - V(\vec{x}_k)\right)} \ketbra{\vec{q}_{b}}{\vec{q}_{a}},
\end{equation}
using a number of queries to the action oracle given in Definition~\ref{def:O_S} that scales as $O(r)$.  Using Lemma~\ref{lem:lagrangian}, we see that, up to an irrelevant global phase,
\begin{equation}
    \hat{W} = (e^{-i \hat{K} T/r} e^{-i \hat{V} T/r})^r.
\end{equation}
and in turn Lemma~\ref{lem:Trotter_Lagrange} implies that the value of the distance $D$ restricted to the feasible space $S_n$ is
\begin{equation}
    {\rm D}(\hat{W},e^{-iT (\hat{K} + \hat{V})}) \le \frac{2 \eta D T^2 V_{\max} K_{\max}}{r} 
\end{equation}
As we wish this error measure to be at most $\epsilon$, we can use the scaling of $K_{\max}$ given in Lemma~\ref{lem:Trotter_Lagrange} to see that it suffices to take
\begin{equation}
   r > r^* \in  O\left( \frac{\eta D T^2 K_{\max} V_{\max} }{\epsilon} \right) = O\left( \frac{\eta D T^2 P_{\max}^2 V_{\max} }{\epsilon} \right) .
\end{equation} 

From Theorem~\ref{thm:mainLagrangian}, implementing $\hat{W} = \widetilde{U}^r$ requires $O(r)$ queries. Further, each of the $r$ time steps also requires two iterations of the quantum Fourier transform for each of the $\eta D$ registers, which takes $O(n^2)$ two-qubit gates~\cite{nielsenchuang} for $n$ qubits per time step. 
This needs to be applied to each of the $\eta D$ position registers, which leads to an  additional $O(r \eta D n^2)$ two-qubit gates required for the whole evolution of $r$ time steps. 

The assumptions of Theorem~\ref{thm:mainLagrangian} imply that
\begin{equation}
 r= \frac{2 \pi T}{m_\gamma x_{\max,\gamma} \Delta_{x,\gamma}} = \frac{2 \pi T 2^n}{m_\gamma x_{\max,\gamma}^2} \; \forall \gamma.
\end{equation} 
Solving the above equation for $n$ then yields
\begin{equation}
n = \log_2 \left( \frac{m_\gamma r x_{\max,\gamma}^2}{2 \pi T} \right) \in O\left( \log \left( \frac{T x_{\max,\gamma}^{2} K_{\max} V_{\max}}{\epsilon}\right)\right).
\end{equation}
Thus, an additional
\begin{align}
   O(r \eta D n^2) \subseteq O\left( \frac{\eta D T^2}{ \epsilon} K_{\max} V_{\max}  \left( \log \left( \frac{T x_{\max,\gamma}^{2} K_{\max} V_{\max}}{\epsilon} \right) \right)^2 \right)
\end{align}
two-qubit gates required to implement the quantum Fourier transforms~\cite{nielsenchuang}. 
The final scalings then follow from the observation that $K_{\max} \in O( \eta D P_{\max}^2/ \min_\gamma m_\gamma)$.
\end{proof}

\section{Hamiltonian path integral}
\label{sec:HamPathInt}
In this approach we assume that we are provided a Hamiltonian $\hat{H}$ that is promised to be expressible as a sum of efficiently diagonalizable Hamiltonians. This holds without loss of generality because all finite-dimensional Hamiltonians can be expressed as a sum of elements of the Pauli group, although we can also choose another decomposition if we would like. Specifically, we assume that we have a decomposition that is given by the following definition.
\begin{definition}[Hamiltonian]\label{def:Ham}
    Let $\hat{H}\in \mathcal{L}(\mathcal{H}_{2^n})$ (where $\mathcal{L}(\cdot)$ denotes the set of linear operators acting on a space) be a Hermitian operator that is decomposed in the form $\hat{H} = \sum_{\ell=0}^{L-1} \hat{H}_\ell$ where $\hat{H}_\ell \in \mathcal{L}(\mathcal{H}_{2^n})$ are also Hermitian operators such that $\lambda^{(\ell)}_j$ is the $j^{th}$ eigenvalue in a sorted list of eigenvalues of $\hat{H}_{\ell}$ with corresponding eigenvector $\ket{\chi_j^{(\ell)}}$. Without loss of generality, assume that the first Hamiltonian term $\hat{H}_{0}$ is diagonal in the computational basis; we can choose our computational basis to make this true, or simply let $\hat{H}_0$ be the zero operator.
\end{definition}
In order to make sense of the path integral formulation, we need to consider Trotter-Suzuki decompositions.  We review the definitions of high-order product formulas below.
\begin{definition}[Trotter-Suzuki decompositions] \label{def:trotterdecomp}
    Let $\tau\in \mathbb{R}$ and $\hat{H}$ be defined as above.  The ``first-order" Trotter formula is defined to be
    $$
    \hat{U}_1(\tau) = \prod_{\ell =0}^{L-1} e^{- i \hat{H}_{\ell} \tau},
    $$
    the second-order Trotter-formula is
    $$
    \hat{U}_2(\tau) =  \prod_{\ell = L-1}^{0} e^{-i\hat{H}_\ell \frac{\tau}{2}} \prod_{\ell=0}^{L-1}e^{-i \hat{H}_{\ell} \frac{\tau}{2}},
    $$
    and for any integer $k>1$, $s_k := \frac{1}{4-4^{\frac{1}{k+1}}}$,
    $$
    \hat{U}_{2k}(\tau) = \hat{U}_{2k-2}^2(s_k \tau)\hat{U}_{2k-2}([1-4s_k] \tau)\hat{U}_{2k-2}^2(s_k \tau).
    $$
\end{definition}
Because the first Hamiltonian term $\hat{H}_0$ is defined to diagonal in the computational basis, the last Hamiltonian in the second-order Trotter product will be diagonal as well, since the product formula is symmetric and the first Hamiltonian evolution that is applied is the same as the last one. This will be significant because it implies that the eigenbasis of the last Hamiltonian that is applied is also the computational basis. As the higher-order formulas are found be a symmetric combination of symmetric formulas, it is easy to see that the last Hamiltonian in the $2k^{\rm th}$-order Trotter-Suzuki formula is also diagonal in the computational basis.  Consequently, we do not actually need to know within this formalism what the intermediate bases actually are, just the inner products between the eigenvectors and also the eigenvalues.  These higher-order formulas will be needed in order to achieve near-linear simulation time with this method, as we will see below. 

\begin{lemma}[Trotter error bounds] \label{lemma:trottererror}
Let $r \in \mathbb{Z}^+$, then
$$
\left\|e^{-i\hat{H} t} - \hat{U}_1^r \left( \frac{t}{r} \right) \right\|_\infty \leq  \frac{t^2}{2r} \sum_{\ell=0}^{L-1} \sum_{k=\ell+1}^{L-1} \left\| [\hat{H_k}, \hat{H_\ell} ] \right\|_\infty,
$$
and for higher-order Trotter formulas
\begin{align}
    \left\| e^{-i\hat{H} t} - \hat{U}_{2k}^r \left( \frac{t}{r} \right) \right\|_\infty \le \frac{\alpha_{\rm comm} t^{2k+1}}{r^{2k}}
\end{align}
where
\begin{equation}
    \alpha_{\rm comm} := \sum_{j_0,\ldots,j_{2k}} \left\|[ \hat{H}_{j_0}, [\hat{H}_{j_1},\cdots[ \hat{H}_{j_{2k-1}},\hat{H}_{j_{2k}}]\cdots]] \right\|_\infty.
\end{equation}
\end{lemma}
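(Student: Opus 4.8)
The plan is to reduce both statements to a single-timestep estimate and then lift that estimate to $r$ steps by a telescoping argument. For any two unitaries $U,V$ one has the identity $U^r - V^r = \sum_{j=0}^{r-1} U^{j}(U-V)V^{r-1-j}$, and since left- or right-multiplication by a unitary leaves the operator norm invariant, the triangle inequality gives $\|U^r - V^r\|_\infty \le r\,\|U-V\|_\infty$. Taking $U = e^{-i\hat{H}t/r}$ and $V = \hat{U}_1(t/r)$ (or $\hat{U}_{2k}(t/r)$), both of which are unitary, it therefore suffices to prove that a single step of the product formula approximates $e^{-i\hat{H}\tau}$ with error $\tfrac{\tau^2}{2}\sum_{\ell<k}\|[\hat{H}_k,\hat{H}_\ell]\|_\infty$ in the first-order case and $\alpha_{\rm comm}\tau^{2k+1}$ in the $2k$-th order case, where $\tau = t/r$; substituting $\tau=t/r$ and multiplying by $r$ then reproduces both stated bounds.

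For the first-order single-step estimate I would compare the Taylor expansions of $e^{-i\hat{H}\tau}$ and of $\hat{U}_1(\tau) = \prod_{\ell=0}^{L-1} e^{-i\hat{H}_\ell\tau}$. Writing $\hat{H} = \sum_\ell \hat{H}_\ell$, the constant terms ($I$) and the first-order terms ($-i\tau\sum_\ell \hat{H}_\ell$) agree, so the two operators first differ at order $\tau^2$. A direct expansion shows that the second-order terms differ by exactly $-\tfrac{\tau^2}{2}\sum_{0\le \ell<k\le L-1}[\hat{H}_k,\hat{H}_\ell]$, which is the origin of the commutator sum. To turn this into an exact inequality rather than an asymptotic one, I would express the error through the integral form of the Taylor remainder, using the cancellation of the zeroth- and first-order terms, and then reorganize the remainder integrand into commutators of the $\hat{H}_\ell$ so that the triangle inequality yields the stated clean bound $\tfrac{\tau^2}{2}\sum_{\ell<k}\|[\hat{H}_k,\hat{H}_\ell]\|_\infty$. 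Combined with the telescoping step this gives the first inequality.

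The higher-order bound is the substantive part of the argument. Because $\hat{U}_{2k}$ is built as a symmetric composition of symmetric formulas, it agrees with $e^{-i\hat{H}\tau}$ through order $\tau^{2k}$, so its single-step error is $O(\tau^{2k+1})$; this much follows from the symmetry and order-condition structure of Definition~\ref{def:trotterdecomp}. The real difficulty, and the step I expect to be the \emph{main obstacle}, is to show that this $O(\tau^{2k+1})$ error is governed by the nested-commutator quantity $\alpha_{\rm comm}$ rather than by products of operator norms $\prod\|\hat{H}_\ell\|_\infty$. To establish this I would follow the commutator-scaling analysis of Childs--Su--Tran--Wang--Wiebe: differentiate the ordered product with respect to $\tau$, conjugate each exponential factor by the partial products standing to its right, and Taylor-expand those conjugations, so that the order-$2k$ vanishing of the lower contributions forces the lowest surviving terms to be $(2k)$-fold nested commutators of the $\hat{H}_{j_0},\ldots,\hat{H}_{j_{2k}}$. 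Bounding the resulting integral remainder termwise by $\left\|[\hat{H}_{j_0},[\hat{H}_{j_1},\cdots[\hat{H}_{j_{2k-1}},\hat{H}_{j_{2k}}]\cdots]]\right\|_\infty$ and summing over all index tuples produces the single-step bound $\alpha_{\rm comm}\tau^{2k+1}$.

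Finally, inserting $\tau = t/r$ into the two single-step bounds and applying $\|U^r - V^r\|_\infty \le r\|U-V\|_\infty$ yields $\tfrac{t^2}{2r}\sum_{\ell<k}\|[\hat{H}_k,\hat{H}_\ell]\|_\infty$ and $\alpha_{\rm comm}t^{2k+1}/r^{2k}$ respectively. The only delicate bookkeeping lies in the third paragraph---tracking which nested-commutator terms survive after the cancellation of all orders below $2k+1$ and bounding the remainder integral uniformly in $\tau$---while the telescoping reduction and the first-order expansion are routine.
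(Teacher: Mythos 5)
Your proposal is correct and takes essentially the same approach as the paper: the paper offers no independent argument for this lemma, deferring entirely to the commutator-scaling analysis of Childs--Su--Tran--Wang--Wiebe \cite{cstwz2021}, and your telescoping reduction plus nested-commutator remainder analysis is precisely the argument of that reference (with the same glossing of the order-$2k$ prefactor that the paper's own statement makes).
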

See \cite{cstwz2021} for proof of Trotter error bounds. From this, we can then proceed to define the Hamiltonian form of the path integral.  Previous work has taken this alternative form of the path integral. However, it has typically only been considered in the case where the lowest order formula is used and in the limit where an infinite number of time slices is considered~\cite{farhi1992functional,cheung2011improved,mackenzie2006perturbative}.  The discrete analogue of this result for the propagator as a finite sum is given below.

\begin{lemma}[Discrete Hamiltonian path integral] \label{lemma:HamPathInt}
Let $\hat{H}$ be a Hamiltonian in accordance with Definition~\ref{def:Ham}, $k, r \in \mathbb{Z}^+$, $M = 2L 5^{k-1}r$ (recall that $L$ is the number of terms in the decomposition of $\hat{H}$), $t\in \mathbb{R}$, and let $\{ \ell_m\}_m$ and $\{\tau_m\}_m$ be sequences (that depend on $r$ and $k$) such that  $\left(\hat{U}_{2k}\left(\frac{t}{r} \right)\right)^r = \prod_{m=0}^{M-1} e^{-i \hat{H}_{\ell_m} \tau_m \frac{t}{r}}$. Then,
$$\left(\hat{U}_{2k}  \left( \frac{t}{r} \right) \right)^r = \sum_{j_0=0}^{2^n-1} \ldots \sum_{j_{M-1}=0}^{2^n-1} \left( 
e^{-i \left( \sum_{m=0}^{M-1} \lambda_{j_m}^{(\ell_m)}\tau_m \right) \frac{t}{r}} \prod_{u=0}^{M-2} \braket{\chi_{j_{u+1}}^{(\ell_{u+1})}}{\chi_{j_u}^{(\ell_u)}} \ketbra{\chi_{j_{u+1}}^{(\ell_{u+1})}}{\chi_{j_u}^{(\ell_u)}} \right) .$$
\end{lemma}
\begin{proof}
Since the sets $\{ \ket{\chi^{(\ell)}_j} \}_j$ each form an orthonormal basis for $\mathcal{H}_{2^n}$ for every $\ell \in \{ 0, \ldots, L-1 \}$, we have
\begin{align}
    \sum_{j=0}^{2^n-1} \ketbra{\chi^{(\ell)}_j}{\chi^{(\ell)}_j} = \openone. \label{1}
\end{align}
For any integer $k$ and real $r$ and $t$, we write the $2k^{\rm th}$ Trotter decomposition in Def. \ref{def:trotterdecomp} as a product of $e^{-i \hat{H}_{\ell_m} \frac{t}{r}}$ terms,
\begin{equation}
    \left(\hat{U}_{2k} \left(\frac{t}{r} \right) \right)^r := \prod_{m=0}^{M-1} e^{-i \hat{H}_{\ell_m} \tau_{m} \frac{t}{r}}, \label{trotter}
\end{equation}
for sequences of Hamiltonian term indices $\{\ell_m\}_m$ and times $\{\tau_m\}_m$ that depend on the Trotter order $k$ and number of time steps $r$ (we will not explicitly indicate the $k$ and $r$ dependence, in order to avoid the notation becoming too messy), where
\begin{equation}
    M = 2L5^{k-1}r
\end{equation}
is the number of unitary factors that the $2k^{\rm th}$-order Trotter formula breaks the original unitary into.

Applying the $2k^{\rm th}$-order Trotter product formula to $e^{-i \hat{H} t}$, and making use of \eqref{1}, we get:
\begin{align}
\left( \hat{U}_{2k} \left( \frac{t}{r} \right) \right)^r
    &= \prod_{m=0}^{M-1} e^{-i \hat{H}_{\ell_m} \tau_{m} \frac{t}{r}} \\
    &=  e^{-i \hat{H}_{\ell_{M-1}} \tau_{M-1} \frac{t}{r}} \prod_{m=0}^{M-2}  \left( \sum_{j_{m+1}=0}^{2^n-1} \ketbra{\chi_{j_{m+1}}^{(\ell_{m+1})}}{\chi_{j_{m+1}}^{(\ell_{m+1})}} e^{-i \hat{H}_{\ell_m} \tau_{m} \frac{t}{r}}\sum_{j_m=0}^{2^n-1} \ket{\chi_{j_m}^{(\ell_m)}}\! \bra{\chi_{j_m}^{(\ell_m)}} \right)   \\
    &=  \sum_{j_{0},\ldots, j_{M-1}} e^{-i \lambda_{j_{M-1}}^{(\ell_{M-1})} \tau_{M-1} \frac{t}{r}}  \prod_{m=0}^{M-2} \left(  \ketbra{\chi_{j_{m+1}}^{(\ell_{m+1})}}{\chi_{j_{m+1}}^{(\ell_{m+1})}}  e^{-i \lambda^{(\ell_m)}_{j_m} \tau_{m} \frac{t}{r}} \ket{\chi_{j_m}^{(\ell_m)}}\! \bra{\chi_{j_m}^{(\ell_m)}} \right)   \\
    &= \sum_{j_{0},\ldots, j_{M-1}} e^{-i \lambda_{j_{M-1}}^{(\ell_{M-1})} \tau_{M-1} \frac{t}{r}}  \prod_{m=0}^{M-2}  \left( e^{-i \lambda^{(\ell_m)}_{j_m} \tau_{m} \frac{t}{r}} \braket{\chi_{j_{m+1}}^{(\ell_{m+1})}}{\chi_{j_m}^{(\ell_m)}} \ketbra{\chi_{j_{m+1}}^{(\ell_{m+1})}}{\chi_{j_m}^{(\ell_m)}} \right) \\
    &=  \sum_{j_{0},\ldots, j_{M-1}}e^{-i \left( \sum_{m=0}^{M-1} \lambda_{j_m}^{(\ell_m)}\tau_m \right) \frac{t}{r}} \prod_{u=0}^{M-2} \braket{\chi_{j_{u+1}}^{(\ell_{u+1})}}{\chi_{j_u}^{(\ell_u)}} \ketbra{\chi_{j_{u+1}}^{(\ell_{u+1})}}{\chi_{j_u}^{(\ell_u)}} 
\end{align}
\end{proof}
\begin{definition}[Path]
    Let $\hat{H}$ be a Hamiltonian in accordance with Definition~\ref{def:Ham}, $k, r \in \mathbb{Z}^+$, $M = 2L 5^{k-1}r$, $t \in \mathbb{R}$, and let $\{\ell_m \}_m$ and $\{\tau_m\}_m$ be sequences such that  $\left(\hat{U}_{2k} \left( \frac{t}{r} \right) \right)^r = \prod_{m=0}^{M-1} e^{-i \hat{H}_{\ell_m} \tau_m \frac{t}{r}}$. A path is a sequence $\left( \ket{\chi_{j_0}^{(\ell_0)}}, \ket{\chi_{j_1}^{(\ell_1)}}, \ldots , \ket{\chi_{j_{M-1}}^{(\ell_{M-1})}} \right)$ where each $j_m \in \{0, \ldots, 2^n-1 \}$.
    We say such a path is computable if there exists a polynomial time algorithm on a quantum Turing machine that can compute the eigenvalues $\lambda_{j}^{(\ell)}$ and the overlaps between the eigenvectors $\braket{\chi_{j_{m+1}}^{(\ell_{m+1})}}{\chi_{j_{m}}^{(\ell_m)}}$.
\end{definition}
With this definition of a path as a sequence of states, the Hamiltonian path integral can be seen as a sum over paths from initial states $\ket{\chi_{j_0}}$ to final states $\ket{\chi_{j_M}}$, where the complex amplitude of each path is $e^{-i \sum_{m=0}^{M-1} \lambda_{j_m}^{(\ell_m)}\tau_m \frac{t}{r}} \prod_{u=0}^{M -2} \braket{\chi_{j_{u+1}}^{(\ell_{u+1})}}{\chi_{j_{u}}^{(\ell_{u})}}$. The total evolution is the sum over all such paths. If we want to find the amplitude of the transition from one particular $j_0$ to one particular $j_{M-1}$, we fix $j_0$ and $j_{M-1}$ and sum over all possible paths from $j_0$ to $j_{M-1}$. 

Lemma~\ref{lemma:HamPathInt} shows that the Hamiltonian evolution can be written as a sum over paths between eigenstates of the terms in the Hamiltonian decomposition. Note that in the path integral representation we only need to know the eigenvalues and overlaps between adjacent eigenvectors of the terms within the Hamiltonian.

\begin{definition}[$d$-sparsity]\label{def:sparse}
Let $\hat{H}$ be a Hamiltonian in accordance with Def. \ref{def:Ham}. We refer to a Hamiltonian decomposition $\hat{H} = \sum_{\ell=0}^{L-1} \hat{H}_\ell$ as $d$-sparse if, for each eigenvector $\ket{\chi_j^{(\ell)}}$ where $0 \leq \ell < L-1$, there are at most $d$ values of $m$ such that $\braket{\chi_m^{(\ell+1)}}{\chi_j^{(\ell)}} \neq 0$, and for each eigenvector $\ket{\chi_j^{(\ell)}}$ where $0 < \ell \leq L-1$, there are at least $d$ values of $m$ such that $\braket{\chi_j^{(\ell)}}{\chi_m^{(\ell-1)}} \neq 0$. \label{def:dsparse}
\end{definition}
In other words, each Hamiltonian term's eigenvectors have at most $d$ non-zero inner products with eigenvectors of adjacent Hamiltonian terms. Any Hamiltonian can always be represented via a $d$-sparse Hamiltonian decomposition for some value of $d \in \{1, \ldots, 2^n\}$, since each term in the Hamiltonian decomposition has $2^n$ eigenvectors, so each term can only have at most $2^n$ non-zero inner products.
Note that here the definition of sparsity that we take for the path integral setting differs from that customarily used for Hamiltonian simulation~\cite{berry2007efficient}.  Here we do not require that the underlying matrices are sparse, but we do require that the matrices can be diagonalized and that the eigenvectors of the underlying matrices have a sparse pattern of overlaps with the two adjacent terms of the Hamiltonian decomposition.

\subsection{Short Time Simulation Algorithm and One-Sparse Simulation} \label{sec:overview}
The algorithm that we propose for the case where we divide our path integral up into a series of short higher-order Trotter transition steps and implement these on a quantum computer proceeds as follows.
\begin{enumerate}
    \item For each operator exponential in $\hat{U}_{2k} \left( \frac{t}{r} \right)$, break the transitions from one eigenbasis given by the path integral sum in Lemma~\ref{lemma:HamPathInt} to the next into a sum of $O(Ld^2)$ weighted signature matrices using a distributed graph colouring algorithm.
    \item Use the linear combinations of unitaries (LCU) method to implement the path integral over the short time step via queries to the oracles to obtain the inner products and phases between the eigenstates.
    \item Use alternating sign trick with the LCU method and robust oblivious amplitude amplification (ROAA) to implement transitions with the correct magnitude.
    \item Repeat for all $M$ operator exponentials in the $2k^{\rm th}$- order Trotter decomposition.
\end{enumerate}

Instead of actually using the $\hat{H}_\ell$ eigenstates in the computation, we will use computational basis states to represent the eigenstates. This way, we can easily define oracles and other operations that can be controlled on the states. We are implementing the transitions one time step at a time. At the $m^{\rm th}$ step, we transition from the $\hat{H}_{\ell_m}$ basis to the $\hat{H}_{\ell_{m+1}}$ basis. We will need $\log M$ qubits to store the step number $M$, and one qubit to indicate whether the state is in the first or second basis of that step. Each eigenbasis set is enumerated from 0 to $2^n-1$, which can be stored in $n$ qubits. At the $m^{\rm th}$ step, we represent the eigenstate $\ket{\chi^{(\ell_m)}_j}$ by the computational basis state $\ket{m} \ket{0} \ket{j}$ and $\ket{\chi^{(\ell_{m+1})}_q}$ by $\ket{m} \ket{1} \ket{q}$.

\subsection{Definitions and Oracles}

For the order $2k$ Trotter product formula, we have a sequence $\{\ell_m\}_m$ which specifies the Hamiltonian term number that is being implemented at the $m^{\rm th}$ step. This sequences are efficiently computable, so we will define an oracle that will give us the proper value of $\ell_m$ at the $m^{\rm th}$ step. 

Since there are $d$ non-zero inner products between the $\hat{H}_{\ell_m}$ and $\hat{H}_{\ell_{m+1}}$ eigenstates, we would like to define a way of indexing them. Let us label the eigenstates of $\hat{H}_{\ell_m}$ and $\hat{H}_{\ell_{m+1}}$with indices $j$ and $q$, respectively, where $j, q \in \{0, \ldots, 2^n-1 \}$. For each eigenstate $\ket{\chi^{\ell_m}_j}$ of $\hat{H}_{\ell_m}$, we enumerate the non-zero inner products $\braket{\chi^{\ell_{m+1}}_q}{\chi^{\ell_m}_j}$ by increasing $q$ (with enumeration starting at 0), similarly for the $\ket{\chi^{\ell_{m+1}}_q}$ states. If there are fewer than $d$ non-zero inner products, we will include some states with zero inner product in the enumeration so that there are exactly $d$ $\ket{\chi_q^{(\ell_{m+1})}}$ states for each $\ket{\chi_j^{(\ell_m)}}$ state, and vice versa. We will set the amplitude of these transitions to zero later, but we want there to be exactly $d$ transitions per state at each time step so that we have a consistent definition of the index function. The following index function takes $m$ (the time step number), $b$ (indicates whether the input state is the first or second basis in the transition), $j$ (the index of the eigenstate), and an index $p$, and returns $\hat{H}_{\ell_m}$ or $\hat{H}_{\ell_{m+1}}$ eigenstate that gives the $p^{\rm th}$ non-zero inner product.

\begin{definition}[Index function] \label{def:f_ind}
    Let $f_{ind} : \{ 0, \ldots, M-1\} \times \{0, 1\} \times \{0, \ldots, 2^n-1\} \times \{0,\ldots, d-1 \} \rightarrow  \{0,\ldots, 2^n-1\}$, $f_{ind}(m, b, j, p)$ is defined to be the index $q$ such that $\braket{\chi^{(\ell_{m + b})}_q}{\chi^{(\ell_{m + (b \oplus 1)})}_j}$ is the $p^{\rm th}$ non-zero inner product for $\ket{\chi^{(\ell_{m + b})}_j}$ (with indexing starting at 0), based on the enumeration system described above.
\end{definition}
As a simple example of the index function, consider a two-qubit Hamiltonian decomposition with sparsity $d=2$ where $\hat{H}_{\ell_m} = \hat{Z} \otimes \hat{Z}$ with ordered eigenstates $\{ \ket{0} \ket{0}, \ket{0} \ket{1}, \ket{1} \ket{0}, \ket{1} \ket{1} \}$, and $\hat{H}_{\ell_{m+1}} = \hat{Z} \otimes \hat{X}$ with ordered eigenstates $\{ \ket{0} \ket{+}, \ket{0} \ket{-}, \ket{1} \ket{+}, \ket{1}, \ket{-} \}$. $\ket{\chi_2^{(\ell_m)}} = \ket{1}\ket{0}$ has non-zero inner products with $\ket{\chi_2^{(\ell_{m+1})}} = \ket{1} \ket{+}$ and $\ket{\chi_3^{(\ell_{m+1})}} = \ket{1} \ket{-}$. Then, $f_{ind}(m, 0, 2, 0) = 2$ (since $\braket{\chi_2^{(\ell_{m+1})}}{\chi_2^{(\ell_m)}}$ is the 0$^{\rm th}$ non-zero inner product for $\ket{\chi_2^{(\ell_m)}}$), and $f_{ind}(m, 0, 2, 1) = 3$ (since $\braket{\chi_3^{(\ell_{m+1})}}{\chi_2^{(\ell_m)}}$ is the 1$^{\rm st}$ non-zero inner product for $\ket{\chi_2^{(\ell_m)}}$).

Now that we have defined the index function, we can define the index oracle $O_{ind}$, which takes $m$ (the time step number), $b$ (indicates whether the input state is the first or second basis in the transition), $j$ (the index of the eigenstate), and $p$, and outputs the value of $f_{ind}(m, b, j, p)$ in another register. 
\begin{definition}[Index oracle] \label{def:O_ind} Let $O_{ind} \in \mathcal{L} (\mathcal{H}_{M} \otimes \mathcal{H}_{2} \otimes \mathcal{H}_{2^n} \otimes \mathcal{H}_{d} \otimes \mathcal{H}_{2^n})$ be a unitary operator, such that its action on an arbitrary computational basis state is
    \begin{align}
        O_{ind} : \ket{m} \ket{b} \ket{j} \ket{p} \ket{c} \mapsto \ket{m} \ket{b} \ket{j} \ket{p}  \ket{c \oplus f_{ind}(m, b, j, p)},
    \end{align}
\end{definition}

Now we will define the inner product oracles that give the magnitude and the phase of the inner product between two states $\braket{\chi^{(\ell_{m+1})}_q}{\chi^{(\ell_m)}_j}$. The magnitude of the inner product is between 0 and 1, so we can multiply it by some constant $2^B$ to make use of $B$ qubits to store the value of the inner product as a number from 0 and $2^B-1$, up to $\frac{1}{2^B}$ precision. The inner product magnitude oracle takes as input three indices $m$, $j$, $q$, corresponding to two states $\ket{\chi^{(\ell_m)}_j}$ and $\ket{\chi^{(\ell_{m+1})}_m}$, and outputs the magnitude of the inner product (multiplied by a constant) as a bitstring in a fourth register. The inner product phase oracle takes the same three input indices as the magnitude oracle, and adds the corresponding phase of the inner product to the state.
\begin{definition}[Inner product magnitude oracle]\label{def:ipmag}
Let $B \in \mathbb{N}$. Let $O_{IM} \in \mathcal{L}(\mathcal{H}_{ M} \otimes \mathcal{H}_{2^n} \otimes \mathcal{H}_{2^n} \otimes \mathcal{H}_{2^B} )$ be a unitary operator, such that its action on an arbitrary computational basis state is
    \begin{align}
    O_{IM}: & \ket{m} \ket{j} \ket{q} \ket{c} \mapsto \ket{m} \ket{j} \ket{q} \ket{c \oplus \left[2^B \left| \braket{ \chi_{q}^{(\ell_{m+1)}} }{ \chi_j^{(\ell_m)}} \right| \right]_B},
    \end{align}
    where $[ \;\cdot\; ]_B$ denotes the bitstring representation in $B$ bits. \label{def:O_IM}
\end{definition}
\begin{definition}[Inner product phase oracle] \label{def:ipphase}
Let $O_{IP} \in \mathcal{L}(\mathcal{H}_{M} \otimes \mathcal{H}_{2^n} \otimes \mathcal{H}_{2^n})$ be a unitary operator, such that its action on an arbitrary computational basis state is
    \begin{align}
        O_{IP}: & \ket{m} \ket{j} \ket{q} \mapsto e^{i \arg \left( \braket{ \chi_{q}^{(\ell_{m+1)}} }{ \chi_j^{(\ell_m)}}  \right)} \ket{m} \ket{j} \ket{q}
    \end{align}
\end{definition}
Finally, we will define an oracle that implements the eigenvalue phase associated with each time step. The phase oracle takes a time step number $m$ and an index $j \in \{0,\ldots,2^n-1\}$, and adds a phase with the eigenvalue of the $j^{\rm th}$ eigenstate of $\hat{H}_{\ell_m}$, multiplied by the $\tau_m \frac{t}{r}$.
\begin{definition}[Eigenvalue phase oracle]
Let $O_{EP} \in \mathcal{L}(\mathcal{H}_{\log M} \otimes \mathcal{H}_{n} )$ be a unitary operator, such that its action on an arbitrary computational basis state is
    \begin{align}
        O_{EP} : \ket{m} \ket{j} \mapsto e^{-i \lambda^{(\ell_m)}_j \tau_m \frac{t}{r} } \ket{m} \ket{j}
    \end{align}
\end{definition}

\subsection{Graph Colouring} \label{sec:graphcolouring}
Recall that the Trotterized time evolution operator that we are trying to implement is
\begin{align}
\sum_{j_0=0}^{2^n-1} \ldots \sum_{j_{M-1}=0}^{2^n-1} \left( 
e^{-i \left( \sum_{m=0}^{M-1} \lambda_{j_m}^{(\ell_m)}\tau_m \right) \frac{t}{r}} \prod_{u=0}^{M-2} \braket{\chi_{j_{u+1}}^{(\ell_{u+1})}}{\chi_{j_u}^{(\ell_u)}} \ketbra{\chi_{j_{u+1}}^{(\ell_{u+1})}}{\chi_{j_u}^{(\ell_u)}} \right).
\end{align}
The Trotter product formula splits our unitary time evolution operator into $M$ transitions; the unitary that implements $m^{\rm th}$ transition is
\begin{align}
    \sum_{j=0}^{2^n-1} \sum_{q=0}^{2^n-1} e^{-i \lambda^{(\ell_m)}_{j} \tau_{m} \frac{t}{r}} \braket{\chi_{q}^{(\ell_{m+1})}}{\chi_{j}^{(\ell_m)}} \ketbra{\chi_{q}^{(\ell_{m+1})}}{\chi_{j}^{(\ell_m)}}. \label{timestep}
\end{align} 
If we write this in terms of the actual transitions that we are implementing on the quantum computer, in the computational basis, then, as the idealized evolution operator, we have
\begin{align}
    \hat{A}_m&:= \sum_{j=0}^{2^n-1} \sum_{q=0}^{2^n-1} \left|\braket{\chi^{(\ell_{m+1})}_q}{\chi^{(\ell_m)}_j} \right| e^{i \arg \left( \braket{ \chi_{q}^{(\ell_{m+1)}} }{ \chi_j^{(\ell_m)}}  \right) - i \lambda^{(\ell_m)}_j \tau_m \frac{t}{r}} \ket{q}\bra{j} \label{A_m}
\end{align}
The above transition is unitary, since it is just the original transition with changes of basis to the computational basis. The individual transitions $\ket{q} \bra{j}$  are not themselves unitary, but we can combine them in ways that form unitaries that we can implement using our oracles. In particular, permutation matrices are unitary, so we can form unitaries by grouping the operators so that within each unitary, each $\ket{\chi_j^{(\ell_m)}}$ state is uniquely mapped to some $\ket{\chi_q^{(\ell_{m+1})}}$ state, and vice versa. We do not need to consider the transition elements $\ketbra{q}{j}$ where $\braket{\chi^{(\ell_{m+1})}_q}{\chi^{(\ell_m)}_j} = 0$, since the amplitude of those transitions is zero. Thus, we will have up to $2^n d$ total transitions that we need to group together to form permutation matrices, and then we will deal with the relative amplitudes after that.

The grouping of the transitions can be viewed as a graph edge-colouring problem. We can think of the states $\{\ket{\chi^{(\ell_m)}_j}\}_j, \{ \ket{\chi^{(\ell_{m+1})}_q}\}_q$ as vertices, with edges between two states if the transition amplitude $\braket{\chi^{(\ell_{m+1})}_q}{\chi^{(\ell_m)}_j}$ is non-zero. In order to sort these transitions into different unitaries, we do an edge colouring so that no two edges adjacent to the same vertex have the same colour. This is equivalent to saying that within a single colour, each $\ket{\chi^{(\ell_m)}_j}$ to get mapped to a unique $\ket{\chi^{(\ell_{m+1})}_q}$, and each $\ket{\chi^{(\ell_{m+1})}_q}$ to only has one $\ket{\chi^{(\ell_m)}_j}$ that maps to it. All the edges of a given colour, taken together, will then correspond to a unitary permutation matrix. Since there are at most $d$ non-zero inner products $\braket{\chi_j^{(\ell_m)}}{\chi^{(\ell_{m+1})}_q}$ for each $j$ and for each $m$, the degree of the graph is at most $d$. The graph will be bipartite since edges only exist between $\ket{\chi^{(\ell_m)}_j}$ states and $\ket{\chi^{(\ell_{m+1})}_q}$ states, but not between two $\ket{\chi^{(\ell_m)}_j}$ states or two $\ket{\chi^{(\ell_{m+1})}_q}$ states. 
\begin{definition} \label{def:G_m}
Let $G_m$ be a bipartite graph, defined as follows:
The vertex set of $G_m$ is $V(G_m) := \{0, 1\} \times \{0,...,2^n-1\}$. The vertex sets of the two parts of the bipartite graph are $\Gamma_{m,0} := \{0\} \times \{0,...,2^n-1\}$ and $\Gamma_{m,1} := \{1\} \times \{0, ..., 2^n-1\}$. The edge set of $G_m$ is defined as $E(G_m) = \{ \big( (0, j), (1, q) \big) \in \Gamma_{m,0} \times \Gamma_{m,1} : \braket{\chi^{(\ell_{m+1})}_q}{\chi^{(\ell_m)}_j} \neq 0 \}$. 
\end{definition}
\begin{figure}
    \includegraphics[width=0.6\textwidth]{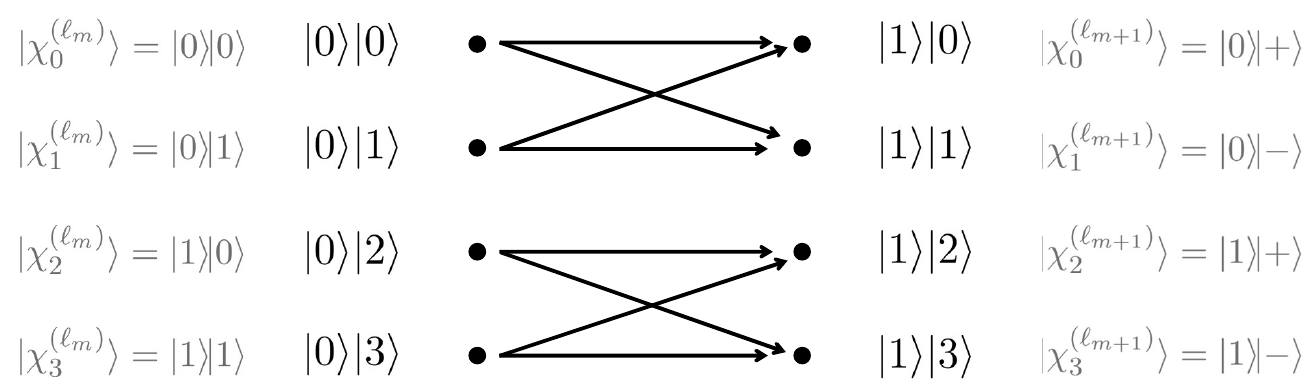}
    \caption{
    Bipartite graph $G_m$ for the example system described previously ($\hat{H}_{\ell_m} = \hat{Z} \otimes \hat{Z}$ with ordered eigenstates $\{ \ket{0} \ket{0}, \ket{0} \ket{1}, \ket{1} \ket{0}, \ket{1} \ket{1} \}$ and $\hat{H}_{\ell_{m+1}} = \hat{Z} \otimes \hat{X}$ with ordered eigenstates $\{ \ket{0} \ket{+}, \ket{0} \ket{-}, \ket{1} \ket{+}, \ket{1}, \ket{-} \}$). The four vertices on the left correspond to the basis states of $\hat{H}_{\ell_m}$ and the four vertices on the right correspond to the basis states of $\hat{H}_{\ell_{m+1}}$. The basis states are labelled in grey, while the labels written in black are how they are stored as computational basis states on the quantum computer. There is an edge between two vertices $\ket{0} \! \ket{j}$ and $\ket{1} \! \ket{q}$ if the inner product $\braket{\chi_q^{(\ell_{m+1})}}{\chi_j^{(\ell_m)}}$ is non-zero. Here we have added arrows to the edges to indicate the transition we are implementing goes from the states on the left to the states on the right.} \label{fig:graph1}
\end{figure}
We associate the vertices in $\Gamma_{m,0}$ with the basis vectors of $\hat{H}_{\ell_m}$ and the vertices in $\Gamma_{m,1}$ with the basis vectors of $\hat{H}_{\ell_{m+1}}$; that is, each vertex $(0, j) \in \Gamma_m$ is represents the $\hat{H}_{\ell_m}$ eigenstate $\ket{\chi^{(\ell_m)}_j}$, and each vertex $(1, q) \in \Gamma_{m,1}$ represents the $\hat{H}_{\ell_{m+1}}$ eigenstate $\ket{\chi^{(\ell_{m+1})}_q}$.
There is an edge between a vertex $(0, j) \in \Gamma_{m,0}$ and a vertex $(1, q) \in \Gamma_{m,1}$ if and only if the inner product $\braket{\chi^{(\ell_{m+1})}_q}{\chi^{(\ell_m)}_j}$ is non-zero.

Since there are at most $d$ non-zero inner products for each basis vector of $\hat{H}_{\ell_m}$ or $H_{\ell_{m+1}}$ (the $d$-sparse definition in Def. \ref{def:dsparse}), the degree of $G_m$ is at most $d$. The optimal number of colours in a bipartite graph edge-colouring is $d$ for a graph of degree $d$; however, the best known algorithm \cite{cos2001} to actually find this colouring takes $O(|E| \log d)$ time, where $|E|$ is the cardinality of the edge set. Since we have up to $2^n d$ edges, this scaling is not ideal. Instead, we will use a colouring that takes $d^2$ colours, but only takes a constant number of oracle queries.
\begin{lemma}[Graph colouring scheme] \label{def:graphcolouring}
    Let $G_m$ be the graph defined in Definition \ref{def:G_m}. The colour associated with an edge $\big( (0, j), (1, q) \big)$ is a label $(c_1, c_2) \in \{0, \ldots, d-1\}^2$, defined as follows: $c_1\left( m, \big( (0, j), (1, q) \big) \right) \in \{0,...,d-1\}$ is the index such that $f_{ind}(m, 0, j, c_1) = q$ and $c_2\left( m, \big( (0, j), (1, q) \big) \right) \in \{0,...,d-1\}$ is the index such that $f_{ind}(m, 1, q, c_2) = j$ (recall that $f_{ind}$ is the index function, as defined in Definition \ref{def:f_ind}). This is a valid edge colouring of $G_m$.
\end{lemma}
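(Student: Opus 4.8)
The plan is to verify directly that the colouring $e \mapsto \big(c_1(m,e), c_2(m,e)\big)$ satisfies the defining property of a proper edge colouring: any two distinct edges of $G_m$ that share an endpoint must receive different colours. Since $G_m$ is bipartite with parts $\Gamma_{m,0}$ and $\Gamma_{m,1}$, two distinct edges are adjacent precisely when they share a left vertex $(0,j) \in \Gamma_{m,0}$ or a right vertex $(1,q) \in \Gamma_{m,1}$, so it suffices to treat these two cases.

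First, though, I would check that the colour is well defined on every edge. If $\big((0,j),(1,q)\big) \in E(G_m)$ then by Definition~\ref{def:G_m} we have $\braket{\chi_q^{(\ell_{m+1})}}{\chi_j^{(\ell_m)}} \neq 0$, so $q$ appears in the enumeration of the (at most $d$) non-zero inner products of $\ket{\chi_j^{(\ell_m)}}$; hence there is a unique index $c_1 \in \{0,\ldots,d-1\}$ with $f_{ind}(m,0,j,c_1) = q$. Symmetrically $j$ appears in the enumeration for $\ket{\chi_q^{(\ell_{m+1})}}$, giving a unique $c_2$ with $f_{ind}(m,1,q,c_2) = j$. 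The key structural fact I would extract here, following Definition~\ref{def:f_ind}, is that for fixed $(m,0,j)$ the map $p \mapsto f_{ind}(m,0,j,p)$ is injective, since the enumeration lists distinct values of $q$ in increasing order; likewise $p \mapsto f_{ind}(m,1,q,p)$ is injective for fixed $(m,1,q)$.

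For the first adjacency case, consider two distinct edges $\big((0,j),(1,q)\big)$ and $\big((0,j),(1,q')\big)$ with $q \neq q'$ sharing the left vertex $(0,j)$. Their first colour coordinates $c_1, c_1'$ satisfy $f_{ind}(m,0,j,c_1)=q$ and $f_{ind}(m,0,j,c_1')=q'$; since $p \mapsto f_{ind}(m,0,j,p)$ is injective and $q\neq q'$, we conclude $c_1 \neq c_1'$, so the two colours already differ in their first component. The second case is completely symmetric: for edges $\big((0,j),(1,q)\big)$ and $\big((0,j'),(1,q)\big)$ with $j \neq j'$ sharing the right vertex $(1,q)$, injectivity of $p \mapsto f_{ind}(m,1,q,p)$ forces the second coordinates $c_2 \neq c_2'$. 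In either case adjacent edges receive distinct colours, which is exactly the claim.

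I expect the only real subtlety to lie in the well-definedness and injectivity bookkeeping of the middle step, rather than in the colouring argument itself. In particular I would want to confirm that the padding convention described before Definition~\ref{def:f_ind} — adding dummy partners with zero inner product so that every vertex has exactly $d$ enumerated neighbours — does not break injectivity of $f_{ind}(m,b,j,\cdot)$ in its last argument, and that these padded zero-amplitude entries are harmless because they correspond to no edge of $G_m$ and hence never need to be coloured. Granting injectivity of $f_{ind}$ in $p$ for each fixed $(m,b,j)$, the result follows immediately, and the fact that colours lie in $\{0,\ldots,d-1\}^2$ confirms that at most $d^2$ colours are used.
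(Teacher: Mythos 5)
Your proof is correct and takes essentially the same approach as the paper's: first establish that every edge receives exactly one colour (uniqueness resting on the injectivity of the enumeration $p \mapsto f_{ind}(m,b,j,p)$), then show that two edges sharing a vertex must differ in the corresponding colour coordinate. The paper phrases the second step as a proof by contradiction while you argue the contrapositive directly (and, as a minor point, that step really needs only that $f_{ind}(m,0,j,\cdot)$ is a well-defined function, not its injectivity), but the substance is identical.
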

\begin{proof}
    In order for this to be a valid edge colouring, each edge in $G_m$ must be assigned exactly one colour, and no two edges adjacent to the same vertex have the same colour.
    
    Take any edge $\big( (0,j),(1,q) \big) \in E(G_m)$. The edges adjacent to $(0,j)$ represent non-zero inner products between the state $\ket{\chi^{(\ell_m)}_j}$ and $\hat{H}_{\ell_{m+1}}$ eigenstates. Each non-zero inner product for $\ket{\chi^{(\ell_m)}_j}$ is enumerated a different number from $0$ to $d-1$, so there is exactly one value $c_1 \in \{0, \ldots, d-1\}$ for which $\braket{\chi^{(\ell_{m+1})}_q}{\chi^{(\ell_m)}_j}$ is the $c_1^{\rm th}$ non-zero inner product for $\ket{\chi^{(\ell_m)}_j}$. All the non-zero inner products for ${\chi^{(\ell_{m+1})}_q}$ are also enumerated similarly, so for every $\big( (0,j),(1,q) \big) \in E(G_m)$ there is exactly one $c_2 \in \{0, \ldots, d-1\}$ such that $\braket{\chi^{(\ell_{m+1})}_q}{\chi^{(\ell_m)}_j}$ is the $c_2^{\rm th}$ non-zero product for ${\chi^{(\ell_{m+1})}_q}$. Thus, every edge in $E(G_m)$ has one ``colour'' $(c_1, c_2) \in \{0, \ldots, d-1\}^2$ assigned to it.

    Assume, for the sake of contradiction, that we have two different edges $\big( (0,j_1), (1, q_1) \big)$ and $\big( (0,j_2), (1, q_2) \big)$ with the same colour $(c_1, c_2)$, adjacent to the same vertex, i.e., $(j_1 = j_2) \oplus (q_1 = q_2) = 1$. If $j_1 = j_2$, then $f_{ind}(m,0,j_1,c_1) = f_{ind}(m,0,j_2,c_1)$, but that would mean $q_1 = q_2$, which is a contradiction. Similarly, if $q_1 = q_2$, then $f_{ind}(m,1,q_1,c_2) = f_{ind}(m,1,q_2,c_2)$, which would mean $j_1 = j_2$. Thus, two different edges cannot have the same colour if they share one common vertex.
\end{proof}
\begin{figure}
    \includegraphics[width=0.6\textwidth]{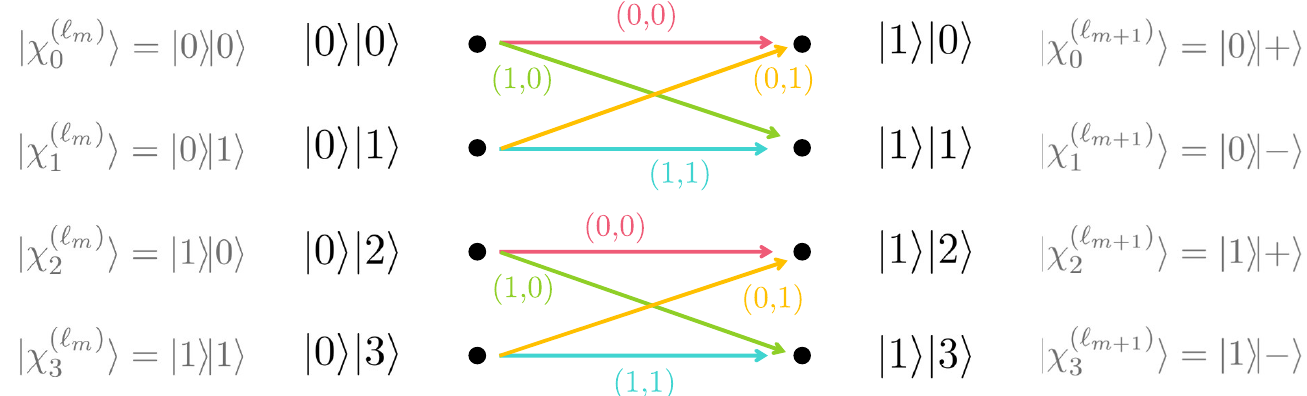}
    \caption{Graph colouring for the example graph in Fig. \ref{fig:graph1}. The labels on each edge indicate the colour that the graph colouring scheme defined in \ref{def:graphcolouring} assigns to that edge.} \label{fig:graph2}
\end{figure}
Since each pair $(c_1, c_2) \in \{0, \ldots, d-1\}^2$ corresponds to a unique colour, we have $d^2$ colours in total.
\begin{definition}[Colour subgraph]
    Let $G_m^{(c_1,c_2)}$ be the subgraph of $G_m$ where $E\left(G_m^{(c_1,c_2)}\right)$ is the set of edges of colour $(c_1,c_2)$ (following the graph colouring scheme described in Lemma \ref{def:graphcolouring}) and $V\left(G_m^{(c_1,c_2)}\right)$ is the set of vertices that the edges in $E\left(G_m^{(c_1,c_2)}\right)$ are incident upon.
\end{definition}
\begin{figure}
    \subfloat{\includegraphics[width=0.3\textwidth]{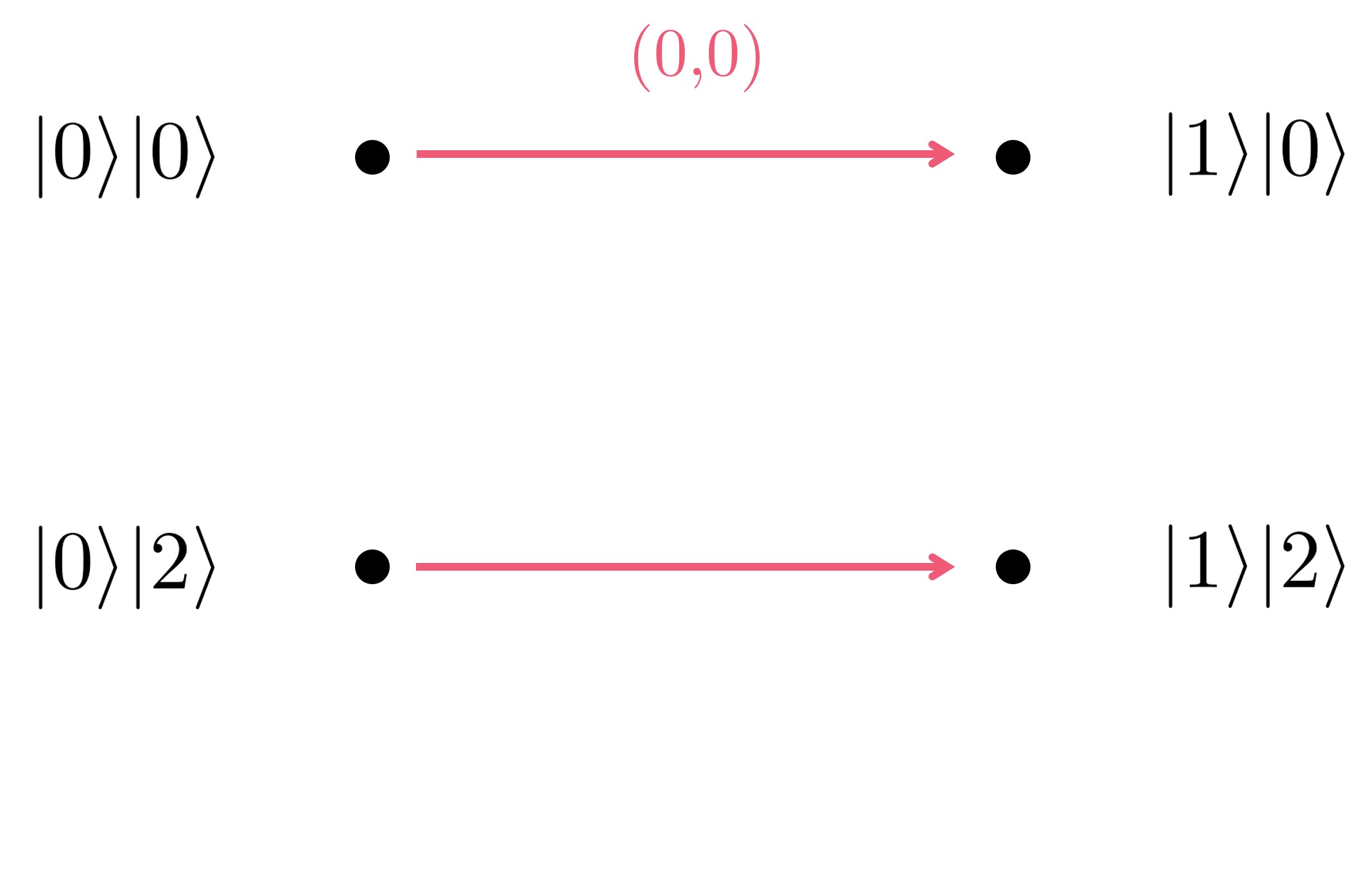}} \hspace{3 em}
    \subfloat{\includegraphics[width=0.3\textwidth]{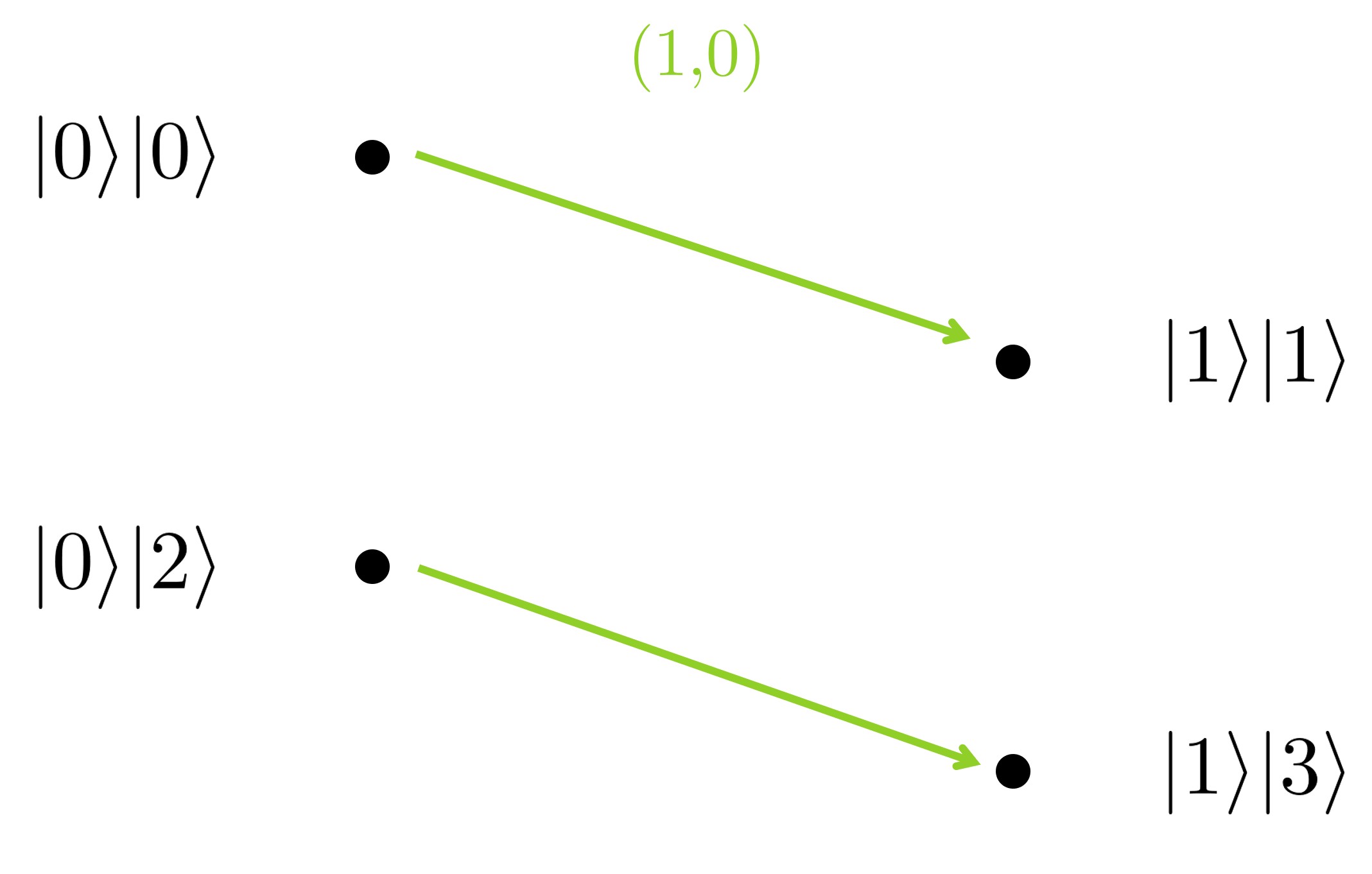}} \\
    \subfloat{\includegraphics[width=0.3\textwidth]{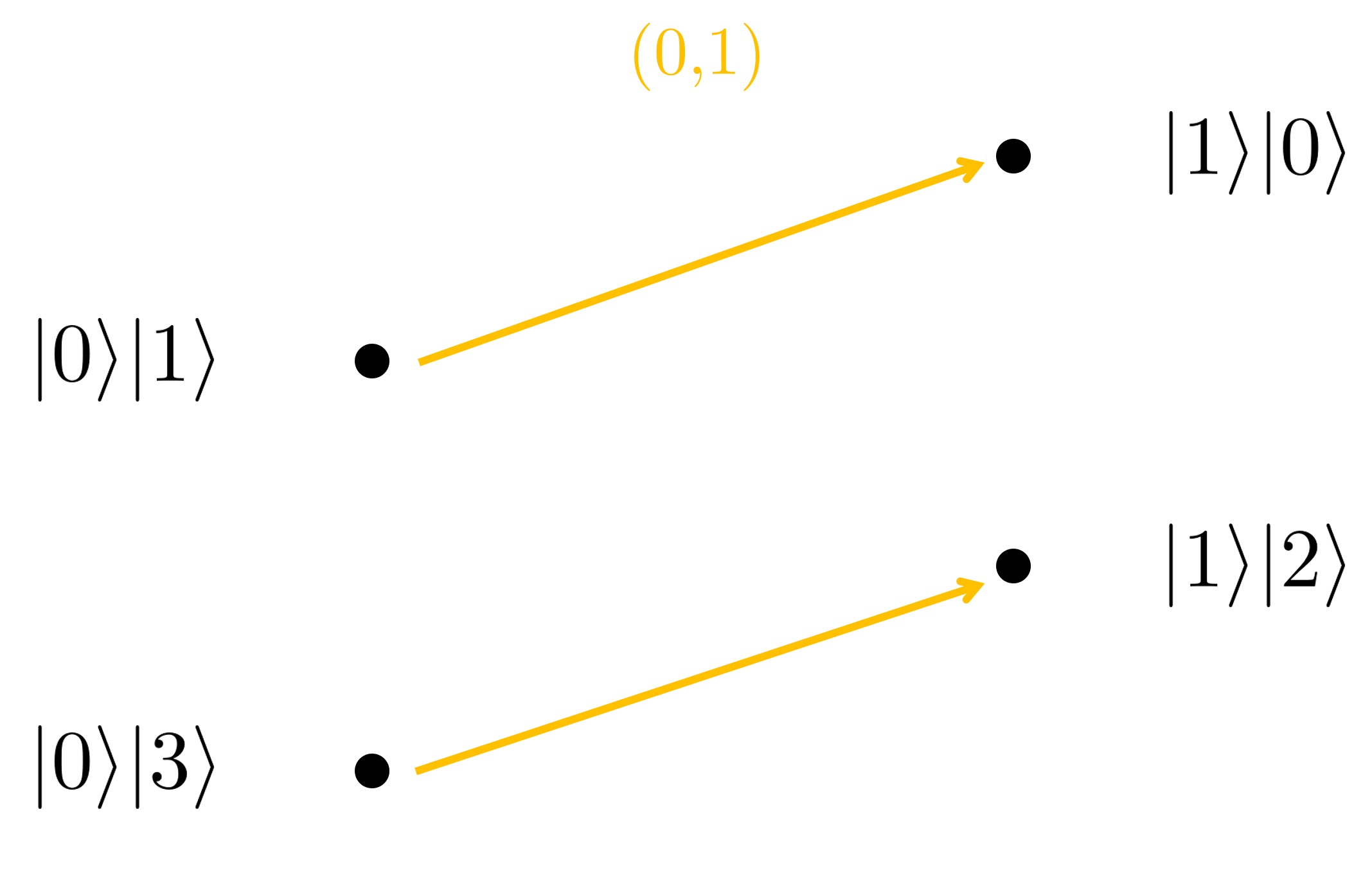}} \hspace{3 em}
    \subfloat{\includegraphics[width=0.3\textwidth]{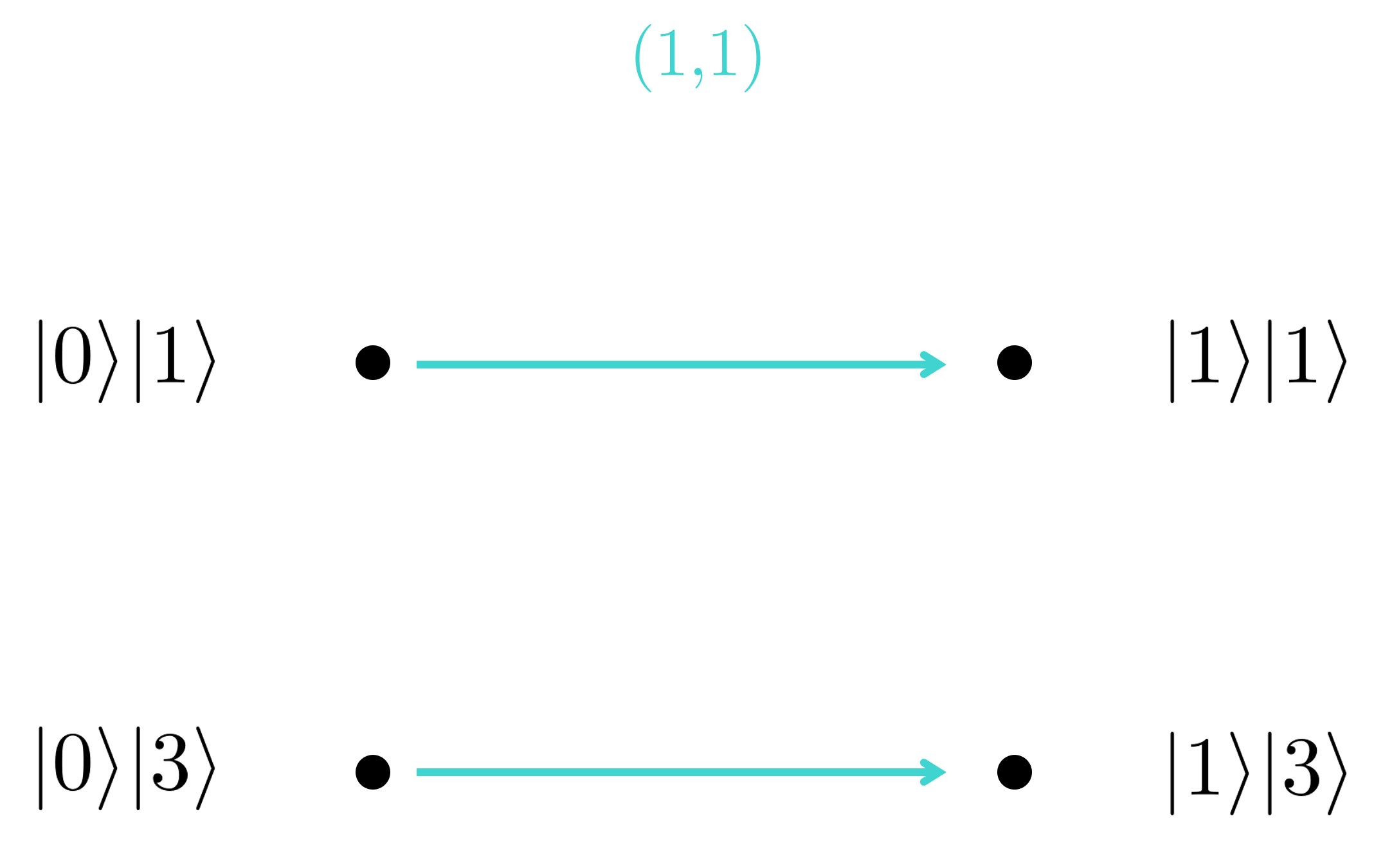}}
    \caption{Colour subgraphs for graph colouring in Fig. \ref{fig:graph2}.}
\end{figure}
The adjacency matrix corresponding to each $G_m^{(c_1, c_2)}$ is a permutation matrix, which is unitary. The edges of each $G_m^{(c_1,c_2)}$ indicate what transitions are present in the permutation matrix corresponding to the colour $(c_1, c_2)$. In the case where a vertex does not appear in $V(G_m^{(c_1,c_2)})$, we will map that vertex back to itself, which ensures that the operator acting on the entire space of states is still a permutation matrix. Later, we will set the amplitude of such transitions to zero, so that they are effectively not implemented. 

We will define a colour oracle $O_C$, which checks checks to see if a vertex $(b, j)$ is in a particular colour graph $G_m^{(c_1, c_2)}$. The colour oracle takes vertex $(b, j) \in \{0,1\} \times \{0, \ldots, 2^n-1\}$ and a colour $(c_1, c_2) \in \{0, \ldots , d-1 \}^2$ as input, and flips the value of another qubit based on whether vertex $(b, j)$ is a vertex in the graph of that colour.
\begin{definition}[Colour oracle] \label{def:O_C}
Let $O_C \in \mathcal{L} (\mathcal{H}_{\log L} \otimes \mathcal{H}_n \otimes \mathcal{H}_{\log d} \otimes \mathcal{H}_{\log d} \otimes \mathcal{H}_1)$ be a unitary operator, such that its action on an arbitrary computational basis state is
    \begin{align}
        O_C &: \ket{m} \ket{b} \ket{j} \ket{c_1} \ket{c_2} \ket{c} \mapsto \begin{cases}
        \ket{m} \ket{b} \ket{j} \ket{c_1} \ket{c_2} \ket{c \oplus 1} & \text{if } (b, j) \in V(G_m^{(c_1,c_2)}) \\
        \ket{m} \ket{b} \ket{j} \ket{c_1} \ket{c_2} \ket{c} & \text{if } (b,j) \notin V(G_m^{(c_1,c_2)}) 
        \end{cases}
    \end{align}
\end{definition}
The colour oracle $O_C$ defined above can be built 8 queries to the index oracle $O_{ind}$ (Definition \ref{def:O_ind}) and $n+1$ additional qubits (which can be reused), as shown in Figure~\ref{fig:oraclecircuit}.

Since every non-zero transition from a $\ket{\chi^{(\ell_m)}_j}$ state to a $\ket{\chi^{(\ell_{m+1})}_q}$ state is assigned exactly colour $(c_1, c_2)$, summing over all the colours will implement every transition. We can write the time evolution operator being implemented at the $m^{\rm th}$ time step, $\hat{A}_m$ (defined in \ref{A_m}), as the sum over all the colours, with the proper associated phase added to each transition:
\begin{align}
    \hat{A}_m &= \sum_{c_1=0}^{d-1} \sum_{c_2=0}^{d-1} \sum_{\substack{j: (0,j) \in \\ V(G_m^{(c_1, c_2)})}} \left|\braket{\chi^{(\ell_{m+1})}_{f_{ind}(m, 0, j, c_1)}}{\chi^{(\ell_m)}_j} \right| e^{i \arg \left( \braket{ \chi_{f_{ind}(m, 0, j, c_1)}^{(\ell_{m+1)}} }{ \chi_j^{(\ell_m)}}  \right) - i \lambda^{(\ell_m)}_j \tau_m \frac{t}{r}} \ketbra{f_{ind}(m, 0, j, c_1)}{j} \label{A_m2}
\end{align}

\subsection{Unitary Decomposition for Paths}

Each transition $\ketbra{q}{j} $ within the time evolution operator for one time step has a corresponding amplitude $\left| \braket{\chi^{(\ell_{m+1})}_q}{\chi^{(\ell_m)}_j} \right|$. We can approximate the relative amplitudes using a technique we call the ``alternating sign trick.'' The idea is that instead of simply applying the permutation matrix corresponding to a colour once in the summation, we will add each operator $2^B$ times, but with a $(-1)^b$ phase to the $b^{\rm th}$ $\ketbra{q}{j}$ transition if $b > 2^B \left|\braket{\chi^{(\ell_{m+1})}_q}{\chi^{(\ell_m)}_j} \right|$. The $(-1)^b$ phase does not affect the unitarity of the operators we are implementing. However, the alternating $+1$ and $-1$ terms will cancel each other out, giving an effective amplitude of approximately $2^B \left| \braket{\chi^{(\ell_{m+1})}_q}{\chi^{(\ell_m)}_j} \right|$ on the $\ketbra{q}{j} $ part of the transition. The same effect will be used for the transitions that have an amplitude of zero, to effectively cancel them out completely. We are taking $2^B$ terms for the alternating sign trick, where $B$ is the number of bits in which we store the inner product magnitudes $\left| \braket{\chi^{(\ell_{m+1})}_q}{\chi^{(\ell_m)}_j} \right|$, since the precision with which we can approximate $\hat{A}_m$ is limited by the inner product precision. 
\begin{definition} \label{def:U}
    Let $\hat{U}_{m,b,c_1,c_2} \in \mathcal{L}(\mathcal{H}_2 \otimes \mathcal{H}_{2^n})$ be a unitary operator, and let $g$ be a function such that
    \begin{align}
        g(m, j, c_1, b) &= \begin{cases}
            1 & b < \left[ 2^B \left| \braket{\chi^{(\ell_{m+1})}_{f_{ind}(\ell_m, j, \ell_{m+1}, c_1)}}{\chi^{(\ell_m)}_j} \right| \right]_B \\
            (-1)^b & \text{otherwise}
        \end{cases},
    \end{align}
    then we define
    \begin{align}
        \hat{U}_{m,b,c_1,c_2} &:= \sum_{\substack{j: (0,j) \in \\ V(G_m^{(c_1, c_2)})}} \bigg( g(m, j, c_1, b) e^{i \arg \left( \braket{ \chi_{f_{ind}(m, 0, j, c_1)}^{(\ell_{m+1)}} }{ \chi_j^{(\ell_m)}}  \right) - i \lambda^{(\ell_m)}_j \tau_m \frac{t}{r}} \ketbra{1}{0} \otimes \ketbra{f_{ind}(m, 0, j, c_1)}{j} \notag \\
        & \quad + (-1)^b \ketbra{0}{1} \otimes \ketbra{j}{f_{ind}(m, 0, j, c_1)} \bigg) + \sum_{\substack{j: (0,j) \notin \\ V(G_m^{(c_1, c_2)})}} (-1)^b \ketbra{0}{0} \otimes \ketbra{j}{j} + \sum_{\substack{q:(1, q) \notin \\ V(G_m^{(c_1,c_2)})}} (-1)^b \ketbra{1}{1} \otimes \ketbra{q}{q}. \label{U_mbc}
    \end{align}
\end{definition}
The above operator maps a state $\ket{0}\ket{j}$ to the state $\ket{1}\ket{q}$ with the corresponding phase if the edge colour between the corresponding two vertices $(0, j)$ and $(1, q)$ is $(c_1,c_2)$, possibly with another $-1$ phase. The operator also maps $\ket{1} \ket{q}$ back to $\ket{0} \ket{j}$ if the corresponding edge colour is $(c_1, c_2)$, with a $(-1)^b$ phase. If a vertex does not have any edges of colour $(c_1,c_2)$, the operator simply acts as the identity, with a $(-1)^b$ phase, on the corresponding state, mapping that state back to itself. $\hat{U}_{m,b,c_1,c_2}$ is a signed permutation operator on the set $\{0,1\} \times \{0,\ldots, 2^n-1\}$, which is unitary. Note that the input state at the $m^{\rm th}$ time step will be of the form $\ket{0} \ket{\psi}$, with the 0 in the first register indicating that $\ket{\psi}$ is written in the $\hat{H}_{\ell_m}$ basis. The second and fourth terms in $\ref{U_mbc}$ will not be relevant when acting on the input state, since the $\ketbra{0}{1}$ and $\ketbra{1}{1}$ acting on the first qubit will give zero. Additionally, once those terms are summed over $b$, the $(-1)^b$ phase will cancel them out anyway. However, we include them in the definition of $\hat{U}_{m,b,c_1,c_2}$ to keep the entire operator unitary.

Let
\begin{align}
    \widetilde{A}_m :&= \sum_{b=0}^{2^B-1} \sum_{c_1=0}^{d-1} \sum_{c_2=0}^{d-1} \frac{1}{2^B} (\hat{X} \otimes \openone) \hat{U}_{m, b, c_1,c_2}\\
    &= \sum_{b=0}^{B-1} \sum_{c_1=0}^{d-1} \sum_{c_2=0}^{d-1} \sum_{\substack{j: (0,j) \in \\ V(G_m^{(c_1, c_2)})}} g(m, j, c_1, b) e^{i \arg \left( \braket{ \chi_{f_{ind}(m, 0, j, c_1)}^{(\ell_{m+1)}} }{ \chi_j^{(\ell_m)}}  \right) - i \lambda^{(\ell_m)}_j \tau_m \frac{t}{r}} \ketbra{0}{0} \otimes \ketbra{f_{ind}(m, 0, j, c_1)}{j}. \label{tildeA}
\end{align}
Most of the terms in $\hat{U}_{m,b,c_1,c_2}$ end up cancelling after the summation over $b$, so $\widetilde{A}$ above only ends up having the transitions that we care about implementing.
\begin{claim} \label{clm:ast}
    Let $\widetilde{A}_m$ and $\hat{A}_m$ be defined as in \eqref{tildeA} and \eqref{A_m2}, respectively. Then,
    \begin{align}
        \left\| \big(\bra{0} \otimes \normalfont{\openone} \big) \widetilde{A}_m \big(\ket{0} \otimes \normalfont{\openone} \big) - \hat{A}_m \right\|_2 \leq \frac{2d^2}{2^B} \label{ast_claim}
    \end{align}
    where $d$ is the sparsity of the Hamiltonian decomposition (Def. \ref{def:dsparse}) and $B$ is the number of bits of precision for the inner product oracle (Def. \ref{def:O_IM}). 
\end{claim}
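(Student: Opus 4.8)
The plan is to compare $\widetilde{A}_m$ and $\hat{A}_m$ one colour class at a time. Both operators are written in \eqref{tildeA} and \eqref{A_m2} as sums over the $d^2$ colours $(c_1,c_2)$, and for a fixed colour the two expressions carry exactly the same outer products $\ketbra{f_{ind}(m,0,j,c_1)}{j}$ and the same phase factor $e^{i\arg(\cdots)-i\lambda^{(\ell_m)}_j\tau_m t/r}$; only the real magnitude coefficient differs. I would therefore bound the target quantity by the triangle inequality over colours and then control each of the $d^2$ colour contributions separately.

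For a fixed colour $(c_1,c_2)$, Lemma~\ref{def:graphcolouring} guarantees that the edges of that colour form a matching, so the map $j\mapsto f_{ind}(m,0,j,c_1)$ is injective on $\{j:(0,j)\in V(G_m^{(c_1,c_2)})\}$. Consequently the colour-$(c_1,c_2)$ part of $(\bra{0}\otimes\openone)\widetilde{A}_m(\ket{0}\otimes\openone)-\hat{A}_m$ is a partial permutation weighted by the coefficient differences, i.e.\ an operator with at most one nonzero entry in each row and in each column. The spectral norm of such an operator equals the largest magnitude among its nonzero entries, so it suffices to bound, edge by edge, the gap between the effective magnitude produced by the alternating sign trick and the exact inner product magnitude.

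The core estimate, and the main obstacle, is the analysis of the alternating sign trick itself. Writing $N:=\left[2^B\left|\braket{\chi^{(\ell_{m+1})}_{q}}{\chi^{(\ell_m)}_j}\right|\right]_B$ for the $B$-bit value returned by $O_{IM}$, and using that $g(m,j,c_1,b)=1$ for $b<N$ while $g(m,j,c_1,b)=(-1)^b$ otherwise, the effective coefficient is $\frac{1}{2^B}\sum_{b=0}^{2^B-1}g(m,j,c_1,b)=\frac{N}{2^B}+\frac{1}{2^B}\sum_{b=N}^{2^B-1}(-1)^b$. I would then split the deviation from the exact magnitude into two pieces: a truncation error below $1/2^B$, arising because the magnitude is stored in only $B$ bits so that $\left|N/2^B-\left|\braket{\chi^{(\ell_{m+1})}_{q}}{\chi^{(\ell_m)}_j}\right|\right|<1/2^B$; and the residual alternating tail $\left|\frac{1}{2^B}\sum_{b=N}^{2^B-1}(-1)^b\right|\le\frac{1}{2^B}$, which holds because consecutive $\pm1$ terms cancel in pairs and at most one uncancelled term can survive. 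Hence each coefficient difference is strictly smaller than $\frac{2}{2^B}$.

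Assembling the pieces, each of the $d^2$ colour contributions has spectral norm at most $\frac{2}{2^B}$, so the triangle inequality gives $\left\|(\bra{0}\otimes\openone)\widetilde{A}_m(\ket{0}\otimes\openone)-\hat{A}_m\right\|_2\le d^2\cdot\frac{2}{2^B}=\frac{2d^2}{2^B}$, as claimed. I expect the bookkeeping of the alternating sign trick (the parity argument for the tail together with the $B$-bit truncation bound) to be the only delicate point; the reduction to partial permutations via the matching structure of each colour class and the final triangle inequality are routine.
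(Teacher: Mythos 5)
Your proposal is correct and follows essentially the same route as the paper's proof: decompose the difference over the $d^2$ colours, bound the per-edge coefficient error by $\frac{2}{2^B}$ (the $B$-bit rounding error plus the at-most-one uncancelled term in the alternating tail), use the injectivity of $j \mapsto f_{ind}(m,0,j,c_1)$ within a colour to reduce each colour's contribution to the maximum entry magnitude, and finish with the triangle inequality over colours. The only cosmetic difference is that the paper establishes the per-colour norm bound by an explicit supremum over unit vectors expanded in the computational basis, whereas you invoke the equivalent general fact that a weighted partial permutation has spectral norm equal to its largest entry.
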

See Appendix \ref{sec:ast_full_proof} for proof. The claim above tells us that if we have some input state $\ket{0} \ket{\psi}$ and apply $\widetilde{A}_m$, and project onto the first qubit onto the $\ket{0}$ state, we will approximate $\hat{A}_m$ acting on $\ket{\psi}$ to within error $\frac{2d^2}{2^B}$. The probability of projecting onto the $\ket{0}$ subspace is 1, since the other terms in $ (\hat{X} \otimes \openone) \hat{U}_{m, b, c_1,c_2}$ always cancel to 0 due to the $(-1)^b$ phase being summed over. 

\subsection{Linear Combination of Unitaries and Robust Oblivious Amplitude Amplification}
The sum will be implemented using the linear combination of unitaries (LCU) method with robust oblivious amplitude amplification (ROAA). The LCU method is a way of implementing an operator that is written as a linear combination of unitary operators by block-encoding it within a larger unitary, and then projecting onto the correct subspace. In order to boost the probability of success to 1, or near 1, we need to use oblivious amplitude amplification. Oblivious amplitude amplification works similarly to regular amplitude amplification, but does not require reflections about the input state. The original statement and proof of oblivious amplitude amplification \cite{bcck2014} showed that this worked for a block-encoded unitary. However, our operator $(\bra{0} \otimes \openone) \widetilde{A}_m (\ket{0} \otimes \openone)$ is only approximately unitary, since it is an approximation of $\hat{A}_m$, which is exactly unitary. It was shown in \cite{bck2015} that oblivious amplitude amplification can be performed up to $O(\delta)$ precision (called \textit{robust} oblivious amplitude amplification) if the block-encoded operator is $\delta$-close to a unitary. In our case, since we have Claim \ref{clm:ast}, we can use the robust oblivious amplitude amplification result to implement $\widetilde{A}_m$ to within $O \left( \frac{d^2}{2^B} \right)$ error. 

We will define the Prepare and Select operations for the linear combination of unitaries that we are implementing in the following definitions.
\begin{definition}[Prepare operation] \label{def:prep}
    Let \normalfont{PREP}$ \in \mathcal{L}( \mathcal{H}_{2^B} \otimes \mathcal{H}_d \otimes \mathcal{H}_d)$ be a unitary operator,
    \begin{align}
        \text{\normalfont{PREP}} : \ket{0}^{\otimes B} \ket{0}^{\otimes \log d} \ket{0}^{\otimes \log d} \mapsto \frac{1}{d \sqrt{2^B}} \sum_{b=0}^{2^B-1} \sum_{c_1=0}^{d-1} \sum_{c_2=0}^{d-1} \ket{b} \ket{c_1} \ket{c_2}
    \end{align}
\end{definition}
\begin{definition}[Select operation] \label{def:sel}
    Let $\vec{U}_m = (\hat{U}_{m, 0, 0, 0}, \hat{U}_{m, 1, 0, 0}, \ldots \hat{U}_{m, B-1, d-1, d-1})$. Let \normalfont{SEL}$(\vec{U}_m) \in \mathcal{L} (\mathcal{H}_{2^B} \otimes \mathcal{H}_{d} \otimes \mathcal{H}_d \otimes \mathcal{H}_2 \otimes \mathcal{H}_{2^n})$ be a unitary operator,
    \begin{align}
        \text{\normalfont{SEL}}(\vec{U}_m) := \sum_{b=0}^{B-1} \sum_{c_1=0}^{d-1} \sum_{c_2=0}^{d-1} \ketbra{b}{b} \otimes \ketbra{c_1}{c_1} \otimes \ketbra{c_2}{c_2} \otimes  \left((\hat{X} \otimes \normalfont{\openone}) \hat{U}_{m, b, c_1, c_2} \right),
    \end{align}
\end{definition}
We can implement the Prepare operation by simply applying Hadamard gates to all the qubits in the case where $d$ is power of 2. We can build the Select operation out of the previously defined oracles, using an additional $\log M + 3 + n$ ancillary qubits to store the information needed for the oracles; the full algorithm is shown in Appendix \ref{sec:lcu}.

The circuit in Fig. \ref{fig:oraclecircuit} shows how the $O_C$ oracle defined in \ref{def:O_C} can be implemented using eight queries to the $O_{ind}$ oracle and $n+1$ additional qubits.
\begin{figure}[t]
\centering
\resizebox{0.9\textwidth}{!}{
    \Qcircuit @C=1em @R=1em {
        \lstick{r_0 = \ket{m}} & \qw & \multigate{3}{O_{ind}(r_0, r_1, r_2, r_3, r_5)} & \multigate{2}{O_{ind}(r_0, r_1, r_2, r_4, r_5)} & \qw & \multigate{2}{O_{ind}(r_0, r_1, r_5, r_4, r_2)} & \multigate{3}{O_{ind}(r_0, r_1, r_5, r_3, r_2)} & \qw & \multigate{6}{\text{uncompute}} \\
        \lstick{r_1 = \ket{b}} & \ctrl{5} & \ghost{O_{ind}(r_0, r_1, r_2, r_3, r_5)} & \ghost{O_{ind}(r_0, r_1, r_2, r_4, r_5)} & \gate{X} & \ghost {O_{ind}(r_0, r_1, r_5, r_4, r_2)} & \ghost {O_{ind}(r_0, r_1, r_5, r_3, r_2)} &  \qw & \ghost{\text{uncompute}} \\ 
        \lstick{r_2 = \ket{j}} & \qw & \ghost{O_{ind}(r_0, r_1, r_2, r_3, r_5)} & \ghost{O_{ind}(r_0, r_1, r_2, r_4, r_5)} & \qw & \ghost{O_{ind}(r_0, r_1, r_5, r_4, r_2)} & \ghost{O_{ind}(r_0, r_1, r_5, r_3, r_2)} & \ctrlo{5} & \ghost{\text{uncompute}}\\
        \lstick{r_3 = \ket{c_1}} & \qw & \ghost{O_{ind}(r_0, r_1, r_2, r_3, r_5)} & \qwx[-1] \qwx[1] \qw & \qw & \qwx[-1] \qwx[1] \qw & \ghost{O_{ind}(r_0, r_1, r_5, r_3, r_2)} & \qw & \ghost{\text{uncompute}}\\
        \lstick{r_4 = \ket{c_2}} & \qw & \qwx[-1] \qwx[1] \qw & \multigate{1}{\color{white} {O_{ind}(r_0, r_2, r_2, r_4, r_5)}} & \qw & \multigate{1}{{\color{white} O_{ind}(r_0, r_1, r_5, r_4, r_2)}} & \qw \qwx[-1] \qwx[1] & \qw & \ghost{\text{uncompute}} \\
        \lstick{r_5 = \ket{0}^{\otimes n}} & \qw & \gate{{\color{white} O_{ind}(r_0, r_1, r_2, r_3, r_5)}} & \ghost{O_{ind}(r_0, r_1, r_2, r_4, r_5)} & \qw & \ghost{O_{ind}(r_0, r_1, r_5, r_4, r_2)} & \gate{{\color{white} O_{ind}(r_0, r_1, r_5, r_3, r_2)}} & \qw &  \ghost{\text{uncompute}} \\
        \lstick{r_6 = \ket{0}} & \targ & \ctrl{-1} & \ctrlo{-1} & \qw & \ctrl{-1} & \ctrlo{-1} & \qw & \ghost{\text{uncompute}} \\
        \lstick{r_7 = \ket{0}} & \qw & \qw & \qw & \qw & \qw & \qw & \targ & \qw & \qw \gategroup{1}{1}{7}{7}{.7em}{--}
    }}
    \caption{Circuit to implement $O_C$ using two queries to the $O_{ind}$ oracle. The uncompute process is simply all the operations within the dashed box performed again, but in reverse order. The circuit applies 8 queries of the $O_{ind}$ oracle, with the order of the inputs depending on the value of $b$, which indicates whether the input state corresponds to the $\hat{H}_{\ell_m}$ basis or the $\hat{H}_{\ell_{m+1}}$ basis. The value of the qubit in the $r_7$ register gets flipped all the qubits in $r_2$ are zero.} \label{fig:oraclecircuit}
\end{figure}
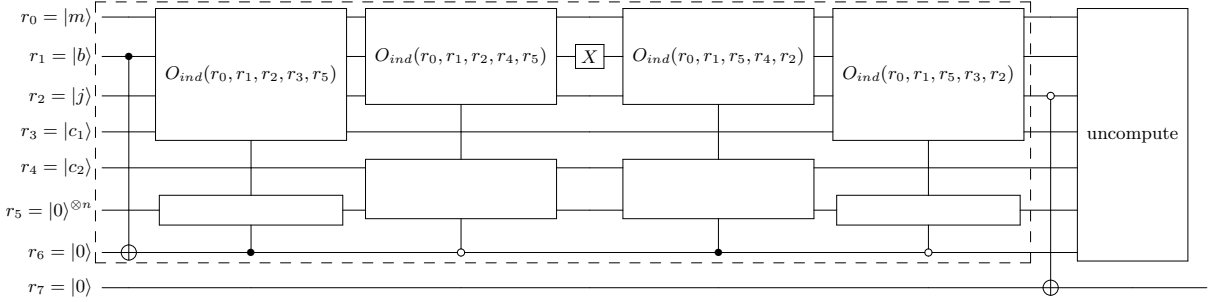

\begin{lemma}[Linear combination of unitaries] \label{lemma:lcu}
    Let
    \begin{align}
        \hat{W} := (\text{\normalfont{PREP}}^\dagger \otimes \normalfont{\openone}) \text{\normalfont{SEL}} (\vec{U}_m) (\text{\normalfont{PREP}} \otimes \normalfont{\openone}),
    \end{align}
    then,
    \begin{align}
        \left( (\ketbra{0}{0})^{\otimes (1 + B + 2 \log d)} \otimes \normalfont{\openone} \right) \hat{W} \ket{0}^{\otimes (B + 2 \log d)} \ket{0} \ket{\psi} = \frac{1}{d^2} \ket{0}^{\otimes (B + 2 \log d)} \widetilde{A}_m \ket{0} \ket{\psi}.
    \end{align}
\end{lemma}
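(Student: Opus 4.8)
The plan is to carry out the standard ``\text{PREP}--\text{SEL}--\text{PREP}$^\dagger$'' sandwich computation that underlies every linear-combination-of-unitaries block encoding, tracking the normalization carefully so that the prefactor comes out to exactly $1/d^2$ rather than some other power.

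First I would compute the action of $\text{\normalfont{PREP}}\otimes\openone$ on the input $\ket{0}^{\otimes(B+2\log d)}\ket{0}\ket{\psi}$, which by Definition~\ref{def:prep} produces the uniform superposition $\frac{1}{d\sqrt{2^B}}\sum_{b,c_1,c_2}\ket{b}\ket{c_1}\ket{c_2}\otimes\ket{0}\ket{\psi}$ over the $2^B d^2$ control settings. Applying $\text{\normalfont{SEL}}(\vec U_m)$ from Definition~\ref{def:sel} then leaves the control registers untouched while applying the corresponding system operator, giving $\frac{1}{d\sqrt{2^B}}\sum_{b,c_1,c_2}\ket{b}\ket{c_1}\ket{c_2}\otimes(\hat X\otimes\openone)\hat U_{m,b,c_1,c_2}\ket{0}\ket{\psi}$.

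Next I would apply $\text{\normalfont{PREP}}^\dagger\otimes\openone$ and project the control register onto $\ket{0}^{\otimes(B+2\log d)}$. The key identity is $\bra{0}^{\otimes(B+2\log d)}\text{\normalfont{PREP}}^\dagger=\bigl(\text{\normalfont{PREP}}\ket{0}^{\otimes(B+2\log d)}\bigr)^\dagger=\frac{1}{d\sqrt{2^B}}\sum_{b',c_1',c_2'}\bra{b'}\bra{c_1'}\bra{c_2'}$, so that the projection collapses the double sum through orthonormality of the control basis, $\braket{b'}{b}\braket{c_1'}{c_1}\braket{c_2'}{c_2}=\delta_{b'b}\delta_{c_1'c_1}\delta_{c_2'c_2}$. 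This leaves a prefactor of $\left(\frac{1}{d\sqrt{2^B}}\right)^2=\frac{1}{d^2 2^B}$ multiplying $\sum_{b,c_1,c_2}(\hat X\otimes\openone)\hat U_{m,b,c_1,c_2}\ket{0}\ket{\psi}$. Finally I would recognize from the definition of $\widetilde A_m$ in \eqref{tildeA} that $\frac{1}{2^B}\sum_{b,c_1,c_2}(\hat X\otimes\openone)\hat U_{m,b,c_1,c_2}=\widetilde A_m$, whence the surviving term is exactly $\frac{1}{d^2}\ket{0}^{\otimes(B+2\log d)}\widetilde A_m\ket{0}\ket{\psi}$, as claimed.

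The computation is essentially routine, so the part I would write most carefully is the bookkeeping rather than any deep step. The two genuine points of care are: ensuring the two factors of $1/(d\sqrt{2^B})$ contributed by $\text{\normalfont{PREP}}$ and $\text{\normalfont{PREP}}^\dagger$ combine with the $2^B$ hidden in the definition of $\widetilde A_m$ to give precisely $1/d^2$; and verifying that the extra $\ket{0}$ factor in the $(1+B+2\log d)$-fold projector discards none of the amplitude we wish to retain. The latter holds because the $(\hat X\otimes\openone)$ prefactor in $\text{\normalfont{SEL}}$ is exactly what rotates each intended transition back into the $\ket{0}$ subspace of the basis-flag qubit recorded in \eqref{tildeA}, while the unwanted $\ketbra{0}{1}$ and $\ketbra{1}{1}$ pieces of $\hat U_{m,b,c_1,c_2}$ are annihilated either by the flag-qubit $\bra{0}$ or by the $(-1)^b$ cancellation upon summation over $b$.
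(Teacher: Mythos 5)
Your proposal is correct and follows essentially the same route as the paper's own proof: expand $\hat{W}$ as the PREP--SEL--PREP$^\dagger$ sandwich, use orthonormality of the control basis to collapse the double sum, and combine the two factors of $\frac{1}{d\sqrt{2^B}}$ with the $\frac{1}{2^B}$ in the definition of $\widetilde{A}_m$ to obtain the $\frac{1}{d^2}$ prefactor. Your closing remark about the flag-qubit projector acting trivially (because the $(\hat{X}\otimes\openone)$ factor and the $(-1)^b$ cancellation confine $\widetilde{A}_m\ket{0}\ket{\psi}$ to the $\ket{0}$ flag subspace) is a point the paper leaves implicit, but it is consistent with, not a departure from, its argument.
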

The proof of Lemma \ref{lemma:lcu} can be found in Appendix \ref{sec:lcu_appendix}. A challenge is that the success probability of the operation, in general, is not close to $1$.  We can address this by using robust oblivious amplitude amplificatiom, which is described in the following lemma. we can use the following lemma from \cite{bck2015}:
\begin{lemma}[LCU with ROAA lemma] \label{lemma:roaa_lcu}
    Let $\delta > 0$, $\widetilde{A} = \sum_{a_j=0}^{J-1} a_j \hat{U}_j$, where each $\hat{U}_j$ is a unitary operators acting on a finite dimensional Hilbert space and $a_j \geq 0$, such that $\widetilde{A}$ is $\delta$-close with respect to the spectral norm to $\hat{A}$ for some unitary $\hat{A}$. Then, we can approximate $\widetilde{A}$ to within $O(\delta)$ using $O(a)$ {\normalfont{PREP}} and {\normalfont{SEL}$(\vec{U}_m)$} operations and their inverse operations, where $a:= \sum_{j=0}^{J-1} a_j$.
\end{lemma}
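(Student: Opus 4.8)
The plan is to prove this in two stages: first realize $\widetilde{A}/a$ as a block-encoding assembled from $\text{\normalfont{PREP}}$ and $\text{\normalfont{SEL}}(\vec{U}_m)$, and then amplify the flagged ``success'' amplitude from $\Theta(1/a)$ up to $\Theta(1)$ using oblivious amplitude amplification, arranging for the non-unitarity of $\widetilde{A}$ to be absorbed into a final $O(\delta)$ error. The first stage is precisely Lemma~\ref{lemma:lcu}: writing $a = \sum_j a_j$, the operator $\hat{W} = (\text{\normalfont{PREP}}^\dagger \otimes \openone)\,\text{\normalfont{SEL}}(\vec{U}_m)\,(\text{\normalfont{PREP}} \otimes \openone)$ satisfies $(\bra{0} \otimes \openone)\hat{W}(\ket{0} \otimes \openone) = \tfrac{1}{a}\widetilde{A}$, so that on any input $\ket{0}\ket{\psi}$ we have $\hat{W}\ket{0}\ket{\psi} = \tfrac{1}{a}\ket{0}\widetilde{A}\ket{\psi} + \ket{\Psi^\perp}$ with $\ket{\Psi^\perp}$ supported entirely off the flagged ancilla subspace. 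Note also that $\|\widetilde{A}\| \le \sum_j a_j\|\hat{U}_j\| = a$, so the flagged amplitude is a genuine $\sin\theta \le 1$.

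Next I would treat the idealized case in which the block-encoded operator is exactly a scalar multiple of a unitary, namely $\widetilde{A} = a\sin\theta\,\hat{A}$ with $\hat{A}$ unitary and $\sin\theta = \|\widetilde{A}\|/a$. Letting $\Pi = \ketbra{0}{0}\otimes\openone$, $R = 2\Pi - \openone$ the reflection about the flagged subspace, and $\mathcal{G} = -\hat{W}R\hat{W}^\dagger R$ the amplification iterate, the standard oblivious-amplitude-amplification argument shows that $\ket{0}\hat{A}\ket{\psi}$ and the normalized $\ket{\Psi^\perp}$ span a $\mathcal{G}$-invariant two-dimensional subspace \emph{for every} input $\ket{\psi}$ — this is what distinguishes oblivious amplification from ordinary amplitude amplification and relies crucially on $\hat{A}$ being unitary — with $\mathcal{G}^k \hat{W}\ket{0}\ket{\psi} = \sin((2k+1)\theta)\,\ket{0}\hat{A}\ket{\psi} + \cos((2k+1)\theta)\,\ket{\Psi^\perp}$. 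Since $\sin\theta \approx 1/a$, choosing $k = \Theta(a)$ so that $(2k+1)\theta \approx \pi/2$ drives the flagged amplitude to $1$, using $O(a)$ total calls to $\text{\normalfont{PREP}}$, $\text{\normalfont{SEL}}(\vec{U}_m)$ and their inverses.

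The robustness step is the main obstacle and the reason $\delta$-closeness to a \emph{unitary}, rather than mere boundedness, is assumed. Oblivious amplification applies a single rotation $\theta$ to all singular directions simultaneously; if the singular values of $\widetilde{A}/a$ were spread out, different components of $\ket{\psi}$ would be amplified by different amounts and no clean unitary could emerge. Writing $\widetilde{A} = \hat{A} + E$ with $\hat{A}$ unitary and $\|E\| \le \delta$ guarantees that all singular values of $\widetilde{A}/a$ lie within $\delta/a$ of $1/a$, so the amplified operation is $O(\delta)$-close to $\hat{A}$. The delicate point, established in Ref.~\cite{bck2015}, is that running the amplification built from the true $\hat{W}$ does not accumulate error catastrophically over its $\Theta(a)$ iterations. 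I would make this quantitative by comparing $\mathcal{G}$ against the idealized iterate $\mathcal{G}_0 = -\hat{W}_0 R \hat{W}_0^\dagger R$ for a unitary $\hat{W}_0$ that exactly block-encodes $\hat{A}/a$: because $\Pi(\hat{W} - \hat{W}_0)\Pi = E/a$ and nearby contractions admit nearby unitary dilations, one can take $\|\hat{W} - \hat{W}_0\| = O(\delta/a)$, whence $\|\mathcal{G} - \mathcal{G}_0\| = O(\delta/a)$ per step and the $\Theta(a)$ iterations contribute a total deviation of $O(a \cdot \delta/a) = O(\delta)$. This cancellation of the $a$ factor is the crux and the step requiring the most care. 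Once it is in hand, projecting the amplified state onto the flagged subspace yields an operation $O(\delta)$-close to $\hat{A}$, and hence by the triangle inequality $O(\delta)$-close to $\widetilde{A}$, which completes the argument.
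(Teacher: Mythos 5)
Your proposal is correct in its overall architecture, but note first that the paper never actually proves this lemma: it is imported as Lemma~6 of \cite{bck2015}, and Appendix~\ref{sec:roaa_appendix} only restates it in the paper's notation and sketches the mechanism (each iteration of $\hat{R}$ is, up to per-iteration error $1/2^B = \delta/a$, a rotation by $\pi/(2p+1)$ in a two-dimensional subspace, and the $p\in O(a)$ iterations accumulate error $O(\delta)$). Your error accounting reproduces exactly that sketch, but your way of making it rigorous differs from the argument of \cite{bck2015}: there the amplified operator $\hat{\Pi}\hat{R}^p\hat{W}\hat{\Pi}$ is analyzed algebraically as a polynomial in the non-unitary block $\widetilde{A}$ and $\widetilde{A}^\dagger$ and bounded using $\|\widetilde{A}\widetilde{A}^\dagger - \openone\| = O(\delta)$, whereas you construct a second, exactly ideal unitary dilation $\hat{W}_0$ of $\hat{A}/a$, invoke the exact (non-robust) oblivious amplitude amplification theorem of \cite{bcck2014} for $\hat{W}_0$, and transfer the conclusion to $\hat{W}$ by telescoping, with per-step cost $\|\mathcal{G}-\mathcal{G}_0\| \le 2\|\hat{W}-\hat{W}_0\| = O(\delta/a)$ over $\Theta(a)$ steps. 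This hybrid route is conceptually clean --- all unitarity-dependent work is quarantined in the ideal case --- and you correctly identify the cancellation of the factor $a$ as the crux.

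The one step you assert rather than prove deserves a warning, because as a blanket principle it is false: ``nearby contractions admit nearby unitary dilations'' fails when the contraction has a singular value near $1$. For the one-dimensional contractions $1$ and $1-\epsilon$, every unitary dilation of the former has garbage amplitude $0$ while every unitary dilation of the latter has garbage amplitude $\sqrt{1-(1-\epsilon)^2} = \Theta(\sqrt{\epsilon})$, so no dilations can be closer than $\Theta(\sqrt{\epsilon})$; the map $\sigma \mapsto \sqrt{1-\sigma^2}$ is not Lipschitz at $\sigma = 1$. Your argument survives only because the encoded block $\widetilde{A}/a$ has norm roughly $1/a$, bounded away from $1$. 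Concretely, writing $\hat{G} := (\openone - \hat{\Pi})\hat{W}(\ket{0}\otimes\openone)$ for the garbage block, unitarity of $\hat{W}$ gives $\hat{G}^\dagger\hat{G} = \openone - \widetilde{A}^\dagger\widetilde{A}/a^2$, which is bounded below by a positive constant for $a \ge 2$, $\delta < 1$; the polar renormalization $\hat{G}_0 := \sqrt{1-1/a^2}\,\hat{G}\left(\hat{G}^\dagger\hat{G}\right)^{-1/2}$ therefore moves $\hat{G}$ by only $O(\delta/a^2)$, the column isometry $\ket{0}\ket{\psi} \mapsto \frac{1}{a}\ket{0}\hat{A}\ket{\psi} + \hat{G}_0\ket{\psi}$ is $O(\delta/a)$-close to $\hat{W}(\ket{0}\otimes\openone)$, and extending it to a unitary $\hat{W}_0$ by rotating the orthogonal complement of its range onto that of $\hat{W}$ preserves $\|\hat{W}-\hat{W}_0\| = O(\delta/a)$. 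With that construction supplied in place of the general principle --- plus the proviso, shared with the paper and \cite{bck2015}, that $p$ be an integer satisfying $\sin\left(\pi/(2(2p+1))\right) = 1/a$ (enforceable by padding the LCU) --- your proof is complete.
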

In our case, from Claim \ref{clm:ast}, $(\bra{0}\otimes \openone) \widetilde{A}_m (\ket{0} \otimes \openone)$ is $\delta$-close to the unitary $\hat{A}_m$ where $\delta = \frac{2d^2}{2^B}$. We are summing over $2^B d^2$ unitaries $\hat{U}_{m, b, c_1, c_2}$, where the coefficient on each of the unitaries is $\frac{1}{2^B}$. Let 
\begin{align}
    a = \sum_{b=0}^{2^B-1} \sum_{c_1=0}^{d-1} \sum_{c_2=0}^{d-1} \frac{1}{2^B} = \frac{2^Bd^2}{2^B} = d^2.
\end{align}
By Lemma \ref{lemma:roaa_lcu}, we can approximate $\widetilde{A}_m$ to within $O \left(\frac{d^2}{2^B}\right)$ using $O(d^2)$ Prepare and Select operations and their inverse operations. Each application of Select uses a constant number of oracle queries, $O(B + \log d)$ additional 2-qubit gates and $O(B + \log d)$ extra qubits, and our Prepare operation is simply the Hadamard gate applied to $B + 2 \log d$ qubits, since all the amplitudes on the unitaries in the LCU are the same (note that this uses fewer gates than in \cite{bck2015}, which proves the result for a linear combination of unitaries with arbitrary coefficients). Thus, approximating $\widetilde{A}_m$ to within error $O \left( \frac{d^2}{2^B} \right)$ takes $O(d^2)$ oracle queries, $O \left(d^2 (B + \log d) \right)$ additional two-qubit gates, and $O (B + \log d)$ additional qubits. More details about the implementation are given in Appendix \ref{sec:roaa_appendix}.

\subsection{Error Scaling and Complexity Analysis}
\begin{theorem}
Let $\mathcal{H}$ be a finite-dimensional Hilbert space and let $\hat{H}=\sum_{j=1}^L \alpha_j \hat{H}_j$ where $\alpha_j\ge 0$ and $\hat{H}_j \in \mathcal{L}(\mathcal{H})$ is a Hermitian operator for each $j$ and these $\hat{H}_j$ form a $d$-sparse Hamiltonian decomposition as defined in Definition~\ref{def:sparse}.  Then for any $t>0$ and $\epsilon>0$ a unitary $V$ can be constructed such that $\|(\bra{0} \otimes \openone ) \hat{V} (\ket{0} \otimes \openone) - e^{-iHt}\|\le \epsilon$ using a number of queries to the oracles $O_{ind}, O_{IM}, O_{IP}, O_{EP}$ that scales as
$$
O \left( \frac{L5^{k} \alpha_{\rm comm}^{\frac{1}{2k}} t^{1 + \frac{1}{2k}} d^2}{\epsilon^{\frac{1}{2k}}} \right),
$$
and a number of additional two-qubit gates that scales as
$$
\tilde{O} \left( d^2 L 5^{k} \frac{\alpha_{\rm comm}^{\frac{1}{2k}} t^{1 + \frac{1}{2k}}}{\epsilon^{\frac{1}{2k}}}\right)
$$
\end{theorem}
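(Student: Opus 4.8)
The plan is to assemble the pieces already established in the excerpt and to control the two independent sources of error---the Trotter truncation of $e^{-i\hat{H}t}$ and the per-step block-encoding of each transition---so that each contributes at most $\epsilon/2$ to the final error.

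First I would fix the order $2k$ and choose the repetition count $r$ from the higher-order bound of Lemma~\ref{lemma:trottererror}: requiring $\alpha_{\rm comm} t^{2k+1}/r^{2k}\le \epsilon/2$ forces $r = O(\alpha_{\rm comm}^{1/2k} t^{1+1/2k}/\epsilon^{1/2k})$, so that $\hat{U}_{2k}^r(t/r)$ lies within $\epsilon/2$ of $e^{-i\hat{H}t}$ in operator norm. Lemma~\ref{lemma:HamPathInt} then expands this product into $M=2L5^{k-1}r$ single-term exponentials, the $m$-th of which is represented in the computational basis by the idealized unitary transition $\hat{A}_m$ of Eq.~\eqref{A_m2}.

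Next I would implement each $\hat{A}_m$. Lemma~\ref{lemma:lcu} shows that the LCU circuit $\hat{W}$ block-encodes $\widetilde{A}_m$, and Claim~\ref{clm:ast} bounds $\widetilde{A}_m$ as $\delta$-close to the exact unitary $\hat{A}_m$ with $\delta = 2d^2/2^B$. Feeding this into robust oblivious amplitude amplification (Lemma~\ref{lemma:roaa_lcu}) with $a=d^2$ produces a genuine unitary $V_m$ on system-plus-ancilla whose action on any input $\ket{0}\ket{\psi}$ satisfies $\|V_m\ket{0}\ket{\psi} - \ket{0}\otimes\hat{A}_m\ket{\psi}\| = O(d^2/2^B)$, at a cost of $O(d^2)$ oracle queries and $O(d^2(B+\log d))$ additional two-qubit gates. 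Setting $\hat{V} = V_{M-1}\cdots V_1 V_0$ on the shared ancilla and projecting the ancilla onto $\ket{0}$ at the end, a telescoping hybrid argument over the $M$ unitary factors gives $\|(\bra{0}\otimes\openone)\hat{V}(\ket{0}\otimes\openone) - \hat{A}_{M-1}\cdots\hat{A}_0\| \le \sum_{m=0}^{M-1} O(d^2/2^B) = O(Md^2/2^B)$. Choosing $B = O(\log(Md^2/\epsilon))$ makes this at most $\epsilon/2$, and the triangle inequality with the Trotter bound yields a total error $\le \epsilon$.

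Finally I would collect the costs. The oracle-query count is $M\cdot O(d^2) = O(L5^k r d^2)$, and substituting $r$ gives $O(L5^k \alpha_{\rm comm}^{1/2k} t^{1+1/2k} d^2/\epsilon^{1/2k})$; the gate count is $M\cdot O(d^2(B+\log d)) = \widetilde{O}(L5^k r d^2)$, with the $\widetilde{O}$ absorbing the polylogarithmic factor $B=O(\log(Md^2/\epsilon))$, reproducing the stated bounds. The step I expect to be the main obstacle is the composition: robust OAA only drives each $V_m$'s success amplitude to \emph{near} one, so the danger is that the residual weight leaking out of the ancilla-$\ket{0}$ sector compounds across the $M$ factors and produces an $O(M\sqrt{\delta})$ rather than an $O(M\delta)$ error. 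The key to avoiding this is to invoke the \emph{full-space} robustness guarantee of Lemma~\ref{lemma:roaa_lcu}---that $V_m\ket{0}\ket{\psi}$ is $O(\delta)$-close to $\ket{0}\otimes\hat{A}_m\ket{\psi}$ \emph{including} its leakage component, not merely after projection---so that each $V_m$ behaves as an $O(\delta)$-perturbed isometry and the per-step errors add only linearly under the product of contractions.
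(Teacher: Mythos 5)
Your proposal is correct and follows essentially the same route as the paper's proof: Trotter error from Lemma~\ref{lemma:trottererror} plus per-step error from Claim~\ref{clm:ast} and Lemma~\ref{lemma:roaa_lcu}, summed over the $M=2L5^{k-1}r$ transitions and balanced by the choices of $r$ and $B$, yielding identical query and gate counts. The only notable difference is that where the paper composes the $M$ projected blocks by citing Box 4.1 of \cite{nielsenchuang}, you explicitly invoke the full-space (leakage-inclusive) guarantee of robust oblivious amplitude amplification to justify linear error accumulation --- a slightly more careful rendering of the same step, not a different argument.
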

\begin{proof}
From Lemma \ref{lemma:roaa_lcu}, each iteration of the ROAA process implements $(\bra{0}\otimes \openone) \widetilde{A}_m (\ket{0}\otimes \openone)$ with error $\delta \in O \left( \frac{d^2}{2^B} \right)$, that is,
\begin{align}
    \left\| (\bra{0}^{\otimes (1 + B + 2 \log d)}  \otimes \openone) \widetilde{R}_m (\ket{0}^{\otimes (1 + B + 2 \log d)}  \otimes \openone) - (\bra{0} \otimes \openone) \widetilde{A}_m (\ket{0} \otimes \openone)\right\| \in O \left( \frac{d^2}{2^B} \right),
\end{align}
where $\widetilde{R}_m$ denotes the operation implemented by the robust oblivious amplitude amplification process described in Appendix \ref{sec:roaa_appendix} (i.e., $\widetilde{R}_m := \hat{\Pi} \hat{R}^p \hat{W} \hat{\Pi}$, where $\hat{\Pi}$, $\hat{R}$, $\hat{W}$ and $p$ are defined in Appendix \ref{sec:roaa_appendix}). From Claim \ref{clm:ast}, we also have
\begin{align}
    \left\| \big(\bra{0} \otimes \normalfont{\openone} \big) \widetilde{A}_m \big(\ket{0} \otimes \normalfont{\openone} \big) - \hat{A}_m \right\| \in O \left( \frac{d^2}{2^B} \right)
\end{align}
By a straightforward application of the triangle inequality, we have that the ROAA process implements $\hat{A}_m$ to within error $O \left( \frac{d^2}{2^B} \right)$:
\begin{align}
     & \quad \left\| (\bra{0}^{\otimes (1 + B + 2 \log d)}  \otimes \openone) \widetilde{R}_m (\ket{0}^{\otimes (1 + 2 \log d)}  \otimes \openone) - \hat{A}_m \right\| \\
     &= \left\| (\bra{0}^{\otimes (1 + B + 2 \log d)}  \otimes \openone) \widetilde{R}_m (\ket{0}^{\otimes (1 + B + 2 \log d)}  \otimes \openone) - (\bra{0} \otimes \openone) \widetilde{A}_m (\ket{0} \otimes \openone) + \big(\bra{0} \otimes \normalfont{\openone} \big) \widetilde{A}_m \big(\ket{0} \otimes \normalfont{\openone} \big) - \hat{A}_m \right\| \\
     &\leq \left\| (\bra{0}^{\otimes (1+ B + 2 \log d)}  \otimes \openone) \widetilde{R}_m (\ket{0}^{\otimes (1 + B + 2 \log d)}  \otimes \openone) - (\bra{0} \otimes \openone) \widetilde{A}_m (\ket{0} \otimes \openone) \right\| + \left\| \big(\bra{0} \otimes \normalfont{\openone} \big) \widetilde{A}_m \big(\ket{0} \otimes \normalfont{\openone} \big) - \hat{A}_m \right\| \\
     & \in O \left( \frac{d^2}{2^B} \right)
\end{align}

By Box 4.1 in \cite{nielsenchuang}, the total error of the entire process of $M$ transitions is less than or equal to the sum of the errors from the individual transitions:
\begin{align}
    & \quad \left\| \prod_{m=0}^{M-1} \hat{A}_m  - \prod_{m=0}^{M-1} (\bra{0}^{\otimes (1 + B + 2 \log d)}  \otimes \openone) \widetilde{R}_m (\ket{0}^{\otimes (1 + B + 2 \log d)}  \otimes \openone) \right\| \\
    &\leq \sum_{m=0}^{M-1} \left\| \hat{A}_m - (\bra{0}^{\otimes (1 + B + 2 \log d)}  \otimes \openone) \widetilde{R}_m (\ket{0}^{\otimes (1 + B + 2 \log d)}  \otimes \openone) \right\| \\
    & \in O \left( M \frac{d^2}{2^B} \right) \\
    & \subseteq O \left( \frac{L5^{k}rd^2}{2^B} \right)
\end{align}
Recall that $L$ is the number of terms in the Hamiltonian decomposition, $2k$ is the Trotter order, and $r$ is the number of Trotter time steps. This is the error on implementing the $2k^{\rm th}$-order Trotter product formula in \eqref{trotter}, but there is also error on this product itself compared to the ideal evolution $e^{-i \hat{H} t}$. By Lemma \ref{lemma:trottererror}, this Trotter error is bounded by
\begin{align}
    \left\| e^{-i \hat{H} t} - \prod_{m=0}^{M-1} e^{-i \hat{H}_{\ell_m} \tau_m \frac{t}{r}} \right\| \leq \frac{\alpha_{\rm comm} t^{2k+1}}{r^{2k}}
\end{align}    
where $\alpha_{\rm comm} := \sum_{j_0,\ldots,j_{2k}} \left\|[H_{j_0}, [H_{j_1},\cdots[H_{j_{2k-1}},H_{j_{2k}}]\cdots]] \right\|_\infty$. Thus, the overall error on the entire evolution is
\begin{align}
     & \quad  \left\| e^{-i \hat{H} t} -  \prod_{m=0}^{M-1} (\bra{0}^{\otimes (1 + B + 2 \log d)}  \otimes \openone) \widetilde{R}_m (\ket{0}^{\otimes (1 + B + 2 \log d)}  \otimes \openone)  \right\| \\
     &= \left\| e^{-i \hat{H} t} - \prod_{m=0}^{M-1} e^{-i \hat{H}_{\ell_m} \tau_m \frac{t}{r}} + e^{-i \hat{H}_{\ell_m} \tau_m \frac{t}{r}} -  \prod_{m=0}^{M-1} (\bra{0}^{\otimes (1 + B + 2 \log d)}  \otimes \openone) \widetilde{R}_m (\ket{0}^{\otimes (1 + B + 2 \log d)}  \otimes \openone)  \right\| \\
     &\leq \left\| e^{-i \hat{H} t} - \prod_{m=0}^{M-1} e^{-i \hat{H}_{\ell_m} \tau_m \frac{t}{r}} \right\| + \left\| e^{-i \hat{H}_{\ell_m} \tau_m \frac{t}{r}} - \prod_{m=0}^{M-1} (\bra{0}^{\otimes (1 + B + 2 \log d)}  \otimes \openone) \widetilde{R}_m (\ket{0}^{\otimes (1 + B + 2 \log d)}  \otimes \openone)  \right\| \\
     & \in O \left( \frac{L 5^k r d^2}{2^B} + \frac{\alpha_{\rm comm} t^{2k+1}}{r^{2k}} \right)
\end{align}
If we want the above to be less than some error tolerance $\epsilon$, we can choose $B \in \Theta \left(\log \left( \frac{5^{k} r^{2k+1} L d^2}{\alpha_{\rm comm} t^{2k+1}} \right) \right)$, which gives us an overall error scaling of
\begin{equation}
    \epsilon \in O \left( \frac{\alpha_{\rm comm} t^{2k+1}}{r^{2k}} \right),
\end{equation}
and the number of time steps that we will need is 
\begin{equation}
    r \in O \left( \frac{\alpha_{\rm comm}^{\frac{1}{2k}} t^{1 + \frac{1}{2k}}}{\epsilon^{\frac{1}{2k}}} \right).
\end{equation}
Since we have $M = 2 L 5^{k-1} r$ transition steps for the 2$k^{\rm th}$-order Trotter decomposition (see Lemma \ref{lemma:HamPathInt}), each of which requires $O (d^2)$ oracle queries and $O \left( d^2 (B + \log d) \right)$ additional two-qubit gates, in order to implement the entire time evolution within error $\epsilon$, we have a total query complexity of
\begin{equation}
    O \left( \frac{L5^{k} \alpha_{\rm comm}^{\frac{1}{2k}} t^{1 + \frac{1}{2k}} d^2}{\epsilon^{\frac{1}{2k}}} \right),
\end{equation}
and the number of additional 2-qubit gates needed is
\begin{align}
    & \quad O \left( d^2 L 5^{k} \frac{\alpha_{\rm comm}^{\frac{1}{2k}} t^{1 + \frac{1}{2k}}}{\epsilon^{\frac{1}{2k}}} \left(\log \left(\frac{5^{k} r^{2k+1} L^2 d^2}{\alpha_{\rm comm} t^{2k+1}} \right) + \log d \right) \right) \\
    & \subseteq \tilde{O} \left( d^2 L 5^{k} \frac{\alpha_{\rm comm}^{\frac{1}{2k}} t^{1 + \frac{1}{2k}}}{\epsilon^{\frac{1}{2k}}} \right)
\end{align}
\end{proof}
This shows that we can achieve near-linear simulation time using our Hamiltonian simulation method.  Similarly, we can choose $k\in O(\sqrt{\log(\alpha_{\rm comm}t/\epsilon)})$ then sub-polynomial scaling with $1/\epsilon$ can be achieved.  It is well known that linear scaling with $t$ is the optimal scaling for the number of queries needed to the matrix elements of the Hamiltonian.  In our context, however, our oracles provide information about the inner products and eigenvalues of the Hamiltonians in the decomposition.  

\begin{lemma}
Let $\mathcal{H}$ be a finite dimensional Hilbert space,  $t\in \mathbb{R}$ and let $\hat{H} = \sum_j \alpha_j H_j$ such that each $\hat{H}_j \in \mathcal{L}(\mathcal{H})$ with $\alpha_j \ge 0$ and $\alpha := \sum_j \alpha_j$ and these operators form a $d$-sparse decomposition of $\hat{H}$.
No algorithm exists that can implement a quantum channel $W: \mathcal{L}(\mathcal{H}) \mapsto \mathcal{L}(\mathcal{H})$ such that $D_{\rm Tr}(W,e^{-i\hat{H}t}) \le 1/6$ for all such $\hat{H}$ using a number of queries to $O_{ind},O_{IP},O_{IM},O_{EV}$ that scales as $o(\alpha t)$.
\end{lemma}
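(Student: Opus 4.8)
The plan is to prove this lower bound by reduction from the \textsc{Parity} problem, whose bounded-error quantum query complexity on $N$ bits is $\Omega(N)$. The strategy is to exhibit a family of $d$-sparse Hamiltonians, indexed by a hidden bit string $x\in\{0,1\}^N$ and accessed only through a standard bit oracle $O_x:\ket{j}\ket{b}\mapsto \ket{j}\ket{b\oplus x_j}$, for which (i) the total weight $\alpha=\sum_\ell \alpha_\ell$ and the evolution time $t$ satisfy $\alpha t=\Theta(N)$, (ii) the ideal evolution $e^{-i\hat H t}$ applied to a fixed initial state deterministically transfers the parity $x_1\oplus\cdots\oplus x_N$ onto a single readout qubit, and (iii) each of the oracles $O_{ind},O_{IP},O_{IM},O_{EP}$ for this Hamiltonian can be implemented with $O(1)$ queries to $O_x$. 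Given these three properties, any simulation algorithm making $q$ oracle queries and achieving trace distance $\le 1/6$ immediately yields a quantum algorithm computing $\mathrm{par}(x)$ with success probability $\ge 1-1/6=5/6>2/3$ using $O(q)$ queries to $O_x$; the parity bound then forces $q=\Omega(N)=\Omega(\alpha t)$, and since $N$ is arbitrary no $o(\alpha t)$ algorithm can exist for all such $\hat H$.

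For the hard instance I would take the perfect-state-transfer (angular-momentum) Hamiltonian on a path of $N+1$ vertices tensored with a parity qubit. Writing $c_j=\sqrt{j(N+1-j)}$, set
\[
\hat H(x)=\tfrac12\sum_{j=1}^N c_j\left(\ketbra{j}{j-1}\otimes X^{x_j}+\ketbra{j-1}{j}\otimes X^{x_j}\right),
\]
where $X^{x_j}$ is the Pauli-$X$ when $x_j=1$ and the identity otherwise. Grouping the even-$j$ and odd-$j$ edges into two matchings gives a decomposition $\hat H=\alpha_1\hat H_1+\alpha_2\hat H_2$ into a constant number of terms, each a direct sum of commuting $2\times2$ blocks, so $d=O(1)$, while the largest block entry is $\Theta(N)$ so $\alpha=\alpha_1+\alpha_2=\Theta(N)$. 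Conjugating by the running-parity unitary $U=\sum_j\ketbra{j}{j}\otimes X^{p_j}$ with $p_j=x_1\oplus\cdots\oplus x_j$ cancels every $X^{x_j}$ and turns $\hat H(x)$ into $J_x\otimes\openone$; since $e^{-iJ_x\pi}$ maps $\ket{0}$ to $\ket{N}$ exactly, choosing $t=\pi$ gives $\alpha t=\Theta(N)$ and sends $\ket{0}\ket{0}\mapsto\ket{N}\ket{\mathrm{par}(x)}$, establishing (i) and (ii). Measuring the qubit register then reads the parity with certainty in the ideal case.

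Property (iii) is where the real work lies, and it is the step I expect to be the main obstacle. Each term $\hat H_\ell$ is block-diagonal with eigenvalues $\pm c_j/(2\alpha_\ell)$ that depend only on the known weights, so the eigenvalue oracle $O_{EP}$ needs no queries to $O_x$; the bit information lives entirely in the \emph{eigenvectors}, whose supports, and hence the overlaps queried by $O_{ind}$, $O_{IM}$ and $O_{IP}$, are determined by whether a given $X^{x_j}$ flips the qubit. One must check that, for each requested index, the nonzero-overlap enumeration $f_{ind}$, the overlap magnitudes (all in $\{0,1/\sqrt2\}$) and the phases can be computed by reading a constant number of the bits $x_j$ adjacent to that block and then uncomputing the scratch registers reversibly, so that all four oracles are realized with $O(1)$ queries to $O_x$ apiece. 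The remaining, more routine steps are to confirm the $d$-sparsity bookkeeping of Definition~\ref{def:dsparse}, to fold in a zero diagonal term $\hat H_0=0$ so that Definition~\ref{def:Ham} is satisfied, and to convert the trace-distance guarantee into a total-variation bound on the readout measurement so the $1/6$ threshold yields success probability bounded away from $1/2$. Combining the reduction with the $\Omega(N)$ parity bound then completes the argument.
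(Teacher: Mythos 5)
Your proposal is correct and follows the same master plan as the paper's proof: a reduction from parity, whose $\Omega(N)$ bounded-error quantum query bound is invoked after establishing that (a) the hard Hamiltonian admits a two-term, $O(1)$-sparse even/odd-matching decomposition, (b) the ideal dynamics deterministically writes the parity into a readout register, and (c) each of $O_{ind},O_{IM},O_{IP},O_{EV}$ can be answered with $O(1)$ queries to the bit oracle because the eigenvalues are bit-independent and every overlap depends on only a constant number of bits. Where you genuinely differ is the hard instance. The paper uses the uniform-weight antisymmetric line Hamiltonian $\hat H' = i\sum_{x,j}\left(\ketbra{x+1,Y(x)\oplus j}{x,j} - \ketbra{x,j}{x+1,Y(x)\oplus j}\right)$, so $\alpha=O(1)$ and the $N$-dependence sits in the evolution time $t=\Theta(N)$; your running-parity conjugation $U=\sum_j\ketbra{j}{j}\otimes X^{p_j}$ is exactly the paper's ``permutation equivalence,'' so that ingredient is identical in substance. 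You instead take the Christandl perfect-state-transfer chain $c_j=\sqrt{j(N+1-j)}$, i.e.\ $J_x$ in disguise, so transfer $\ket{0}\mapsto\ket{N}$ at $t=\pi$ is \emph{exact} (a rotation by $\pi$ about the $x$-axis maps $\ket{m}$ to $\ket{-m}$ up to phase), and the $N$-dependence sits in $\alpha=\Theta(N)$. Your choice buys rigor that the paper's instance lacks: the uniform chain's dispersion is nonlinear, so the paper's claim that $e^{-i\hat H' z}\ket{0,0}=\ket{N-1,0}$ exactly for some $z\in\Theta(N)$ is loose (only approximate ballistic transfer holds), though this is harmless for a lower bound; your instance is also closer to the original no-fast-forwarding construction of Berry, Ahokas, Cleve and Sanders.

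One addition you should make: because your family pins $t=\pi$, it rules out any query count that is $o(\alpha t)$ read as a function of the product $\alpha t$, but under a per-variable reading it leaves open scalings such as $O(\alpha\sqrt{t})$, which on your instances are $\Theta(\alpha t)$. The paper faces the mirror-image issue (its family pins $\alpha=O(1)$) and closes it with a one-line rescaling argument, $\hat H\mapsto c\hat H$, $t\mapsto t/c$, which sweeps out the whole $(\alpha,t)$ range; the same sentence works verbatim for your construction and should be appended. Two trivial slips: the nonzero overlap magnitudes between eigenvectors of the even and odd matchings are $1/2$ (a product of two factors of $1/\sqrt{2}$), not $1/\sqrt{2}$, and the threshold bookkeeping should note that $D_{\rm Tr}\le 1/6$ bounds the deviation of any outcome probability by $1/6$, giving success probability at least $5/6$ against the exact-transfer ideal — both immaterial to the conclusion.
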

\begin{proof}
The proof of this claim follows from the exact same reasoning as the quantum no fast-forwarding theorem.  We consider a Hamiltonian that has a sparse decomposition and cannot be simulated without violating a well known quantum lower bound.  Specifically, we choose to use the parity Hamiltonian whose dynamics can be used to compute the parity of a bitstring.  It can easily be seen from lower bound techniques such as the polynomial method that computing the parity of a bitstring of length $N$ with probability of success greater than $2/3$ requires $\Omega(N)$ queries to an oracle that provides elements of the bitstring.  

The Hamiltonian for this system works as follows.  First let $Y$ be the bitstring whose parity we wish to compute.  Let $\ket{x,j}\in \mathbb{C}^{N} \otimes \mathbb{C}^2$ represent a position for a combined system that tracks the label of the index of the bitstring and the parity of all the bits computed up to point $x$.  We then define
\begin{equation}
\hat{H}= i\sum_{xj} \ketbra{x+1,Y(x)\oplus j}{x,j}- \ketbra{x,j}{x+1,Y(x)\oplus j}
\end{equation}
The Hamiltonian is permutationally equivalent to the Hamiltonian
\begin{equation}
\hat{H}'= i\sum_{xj} \ketbra{x+1,j}{x,j}- \ketbra{x,j}{x+1,j}
\end{equation}
This Hamiltonian can be diagonalized using the quantum Fourier transform and it can be seen from the eigenvalues that there exists $z\in \Theta(N)$ such that $e^{-i\hat{H}' z}\ket{0,0} =\ket{N-1,0}$.  This is permutationally equivalent to the state $\ket{N-1, Y(N-1)}$ if we transform back to the basis used to represent $\hat{H}$.

Next assume that we can implement an operator $W$ that is within trace distance $1/2 - \delta$ of $e^{-i\hat{H}t}$.  The trace distance on channels is defined to be the induced trace norm on density operators.  Specifically if we define $\mathcal{D}\subset \mathcal{L}(\mathcal{H})$ to be the set of valid density operators on the underlying Hilbert space then
\begin{equation}
D_{\rm Tr}(W,e^{-i \hat{H} t}) := \frac{1}{2}\sup_{\rho\in \mathcal{D}}\|W(\rho) - e^{-i \hat{H}t} \rho e^{i\hat{H}t}\|_1.
\end{equation}
We then have that for any operator $\hat{A}$ that the expectation value of the operator is
\begin{equation}
\sup_{\rho \in \mathcal{D}} |{\rm Tr}( W(\rho)A -  e^{-i \hat{H}t} \rho e^{i\hat{H}t}|A)|
\end{equation}
In our context let $A$ be a projector onto a subspace that decides the answer to a particular decision problem.  In that case, we have that $\|A\| =1$ and from the von Neumann trace inequality
\begin{equation}
\sup_{\rho \in \mathcal{D}} |{\rm Tr}( W(\rho)A -  e^{-i \hat{H}t} \rho e^{i\hat{H}t}|\hat{A})| \le 2D_{\rm Tr}(W,e^{-i \hat{H} t})
\end{equation}
Thus ensuring that $\delta<2/3$ guarantees that the trace error in the answer to the decision problem is at most $1/3$.  This is why we ensure that the trace distance is at most $1/6$.

Next we want to consider simulating this Hamiltonian using our oracles.  To make this translation, we will begin by performing a further decomposition of the Hamiltonian.  Let $\hat{H}_{even}$ and $\hat{H}_{odd}$ be two one-sparse Hamiltonians that we decompose the Hamiltonian into.  Specifically,
\begin{align}
\hat{H}_{\rm even}&= i\sum_{xj} \ketbra{2x+1,Y(2x)\oplus j}{2x,j}- \ketbra{2x,j}{2x+1,Y(2x)\oplus j}\\
\hat{H}_{\rm odd}&= i\sum_{xj} \ketbra{2x+2,Y(2x+1)\oplus j}{2x+1,j}- \ketbra{2x+1,j}{2x+2,Y(2x+1)\oplus j}
\end{align}
This is a $d=2$ sparse decomposition and thus it is a valid example of our results.  

Let us now assume that an algorithm exists that can simulate the dynamics within a number of queries to $O_{ev}, O_{ip}, O_{im}, O_{ind}$ that is in $o(t)$.  Next assume that we have an oracle $O_Y$ such that $O_Y\ket{x}\ket{0} = \ket{x}\ket{Y(x)}$.  As $\hat{H}_{\rm even}$ and $\hat{H}_{\rm odd}$ are one sparse, the eigenbasis of both can be chosen such that each eigenvector lies in an irreducible two-dimensional subspace.  Each such eigenvector can be expressed as
\begin{equation}
\ket{\psi_{\pm}} = \frac{\ket{2x,j\oplus Y(2j)} \pm \ket{2x+1,j}}{\sqrt{2}}.
\end{equation}
A similar expression holds for the eigenvectors of $\hat{H}_{\rm odd}$ which we denote $\ket{\phi_{\pm}}$.  
For each such eigenvector of $\hat{H}_{\rm even}$ there are at most $8$ eigenvectors that could have non-zero overlap.  Thus at most a constant number of inner products and eigenvalues need to be computed.  The eigenvalues for $\hat{H}_{\rm even}$ are trivially $\pm 1$.  If the eigenvectors are indexed by this eigenvalue then no queries are needed to compute the eigenvalue here.

The inner products on the other hand require a non-zero number of queries.  In particular, specifically, the inner product between two eigenvectors of $\hat{H}_{\rm even}$ and $\hat{H}_{\rm odd}$ requires at most four queries to $O_Y$.  Thus a single query to $O_{IP}$ and $O_{IM}$ can be simulated using $\Theta(1)$ queries to $O_Y$.  In this case the oracle $O_{\rm ind}$ and $O_{EV}$ can be implemented using no queries as they depend only on the labels for the eigenvectors.  Thus the total number of queries to $O_Y$ needed to simulate a single query to these oracles is in $O(1)$.  Thus if there existed a quantum algorithm for simulating Hamiltonian dynamics using the path integral representation that requires $o(t)$ queries then this parity simulation algorithm would require $o(N)$ queries to $O_Y$, which violates the $\Omega(N)$ quantum lower bound for computing parity with probability greater than $2/3$~\cite{beals2001quantum}.  Thus linear scaling with $t$ is optimal if $\alpha \in O(1)$.

Finally, the fact that $\alpha t$ scaling is optimal follows immediately from the fact that if we were to change the Hamiltonian such that $\hat{H} \mapsto \alpha \hat{H}$ then we would only need simulation time $t\mapsto t/\alpha$.  Thus repeating the same argument we arrive at the conclusion that $o(\alpha t)$ scaling is optimal.
\end{proof}
\section{Long Time Hamiltonian Path Integral Algorithm} \label{sec:LTHamPathInt}
As mentioned earlier, the Hamiltonian path integral method is not restricted to solely the short time step case.  In contrast, we can consider a different approach that instead divides the paths into much longer segments and uses arguments reminiscent of the stationary phase approximation or the adiabatic theorem to evaluate the time evolution.

The central observation behind these approaches can be seen using the following argument.  Let us consider all paths that begin in a particular value of $j_0$ and end in a particular value of $j_{M-1}$.  The sum over all such paths in the path integral that gives the coefficient of such a transition takes the following form
\begin{equation}
    \sum_{j_1=0}^{2^n-1} \ldots \sum_{j_{M-2}=0}^{2^n-1}
 e^{-i \left( \sum_{m=0}^{M-1} \lambda_{j_m}^{(\ell_m)}\tau_m \right) \frac{t}{r}} \prod_{u=1}^{M-2} \braket{\chi_{j_u}^{(\ell_u)}}{\chi_{j_{u-1}}^{(\ell_{u-1})}}.
\end{equation}
If we envision a situation where the sum over eigenvalues multiplied by the evolution times for each path is a random variable that is approximately uniformly distributed mod $2\pi$, then the resulting sum will be expected to be small because of phase cancellation between the paths.  This means that not all sets of paths are likely to have equal importance to the overall sum; further, if we can isolate paths that do not contribute substantially to the path we can remove them from consideration and thereby substantially reduce the cost of a simulation.

Here we will use tools from adiabatic perturbation theory~\cite{jansen2007bounds,cheung2011improved} to find ways of combining the paths together into groups within which the impacts of the phase cancellation between the paths can be computed.  The adiabatic theorem specifically states that if we have a slowly varying time-dependent Hamiltonian then an initial eigenstate will approximately remain in an initial eigenstate of the Hamiltonian throughout the evolution.  This implies that the evolution is dominated by the zero-jump path which does not transition from the instantaneous eigenstates throughout the evolution.  This is very useful because it provides a natural grouping of the paths.  The challenge behind applying this, however, is that it specifically requires a time-dependent Hamiltonian.  We show in the following that, in some cases, a time-independent Hamiltonian can be written as an adiabatic time-dependent Hamiltonian in the interaction frame.
\begin{proposition}
Let $\hat{H}=\hat{A}+\hat{B}$ for Hermitian $\hat{A},\hat{B}$ in $\mathbb{C}^{2^n\times 2^n}$ and assume that $\hat{B}$ is gapped and non-degenerate.  Let $\ket{j}$ be an eigenvector of $\hat{B}$ with eigenvalue $E_j$ and assume that for any distinct eigenvalue $E_k$ that $|E_j-E_k|\ge \gamma_{\min}$.  We then have that for evolution time $T$ if $\max_{j=0,1,2}(\|\hat{A}^{j+1}\|T^{j}) \le O(\epsilon \gamma_{\min}^2/\|\hat B\| )$ then the error in the adiabatic approximation for the evolution $\mathcal{T}e^{-i \int_0^T\hat H_{\rm int}(t)\mathrm{d}t}\ket{j}$ (quantified by the projection of the evolution of an instantaneous eigenvector at $t=0$ for $H_{\rm int}$ onto the subspace orthogonal to this at $t=T$) for the interaction frame Hamiltonian $\hat{H}_{\rm int}:= e^{i\hat{A} t} \hat{B} e^{-i\hat{A}t}$ is at most $O(\epsilon)$.
\end{proposition}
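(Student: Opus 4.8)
The plan is to recognize the interaction-frame Hamiltonian as an isospectral deformation of $\hat B$ and then feed the resulting derivative bounds into a rigorous adiabatic theorem. The first step is to observe that $\hat H_{\rm int}(t) = e^{i\hat A t}\hat B e^{-i\hat A t}$ is a unitary conjugation of $\hat B$, so its instantaneous spectrum equals that of $\hat B$ for all $t$; in particular every spectral gap is constant and equal to its initial value, so the relevant gap is $\gamma_{\min}$ throughout, and the instantaneous eigenvector associated with $\ket{j}$ is $\ket{j(t)} = e^{i\hat A t}\ket{j}$. This constancy of the spectrum is what will allow the gap to enter the final bound only through fixed powers of $\gamma_{\min}$ rather than through time-dependent eigenvalue derivatives.

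Next I would compute the time-derivatives of $\hat H_{\rm int}$ that the adiabatic bound requires. Differentiating the conjugation repeatedly gives the nested-commutator form $\tfrac{d^p}{dt^p}\hat H_{\rm int}(t) = i^p e^{i\hat A t}\,\mathrm{ad}_{\hat A}^{\,p}(\hat B)\,e^{-i\hat A t}$, where $\mathrm{ad}_{\hat A}(X) = [\hat A,X]$. Using $\|[\hat A,X]\|\le 2\|\hat A\|\|X\|$ iteratively, together with the fact that $\hat A$ is Hermitian so $\|\hat A^{p}\| = \|\hat A\|^{p}$, yields $\big\|\tfrac{d^p}{dt^p}\hat H_{\rm int}\big\| \le 2^p\|\hat A\|^{p}\|\hat B\| = 2^p\|\hat A^{p}\|\|\hat B\|$ for $p=1,2,3$. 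These are exactly the quantities appearing in the hypothesis once they are multiplied by the appropriate power of the evolution time.

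I would then invoke the rigorous adiabatic estimate of \cite{jansen2007bounds,cheung2011improved}. Writing the evolution as $\mathcal T e^{-i\int_0^T\hat H_{\rm int}(t)\,\mathrm{d}t}$ and rescaling $s=t/T$ so that the adiabatic parameter is $T$, the leakage $\|(\openone - \ketbra{j(T)}{j(T)})\,\mathcal T e^{-i\int_0^T \hat H_{\rm int}\,\mathrm{d}t}\ket{j}\|$ expands, via iterated integration by parts against the constant gap phases, into a sum of boundary and remainder terms whose $p$-th contribution is controlled by $T^{\,p-1}\|\tfrac{d^p}{dt^p}\hat H_{\rm int}\|/\gamma_{\min}^2 \lesssim T^{\,p-1}\|\hat A\|^{p}\|\hat B\|/\gamma_{\min}^2$. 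Taking $p=1,2,3$ reproduces exactly the three quantities $\|\hat A^{j+1}\|T^{j}\|\hat B\|/\gamma_{\min}^2$ with $j=0,1,2$, each of which the hypothesis forces to be $O(\epsilon)$; the factor $\gamma_{\min}^{-2}$ rather than $\gamma_{\min}^{-3}$ is precisely the payoff of the constant spectrum established in the first step. Summing the three contributions gives the claimed $O(\epsilon)$ error.

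The main obstacle will be the careful bookkeeping of the integration-by-parts expansion, and in particular showing that the gap enters only squared. In a generic adiabatic bound the term $\int\|\dot H\|^2/\gamma^3$ appears and would contribute $T\|\hat A\|^2\|\hat B\|^2/\gamma_{\min}^3 = O(\epsilon\,\|\hat B\|/\gamma_{\min})$, which is too large; avoiding this relies on the conjugation structure. The cleanest way to sidestep the difficulty is the exact identity $\|(\openone-\ketbra{j(T)}{j(T)})\,\mathcal T e^{-i\int_0^T\hat H_{\rm int}\,\mathrm{d}t}\ket{j}\| = \|(\openone-\ketbra{j}{j})e^{-i\hat H T}\ket{j}\|$, obtained from $\mathcal T e^{-i\int_0^T\hat H_{\rm int}\,\mathrm{d}t} = e^{i\hat A T}e^{-i\hat H T}$ and $\ket{j(T)} = e^{i\hat A T}\ket{j}$; the right-hand side is just the leakage of a $\hat B$-eigenstate under the full evolution, which is controlled directly by the overlap defect between $\ket{j}$ and the true perturbed eigenvector of $\hat H$ and is manifestly $T$-independent. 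Reconciling this sharper estimate with the stated time-dependent conditions, and verifying that the order-$p\ge 3$ remainder is dominated by the $j=2$ term, is where the bulk of the technical care will go.
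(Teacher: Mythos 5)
Your main line of argument coincides with the paper's proof: you observe that $\hat H_{\rm int}(t)=e^{i\hat At}\hat Be^{-i\hat At}$ is isospectral to $\hat B$ so the gap is constantly $\gamma_{\min}$, you bound $\|\partial_t^p\hat H_{\rm int}\|\le 2^p\|\hat A\|^p\|\hat B\|$ via nested commutators, and you feed these into the Jansen--Ruskai--Seiler adiabatic estimate exactly as the paper does. Where you go beyond the paper is your final paragraph, and the point you raise there is genuine: the JRS theorem as actually stated contains a term of the form $\int\|\dot H\|^2/\gamma^3$, which under the naive bound $\|\dot H_{\rm int}\|\le 2\|\hat A\|\|\hat B\|$ contributes $T\|\hat A\|^2\|\hat B\|^2/\gamma_{\min}^3 = O(\epsilon\|\hat B\|/\gamma_{\min})$, not $O(\epsilon)$; the paper quotes a simplified form of the bound in which no such quadratic term appears and never confronts this. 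Both of your proposed repairs work. The refined bookkeeping works because that quadratic term really controls products of projector derivatives, and for a conjugated Hamiltonian one has $\|\dot P_j\|=\|[\hat A,\ketbra{j}{j}]\|\le 2\|\hat A\|$ directly rather than the generic $\|\dot H\|/\gamma$, so the offending contribution becomes $O(T\|\hat A\|^2/\gamma_{\min})\subseteq O(\epsilon\gamma_{\min}/\|\hat B\|)\subseteq O(\epsilon)$. Your exact identity is even cleaner: from $\mathcal{T}e^{-i\int_0^T\hat H_{\rm int}(t)\,\mathrm{d}t}=e^{i\hat AT}e^{-i\hat HT}$ and $\ket{j(T)}=e^{i\hat AT}\ket{j}$, the leakage equals $\left\|\left(\openone-\ketbra{j}{j}\right)e^{-i\hat HT}\ket{j}\right\|$, which standard (Davis--Kahan type) perturbation theory bounds by $O(\|\hat A\|/\gamma_{\min})$ uniformly in $T$; since a gapped nondegenerate $\hat B$ satisfies $\|\hat B\|\ge\gamma_{\min}/2$, the $j=0$ hypothesis alone gives $\|\hat A\|/\gamma_{\min}\le 2\|\hat A\|\|\hat B\|/\gamma_{\min}^2\le O(\epsilon)$. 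So your proposal not only reproduces the paper's argument but patches a step the paper leaves implicit; also, your closing worry about ``reconciling'' the sharper estimate with the hypotheses is unfounded, since the sharper route needs only the $j=0$ condition and assuming more than is needed costs nothing.
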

\begin{proof}
First, note that because $\hat H_{\rm int}(t)$ is a unitary conjugation of the time-independent Hamiltonian $\hat{B}$.  This implies that the instantaneous eigenvalues of $\hat H_{\rm int}(t)$ are time-independent and so the instantaneous eigenvalue gap of $H_{\rm int}(t)$ is $\gamma_{\min}$ for all $t$.

Next note that the derivatives of $\hat H_{\rm int}$ are of the form
\begin{align}
    \|\partial_t \hat H_{\rm int}(t)\| &\in O( \|[\hat{A},\hat{B}]\|)\subseteq O(\|\hat{A}\| \|\hat{B}\|)\\
    \|\partial_t^2 \hat H_{\rm int}(t)\| &\in O( \|[\hat{A},[\hat{A},\hat{B}]]\|)\subseteq O(\|\hat{A}\|^2 \|\hat{B}\|)\\
    \|\partial_t^3 \hat H_{\rm int}(t)\| &\in O( \|[\hat{A},[\hat{A},[\hat{A},\hat{B}]]]\|)\subseteq O(\|\hat{A}\|^3 \|\hat{B}\|)
\end{align}

Theorem 4 of Jansen, Seiler and Ruskai~\cite{jansen2007bounds} shows that the error in the adiabatic approximation for the Hamiltonian $H_{\rm int}$ is
\begin{equation}
\epsilon_{adiabatic}=O\left(\frac{\|\partial_t \hat H_{\rm int}(t)\|T + \|\partial_t^2 \hat H_{\rm int}(t)\|T^2+\|\partial_t^3 \hat H_{\rm int}(t)\|T^3}{\gamma_{\min}^2 T} \right).
\end{equation}
Our proof immediately follows from the above results and the assumption that $\max_{j=0,1,2}(\|\hat{A}^{j+1}\|T^{j}) \le O(\epsilon \gamma_{\min}^2/\|\hat B\| )$.
\end{proof}
This shows that if we have a strongly gapped Hamiltonian $\hat{B}$ then  we can perform a transformation into an interaction frame where the Hamiltonian is time-independent and further that the resulting Hamiltonian will obey the adiabatic approximation.

Since we can without loss of generality analyze the time evolution in the interaction frame as a time-dependent Hamiltonian we will focus our attention in the following on time-dependent Hamiltonians $\hat{H}(t)$ with instantaneous eigenvectors $\ket{\chi_j(t)}$ such that $\hat{H}(t) \ket{\chi_j(t)} = \lambda_j(t) \ket{\chi_j(t)}$ for $j \in \{0, \ldots, 2^n=1\}$.

For any such time-dependent Hamiltonian we can follow the exact same methodology as that used in the short time evolution to define the path integral.  In particular,
\begin{align}
    \mathcal{T}e^{-i \int_0^T \hat{H}(t) \mathrm{d}t} &:= \lim_{r\rightarrow \infty} \prod_{j=0}^{r-1} e^{-i \hat{H}(j\frac{t}{r}) \frac{t}{r}}\nonumber\\
    &=\lim_{r\rightarrow \infty} \prod_{j=0}^{r-1} \left(\sum_k e^{-i \lambda_k(j\frac{t}{r}) \frac{t}{r}}\ketbra{\chi_k(jt/r)}{\chi_k(jt/r)}\right)
\end{align}
This immediately takes the same form as Lemma~\ref{lemma:HamPathInt}.  However, unlike that case it is helpful here to consider the limit as $r\rightarrow \infty$.  This limit is straightforward to evaluate and is given
by Theorem 3 of~\cite{cheung2011improved} to be the following (for simplicity, we use a change of variables of $s=t/T$, and will write $\lambda(s)$ and $\hat{H}(s)$ instead of $\lambda(sT)$ and $\hat{H}(sT)$).

\begin{lemma}[Theorem 3 of~\cite{cheung2011improved}] \label{lemma:TimeDepHam}
Let $\hat{H}(t)$ be a time-dependent Hermitian operator that is at least twice differentiable on $[0,1]$ and assume that the instantaneous eigenbasis of $\hat{H}(t)$ is chosen such that $\forall~j,k~\braket{\dot\chi_j(t)}{\chi_k(t)} \propto (1-\delta_{jk})$. Then,
\begin{align*}
&\mathcal{T}e^{-i T \int_0^1 \hat{H}(s) \mathrm{d}s}=e^{-i T \int_0^1 \lambda_j(t) \mathrm{d} s } \sum_j \ketbra{\chi_j(1)}
{\chi_j(0)}\\
&\quad+\sum_{j_1\ne j_0}\sum_{j_0} \ketbra{\chi_{j_1}(1)}{\chi_{j_0}(0)}\int_{0}^{1} \beta_{j_1,j_0}(s_1)e^{-iT \left(\int_{s_1}^1 \lambda_{j_1}(u) \mathrm{d}u+\int_{0}^{s_1} \lambda_{j_0}(u) \mathrm{d}u \right)}\mathrm{d}s_1\\
&\quad+\sum_{j_2\ne j_1}\sum_{j_1\ne j_0}\sum_{j_0} \ketbra{\chi_{j_2}(1)}{\chi_{j_0}(0)}\int_{0}^{1} \int_{0}^{s_2} \beta_{j_2,j_1}(s_2)\beta_{j_1,j_0}(s_1)e^{-iT \left(\int_{s_2}^1 \lambda_{j_2}(u) \mathrm{d}u+\int_{s_1}^{s_2} \lambda_{j_1}(u) \mathrm{d}u+\int_{0}^{s_1} \lambda_{j_0}(u) \mathrm{d}u \right)}\mathrm{d}s_2\mathrm{d}s_1 \\
&\quad+\cdots
\end{align*}
where 
$$
\beta_{jk}(s) := \lim_{\delta \rightarrow 0} \frac{\braket{\chi_j(s+\delta)}{\chi_k(s)}}{\delta}= \frac{\bra{\chi_j(s)}\dot{\hat{H}}(s) \ket{\chi_k(s)}}{\lambda_j(s) - \lambda_k(s)} \label{eq:beta}
$$
\end{lemma}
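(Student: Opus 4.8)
The plan is to pass to the \emph{adiabatic (moving) frame} and read off the claimed expansion as the Dyson/Picard series of the resulting coupled-mode equations; this makes both the continuum limit and the organization of paths by number of transitions (``jumps'') completely transparent. One could equally start from the discrete product form quoted just above the statement and sort the path sum by the number of indices $u$ at which $j_{u+1}\neq j_u$; I note below where that alternative route needs extra care.

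First I would expand $\ket{\psi(s)} := \mathcal{T}e^{-iT\int_0^s \hat H(u)\,\mathrm{d}u}\ket{\chi_{j_0}(0)}$ in the instantaneous eigenbasis as $\ket{\psi(s)} = \sum_j a_j(s)\ket{\chi_j(s)}$. Using $i\partial_s\ket{\psi} = T\hat H(s)\ket{\psi}$ (the Schr\"odinger equation after the change of variables $s=t/T$) and projecting onto $\bra{\chi_m(s)}$ gives
$$\dot a_m = -iT\lambda_m a_m - \sum_{j\neq m} a_j \braket{\chi_m}{\dot\chi_j},$$
where the diagonal term $\braket{\chi_m}{\dot\chi_m}$ has been dropped because the gauge hypothesis $\braket{\dot\chi_j(s)}{\chi_k(s)}\propto(1-\delta_{jk})$ forces it to vanish. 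Writing $a_m(s) = b_m(s)\,e^{-iT\int_0^s \lambda_m(u)\,\mathrm{d}u}$ removes the dynamical phase and, after using $\braket{\chi_m}{\dot\chi_j} = -\braket{\dot\chi_m}{\chi_j}$, yields the pure coupled-mode system
$$\dot b_m(s) = \sum_{j\neq m} b_j(s)\,\beta_{mj}(s)\, e^{-iT\int_0^s (\lambda_j(u)-\lambda_m(u))\,\mathrm{d}u},\qquad b_m(0)=\delta_{m,j_0},$$
with $\beta_{mj} = \braket{\dot\chi_m}{\chi_j}$.

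Next I would verify the two stated forms of $\beta_{jk}$. Its limit definition $\beta_{jk}(s) = \lim_{\delta\to0}\braket{\chi_j(s+\delta)}{\chi_k(s)}/\delta$ follows by Taylor-expanding $\ket{\chi_j(s+\delta)}$ and using $\braket{\chi_j(s)}{\chi_k(s)}=\delta_{jk}$, so that for $j\neq k$ only the $O(\delta)$ term $\braket{\dot\chi_j}{\chi_k}$ survives. The closed form comes from differentiating the eigenrelation $\hat H\ket{\chi_k}=\lambda_k\ket{\chi_k}$, pairing with $\bra{\chi_j}$ for $j\neq k$, and using $\bra{\chi_j}\hat H = \lambda_j\bra{\chi_j}$ together with $\frac{d}{ds}\braket{\chi_j}{\chi_k}=0$; this gives $\beta_{jk} = \bra{\chi_j}\dot{\hat H}\ket{\chi_k}/(\lambda_j-\lambda_k)$. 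The gap and twice-differentiability assumptions guarantee each $\beta_{jk}$ is bounded and continuous on $[0,1]$.

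Finally I would solve the coupled-mode system by Picard iteration starting from $b_m^{(0)}(s)=\delta_{m,j_0}$. The $n$-th iterate produces exactly an $n$-fold time-ordered integral $\int_0^1\!\int_0^{s_n}\!\cdots\,\mathrm{d}s_1$ over intermediate jump times $s_1<\cdots<s_n$, a product of $n$ transition amplitudes $\beta_{j_p j_{p-1}}(s_p)$, and a phase that splits the dynamical factor into the segments $\int_{s_p}^{s_{p+1}}\lambda_{j_p}$; restoring $a_{j_n}(1)=b_{j_n}(1)e^{-iT\int_0^1\lambda_{j_n}}$ recombines these into precisely the phase $e^{-iT(\int_{s_n}^1\lambda_{j_n}+\cdots+\int_0^{s_1}\lambda_{j_0})}$ of the statement, and summing over all initial labels $j_0$ assembles the operator form $\sum \ketbra{\chi_{j_n}(1)}{\chi_{j_0}(0)}$, with the $n=0$ iterate giving the no-jump adiabatic term $e^{-iT\int_0^1\lambda_j}\sum_j\ketbra{\chi_j(1)}{\chi_j(0)}$. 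The main obstacle is rigor rather than structure: I must show the series converges to the true solution, which follows because $\max_{j\neq k}\sup_s|\beta_{jk}(s)|$ is finite (gapped, twice-differentiable $\hat H$), so on the compact interval $[0,1]$ the iteration is a contraction and its sum is the uniformly convergent Dyson series. If instead one takes the discrete-product route, the corresponding obstacle is showing that configurations with two or more jumps between adjacent slices are higher order in $1/r$ and vanish in the limit, while single jumps convert the Riemann sum over jump locations into the integrals above; the same bound on $\beta$ controls both.
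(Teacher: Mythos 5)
Your proposal is correct, but be aware that the paper never actually proves this lemma: it is imported verbatim as Theorem~3 of~\cite{cheung2011improved}, and the surrounding text only gestures at a derivation, namely inserting instantaneous-eigenbasis resolutions of identity into $\lim_{r\to\infty}\prod_{j}e^{-i\hat H(jt/r)t/r}$ (the time-dependent analogue of Lemma~\ref{lemma:HamPathInt}) and evaluating the $r\to\infty$ limit, which is deferred entirely to the citation. Your adiabatic-frame argument --- expand in the instantaneous eigenbasis, strip the dynamical phases, and Picard-iterate the coupled-mode system $\dot b_m(s)=\sum_{j\neq m} b_j(s)\,\beta_{mj}(s)\,e^{-iT\int_0^s(\lambda_j(u)-\lambda_m(u))\mathrm{d}u}$ --- is a genuinely self-contained alternative (and is essentially how the cited reference obtains the result). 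The bookkeeping checks out: the diagonal couplings vanish by the gauge hypothesis, both stated forms of $\beta_{jk}$ follow as you say (noting the limit definition only makes sense off-diagonal, and the closed form silently requires $\lambda_j\neq\lambda_k$, i.e.\ a gap, which is also what guarantees differentiable eigenvectors in the first place), and the telescoping of the phase factors $e^{-iT\int_0^{s_p}(\lambda_{j_{p-1}}-\lambda_{j_p})}$ against the restored factor $e^{-iT\int_0^1\lambda_{j_n}}$ reproduces exactly the segmented phases in the statement, with adjacent-index constraints $j_{p}\neq j_{p-1}$ arising automatically from the off-diagonal coupling. What each route buys: yours makes convergence essentially free in the paper's finite-dimensional setting (the $n$-fold simplex integrals carry a $1/n!$ volume factor against $\sup_s\max_{j\neq k}|\beta_{jk}(s)|^n$, so the Dyson series converges absolutely and uniformly), whereas the discrete-product route the paper alludes to connects directly to its short-time path-integral formalism but requires the extra, nontrivial step you correctly flag --- showing multi-jump configurations within a slice are higher order in $1/r$ and that the limit can be interchanged with the sum over jump numbers --- which is precisely the content the paper outsources to~\cite{cheung2011improved}.
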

This form of the path integral expansion is useful here because it categorizes the paths by the number of jumps that the path makes between the the initial and final instantaneous eigenstate.  Specifically, the first term consists of no jumps and corresponds to all states remaining in the instantaneous eigenbasis throughout the evolution.  This in physical terms describes a perfect adiabatic evolution with no transitions in the state.  The next term describes the sum over all paths that transition at time $t_1$ from instantaneous eigenstate $j_0$ to $j_1$ and the following term describes the sum over all paths that  jump twice.  In cases where the gaps are sufficiently large relative to the rate of change of the Hamiltonian, the resultant series takes the form of a geometric series and thus can be truncated at a finite number of jumps without incurring substantial error.

The error due to truncating these integrals is given by the following lemma.
\begin{lemma}\label{lemma:Cbd}
    Let $\gamma_{\min}$ be the minimum value of $|\lambda_j(s) -\lambda_k(s)|$ and let $$\aleph = \frac{6 \max\|\partial_s \hat{H} \|^3}{\gamma_{\min}^3}+\frac{ \max\|\partial_s \hat{H} \|\max\|\partial_s^2 \hat{H} \|+ 2\max\|\partial_s \hat{H} \|^2}{\gamma_{\min}^2}$$
    then for $m\ge 0$ denote $C_p$ to be the sum over all $p$-jump paths.  We then have that for any positive integer $m$
    \begin{align*}
        \|C_{2m}\| &\le \frac{\aleph^m}{m!(\gamma_{\min} T)^m}\\
        \|C_{2m+1}\| &\le \frac{\aleph^m}{m!(\gamma_{\min} T)^m} \left(\frac{\aleph}{\max \|\partial_s \hat{H}(t)\| T} \right)
    \end{align*}
\end{lemma}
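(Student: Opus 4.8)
The plan is to exploit the nested oscillatory-integral structure of $C_p$ inherited from Lemma~\ref{lemma:TimeDepHam} and to integrate by parts repeatedly, using the non-degeneracy of the gap to suppress each integral by a factor $1/(\gamma_{\min}T)$. First I would write the $p$-jump contribution explicitly as
\begin{align*}
C_p = \sum_{\substack{j_0,\ldots,j_p \\ j_k \ne j_{k-1}}} \ketbra{\chi_{j_p}(1)}{\chi_{j_0}(0)} \int_{0\le s_1\le\cdots\le s_p\le 1} \left(\prod_{k=1}^{p} \beta_{j_k,j_{k-1}}(s_k)\right) e^{-iT\Phi_{\vec\jmath}(\vec s)}\,\mathrm{d}s_1\cdots\mathrm{d}s_p,
\end{align*}
where $\Phi_{\vec\jmath}(\vec s)=\int_{s_p}^1\lambda_{j_p}(u)\,\mathrm{d}u+\sum_{k=1}^{p-1}\int_{s_k}^{s_{k+1}}\lambda_{j_k}(u)\,\mathrm{d}u+\int_0^{s_1}\lambda_{j_0}(u)\,\mathrm{d}u$. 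The crucial observation is that $\partial_{s_k}\Phi_{\vec\jmath}=\lambda_{j_{k-1}}(s_k)-\lambda_{j_k}(s_k)$, whose magnitude is at least $\gamma_{\min}$ whenever $j_k\ne j_{k-1}$, so every integration variable carries a genuinely fast phase.

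Then I would integrate by parts in a single variable $s_k$ via $e^{-iT\Phi}=\frac{1}{-iT\,\partial_{s_k}\Phi}\,\partial_{s_k}e^{-iT\Phi}$. Each such step produces boundary terms at the two endpoints of the $s_k$-range together with a remainder integral in which $\beta_{j_k,j_{k-1}}/\partial_{s_k}\Phi$ is differentiated; every resulting term carries an explicit factor $1/(\gamma_{\min}T)$, and the differentiated factor is controlled through the Hellmann--Feynman identity $\partial_s\lambda_j=\bra{\chi_j}\partial_s\hat{H}\ket{\chi_j}$ together with $\|\ket{\dot\chi_j}\|\le\|\partial_s\hat{H}\|/\gamma_{\min}$, which is exactly the data that assembles into $\aleph$. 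The delicate point, and what I expect to be the main obstacle, is that one is \emph{not} free to integrate by parts independently in all $p$ variables. After eliminating $s_{2k-1}$ and collapsing onto the boundary $s_{2k-1}=s_{2k}$, the phase that remains in $s_{2k}$ has derivative $\lambda_{j_{2k-2}}(s_{2k})-\lambda_{j_{2k}}(s_{2k})$, which is \emph{not} gap-protected: the constraints only force $j_{2k}\ne j_{2k-1}$ and permit $j_{2k}=j_{2k-2}$. Such collapsed integrals must therefore be bounded trivially by the length of their domain rather than by a further oscillation gain, which is precisely why a pair of jumps yields only one power of $1/(\gamma_{\min}T)$ instead of two.

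To organize the bookkeeping I would pair the $2m$ variables as $(s_1,s_2),\ldots,(s_{2m-1},s_{2m})$, integrate by parts in the odd member of each pair, and retain the worst-case branch in which every odd variable collapses to the boundary with its even partner. Each pair then contributes a factor bounded by $\aleph/(\gamma_{\min}T)$, the numerator collecting the product of two $\beta$'s and the differentiated $\beta/\mathrm{gap}$ factor, while the surviving non-oscillatory integral runs over the ordered $m$-simplex $\{s_2\le s_4\le\cdots\le s_{2m}\}$, whose volume $1/m!$ supplies the factorial. The intermediate index sums $\sum_{j_1,\ldots,j_{p-1}}$ I would tame by passing to the operator norm of the off-diagonal coupling matrix $B(s)$ with entries $\beta_{jk}(s)$, which is of order $\|\partial_s\hat{H}\|/\gamma_{\min}$, so that the nested integral becomes a bound on a product of operator norms rather than a sum over exponentially many paths. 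Assembling these factors gives $\|C_{2m}\|\le\aleph^m/(m!(\gamma_{\min}T)^m)$, and the odd case $C_{2m+1}$ follows by applying the same argument to the first $2m$ variables and bounding the one unpaired jump by a single integration by parts, which contributes the extra factor $\aleph/(\max\|\partial_s\hat{H}\|T)$. The remaining work, namely verifying the precise constants $6$ and $2$ in the definition of $\aleph$ and checking that this worst-case branch dominates all other integration-by-parts branches, is routine term-by-term estimation that I would carry out but not reproduce here.
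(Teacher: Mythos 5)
Your proposal is correct in substance, but it is worth being clear about what the paper actually does here: the paper's own ``proof'' of Lemma~\ref{lemma:Cbd} is a one-line citation to Lemma 2 of Cheung, H{\o}yer and Wiebe (their equations (51) and (54)), whereas you reconstruct the underlying argument from scratch. Your reconstruction follows the same strategy as that cited proof — iterated integration by parts on the jump expansion — and it correctly isolates the two structural facts that make the bounds come out as stated: first, that one cannot gain a factor of $1/(\gamma_{\min}T)$ per jump, because the boundary term produced by integrating by parts in $s_{2k-1}$ collapses onto $s_{2k-1}=s_{2k}$, where the surviving phase derivative is $\lambda_{j_{2k-2}}(s_{2k})-\lambda_{j_{2k}}(s_{2k})$ and is not gap-protected (it vanishes outright when $j_{2k}=j_{2k-2}$, which is exactly the phase-cancellation phenomenon visible in the two-jump computation inside the proof of Lemma~\ref{lemma:integration}); and second, that the $1/m!$ is the volume of the ordered simplex in the surviving non-oscillatory variables. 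The one place where your sketch understates the difficulty is the closing claim that the rest is routine: the lemma is stated with exact constants (the $6$ and the $2$ inside $\aleph$, with no multiplicative slack), so collecting the three branches per pair (two boundary collapses plus a differentiated remainder), bounding $\partial_s \beta_{jk}$ via $\partial_s^2\hat{H}$ and $\|\ket{\dot\chi_j}\|\le \|\partial_s\hat{H}\|/\gamma_{\min}$, and absorbing the intermediate index sums into an operator-norm bound on the $\beta$-matrix (a Sylvester-equation/Schur-multiplier type bound, which carries its own constant) is precisely where the content of the proof lives, and it is what the cited reference actually carries out. As an account of the mechanism your argument is sound and matches the proof the paper outsources; as a standalone proof of the lemma with its stated constants, that bookkeeping would still have to be completed.
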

\begin{proof}
    The result is shown directly in the proof of Lemma 2 of~\cite{cheung2011improved} in (51) and (54).
\end{proof}
This shows that provided the gap is sufficiently large, we can guarantee the contribution to the path integral from terms with $3,4,5,\ldots$ jumps are small.  This justifies truncating the path integral representation at a finite number of jumps in cases where the gap is large enough so that $\aleph /{\gamma_{\min} T}$ is small.

In cases where the number of jumps are suppressed for a path by the gap, it is often convenient to use approximations to the path integral that arise from integration by parts or the stationary phase approximation (if the derivative of the eigenvalue gap evaluates to zero).  

In general, if the above nearly adiabatic evolution is not sufficient to describe the dynamics of the system we can evaluate the integrals directly.
\begin{lemma}[Integration Lemma] \label{lemma:integration}
Under the above assumptions that the Hamiltonian is smooth and all transitions are gapped by $\gamma_{\min}$, we have that for $\Gamma=O\left(\frac{\max\{\|\dot{\hat{H}}\|,\|\ddot{\hat{H}}\|, \|\dddot{\hat{H}}\|\}}{\gamma_{\min}}  \right)$, the time-evolution operator can be expressed as
\begin{align*}
    \mathcal{T} e^{-i T \int_0^1 \hat{H}(s) \mathrm{d}s} =& \sum_{j_0} \ketbra{\chi_{j_0}(1)}{\chi_{j_0}(0)}e^{-i T\int_0^1\lambda_{j_0}(u) \mathrm{d}u}\left( 1 + \sum_{j_1\ne j_0}\int_0^1 \frac{\beta_{j_0,j_1}(s_2) \beta_{j_1,j_0}(s_2)}{iT \gamma_{j_1,j_0}(s_2)}ds_2\right)\nonumber\\
    &\qquad+\sum_{j_1 \ne j_0}e^{-i T\int_0^1 \lambda_{j_0}(u) \mathrm{d}u}\ketbra{j_1(1)}{j_0(0)}\left.\left(\frac{\beta_{j_1,j_0}(s) e^{-iT\int_0^s \gamma_{j_1,j_0}(u) \mathrm{d}u}}{-iT \gamma_{j_1,j_0}(s)}\right)\right|_{s=0}^1 + O\left(\frac{\Gamma^4}{\gamma_{\min}^2T^2} \right)
\end{align*}
where $\gamma_{j_1, j_0} (s)$ is the eigenvalue gap at time $s$ between levels $j_1$ and $j_0$
\end{lemma}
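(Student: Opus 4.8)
The plan is to start from the $p$-jump expansion of the time-ordered exponential provided by Lemma~\ref{lemma:TimeDepHam} and to evaluate it term-by-term, retaining only the contributions that survive to order $1/T$ while absorbing everything else into the error. First I would truncate the expansion after the two-jump term. The tail, consisting of all paths with three or more jumps, is controlled directly by Lemma~\ref{lemma:Cbd}: summing the bounds on $\|C_p\|$ for $p\ge 3$ gives a geometric series in $\aleph/(\gamma_{\min}T)$ whose leading term is $O(\aleph^2/(\gamma_{\min}^2 T^2))$, and since $\aleph = O(\Gamma^2)$ in the near-adiabatic regime this contributes $O(\Gamma^4/(\gamma_{\min}^2T^2))$ to the stated error. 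The zero-jump term is kept verbatim as the adiabatic phase $\sum_{j_0}\ketbra{\chi_{j_0}(1)}{\chi_{j_0}(0)}e^{-iT\int_0^1\lambda_{j_0}(u)\mathrm{d}u}$.

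The core of the argument is the evaluation of the one- and two-jump oscillatory integrals by integration by parts. For the one-jump term I would factor the phase so that the common $e^{-iT\int_0^1\lambda_{j_0}(u)\mathrm{d}u}$ is pulled out, leaving an oscillatory factor of the form $e^{\mp iT\int\gamma_{j_1,j_0}(u)\mathrm{d}u}$ whose $s$-derivative is $\mp iT\gamma_{j_1,j_0}(s)$ times itself. Writing this factor as $(\mp iT\gamma_{j_1,j_0})^{-1}$ times a total derivative and integrating by parts splits the integral into a boundary term, which is the $O(1/T)$ contribution kept in the statement, and a residual integral $\int \frac{d}{ds}\!\left(\frac{\beta_{j_1,j_0}}{\mp iT\gamma_{j_1,j_0}}\right)e^{\mp iT\int\gamma}\,\mathrm{d}s$; a second integration by parts shows this residual is $O(1/T^2)$, which I would bound using $|\beta|=O(\|\partial_s\hat H\|/\gamma_{\min})$ together with derivative bounds on $\beta$ and $\gamma$ coming from $\|\partial_s^2\hat H\|$, again absorbing it into $O(\Gamma^4/(\gamma_{\min}^2T^2))$.

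The two-jump term is the delicate step. Here I would integrate by parts first in the inner variable $s_1$ and then in $s_2$. The key observation is that after these manipulations the only contribution that is resonant — i.e. non-oscillatory, and hence $O(1/T)$ rather than $O(1/T^2)$ — comes from the sector $j_2 = j_0$, where the particle returns to its starting level and the accumulated dynamical phases cancel; evaluating the coincident boundary $s_1 \to s_2$ yields exactly the diagonal correction $\sum_{j_0}\ketbra{\chi_{j_0}(1)}{\chi_{j_0}(0)}e^{-iT\int_0^1\lambda_{j_0}}\sum_{j_1\ne j_0}\int_0^1 \frac{\beta_{j_0,j_1}(s_2)\beta_{j_1,j_0}(s_2)}{iT\gamma_{j_1,j_0}(s_2)}\,\mathrm{d}s_2$ appearing in the statement. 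All off-diagonal ($j_2\ne j_0$) two-jump pieces retain a net oscillatory phase and, after one further integration by parts, are $O(1/T^2)$, as are all residual integrals generated along the way.

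The main obstacle is precisely this separation of resonant from oscillatory contributions in the two-jump term and the uniform bounding of the many residual integrals. One must track that each integration by parts gains a factor $1/(\gamma_{\min}T)$ at the cost of a derivative acting on $\beta/\gamma$ (hence one additional power of $\Gamma$), verify that the coincident-endpoint evaluation reproduces the stated diagonal term with the correct $1/(iT\gamma)$ weight, and confirm that every discarded piece — the three-or-more-jump tail from Lemma~\ref{lemma:Cbd}, the off-diagonal two-jump phases, and all second-order boundary residuals — is simultaneously bounded by $O(\Gamma^4/(\gamma_{\min}^2T^2))$. Finally I would collect these estimates via the triangle inequality to obtain the claimed representation with error $O(\Gamma^4/(\gamma_{\min}^2T^2))$.
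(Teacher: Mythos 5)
Your proposal follows essentially the same route as the paper's proof: decompose the evolution by number of jumps, keep the zero-jump term verbatim, control the $p\ge 3$ tail via Lemma~\ref{lemma:Cbd}, extract the one-jump boundary term and the resonant (returning) two-jump diagonal correction by integration by parts with the phase-cancellation observation, and absorb all residuals into $O\left(\Gamma^4/(\gamma_{\min}^2T^2)\right)$. The only notable differences are that the paper imports the one-jump estimate and the non-resonant two-jump bound directly from Theorem 4 of~\cite{cheung2011improved} rather than re-deriving them by hand, and that your bookkeeping claim $\aleph = O(\Gamma^2)$ should read $\aleph \in O(\Gamma^3)$ (first term of the definition in Lemma~\ref{lemma:Cbd}), which makes the tail $O\left(\Gamma^5/(\gamma_{\min}^2T^2)\right)$ --- exactly the bound the paper itself obtains and then argues away as negligible relative to the retained terms.
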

\begin{proof}
In order to derive an expression for the matrix elements of the time evolution operator we need to evaluate the leading order scaling of the dominant path integrals in the evolution operator.  The dominant contribution in the limit of large $T$ comes from the one-jump paths and a subset of two-jump paths where the paths return to the initial state.

Let $\mathcal{T} e^{-i T \int_0^1 \hat{H}(s) \mathrm{d}s} = C_0 + \mathcal{E}$, where $C_0$ is the sum over all zero-jump paths and $\mathcal{E}$ is all the remaining terms. From Theorem 4 of~\cite{cheung2011improved},
\begin{align}
    &\left\|\sum_{j_0\ne j_1} \ketbra{j_1(1)}{j_0(0)}\bra{j_1(1)}\mathcal{E}\ket{j_0(0)} - \sum_{j_1 \ne j_0}e^{-i T\int_0^1 \lambda_{j_0}(u) \mathrm{d}u}\ketbra{j_1(1)}{j_0(0)}\left.\left(\frac{\beta_{j_1,j_0}(s) e^{-iT\int_0^s \gamma_{j_1,j_0}(u) \mathrm{d}u}}{-i\gamma_{j_1,j_0}(s) T}\right)\right|_{s=0}^1\right\|\nonumber\\
    &\qquad\in O\left(\frac{\Gamma^4}{\gamma_{\min}^2 T^2} \right),
\end{align}
where 
\begin{equation}
    \Gamma=O\left(\frac{\max\{\|\dot{\hat{H}}\|,\|\ddot{\hat{H}}\|, \|\dddot{\hat{H}}\|\}}{\gamma_{\min}}  \right)
\end{equation}
This previous work shows that we can approximate the dynamics of all one-jump paths.  This provides us with much of the material that we need to approximate the off-diagonal matrix elements of the evolution operator.

While this result is sufficient for bounding the error in the adiabatic approximation, it is insufficient here because it does not discuss the construction of the diagonal elements.  The diagonal element calculation involves us considering two-jump terms that return to the initial eigenvector.

\begin{align}
    \sum_{j_1\ne j_0}\sum_{j_0} e^{-i T\int_0^1\lambda_{j_0}(u) \mathrm{d}u}\ketbra{\chi_{j_0}(1)}{\chi_{j_0}(0)}\int_{0}^{1} \int_{0}^{s_2} \beta_{j_0,j_1}(s_2)\beta_{j_1,j_0}(s_1)e^{-iT \int_{s_1}^{s_2} \gamma_{j_1,j_0}(u) \mathrm{d}u }\mathrm{d}s_2\mathrm{d}s_1 \label{eq:2jump}
\end{align}
We can evaluate the integral using integration by parts again as per the previous analysis.  However, something interesting happens with the integrals because of a phase cancellation effect.
\begin{align}
    &\int_{0}^{1} \int_{0}^{s_2} \beta_{j_0,j_1}(s_2)\beta_{j_1,j_0}(s_1)e^{-iT \int_{s_1}^{s_2} \gamma_{j_1,j_0}(u) \mathrm{d}u }\mathrm{d}s_2\mathrm{d}s_1 \nonumber\\
    &= \int_{0}^{1} \beta_{j_0,j_1}(s_2) e^{-i T \int_0^{s_2} \gamma_{j_1,j_0}(u) \mathrm{d}u} \int_{0}^{s_2} \beta_{j_1,j_0}(s_1)e^{iT \int_{0}^{s_1} \gamma_{j_1,j_0}(u) \mathrm{d}u }\mathrm{d}s_2\mathrm{d}s_1 \nonumber\\
    &= \int_{0}^{1} \beta_{j_0,j_1}(s_2) e^{-i T \int_0^{s_2} \gamma_{j_1,j_0}(u) \mathrm{d}u} \Biggr(\left. \frac{\beta_{j_1,j_0}(s_1)e^{iT \int_{0}^{s_1} \gamma_{j_1,j_0}(u) \mathrm{d}u }}{i\gamma_{j_1,j_0}(s_1) T}\right|_{s_1=0}^{s_2} \nonumber\\
    &\qquad-\int_{0}^{s_2} \left( \frac{\partial}{\partial_{s_1}}\frac{\beta_{j_1,j_0}(s_1)}{i\gamma_{j_1,j_0}(s_1) T}\right)e^{iT\int_0^{s_1} \gamma_{j_1,j_0}(u) \mathrm{d}u}{d}s_1 \Biggr)ds_2\nonumber\\
    &= \int_0^1 \frac{\beta_{j_0,j_1}(s_2) \beta_{j_1,j_0}(s_2)}{i\gamma_{j_1,j_0}(s_2)T}ds_2 - \int_0^1 \frac{\beta_{j_0,j_1}(s_2) \beta_{j_1,j_0}(s_2)e^{-iT\int_0^{s_2} \gamma_{j_1,j_0}(u) \mathrm{d}u}}{i\gamma_{j_1,j_0}(0)T}ds_2\nonumber\\
    &\qquad - \int_0^1 \beta_{j_0,j_1}(s_2) e^{-iT\int_0^{s_2} \gamma_{j_1,j_0}(u) \mathrm{d}u}\int_{0}^{s_2} \left( \frac{\partial}{\partial_{s_1}}\frac{\beta_{j_1,j_0}(s_1)}{i\gamma_{j_1,j_0}(s_1) T}\right)e^{iT\int_0^{s_1} \gamma_{j_1,j_0}(u) \mathrm{d}u}{d}s_1 ds_2\nonumber\\
    &=\int_0^1 \frac{\beta_{j_0,j_1}(s_2) \beta_{j_1,j_0}(s_2)}{i\gamma_{j_1,j_0}(s_2)T}ds_2 -\left.\frac{\beta_{j_0,j_1}(s_2) \beta_{j_1,j_0}(s_2)e^{-iT\int_0^{s_2} \gamma_{j_1,j_0}(u) \mathrm{d}u}}{\gamma_{j_1,j_0}(0)\gamma_{j_1,j_0}(s_2)T^2}\right|_0^1 \nonumber\\
    &\qquad + \int_0^1 \partial_{s_2}\left(\frac{\beta_{j_0,j_1}(s_2) \beta_{j_1,j_0}(s_2)}{\gamma_{j_1,j_0}(0)\gamma_{j_1,j_0}(s_2)T^2}\right)e^{-iT\int_0^{s_2} \gamma_{j_1,j_0}(u) \mathrm{d}u} ds_2  \nonumber\\
&\qquad-\left.\frac{\beta_{j_0,j_1}(s_2) e^{-iT\int_0^{s_2} \gamma_{j_1,j_0}(u) \mathrm{d}u}}{-iT\gamma_{j_1,j_0}(s_2)}\int_{0}^{s_2} \left( \frac{\partial}{\partial_{s_1}}\frac{\beta_{j_1,j_0}(s_1)}{i\gamma_{j_1,j_0}(s_1) T}\right)e^{iT\int_0^{s_1} \gamma_{j_1,j_0}(u) \mathrm{d}u}{d}s_1 \right|_{s_2=0}^1\nonumber\\
&\qquad+\int_0^1\frac{\partial}{\partial s_2}\left(\frac{\beta_{j_0,j_1}(s_2) }{-iT\gamma_{j_1,j_0}(s_2)}\int_{0}^{s_2} \left( \frac{\partial}{\partial_{s_1}}\frac{\beta_{j_1,j_0}(s_1)}{i\gamma_{j_1,j_0}(s_1) T}\right)e^{iT\int_0^{s_1} \gamma_{j_1,j_0}(u) \mathrm{d}u}{d}s_1\right) e^{-iT\int_0^{s_2} \gamma_{j_1,j_0}(u) \mathrm{d}u} ds_2
\end{align}
This shows that the dominant term at $O(1/T)$ because of phase interference between the paths that jump twice over the interval.  We can use this reasoning to find an asymptotic expansion to the two-jump terms in~\eqref{eq:2jump} that can be shown using the Cauchy-Schwarz inequality to be of the form 
\begin{equation}
    \sum_{j_0} e^{-i T\int_0^1\lambda_{j_0}(u) \mathrm{d}u}\ketbra{\chi_{j_0}(1)}{\chi_{j_0}(0)}\sum_{j_1\ne j_0}\int_0^1 \frac{\beta_{j_0,j_1}(s_2) \beta_{j_1,j_0}(s_2)}{i\gamma_{j_1,j_0}(s_2)T}ds_2 + O\left(\frac{\Gamma^2+ \Gamma^3 + \Gamma^4}{\gamma_{\min}^2T^2} \right) 
\end{equation}
From Theorem 4 of~\cite{cheung2011improved}, the sum over all paths in $C_2$ that jump back to $j_0$ is in $O(\Gamma^4/\gamma_{\min}^2T^2)$ and so are comparable to the higher order terms neglected in the path integral decomposition.
Thus we have from the triangle inequality that 
\begin{align}
    &\mathcal{T} e^{-i T \int_0^1 \hat{H}(s) \mathrm{d}s} = \sum_{j_0} \ketbra{\chi_{j_0}(1)}{\chi_{j_0}(0)}e^{-i T\int_0^1\lambda_{j_0}(u) \mathrm{d}u}\left( 1 + \sum_{j_1\ne j_0}\int_0^1 \frac{\beta_{j_0,j_1}(s_2) \beta_{j_1,j_0}(s_2)}{iT \gamma_{j_1,j_0}(s_2)}ds_2\right)\nonumber\\
    &\quad+\sum_{j_1 \ne j_0}e^{-i T \int_0^1 \lambda_{j_0}(u) \mathrm{d}u}\ketbra{j_1(1)}{j_0(0)}\left.\left(\frac{\beta_{j_1,j_0}(s) e^{-iT\int_0^s \gamma_{j_1,j_0}(u) \mathrm{d}u}}{-i T \gamma_{j_1,j_0}(s)}\right)\right|_{s=0}^1 + O\left(\frac{\Gamma^4}{\gamma_{\min}^2T^2} \right) + \sum_{m\ge 3} \|C_m\|~\label{eq:pathintApprox}
\end{align}
Next note from Lemma~\ref{lemma:Cbd} $\aleph \in O(\Gamma^3)$ which implies that 
\begin{equation}
    \sum_{m\ge 3} \|C_m\| \in O\left(\frac{\Gamma^5}{\gamma_{\min}^2 T^2} \right).
\end{equation}
The above sum is asymptotically smaller than the lower-order contributions and thus can be neglected from the sum.  The result then follows immediately from~\eqref{eq:pathintApprox}.
\end{proof}
The above result shows that if $T$ is sufficiently large, then we can simplify the integral over our paths to only consider the end points of the interval.  This procedure can be iterated many times to achieve a higher and higher order expression for the value of the path integral resulting in an expression that will depend on the value of the $\beta,\gamma$ and their derivatives near the boundary of the region of integration.

 Similar to the formulation in the previous section, we will divide the total evolution time into time steps and assume oracle access to the amplitudes and phases of the overlaps between eigenstates of adjacent Hamiltonian terms, as well as the eigenvalues as a phase. However, here, instead of ``adjacent'' terms being adjacent in the Hamiltonian decomposition in Definition \ref{def:Ham}, they will be the Hamiltonian evaluated at adjacent time steps. Let $\ell \in \{0, \ldots, r\}$ denote the time step number at reduced time $\frac{\ell}{r} \in \{0, \frac{1}{r}, \frac{2}{r}, \ldots, 1 \}$. We will use an extra register to store $\ell$, similar to how an extra register was used to store the Hamiltonian term number in the time-independent case. Thus, a state $\ket{\chi_j \left(\frac{\ell}{r} \right)}$ can be represented in $\log (r+1) + n$ qubits as $\ket{\ell} \ket{j}$.

\begin{definition}[$d$-sparsity for time-dependent Hamiltonians]
Let $\hat{H}$ be a Hamiltonian as defined in Lemma \ref{lemma:TimeDepHam}. We refer to the Hamiltonian as $d$-sparse if, for each eigenvector $\ket{\chi_{j_0} \left( \frac{\ell}{r} \right)}$ where $0 \leq \ell < L-1$, there are at most $d$ values of $j_1$ such that $\beta_{j_0, j_1} \left( \frac{\ell}{r} \right) \neq 0$, and for each eigenvector $\ket{\chi_{j_1} \left( \frac{\ell}{r} \right)}$, there are at most $d$ values of $j_0$ such that $\beta_{j_0, j_1} \left( \frac{\ell}{r} \right) \neq 0$. \label{def:dsparse2}
\end{definition}

Our first goal is to consider the path integral expansion for the time-ordered operator exponential truncated at $O(1/T^{m})$.  First, we see from Lemma~\ref{lemma:Cbd} that we can neglect any path that has at least $2m$ jumps at this order.  Similarly, each path with $k>0$ jumps needs to be approximated using integration by parts to a similar order.  It is straightforward to verify that if integration by parts is applied $2p$ times then we generate a term of order $O(1/T^p)$ and thus it suffices to apply integration by parts a total of $2m$ times for each of the sets of integrals, distributing the $2m$ integrals over each of the $k$ jump paths that we are interested in.  From the multinomial theorem implies that the maximum number of such terms that can arise is $2^{2m}$ for any $k\in \{1, \ldots, 2m\}$.  As we are treating $m$ as a constant, we can ignore the growth of the number of terms with $m$ due to the growth of terms that arise from the product rule.  Taking
\begin{equation}
    \Gamma \ge \sup_p \left( \partial_s^p \hat{H}(s) \right)/\gamma_{\min},
\end{equation}
if we denote the truncated expansion of the ordered operator exponential to be ${\rm Trunc}(\mathcal{T} e^{-i T \int_0^1 \hat{H}(s) ds})$ then the results of Theorem 4 of~\cite{cheung2011improved} shows that
\begin{equation}
    \|\mathcal{T} e^{-iT \int_0^1 \hat{H}(s) ds } - {\rm Trunc}(\mathcal{T} e^{-i T \int_0^1 \hat{H}(s) ds}) \| = O\left(\frac{\Gamma^{2m}}{\gamma_{\min}^m T^m}\right)
\end{equation}
As $m$ is a constant, the total error and in turn the truncation error in the path integral can be made to scale arbitrarily well with $T$ for constant but large $m$. Here, since we have $T \in \Omega \left( \frac{\Gamma^2}{\gamma_{\min} \sqrt{\epsilon}} \right)$ for some given error tolerance $\epsilon$, we can truncate the path integral at $m=2$.

Thus, we can approximate the long-time evolution using sums of paths that do not transition, paths that transition once only, and paths that transition twice but end up back on the original path. Specifically, the time evolution operator that we will approximate is
\begin{align}
    &\sum_{j_0} \ketbra{\chi_{j_0}(1)}{\chi_{j_0}(0)}e^{-i T\int_0^1\lambda_{j_0}(u) \mathrm{d}u}\left( 1 + \sum_{j_1\ne j_0}\int_0^1 \frac{\beta_{j_0,j_1}(s') \beta_{j_1,j_0}(s')}{iT \gamma_{j_1,j_0}(s')} \mathrm{d} s'\right)\nonumber\\
    &\quad+\sum_{j_1 \ne j_0}e^{-i T \int_0^1 \lambda_{j_0}(u) \mathrm{d}u}\ketbra{j_1(1)}{j_0(0)}\left.\left(\frac{\beta_{j_1,j_0}(s) e^{-iT\int_0^s \gamma_{j_1,j_0}(u) \mathrm{d}u}}{-i T \gamma_{j_1,j_0}(s)}\right)\right|_{s=0}^1, \label{eq:TDOperator}
\end{align}
which Lemma \ref{lemma:integration} tells us is equal to the exact time evolution operator $\mathcal{T} e^{-i T \int_0^1 \hat{H}(s) \mathrm{d}s}$ up to error $O\left( \frac{\Gamma^4}{\gamma_{\rm min}^2 T^2} \right)$.

In the discretized time setting, denoting the time step number by $\ell \in \{0, \ldots, r\}$ to represent reduced times $\{ 0, \frac{\ell}{r}, \frac{2}{r} , \ldots, 1 \}$ and replacing the integrals with the trapezoidal Riemann sum, the discrete approximate time evolution operator is
\begin{align}
    &\bar{U}:=  \sum_{j_0} \ketbra{\chi_{j_0}(1)}{\chi_{j_0}(0)} e^{-i \frac{T}{r} \left(  \frac{1}{2}\lambda_{j_0} (0) + \frac{1}{2} \lambda_{j_0} (1) + \sum_{\ell=1}^{r-1} \lambda_{j_0} \left( \frac{\ell}{r} \right) \right)} \\
    & \quad + \sum_{j_0} \ketbra{\chi_{j_0}(1)}{\chi_{j_0}(0)} e^{-i \frac{T}{r} \left(  \frac{1}{2}\lambda_{j_0} (0) + \frac{1}{2} \lambda_{j_0} (1) + \sum_{\ell=1}^{r-1} \lambda_{j_0} \left( \frac{\ell}{r} \right) \right)}\sum_{j_1\ne j_0} \frac{1}{r} \Bigg( \frac{1}{2} \frac{\beta_{j_0,j_1} \left( 0 \right) \beta_{j_1,j_0}\left( 0 \right)}{iT \gamma_{j_1,j_0}\left( 0 \right)} + \frac{1}{2} \frac{\beta_{j_0,j_1} \left( 1 \right) \beta_{j_1,j_0}\left( 1 \right)}{iT \gamma_{j_1,j_0}\left( 1 \right)} \nonumber \\
    & \quad + \sum_{\ell=1}^{r-1} \frac{\beta_{j_0,j_1} \left( \frac{\ell}{r} \right) \beta_{j_1,j_0}\left( \frac{\ell}{r} \right)}{iT \gamma_{j_1,j_0}\left( \frac{\ell}{r} \right)}\Bigg) \Biggr)\\
    & \quad + \sum_{j_1 \ne j_0} \ketbra{j_1(1)}{j_0(0)} e^{-i \frac{T}{r} \left( \frac{1}{2} \lambda_{j_0}(0) + \frac{1}{2} \lambda_{j_0} (1) + \sum_{\ell=1}^{r-1} \lambda_{j_0} \left(\frac{\ell}{r} \right) \right)} \Bigg(\frac{i \beta_{j_1,j_0}(1) e^{-i \frac{T}{r} \left( \frac{1}{2} \gamma_{j_1, j_0}(0) + \frac{1}{2} \gamma_{j_1, j_0} (1) + \sum_{\ell=1}^{r-1} \gamma_{j_1, j_0} \left( \frac{\ell}{r} \right) \right) }}{T \gamma_{j_1,j_0}(1)} \nonumber\\
    & \quad - \frac{i \beta_{j_1, j_0} (0)}{T \gamma_{j_1, j_0} (0)} \Bigg)
\end{align}
We can approximately implement $\bar{U}$ using the techniques from Section \ref{sec:HamPathInt}. The implementation of the long-time path integral is very similar to the short-time path integral, so we will not go through all the entire algorithm again here. Instead, see Appendix \ref{sec:LTPIAppendix} for details. The main result for the long-time path integral is the following.

\begin{theorem}
    Let $\hat{H}(t)$ be a time-dependent Hamiltonian satisfying the $d$-sparsity definition in Def. \ref{def:dsparse2}, with a minimum gap of $\gamma_{\min}$ between any two eigenvalues, and let $\Gamma \in O \left( \frac{\max \{ \| \dot{\hat{H}} \|, \| \ddot{\hat{H}} \|, \| \dddot{\hat{H}} \| \} }{\gamma_{\min}} \right)$. Then, we can approximately implement the time evolution operator $\mathcal{T} e^{-i T \int_0^1 \hat{H}(s) \mathrm{d} s}$ up to error $\epsilon$ for $T \in \Omega  \left( \frac{\Gamma^2}{\gamma_{\min} \sqrt{\epsilon}} \right)$ using $O \left( \frac{d^2}{\sqrt{\epsilon}} \sqrt[3]{\frac{\Gamma^2 \Lambda}{\gamma_{\min}}} \right)$ queries to the oracles defined in Appendix \ref{sec:LTPIAppendix}.
\end{theorem}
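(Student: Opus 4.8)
The plan is to bound the total error by a triangle-inequality split into three independent contributions---truncation of the jump expansion, discretization of the surviving integrals by trapezoidal sums, and the finite-precision LCU implementation---and then to count oracle queries, porting the machinery of Section~\ref{sec:HamPathInt} wherever possible. For the truncation error, Lemma~\ref{lemma:integration} already certifies that the continuous operator of \eqref{eq:TDOperator}---the zero-jump phase, the one-jump boundary terms, and the two-jump return correction---reproduces $\mathcal{T}e^{-iT\int_0^1\hat H(s)\mathrm{d}s}$ to within $O(\Gamma^4/\gamma_{\min}^2T^2)$, with all paths of three or more jumps controlled by Lemma~\ref{lemma:Cbd} and asymptotically smaller. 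Substituting the hypothesis $T\in\Omega(\Gamma^2/(\gamma_{\min}\sqrt\epsilon))$ makes this $O(\epsilon)$, which is exactly why truncation at $m=2$ jumps is legitimate and what the lower bound on $T$ buys us.

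For the discretization error, I would replace each integral in \eqref{eq:TDOperator} by its $r$-point trapezoidal sum, giving exactly the operator $\bar U$. The error separates into a phase part, from approximating $T\int_0^1\lambda_{j_0}$ and $T\int_0^s\gamma_{j_1,j_0}$, and an amplitude part, from approximating the two-jump integral $\int_0^1\beta_{j_0,j_1}\beta_{j_1,j_0}/(iT\gamma_{j_1,j_0})$. The $T$-dependence is the delicate feature: the phase part is \emph{amplified} by $T\sim\epsilon^{-1/2}$ while the amplitude part is \emph{suppressed} by $1/T$, and balancing the composite-trapezoidal bounds (with derivatives collected into $\Gamma$ and the quadrature constant $\Lambda$ of Appendix~\ref{sec:LTPIAppendix}) against the target $\epsilon$ is what fixes the grid size $r$ and ultimately produces the cube-root factor $(\Gamma^2\Lambda/\gamma_{\min})^{1/3}$ in the final count.

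For the implementation, the zero-jump and two-jump-return pieces both map each index $j_0$ to itself and so form a single diagonal phase operator, whereas the one-jump piece is a $d$-sparse transition between the $t=0$ and $t=1$ instantaneous eigenbases (Definition~\ref{def:dsparse2}) that I decompose into $d^2$ colour permutations exactly as in Lemma~\ref{def:graphcolouring}. Relative magnitudes are installed with the alternating-sign trick, the linear combination is block-encoded by PREP and SEL, and a Claim~\ref{clm:ast}-style bound shows the encoding is $O(d^2/2^B)$-close to an isometry, so a logarithmic choice of $B$ keeps the implementation error at $O(\epsilon)$; robust oblivious amplitude amplification (Lemma~\ref{lemma:roaa_lcu}) then supplies the $d^2$ overhead as in the short-time case. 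Each SEL application evaluates the trapezoidal phase and gap sums through the eigenvalue and gap oracles together with a constant number of index, overlap, and colour-oracle calls, and combining this per-application cost with the amplification round count and the grid size $r$ yields the stated $O\big(d^2(\Gamma^2\Lambda/\gamma_{\min})^{1/3}/\sqrt\epsilon\big)$ query complexity.

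The step I expect to be the main obstacle is the discretization analysis and, within it, the coupling between the large time $T$ and the quadrature grid: because the phases carry the factor $T\sim\epsilon^{-1/2}$, a naive $O(1/r^2)$ trapezoidal bound is too crude, and extracting the cube-root grid size---rather than a worse power of $1/\epsilon$---requires carefully tracking how the $T$-amplified phase error, the $1/T$-suppressed amplitude error, and the cost of recomputing the phase sums inside each SEL all trade off. A secondary subtlety is confirming that $\bar U$ remains close enough to an isometry for the \emph{robust} (rather than exact) amplitude amplification of Lemma~\ref{lemma:roaa_lcu} to apply, since the two-jump return correction perturbs the diagonal away from exact unit modulus and must be shown to do so only by $O(\epsilon)$.
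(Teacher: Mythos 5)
Your plan follows the paper's own proof almost step for step: the paper likewise splits the error by the triangle inequality into the jump-expansion truncation (Lemma~\ref{lemma:integration}, giving $O(\Gamma^4/\gamma_{\min}^2T^2)$, which the hypothesis $T\in\Omega(\Gamma^2/(\gamma_{\min}\sqrt{\epsilon}))$ makes $O(\epsilon)$), the trapezoidal discretization of \eqref{eq:TDOperator} yielding $\bar{U}$, and the finite-precision block-encoding error $O(d^2/2^B)$; it then implements $\bar{V}=\bar{V}_0+\bar{V}_1+\bar{V}_2$ with the same graph colouring, alternating-sign trick, PREP/SEL and robust oblivious amplitude amplification, choosing $r\in\Theta\bigl(T\sqrt[3]{\gamma_{\min}^2\Lambda/\Gamma^4}\bigr)$ and $B$ logarithmically.

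There is, however, one concrete step in your accounting that would fail as written: the claim that ROAA ``supplies the $d^2$ overhead as in the short-time case.'' With the oracles of Appendix~\ref{sec:LTPIAppendix}, each SEL costs only $O(1)$ queries, because $O_{TEP}$ and $O_\gamma$ return the \emph{entire} trapezoidal phase sums in a single call; so if the amplification really needed only $O(d^2)$ rounds, the total query count would be $O(d^2)$, independent of $\epsilon$ and $T$, which contradicts the stated bound. What actually happens in the paper is that the quadrature index $\ell\in\{0,\ldots,r\}$ is itself one of the LCU selection registers (PREP superposes over $\ell$), and each tiny two-jump amplitude $\zeta_{j_0,j_1}(\ell)$ is encoded per $\ell$ with coefficient $1/2^B$ via the alternating-sign trick, so the LCU $1$-norm is $s\in\Theta(d^2 r)$ and the ROAA round count is $\Theta(d^2 r)$. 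This is precisely — and only — where $r$, and hence the cube-root factor, enters the query complexity; your final sentence does multiply $r$ into the count, so the arithmetic lands in the right place, but the factor must come from the amplification round count, not from a per-SEL oracle cost. A secondary remark: your instinct that the quadrature analysis is the weak point is apt, but the paper does not resolve it either — it simply \emph{asserts} the Riemann-sum bound $\|\mathrm{Trunc}(\cdot)-\bar{U}\|\in O(T\Lambda/r^3)$ (rather than the naive composite-trapezoid $O(T\Lambda/r^2)$), and it is that asserted $r^{-3}$ scaling which produces the cube root in the final complexity.
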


\section{Discussion}
We have introduced three quantum algorithms inspired by the path integral formulation--the short-time Hamiltonian path integral algorithm, the long-time Hamiltonian path integral algorithm, and the Lagrangian path integral algorithm--and derived the query and gate complexity scalings for each.

The Lagrangian path integral is perhaps the most interesting algorithm of the three because it does not require any information about a Hamiltonian. Although the derivation begins with a Hamiltonian, the derived expression resembles the discretized Feynman path integral with a Lagrangian, and that expression can be directly simulated without any assumptions about any underlying Hamiltonian. In quantum field theory, the Lagrangian is often treated as the starting point of the theory rather than something derived from the Hamiltonian. Here, we assume that we have a Lagrangian that is quadratic in the velocity and a potential that only depends on the position. This is standard for a single-particle quantum mechanical system, but the Lagrangian may take a different form for other types of systems. One future direction will be trying to apply the Lagrangian path integral algorithm for simulating quantum field theories.  Further, applications of these ideas to constrained quantum dynamics could also be a significant application of the Lagrangian simulation methods presented here.

The short-time Hamiltonian path integral is a path integral reformulation of the higher-order Trotter product formula, but with a different access model. Here, we assumed access to the inner products between eigenstates of the Hamiltonian terms, and the overall complexity scales quadratically with $d$, the maximum number of non-zero inner products between the eigenstates of two adjacent Hamiltonian terms. The notion of sparsity here differs from the usual notion of row-sparsity or locality that is often used in Hamiltonian simulation. For example, consider a Hamiltonian written as a linear combination of $n$-qubit Pauli operators,
\begin{align}
    \hat{H} = \sum_{\ell=0}^{L-1} \hat{H}_\ell = \sum_{\ell=0}^{L-1} h_\ell \bigotimes_{q=0}^{n-1} \hat{P}^{(\ell)}_q
\end{align}
where $\hat{P}^{(\ell)}_q$ is a Pauli operator $\hat{X}$, $\hat{Y}$, or $\hat{Z}$, or identity operator $\openone$, acting on the $q^{\rm th}$ qubit. For two adjacent terms $\hat{H}_\ell$ and $\hat{H}_{\ell+1}$, if $[\hat{P}^{(\ell)}_q, \hat{P}_q^{(\ell+1)}] \neq 0$ for $d'$ values of $q \in \{0, \ldots, n-1\}$, we can write the bases for $\hat{H}_\ell$ and $\hat{H}_{\ell+1}$ so that each basis of $\hat{H}_\ell$ has a non-zero inner product with $2^{d'}$ basis states of $\hat{H}_{\ell+1}$, and vice versa. Specifically, we can use basis vectors $\{\ket{b^{(\ell)}_{0, \pm}} \ket{b^{(\ell)}_{1, \pm}} \ldots \ket{b^{(\ell)}_{n-1, \pm}}\}$ so that $\ket{b^{(\ell)}_{q, +}}$ is the $+1$ eigenstate of $\hat{P}_q^{(\ell)}$ and $\ket{b^{(\ell)}_{q, -}}$ is the $-1$ eigenstate if $\hat{P}_q^{(\ell)}$ is a Pauli operator. If $\hat{P}_q^{(\ell)}$ is the identity operator, we can choose $\ket{b_{q,\pm}^{(\ell)}}$ to be the eigenstates of $\hat{P}_q^{(\ell+1)}$ instead, since any single-qubit orthonormal basis is an eigenbasis for the identity operator (if $\hat{P}_q^{(\ell+1)}$ is also the identity, we can choose the $\hat{P}_q^{(\ell-1)}$ basis or just the $Z$ basis). Then, defining the basis for $\hat{P}^{(\ell+1)}$ similarly, there are $2^{d'}$ non-zero inner products with $\hat{H}_{\ell+1}$ basis states for each $\hat{H}_\ell$ basis state, and $2^{d'}$ non-zero inner products with $\hat{H}_\ell$ basis states for each $\hat{H}_{\ell+1}$ basis state. In the most general case, where all the $\hat{P}_q^{(\ell)}$ and $\hat{P}_q^{(\ell+1)}$ terms can be non-commuting, we get up to $2^n$ non-zero inner products. However, if the two Hamiltonian terms mostly commute qubitwise, $d'$ can be a small constant.

Of course, this is only the number of non-zero inner products between two of the Hamiltonian terms, and we need to consider the maximum value of $d'$ between all pairs of adjacent terms in the Hamiltonian decomposition. The ordering of the terms, as well as which basis we pick for the identity operators, will make a difference in what value of $d$ we can attain for the decomposition; we have not focused on finding ways of decomposing Hamiltonians in this work. It is worth noting that the final overall complexity scales quadratically with $d$ but only linearly in $L$, so it may be more optimal to decompose the Hamiltonian into terms with more qubitwise-commuting parts, at the cost of increasing the total number of terms. This algorithm can be thought of as a good generic algorithm that can be used to simulate any quantum algorithm using the Hamiltonian formalism, and unlike conventional Hamiltonian simulation approaches it does not require that we explicitly diagonalize the Hamiltonian terms in the Hamiltonian simulation which may have advantages.

The long-time Hamiltonian simulation algorithm on the other hand is the opposite of the previous one.  It is unlikely to be a very practical algorithm in cases where the system gaps are small, but in the case of near-adibatic evolution it can achieve better scaling than methods such as the truncated Dyson series simulation method~\cite{low2018hamiltonian,kieferova2019simulating}.  This work is relevant because it shows that if we have transitions that are large relative to the energy scale of the system that we're interested in, then we can use integration by parts to approximately sum over all such paths and avoid needing to perform each segment of the sum using LCU.  Further, this suggests that by intelligently breaking up the paths the comprise the time evolution into groups that can be analytically summed then we can substantially reduce the costs and further rely on methods like the short-time evolution simulation method to handle the sum over all paths excluding the long-time paths handled using the aforementioned method.

Perhaps the most significant research direction revealed by our work involves optimizing the Lagrangian simulation approach. 
 While our approach for the Lagrangian simulation method provides polynomial scaling, it fails to provide near-linear scaling with the evolution time $T$.  This is a consequence of the use of the lowest-order Trotter formula and likely can be improved; although the improvements are unlikely to follow by simply replacing the formulas with their higher order counterparts because of challenges posed by the Gauss sum formulas.  On a related note, the use of Gauss sum formulas here ties our spatial and temporal resolution for the quantum simulation, which means that our Lagrangian path integral methods can only approach zero error in the continuum limit.  Identifying a discrete analogue of this would be an important step forward because it would allow us to apply these ideas for discrete quantum systems rather than only continuous systems.

\section*{Acknowledgements}
SS acknowledges the Canadian National Science and Engineering Research Council (NSERC) for support.  NW would like to acknowledge funding for this work from Google Inc. This material is primarily supported
by the U.S. Department of Energy, Office of Science, National Quantum Information Science Research Centers, Co-
design Center for Quantum Advantage (C2QA) under contract number DE-SC0012704 (PNNL FWP 76274).

\bibliographystyle{unsrt}
\bibliography{references}

\appendix

\section{Derivation of Feynman path integral in the continuum} \label{sec:appendix_a}

Here for completeness we provide a derivation of the Feynman path integral in the continuum from the Hamiltonian.  We add this to contrast with the discrete derivative given above.
We will assume that we have a Hamiltonian of the standard form,
\begin{equation}
    \hat{H} = \frac{\hat{p}^2}{2m} + \hat{V}(x)
\end{equation}
where $\hat{p}$ is the momentum operator, $m$ is the mass, and $\hat{V}$ is a potential that only depends on the position, $x$. The solution to the Schrodinger equation is
\begin{equation}
    \ket{\psi(t)} = e^{-i \hat{H} t} \ket{\psi(0)}
\end{equation}
For any two particular position eigenstates $\ket{x_a}$ and $\ket{x_b}$, the transition amplitude of $e^{-i \hat{H} t}$ between the two states is
\begin{align}
    \bra{x_b} e^{-i \hat{H}t} \ket{x_a} &= \bra{x_b} e^{-i \left( \frac{\hat{p}^2}{2m} + \hat{V} \right) t} \ket{x_a}
\end{align}
We can divide the total time evolution into $r$ time slices. The evolution of the system for time $t$ is equivalent to applying the time evolution operator $r$ times, each for time $\frac{t}{r}$:
\begin{align}
    \bra{x_b} e^{-i \hat{H}t} \ket{x_a} &= \bra{x_b} \left( e^{-i \left( \frac{\hat{p}^2}{2m} + \hat{V} \right) \frac{t}{r}} \right)^r \ket{x_a}
\end{align}
If we look at just one time slice, and apply the Baker-Campbell-Hausdorff formula, we get
\begin{align}
    e^{-i \left( \frac{\hat{p}^2}{2m} + \hat{V} \right) \frac{t}{r}} &= \bra{x_b} e^{-i \frac{\hat{p}^2}{2m} \frac{t}{r}} e^{-i \hat{V} \frac{t}{r}} e^{-i \frac{t^2}{2r^2} \left[ \frac{\hat{p}^2}{2m}, \hat{V} \right] - \frac{t^3}{r^3} \left( \frac{1}{6} \left[ \hat{V}, \left[ \hat{V}, \frac{\hat{p}^2}{2m} \right] \right] - \frac{1}{3} \left[ \left[ \hat{V}, \frac{\hat{p}^2}{2m} \right], \frac{\hat{p}^2}{2m} \right] \right) + \ldots} \ket{x_a}
\end{align}
where the $\ldots$ indicates terms of higher order of $\frac{t}{r}$. If all the nested commutators of $\frac{\hat{p}^2}{2m}$ and $\hat{V}$ are bounded, the third exponential term goes to the identity in the $r \rightarrow \infty$ limit, and we get
\begin{align}
    \bra{x_b} e^{-i \left( \frac{\hat{p}^2}{2m} + \hat{V}(x) \right) t} \ket{x_a} &= \lim_{r \rightarrow \infty} \bra{x_b} \left( e^{-i \frac{\hat{p}^2}{2m} \frac{t}{r}} e^{-i \hat{V} \frac{t}{r}} \right)^r \ket{x_a} \label{path_integral_1}
\end{align}
The above equation does not hold for all potentials $\hat{V}(x)$; in fact, there are many useful ones, such as the Coulomb potential, for which nested commutators are not bounded and the convergence of the Baker-Campbell-Hausdorff expansion cannot be proven. There are ways of dealing with these potentials, such as by using the Duru-Kleinert transformation \cite{kleinert1990}, but here, for simplicity, we will assume that we are working with a nicely-behaved system where everything is bounded and \eqref{path_integral_1} holds.

Between each time slice, we will insert a complete set of intermediate states, which resolve to the identity:
\begin{equation}
    \int d x \ketbra{x}{x} = \openone
\end{equation}
Then, \eqref{path_integral_1} becomes
\begin{align}
    \lim_{r \rightarrow \infty} \int dk_1 \int dk_2 ... \int dk_{r-1}\bra{x_b} x_{r-1} \rangle \bra{x_{r-1}} e^{-i \frac{\hat{p}^2}{2m} \frac{t}{r}} e^{-i \hat{V}  \frac{t}{r}} \ket{x_r} \bra{x_r} e^{-i \frac{\hat{p}^2}{2m} \frac{t}{r}} e^{-i \hat{V}  \frac{t}{r}}  \ldots e^{-i \frac{\hat{p}^2}{2m} \frac{t}{r}} e^{-i \hat{V}  \frac{t}{r}} \ket{x_1} \langle x_1 \ket{x_a} \label{path_integral_2}
\end{align}
We have a product of $r$ factors, each of the form
\begin{align}
    \bra{x_{k+1}} e^{-i \frac{\hat{p}^2}{2m} \frac{t}{r}} e^{-i \hat{V} \frac{t}{r}} \ket{x_k} \label{one_ts}
\end{align}
where $x_0 \equiv x_a$ and $x_r \equiv x_b$, integrated over $x_k$ for all $k \in \{1, \ldots , r-1 \}$. Each $\ket{x_k}$ is the state at the end of the $k$th time slice, and the expression in \eqref{one_ts} is just the transition amplitude for the time evolution during that time slice. Thus, this expression tells us that the transition amplitude for the total time evolution is the sum of transition amplitudes for all possible paths $(x_a, x_1, \ldots, x_{r-1}, x_b )$. Now, if we insert another resolution of the identity between $e^{-i \frac{\hat{p}^2}{2m} \frac{t}{r}}$ and $e^{-i \hat{V} \frac{t}{r}}$, \eqref{one_ts} becomes
\begin{align}
    \int dp_k \bra{x_{k+1}} e^{-i \frac{\hat{p}^2}{2m} \frac{t}{r}} \ket{p_k} \bra{p_k} e^{-i \hat{V} \frac{t}{r}} \ket{x_k} &= \int dp_k \bra{x_{k+1}} e^{-i \frac{p_k^2}{2m} \frac{t}{r}} \ket{p_k} \bra{p_k} e^{-i V(x_k) \frac{t}{r}} \ket{x_k}\\
    &= \int dp_k \bra{x_{k+1}} p_k \rangle \bra{p_k} x_k \rangle e^{-i \frac{p_k^2}{2m} \frac{t}{r}}  e^{-i V(x_k)\frac{t}{r}}\\
    &= \frac{1}{2 \pi} \int d p_k \; e^{i p_k x_{k+1}} e^{i p_k x_k} e^{-i \frac{p_k^2}{2m} \frac{t}{r}} e^{-i V(x_k)\frac{t}{r}} \\
    &= \frac{1}{2 \pi} e^{-i V(x_k) \frac{t}{r}} \int d p_k \; e^{-i \frac{p_k^2}{2m} \frac{t}{r} + i p_k (x_{k+1} - x_k)} \\
    &= \sqrt{\frac{-i m r}{2 \pi t}} e^{i \left( \frac{m r}{2 t} (x_{k+1} - x_k)^2 - V(x_k) \frac{t}{r} \right)}\\
    &= \sqrt{\frac{-i m r}{2 \pi t}} e^{i \left( \frac{mt}{2r} \left( \frac{x_{k+1} - x_k}{\frac{t}{r}} \right)^2 - V(x_k) \frac{t}{r} \right)}
\end{align}
If we put this back into \eqref{path_integral_2}, then we get
\begin{align}
    & \lim_{r \rightarrow \infty} \left( \frac{-imr}{2 \pi t} \right)^{r/2} \int dk_1 ... \int dk_{r-1} \; e^{i \left( \frac{mt}{2r} \left( \frac{x_r - x_{r-1}}{\frac{t}{r}} \right)^2 - V(x_r) \frac{t}{r} \right)}  \ldots e^{i \left( \frac{mt}{2r} \left( \frac{x_1 - x_0}{\frac{t}{r}} \right)^2 - V(x_0) \frac{t}{r} \right)} \\
    = \; & \lim_{r \rightarrow \infty} \left( \frac{-imr}{2 \pi t} \right)^{r/2} \int dk_1 ... \int dk_{r-1} \; e^{i \sum_{k=0}^{r-1} \left( \frac{mt}{2r} \left( \frac{x_{k+1} - x_k}{\frac{t}{r}} \right)^2 - V(x_k) \frac{t}{r} \right)} \label{path_integral_3}
\end{align}
In the limit where $r \rightarrow \infty$, $\frac{x_{k+1} - x_k}{\frac{t}{r}}$ becomes $\dot{x}_k$, so the phase becomes
\begin{align}
    \lim_{r \rightarrow \infty} \frac{t}{r} \sum_{k=0}^{r-1} \left( \frac{m \dot{x}_k^2}{2} - V(x_k) \right) = \int_0^t L(\bm{x}(t'), \bm{\dot{x}}(t')) dt'
\end{align}\
where $\bm{x}$ is a particular path, which is now a function of continuous time $t'$ in the $r \rightarrow \infty$ limit. The integrals over all $x_k$'s becomes an integral over all possible paths $\bm{x}(t)$ from $x_a$ to $x_b$, and the expression in \eqref{path_integral_3} becomes
\begin{align}
    \frac{1}{Z} \int \mathcal{D} \bm{x} \; e^{i \int_0^t L (\bm{x}(t'), \bm{\dot{x}}(t') dt} 
\end{align}
where $Z$ is a normalization constant.

\section{Trotter Error Bound Proof} \label{sec:t_error_proof}
\begin{proof}[Proof of Lemma~\ref{lem:Trotter_Lagrange}]
Let us consider the expectation value of a POVM element $M_k$ that is supported on the feasible domain.  The corresponding probabilities for the Trotterized and non-Trotterized expectations are given below.
\begin{align}
    |p(k) - \widetilde{p}(k)| &= \left| \langle \psi | U^\dagger \hat{M}_k U|\psi\rangle - \langle \psi | \widetilde{U}^\dagger \hat{M}_k \widetilde{U} |\psi\rangle \right| \\
    & \leq  \left| \langle \psi | \hat{M}_k (U - \widetilde{U}) |\psi\rangle + \langle \psi| (U - \widetilde{U}) \hat{M}_k |\psi\rangle \right| \\
    & \leq 2 \left|\langle \psi| \hat{M}_k (U - \widetilde{U}) |\psi \rangle \right| \\
    & =2 \left|\langle \psi| \hat{M}_k \int_0^\tau \mathrm{d}t_1 \int_0^{t_1} \mathrm{d}t_2 e^{-i (\tau - t_1)H} e^{-i t_1 \hat{V}} e^{i t_2 \hat{V}} \left[ i \hat{V}, i T \right] e^{-i t_2 \hat{V}} e^{-i t_1 \hat{K}} |\psi \rangle \right| \\
    & = 2\left|\langle \psi| \hat{M}_k \int_0^\tau \mathrm{d}t_1 \int_0^{t_1} \mathrm{d}t_2 e^{-i (\tau - t_1)H} e^{-i t_1 \hat{V}} e^{i t_2 \hat{V}} \left( \hat{V} \hat{K} - \hat{K} \hat{V} \right) e^{-i t_2 \hat{V}} e^{-i t_1 T} |\psi \rangle \right| \\
    & \leq 2 \Bigg( \int_0^\tau \mathrm{d}t_1 \int_0^{t_1} \mathrm{d}t_2 \left| \bra{\psi} M_k e^{-i (\tau - t_1)H} e^{-i t_1 \hat{V}} e^{i t_2 \hat{V}} \hat{V} \hat{K} e^{-i t_2 \hat{V}} e^{-i t_1 \hat{K}} |\psi \rangle\right| \notag\\
    & \quad + \int_0^\tau \mathrm{d}t_1 \int_0^{t_1} \mathrm{d}t_2 \left| \bra{\psi} \hat{M}_k e^{-i (\tau - t_1)H} e^{-i t_1 \hat{V}} e^{i t_2 \hat{V}} \hat{K} \hat{V} e^{-i t_2 \hat{V}} e^{-i t_1 \hat{K}} |\psi \rangle\right| \Bigg) \\
    & \leq 2 \Bigg( \int_0^\tau \mathrm{d}t_1 \int_0^{t_1} \mathrm{d}t_2 \left\|  \hat{V} e^{i (t_1-t_2) \hat{V}} e^{i (\tau - t_1)H} \hat{M}_k\ket{\psi}\right\| \left\| \hat{K} e^{-i t_2 \hat{V}} e^{-i t_1 \hat{K}} |\psi \rangle\right\| \notag\\
    & \quad + \int_0^\tau \mathrm{d}t_1 \int_0^{t_1} \mathrm{d}t_2 \left\|  \hat{K} e^{i (t_1 - t_2) \hat{V}} e^{i (\tau - t_1)H} \hat{M}_k \ket{\psi} \right\| \left\| \hat{V} e^{-i t_2 \hat{V}} e^{-i t_1 \hat{K}}|\psi \rangle\right\| \Bigg)\\
    & \leq 2\Bigg( \int_0^\tau \mathrm{d}t_1 \int_0^{t_1} \mathrm{d}t_2 V_{\max} \left\| \hat{K} e^{-i t_2 \hat{V}} e^{-i t_1 \hat{K}} |\psi \rangle\right\| \notag\\
    & \quad + \int_0^\tau \mathrm{d}t_1 \int_0^{t_1} \mathrm{d}t_2 \left\|  \hat{K} e^{i (t_1 - t_2) \hat{V}} e^{i (\tau - t_1)H} M_k\ket{\psi} \right\|V_{\max} \Bigg) \label{tbound1}
\end{align}
Since $\hat{M}_k$ by assumption is an operator with domain and range on the feasible space $|\psi\rangle$ and $e^{i (\tau - t_1) H} M_k\ket{\psi}$ both are in the feasible space $S_n$. in Def. \ref{def:S_n}.
As $\hat{M}_k$ is a positive semi-definite operator such that $\sum_k \hat{M}_k = I$ from the definition of a POVM, we must have that $\|\hat{M}_k\|$ on the feasible space is at most $1$ as if it had an eigenvalue greater than $1$ it would ${\rm Tr}(\hat{\rho} \hat{M}_k)$ would not have a probability interpretation in general.  Thus we have that under the above assumptions
\begin{align}
    \max_{\ket{\psi}}\left\| \hat{K} e^{-i t_2 \hat{V}} e^{-i t_1 \hat{K}}|\psi \rangle\right\| &\le K_{\max}\\
    \left\|  \hat{K} e^{i (t_1 - t_2) \hat{V}} e^{i (\tau - t_1)H} M_k\ket{\psi} \right\| &\le \max_{\ket{\psi}}\left\| \hat{K} e^{-i t_2 \hat{V}} e^{-i t_1 \hat{K}}|\psi \rangle\right\| \le K_{\max}
\end{align}
Then, \eqref{tbound1} becomes
\begin{align}
    & \quad 2 \Bigg( \int_0^\tau \mathrm{d}t_1 \int_0^{t_1} \mathrm{d}t_2 V_{\max} \left\| \hat{K}  e^{-i t_1 \hat{K}} \sum_{k=0}^{2^{\eta D n}-1} a_{k} \ket{p_k} \right\|\nonumber\\
    &\qquad+ \int_0^\tau \mathrm{d}t_1 \int_0^{t_1} \mathrm{d}t_2 \left\|  \hat{K}  \sum_{k=0}^{2^{\eta D n}-1} a_{k} \ket{p_k} \ \right\|V_{\max} \Bigg) \\
    &\le {2 \tau^2 V_{\max} K_{\max}}
\end{align}

This is an upper bound on the error from one time step; the total error on the whole time evolution is (by Box 4.1 in \cite{nielsenchuang}) simply $r$ times the error from one time step. Thus, the total error $\epsilon$ is upper bounded by 
\begin{equation}
    r{2 \tau^2 V_{\max} K_{\max}} 
    = \frac{2 T^2 V_{\max} K_{\max}}{r} 
    \label{epsilon_scaling}
\end{equation}
\end{proof}

\section{Proof of Claim \ref{clm:ast}} \label{sec:ast_full_proof}
\begin{proof}[Proof of Claim \ref{clm:ast}]
    \begin{align}
        & \big(\bra{0} \otimes \normalfont{\openone} \big) \widetilde{A}_m \big(\ket{0} \otimes \normalfont{\openone} \big) - \hat{A}_m \notag \\
        &= \big(\bra{0} \otimes \normalfont{\openone} \big) \left(\sum_{b=0}^{2^B-1} \sum_{c_1=0}^{d-1} \sum_{c_2=0}^{d-1} \frac{1}{2^B} (\hat{X} \otimes \openone) \hat{U}_{m, b, c_1,c_2} \right) \big(\ket{0} \otimes \normalfont{\openone} \big) - \hat{A}_m \\
        &= \sum_{b=0}^{2^B-1} \sum_{c_1=0}^{d-1} \sum_{c_2=0}^{d-1} \frac{1}{2^B} \sum_{\substack{j: (0,j) \in \\ V(G_m^{(c_1, c_2)})}} g(m, j, c_1, b) e^{i \arg \left( \braket{ \chi_{f_{ind}(m, 0, j, c_1)}^{(\ell_{m+1)}} }{ \chi_j^{(\ell_m)}}  \right) - i \lambda^{(\ell_m)}_j \tau_m \frac{t}{r}}  \ketbra{f_{ind}(m, 0, j, c_1)}{j} - \hat{A}_m
    \end{align}
    Here we drop all the terms in $\hat{U}_{m,b,c_1,c_2}$ where the projections of the first qubit are zero (i.e., where the operator on the first qubit is $\ketbra{0}{0}$, $\ketbra{1}{1}$ or $\ketbra{0}{1}$). Then, applying the definition of $g(m,j,c_1,b)$ (see Definition \ref{def:U}) and splitting the $b$ sum accordingly, we get
    \begin{align}
        & \sum_{c_1=0}^{d-1} \sum_{c_2=0}^{d-1} \frac{1}{2^B} \sum_{\substack{j: (0,j) \in \\ V(G_m^{(c_1, c_2)})}} \notag \\
        & \quad \Bigg[ \sum_{b=0}^{\left[ 2^B \left|\braket{\chi^{(\ell_{m+1})}_{f_{ind}(\ell_m, j, \ell_{m+1}, c_1)}}{\chi^{(\ell_m)}_j} \right| \right]_B-1} e^{i \arg \left( \braket{ \chi_{f_{ind}(m, 0, j, c_1)}^{(\ell_{m+1)}} }{ \chi_j^{(\ell_m)}}  \right) - i \lambda^{(\ell_m)}_j \tau_m \frac{t}{r}} \ketbra{f_{ind}(m, 0, j, c_1)}{j} \notag \\
        & \quad + \sum_{\substack{b= \\ \left[ 2^B \left| \braket{\chi^{(\ell_{m+1})}_{f_{ind}(\ell_m, j, \ell_{m+1}, c_1)}}{\chi^{(\ell_m)}_j} \right| \right]_B}}^{2^B-1} (-1)^b e^{i \arg \left( \braket{ \chi_{f_{ind}(m, 0, j, c_1)}^{(\ell_{m+1)}} }{ \chi_j^{(\ell_m)}}  \right) - i \lambda^{(\ell_m)}_j \tau_m \frac{t}{r}} \ketbra{f_{ind}(m, 0, j, c_1)}{j} \Bigg] - \hat{A}_m \notag \\
        &= \sum_{c_1=0}^{d-1} \sum_{c_2=0}^{d-1}  \sum_{\substack{j: (0,j) \in \\ V(G_m^{(c_1, c_2)})}} \notag \\
        & \quad \Bigg[ \frac{1}{2^B} \left[ 2^B \left|\braket{\chi^{(\ell_{m+1})}_{f_{ind}(\ell_m, j, \ell_{m+1}, c_1)}}{\chi^{(\ell_m)}_j} \right| \right]_B e^{i \arg \left( \braket{ \chi_{f_{ind}(m, 0, j, c_1)}^{(\ell_{m+1)}} }{ \chi_j^{(\ell_m)}}  \right) - i \lambda^{(\ell_m)}_j \tau_m \frac{t}{r}} \ketbra{f_{ind}(m, 0, j, c_1)}{j} \notag \\
        & \quad + \frac{1}{2^B} \sum_{\substack{b= \\ \left[ 2^B \left| \braket{\chi^{(\ell_{m+1})}_{f_{ind}(\ell_m, j, \ell_{m+1}, c_1)}}{\chi^{(\ell_m)}_j} \right| \right]_B}}^{2^B-1}  (-1)^b e^{i \arg \left( \braket{ \chi_{f_{ind}(m, 0, j, c_1)}^{(\ell_{m+1)}} }{ \chi_j^{(\ell_m)}}  \right) - i \lambda^{(\ell_m)}_j \tau_m \frac{t}{r}} \ketbra{f_{ind}(m, 0, j, c_1)}{j} \Bigg] - \hat{A}_m \label{ast_proof1}
    \end{align}
    The second term is in the square brackets is 0 if the number of terms being summed over is even, and $\frac{1}{2^B} e^{i \arg \left( \braket{ \chi_{f_{ind}(m, 0, j, c_1)}^{(\ell_{m+1)}} }{ \chi_j^{(\ell_m)}}  \right) - i \lambda^{(\ell_m)}_j \tau_m \frac{t}{r}} \ketbra{f_{ind}(m, 0, j, c_1)}{j}$ if it is odd. Thus, the absolute value of the coefficient of the $\ketbra{f_{ind}(j, 0, j, c_1}{j}$ operator is in the set
    \begin{equation}
        \left\{ \frac{1}{2^B} \left[ 2^B \left| \braket{\chi^{(\ell_{m+1})}_{f_{ind}(\ell_m, j, \ell_{m+1}, c_1)}}{\chi^{(\ell_m)}_j} \right| \right]_B, \frac{1}{2^B} \left( \left[ 2^B \left| \braket{\chi^{(\ell_{m+1})}_{f_{ind}(\ell_m, j, \ell_{m+1}, c_1)}}{\chi^{(\ell_m)}_j} \right| \right]_B + 1 \right) \right\},
    \end{equation} which differs from the ideal magnitude of $\left| \braket{\chi^{(\ell_{m+1})}_{f_{ind}(\ell_m, j, \ell_{m+1}, c_1)}}{\chi^{(\ell_m)}_j} \right|$ by at most $\frac{2}{2^B}$. This is because the encoding of $2^B \left| \braket{\chi^{(\ell_{m+1})}_{f_{ind}(\ell_m, j, \ell_{m+1}, c_1)}}{\chi^{(\ell_m)}_j} \right|$ into $B$ bits rounds to an integer, so
    \begin{align}
        \left| \left| \braket{\chi^{(\ell_{m+1})}_{f_{ind}(\ell_m, j, \ell_{m+1}, c_1)}}{\chi^{(\ell_m)}_j} \right| - \frac{1}{2^B} \left[ 2^B \left| \braket{\chi^{(\ell_{m+1})}_{f_{ind}(\ell_m, j, \ell_{m+1}, c_1)}}{\chi^{(\ell_m)}_j} \right| \right]_B\right| \leq \frac{1}{2^B},
    \end{align} so the total error in the magnitude of the coefficient is at most $\frac{1}{2^B} + \frac{1}{2^B} = \frac{2}{2^B}$. Thus, for some $\Delta(m,j,c_1,c_2) \in \mathbb{C}$ such that $|\Delta| \leq \frac{2}{2^B}$, we can write \eqref{ast_proof1} as
    \begin{align}
        & \sum_{c_1=0}^{d-1} \sum_{c_2=0}^{d-1}  \sum_{\substack{j: (0,j) \in \\ V(G_m^{(c_1, c_2)})}} \left|\braket{\chi^{(\ell_{m+1})}_{f_{ind}(m, 0, j, c_1)}}{\chi^{(\ell_m)}_j} \right| e^{i \arg \left( \braket{ \chi_{f_{ind}(m, 0, j, c_1)}^{(\ell_{m+1)}} }{ \chi_j^{(\ell_m)}}  \right) - i \lambda^{(\ell_m)}_j \tau_m \frac{t}{r}} \ketbra{f_{ind}(m, 0, j, c_1)}{j} \notag \\
        & \quad + \Delta (m,j,c_1,c_2) \ketbra{f_{ind}(m, 0, j, c_1)}{j} - \hat{A}_m \notag \\
        &= \hat{A}_m + \sum_{c_1=0}^{d-1} \sum_{c_2=0}^{d-1}  \sum_{\substack{j: (0,j) \in \\ V(G_m^{(c_1, c_2)})}}  \Delta (m,j,c_1,c_2) \ketbra{f_{ind}(m, 0, j, c_1)}{j}  - \hat{A}_m \label{ast_proof3} \\ 
        &=  \sum_{c_1=0}^{d-1} \sum_{c_2=0}^{d-1}  \sum_{\substack{j: (0,j) \in \\ V(G_m^{(c_1, c_2)})}}  \Delta (m,j,c_1,c_2) \ketbra{f_{ind}(m, 0, j, c_1)}{j}
    \end{align}
    The spectral norm of the above operator is
    \begin{align}
        & \left\| \sum_{c_1=0}^{d-1} \sum_{c_2=0}^{d-1}  \sum_{\substack{j: (0,j) \in \\ V(G_m^{(c_1, c_2)})}}  \Delta (m,j,c_1,c_2) \ketbra{f_{ind}(m, 0, j, c_1)}{j} \right\|_2 \notag \\
        & \leq \sum_{c_1=0}^{d-1} \sum_{c_2=0}^{d-1} \left\|   \sum_{\substack{j: (0,j) \in \\ V(G_m^{(c_1, c_2)})}}  \Delta (m,j,c_1,c_2) \ketbra{f_{ind}(m, 0, j, c_1)}{j}  \right\|_2 \\
        &= \sum_{c_1=0}^{d-1} \sum_{c_2=0}^{d-1} \sup_{\ket{x}}\left\| \left( \sum_{\substack{j: (0,j) \in \\ V(G_m^{(c_1, c_2)})}}  \Delta (m,j,c_1,c_2) \ketbra{f_{ind}(m, 0, j, c_1)}{j} \right) \ket{x} \right\|_2 \label{ast_proof2}
    \end{align}
    If we write $\ket{x}$ as a linear combination of computational basis states, i.e.,
    \begin{align}
        \ket{x} := \sum_{j=0}^{2^n-1} a_j \ket{j},
    \end{align}
    where $\sum_{j=0}^{2^n-1} |a_j|^2 = 1$, then \eqref{ast_proof2} becomes
    \begin{align}
        & \quad \sum_{c_1=0}^{d-1} \sum_{c_2=0}^{d-1} \sup_{\substack{\{a_j\}_j: \\ \sum_{j=0}^{2^n-1} |a_j|^2 =1}} \left\| \sum_{\substack{j: (0,j) \in \\ V(G_m^{(c_1, c_2)})}}  \Delta (m,j,c_1,c_2) a_j \ket{f_{ind}(m, 0, j, c_1)} \braket{j}{j} \right\|_2 \\
        &= \sum_{c_1=0}^{d-1} \sum_{c_2=0}^{d-1} \sup_{\substack{\{a_j\}_j: \\ \sum_{j=0}^{2^n-1} |a_j|^2 =1}} \left\| \sum_{\substack{j: (0,j) \in \\ V(G_m^{(c_1, c_2)})}}  \Delta (m,j,c_1,c_2) a_j \ket{f_{ind}(m, 0, j, c_1)} \right\|_2 \label{2norm}
    \end{align}
    For a fixed $(c_1,c_2)$, $f_{ind}(m,0,j,c_1)$ is a one-to-one function of $j$, therefore the Euclidean norm in \ref{2norm} simplifies to 
    \begin{align}
        &\sum_{c_1=0}^{d-1} \sum_{c_2=0}^{d-1} \sup_{\substack{\{a_j\}_j: \\ \sum_{j=0}^{2^n-1} |a_j|^2 =1}} \sqrt{\sum_{\substack{j: (0,j) \in \\ V(G_m^{(c_1, c_2)})}} \left| \Delta (m,j,c_1,c_2) a_j \right|^2 } \notag \\
        &\leq \sum_{c_1=0}^{d-1} \sum_{c_2=0}^{d-1}  \frac{2}{2^B} \sup_{\substack{\{a_j\}_j: \\ \sum_{j=0}^{2^n-1} |a_j|^2 =1}} \sqrt{\sum_{\substack{j: (0,j) \in \\ V(G_m^{(c_1, c_2)})}} \left| a_j \right|^2 } \\
        &\leq \sum_{c_1=0}^{d-1} \sum_{c_2=0}^{d-1}  \frac{2}{2^B} \\
        &= \frac{2d^2}{2^B}
    \end{align}
\end{proof}
\section{Implementation of Alternating Sign Trick with Linear Combination of Unitaries} \label{sec:lcu}
Let $\ket{\psi_m} = \sum_{j=0}^{2^n-1} a_j \ket{j}$ denote the state $\ket{\psi}$ at the $m^{\rm th}$ time step. The full input state, with all the ancillas, is $\ket{m}_0 \ket{0}_1 \ket{\psi}_2 \ket{0}_3 \ket{0}_4 ... \ket{0}_{9}$ where the subscripts denote the register numbers. The following table describes what each register is used for: 
\begin{center}
    \begin{tabular}{ |c|c|c|c| } 
        \hline
        Register \# & Initial State & Size (Qubits) & Description \\ 
        \hline
        0 & $\ket{m}$ & $\log M$ & stores time step number $m$ \\
        1 & $\ket{0}$ & 1 & stores whether state corresponds to $\hat{H}_{\ell_m}$ or $\hat{H}_{\ell_{m+1}}$ basis \\ 
        2 & $\ket{\psi_m}$ & $n$ & stores basis state indices for $\ket{\psi}$ \\
        3 & $\ket{0}$ & $B$ & stores $b$ index for alternating sign trick\\ 
        4 & $\ket{0}$ & $\log d$ & stores $c_1$ \\ 
        5 & $\ket{0}$ & $\log d$ & stores $c_2$ \\ 
        6 & $\ket{0}$ & 1 & stores whether vertex is in $G_m^{(c_1,c_2)}$ \\
        7 & $\ket{0}$ & $n$ & stores $f_{ind}(m, 0, j, c_1)$ \\ 
        8 & $\ket{0}$ & $B$ & stores value of $\left[ 2^B \big|\langle \chi^{(\ell_{m+1})}_{f_{ind}(m, 0, j, c_1)} | \chi^{(\ell_m)}_j \rangle \big| \right]_B $ \\
        9 & $\ket{0}$ & 1 & stores whether $b \geq  \big| 2^B \langle \chi_{f_{ind}(m, 0, j, c_1)} | j \rangle \big|$\\
        \hline
    \end{tabular}
\end{center}
In order to check whether $b \geq \big| 2^B \langle \chi_{f_{ind}(m, 0, j, c_1)} | j \rangle \big|$ for the alternating sign trick, we will have to define a circuit that compares two bitstrings and checks whether the second is greater than the first.
\begin{definition}[Comparison circuit]
    Let $\hat{U}_{\leq} \in \mathcal{L}(\mathcal{H}_{B} \otimes \mathcal{H}_{B} \otimes \mathcal{H}_2)$ be a unitary operator,
    \begin{align}
        \hat{U}_{\leq} : \ket{b_1} \ket{b_2} \ket{c} \mapsto \ket{b_1} \ket{b_2} \ket{c \oplus (b_2 \leq b_1)}
    \end{align}
\end{definition}
The comparison circuit takes two bitstrings $b_1, b_2$ of length $B$ and flips the sign of a third register if the $b_1 \geq b_2$. By Lemma 24 in \cite{ssdmsw2023}, the above circuit can be constructed using $5B-2$ Toffoli gates and $B$ additional qubits (see \cite{ssdmsw2023} for full circuit). 
\begin{algorithm}[H]
\begin{algorithmic}[1]
\State Apply Hadamard gates to $r_3, r_4$ and $r_5$ \Comment{prepares uniform superposition of indices to be summed over}
\end{algorithmic}
\caption{Prepare}
\end{algorithm}

\begin{algorithm}[H]
\caption{Select} \label{alg:sel}
\begin{algorithmic}[1]
\item Apply $O_C(r_0, r_1, r_2, r_4, r_5, r_6)$ \Comment{checks whether $\ket{j} \in G_m^{(c_1,c_2)}$}
\item Apply $O_{ind} (r_0, r_1, r_2, r_4, r_7)$, controlled on $r_6$ \Comment{computes index of state that $\ket{j}$ gets mapped to if it's in $G_m^{(c_1, c_2)}$}
\item Apply $O_{IM}(r_0, r_2, r_7, r_8)$, controlled on $r_6$ \Comment{computes inner product magnitude if edge is in $G_m^{(c_1, c_2)}$}
\item Apply $U_{\leq} (r_3, r_8, r_9)$ \Comment{marks states where $b \geq 2^B \big| \langle \chi_{f_{ind}(m, 0, j, c_1)} | j \rangle \big|$}
\item Apply $O_{IP}(r_0, r_2, r_7)$ and $O_P(r_0, r_2)$, controlled on $r_6$ and 0-controlled on $r_1$ \Comment{computes phases} 
\item Swap $r_2$ and $r_7$ and apply $X(r_1)$, controlled on $r_6$ \Comment{implements transition if $\big((0,j), (1,f_{ind}(m, 0, j, c_1)\big) \in E(G_m^{(c_1, c_2)})$}
\item Apply $(-1)^b$ to $r_0$, controlled on $r_9$ \Comment{cancels out states with 0 inner product, applies amplitudes}
\item Apply $\hat{U}_{\leq} (r_3, r_8, r_9)$ \Comment{Sets $r_9$ back to 0}
\item Apply $O_{IM}(r_0, r_2, r_7, r_8)$ controlled on $r_6$ \Comment{sets $r_8$ back to 0}
\item Apply $O_{ind}(r_0, r_1, r_2, r_5, r_7)$ controlled on $r_6$ 
\Comment{sets $r_7$ back to 0}
\item Apply $O_C(r_0, r_1, r_2, r_4, r_5, r_6)$ \Comment{Sets $r_6$ back to 0}
\item Apply $X(r_1)$ \Comment{Sets $r_1$ back to 0}
\end{algorithmic}
\end{algorithm}
The Prepare operation simply prepares a superposition of all the $2^B d^2$ computational basis states by applying Hadamard gates to each of the $B + 2 \log d$ qubits. 

Recall that the Select operation is 
\begin{align}
    \text{\normalfont{SEL}}(\vec{U}_m) := \sum_{b=0}^{B-1} \sum_{c_1=0}^{d-1} \sum_{c_2=0}^{d-1} \ketbra{b}{b} \otimes \ketbra{c_1}{c_1} \otimes \ketbra{c_2}{c_2} \otimes  \left((\hat{X} \otimes \normalfont{\openone}) \hat{U}_{m, b, c_1, c_2} \right),
\end{align}
where 
\begin{align}
    \hat{U}_{m,b,c_1,c_2} &:= \sum_{\substack{j: (0,j) \in \\ V(G_m^{(c_1, c_2)})}} \bigg( g(m, j, c_1, b) e^{i \arg \left( \braket{ \chi_{f_{ind}(m, 0, j, c_1)}^{(\ell_{m+1)}} }{ \chi_j^{(\ell_m)}}  \right) - i \lambda^{(\ell_m)}_j \tau_m \frac{t}{r}} \ketbra{1}{0} \otimes \ketbra{f_{ind}(m, 0, j, c_1)}{j} \notag \\
    & \quad + (-1)^b \ketbra{0}{1} \otimes \ketbra{j}{f_{ind}(m, 0, j, c_1)} \bigg) + \sum_{\substack{j: (0,j) \notin \\ V(G_m^{(c_1, c_2)})}} (-1)^b \ketbra{0}{0} \otimes \ketbra{j}{j} + \sum_{\substack{q:(1, q) \notin \\ V(G_m^{(c_1,c_2)})}} (-1)^b \ketbra{1}{1} \otimes \ketbra{q}{q},
\end{align}
\begin{align}
    g(m, j, c_1, b) &= \begin{cases}
        1 & b < \left[ 2^B \left| \braket{\chi^{(\ell_{m+1})}_{f_{ind}(\ell_m, j, \ell_{m+1}, c_1)}}{\chi^{(\ell_m)}_j} \right| \right]_B \\
        (-1)^b & \text{otherwise}
    \end{cases}.
\end{align}
To see how the Select algorithm in Algorithm \ref{alg:sel} implements the Select operation defined in Def. \ref{def:sel} on registers $r_3, r_4, r_5, r_1, r_2$, consider the action of each step of the algorithm on an input state where the second register $r_2$ is a computational basis state; we only need to consider the computational basis states because any state $\ket{\psi_m}$ will be a linear combination of basis states. We will only show Algorithm \ref{alg:sel} implements Select with input states where $r_1$ is in the 0 state, because the input state will always be of that form, however, a similar calculation shows that it implements the desired transitions for states where $r_1$ is in the 1 state (the main difference is that $O_{ind}$ is applied to $r_4$ instead of $r_5$ in Step 10 to set $r_7$ back to 0). The subscript on each register indicates the register number.
\begin{align}
    \text{Input: } & \ket{m}_0 \ket{0}_1 \ket{j}_2 \ket{b}_3 \ket{c_1}_4 \ket{c_2}_5 \ket{0}_6 \ket{0}_7 \ket{0}_8 \ket{0}_9 \\
    \xrightarrow{\text{Step 1}} &  \begin{cases} \ket{m}_0 \ket{0}_1 \ket{j}_2 \ket{b}_3 \ket{c_1}_4 \ket{c_2}_5 \ket{0}_6 \ket{0}_7 \ket{0}_8 \ket{0}_9 & j \notin G_m^{(c_1, c_2)}\\
    \ket{m}_0 \ket{0}_1 \ket{j}_2 \ket{b}_3 \ket{c_1}_4 \ket{c_2}_5 \ket{1}_6 \ket{0}_7 \ket{0}_8 \ket{0}_9 & j \in G_m^{(c_1, c_2)}
    \end{cases} \\
    \xrightarrow{\text{Step 2}} &  \begin{cases} \ket{m}_0 \ket{0}_1 \ket{j}_2 \ket{b}_3 \ket{c_1}_4 \ket{c_2}_5 \ket{0}_6 \ket{0}_7 \ket{0}_8 \ket{0}_9 & j \notin G_m^{(c_1, c_2)}\\
    \ket{m}_0 \ket{0}_1 \ket{j}_2 \ket{b}_3 \ket{c_1}_4 \ket{c_2}_5 \ket{1}_6 \ket{f_{ind}(m, 0, j, c_1)}_7 \ket{0}_8 \ket{0}_9 & j \in G_m^{(c_1, c_2)}
    \end{cases} \\
    \xrightarrow{\text{Step 3}} &  \begin{cases} \ket{m}_0 \ket{0}_1 \ket{j}_2 \ket{b}_3 \ket{c_1}_4 \ket{c_2}_5 \ket{0}_6 \ket{0}_7 \ket{0}_8 \ket{0}_9 & j \notin G_m^{(c_1, c_2)}\\
    \ket{m}_0 \ket{0}_1 \ket{j}_2 \ket{b}_3 \ket{c_1}_4 \ket{c_2}_5 \ket{1}_6 \ket{f_{ind}(m, b, j, c_1)}_7 \ket{\left[ 2^B \left| \braket{\chi_{f_{ind}(m, 0, j, c_1)}^{(\ell_{m+1})}}{\chi_j^{(\ell_m)}} \right| \right]_B}_8 \ket{0}_9 & j \in G_m^{(c_1, c_2)} 
    \end{cases} \\
    \xrightarrow{\text{Step 4}} &  \begin{cases} \ket{m}_0 \ket{0}_1 \ket{j}_2 \ket{b}_3 \ket{c_1}_4 \ket{c_2}_5 \ket{0}_6 \ket{0}_7 \ket{0}_8 \ket{1}_9 & j \notin G_m^{(c_1, c_2)} \vspace{1em} \\
    \ket{m}_0 \ket{0}_1 \ket{j}_2 \ket{b}_3 \ket{c_1}_4 \ket{c_2}_5 \ket{1}_6 \ket{f_{ind}(m, 0, j, c_1)}_7 \ket{\left[ 2^B \left| \braket{\chi_{f_{ind}(m, 0, j, c_1)}^{(\ell_{m+1})}}{\chi_j^{(\ell_m)}} \right| \right]_B}_8 \\
    \quad \ket{ \left(b > \left[ 2^B \left| \braket{\chi_{f_{ind}(m, 0, j, c_1)}^{(\ell_{m+1})}}{\chi_j^{(\ell_m)}} \right| \right]_B \right)}_9 & j \in G_m^{(c_1, c_2)} 
    \end{cases} \\
    \xrightarrow{\text{Step 5}} &  \begin{cases} \ket{m}_0 \ket{0}_1 \ket{j}_2 \ket{b}_3 \ket{c_1}_4 \ket{c_2}_5 \ket{0}_6 \ket{0}_7 \ket{0}_8 \ket{1}_9 & j \notin G_m^{(c_1, c_2)} \vspace{1em} \\
    e^{i \arg \left(\braket{\chi_{f_{ind}(m, 0, j, c_1)}^{(\ell_{m+1})}}{\chi_j^{(\ell_m)}} \right) - i \lambda_j^{(\ell_m)} \tau_m \frac{t}{r}} \ket{m}_0 \ket{0}_1 \ket{j}_2 \ket{b}_3 \ket{c_1}_4 \ket{c_2}_5 \ket{1}_6 \ket{f_{ind}(m, 0, j, c_1)}_7 \\
    \quad \ket{\left[ 2^B \left| \braket{\chi_{f_{ind}(m, 0, j, c_1)}^{(\ell_{m+1})}}{\chi_j^{(\ell_m)}} \right| \right]_B}_8 \ket{ \left(b > \left[ 2^B \left| \braket{\chi_{f_{ind}(m, 0, j, c_1)}^{(\ell_{m+1})}}{\chi_j^{(\ell_m)}} \right| \right]_B \right)}_9 & j \in G_m^{(c_1, c_2)} 
    \end{cases} \\
    \xrightarrow{\text{Step 6}} &  \begin{cases} \ket{m}_0 \ket{0}_1 \ket{j}_2 \ket{b}_3 \ket{c_1}_4 \ket{c_2}_5 \ket{0}_6 \ket{0}_7 \ket{0}_8 \ket{1}_9 & j \notin G_m^{(c_1, c_2)} \vspace{1em} \\
    e^{i \arg \left(\braket{\chi_{f_{ind}(m, 0, j, c_1)}^{(\ell_{m+1})}}{\chi_j^{(\ell_m)}} \right) - i \lambda_j^{(\ell_m)} \tau_m \frac{t}{r}} \ket{m}_0 \ket{1}_1 \ket{f_{ind}(m, 0, j, c_1)}_2 \ket{b}_3 \ket{c_1}_4 \ket{c_2}_5 \ket{1}_6 \ket{j}_7 \\
    \quad \ket{\left[ 2^B \left| \braket{\chi_{f_{ind}(m, 0, j, c_1)}^{(\ell_{m+1})}}{\chi_j^{(\ell_m)}} \right| \right]_B}_8 \ket{ \left(b > \left[ 2^B \left| \braket{\chi_{f_{ind}(m, 0, j, c_1)}^{(\ell_{m+1})}}{\chi_j^{(\ell_m)}} \right| \right]_B \right)}_9 & j \in G_m^{(c_1, c_2)} 
    \end{cases} \\
    \xrightarrow{\text{Step 7}} &  \begin{cases} (-1)^b \ket{m}_0 \ket{1}_1 \ket{j}_2 \ket{b}_3 \ket{c_1}_4 \ket{c_2}_5 \ket{0}_6 \ket{0}_7 \ket{0}_8 \ket{1}_9 & j \notin G_m^{(c_1, c_2)} \vspace{1em}\\
    e^{i \arg \left(\braket{\chi_{f_{ind}(m, 0, j, c_1)}^{(\ell_{m+1})}}{\chi_j^{(\ell_m)}} \right) - i \lambda_j^{(\ell_m)} \tau_m \frac{t}{r}} \ket{m}_0 \\
    \quad \ket{1}_1 \ket{f_{ind}(m, 0, j, c_1)}_2 \ket{b}_3 \ket{c_1}_4 \ket{c_2}_5 \ket{1}_6 \ket{j}_7 \\
    \quad \ket{\left[ 2^B \left| \braket{\chi_{f_{ind}(m, 0, j, c_1)}^{(\ell_{m+1})}}{\chi_j^{(\ell_m)}} \right| \right]_B}_8 \\
    \quad \ket{ \left(b > \left[ 2^B \left| \braket{\chi_{f_{ind}(m, 0, j, c_1)}^{(\ell_{m+1})}}{\chi_j^{(\ell_m)}} \right| \right]_B 
    \right)}_9 & j \in G_m^{(c_1, c_2)}, b < \left[ 2^B \left| \braket{\chi_{f_{ind}(m, 0, j, c_1)}^{(\ell_{m+1})}}{\chi_j^{(\ell_m)}} \right| \right]_B \vspace{1em} \\
    (-1)^b e^{i \arg \left(\braket{\chi_{f_{ind}(m, 0, j, c_1)}^{(\ell_{m+1})}}{\chi_j^{(\ell_m)}} \right) - i \lambda_j^{(\ell_m)} \tau_m \frac{t}{r}} \ket{m}_0 \ket{1}_1 \\
    \quad \ket{f_{ind}(m, 0, j, c_1)}_2 \ket{b}_3 \ket{c_1}_4 \ket{c_2}_5 \ket{1}_6 \ket{j}_7 \\
    \quad \ket{\left[ 2^B \left| \braket{\chi_{f_{ind}(m, 0, j, c_1)}^{(\ell_{m+1})}}{\chi_j^{(\ell_m)}} \right| \right]_B}_8 \\
    \quad \ket{ \left(b > \left[ 2^B \left| \braket{\chi_{f_{ind}(m, 0, j, c_1)}^{(\ell_{m+1})}}{\chi_j^{(\ell_m)}} \right| \right]_B \right)}_9 & j \in G_m^{(c_1, c_2)}, b \geq \left[ 2^B \left| \braket{\chi_{f_{ind}(m, 0, j, c_1)}^{(\ell_{m+1})}}{\chi_j^{(\ell_m)}} \right| \right]_B
    \end{cases} \\
    =&  \begin{cases} (-1)^b \ket{m}_0 \ket{1}_1 \ket{j}_2 \ket{b}_3 \ket{c_1}_4 \ket{c_2}_5 \ket{0}_6 \ket{0}_7 \ket{0}_8 \ket{1}_9 & j \notin G_m^{(c_1, c_2)} \vspace{1em} \\
    g(m,j,c_1,b) e^{i \arg \left(\braket{\chi_{f_{ind}(m, 0, j, c_1)}^{(\ell_{m+1})}}{\chi_j^{(\ell_m)}} \right) - i \lambda_j^{(\ell_m)} \tau_m \frac{t}{r}} \ket{m}_0 \\
    \quad \ket{1}_1 \ket{f_{ind}(m, 0, j, c_1)}_2 \ket{b}_3 \ket{c_1}_4 \ket{c_2}_5 \ket{1}_6 \ket{j}_7 \ket{\left[ 2^B \left| \braket{\chi_{f_{ind}(m, 0, j, c_1)}^{(\ell_{m+1})}}{\chi_j^{(\ell_m)}} \right| \right]_B}_8 \\
    \quad \ket{ \left(b > \left[ 2^B \left| \braket{\chi_{f_{ind}(m, 0, j, c_1)}^{(\ell_{m+1})}}{\chi_j^{(\ell_m)}} \right| \right]_B 
    \right)}_9 & j \in G_m^{(c_1, c_2)}
    \end{cases} \\
    \xrightarrow{\text{Step 8}} &  \begin{cases} (-1)^b \ket{m}_0 \ket{1}_1 \ket{j}_2 \ket{b}_3 \ket{c_1}_4 \ket{c_2}_5 \ket{0}_6 \ket{0}_7 \ket{0}_8 \ket{0}_9 & j \notin G_m^{(c_1, c_2)} \vspace{1em} \\
    g(m,j,c_1,b) e^{i \arg \left(\braket{\chi_{f_{ind}(m, 0, j, c_1)}^{(\ell_{m+1})}}{\chi_j^{(\ell_m)}} \right) - i \lambda_j^{(\ell_m)} \tau_m \frac{t}{r}} \ket{m}_0 \ket{1}_1 \ket{f_{ind}(m, 0, j, c_1)}_2 \\
    \quad \ket{b}_3 \ket{c_1}_4 \ket{c_2}_5 \ket{1}_6 \ket{j}_7 \ket{\left[ 2^B \left| \braket{\chi_{f_{ind}(m, 0, j, c_1)}^{(\ell_{m+1})}}{\chi_j^{(\ell_m)}} \right| \right]_B}_8 \ket{0}_9 & j \in G_m^{(c_1, c_2)}
    \end{cases} \\
    \xrightarrow{\text{Step 9}} &  \begin{cases} (-1)^b \ket{m}_0 \ket{1}_1 \ket{j}_2 \ket{b}_3 \ket{c_1}_4 \ket{c_2}_5 \ket{0}_6 \ket{0}_7 \ket{0}_8 \ket{0}_9 & j \notin G_m^{(c_1, c_2)} \vspace{1em} \\
    g(m,j,c_1,b) e^{i \arg \left(\braket{\chi_{f_{ind}(m, 0, j, c_1)}^{(\ell_{m+1})}}{\chi_j^{(\ell_m)}} \right) - i \lambda_j^{(\ell_m)} \tau_m \frac{t}{r}} \ket{m}_0 \ket{1}_1 \ket{f_{ind}(m, 0, j, c_1)}_2 \\
    \quad \ket{b}_3 \ket{c_1}_4 \ket{c_2}_5 \ket{1}_6 \ket{j}_7 \ket{0}_8 \ket{0}_9 & j \in G_m^{(c_1, c_2)}
    \end{cases} \\
    \xrightarrow{\text{Step 10}} &  \begin{cases} (-1)^b \ket{m}_0 \ket{1}_1 \ket{j}_2 \ket{b}_3 \ket{c_1}_4 \ket{c_2}_5 \ket{0}_6 \ket{0}_7 \ket{0}_8 \ket{0}_9 & j \notin G_m^{(c_1, c_2)} \vspace{1em} \\
    g(m,j,c_1,b) e^{i \arg \left(\braket{\chi_{f_{ind}(m, 0, j, c_1)}^{(\ell_{m+1})}}{\chi_j^{(\ell_m)}} \right) - i \lambda_j^{(\ell_m)} \tau_m \frac{t}{r}} \ket{m}_0 \ket{1}_1 \ket{f_{ind}(m, 0, j, c_1)}_2 \\
    \quad \ket{b}_3 \ket{c_1}_4 \ket{c_2}_5 \ket{1}_6 \ket{0}_7 \ket{0}_8 \ket{0}_9 & j \in G_m^{(c_1, c_2)}
    \end{cases} \\
    \xrightarrow{\text{Step 11}} &  \begin{cases} (-1)^b \ket{m}_0 \ket{1}_1 \ket{j}_2 \ket{b}_3 \ket{c_1}_4 \ket{c_2}_5 \ket{0}_6 \ket{0}_7 \ket{0}_8 \ket{0}_9 & j \notin G_m^{(c_1, c_2)} \vspace{1em} \\
    g(m,j,c_1,b) e^{i \arg \left(\braket{\chi_{f_{ind}(m, 0, j, c_1)}^{(\ell_{m+1})}}{\chi_j^{(\ell_m)}} \right) - i \lambda_j^{(\ell_m)} \tau_m \frac{t}{r}} \ket{m}_0 \ket{1}_1 \ket{f_{ind}(m, 0, j, c_1)}_2 \\
    \quad \ket{b}_3 \ket{c_1}_4 \ket{c_2}_5 \ket{0}_6 \ket{0}_7 \ket{0}_8 \ket{0}_9 & j \in G_m^{(c_1, c_2)}
    \end{cases} \\
    \xrightarrow{\text{Step 11}} &  \begin{cases} (-1)^b \ket{m}_0 \ket{1}_1 \ket{j}_2 \ket{b}_3 \ket{c_1}_4 \ket{c_2}_5 \ket{0}_6 \ket{0}_7 \ket{0}_8 \ket{0}_9 & j \notin G_m^{(c_1, c_2)} \vspace{1em} \\
    g(m,j,c_1,b) e^{i \arg \left(\braket{\chi_{f_{ind}(m, 0, j, c_1)}^{(\ell_{m+1})}}{\chi_j^{(\ell_m)}} \right) - i \lambda_j^{(\ell_m)} \tau_m \frac{t}{r}} \ket{m}_0 \ket{1}_1 \ket{f_{ind}(m, 0, j, c_1)}_2 \\
    \quad \ket{b}_3 \ket{c_1}_4 \ket{c_2}_5 \ket{0}_6 \ket{0}_7 \ket{0}_8 \ket{0}_9 & j \in G_m^{(c_1, c_2)}
    \end{cases} \\
    \xrightarrow{\text{Step 12}} &  \begin{cases} (-1)^b \ket{m}_0 \ket{0}_1 \ket{j}_2 \ket{b}_3 \ket{c_1}_4 \ket{c_2}_5 \ket{0}_6 \ket{0}_7 \ket{0}_8 \ket{0}_9 & j \notin G_m^{(c_1, c_2)} \vspace{1em} \\
    g(m,j,c_1,b) e^{i \arg \left(\braket{\chi_{f_{ind}(m, 0, j, c_1)}^{(\ell_{m+1})}}{\chi_j^{(\ell_m)}} \right) - i \lambda_j^{(\ell_m)} \tau_m \frac{t}{r}} \ket{m}_0 \ket{0}_1 \ket{f_{ind}(m, 0, j, c_1)}_2 \\
    \quad \ket{b}_3 \ket{c_1}_4 \ket{c_2}_5 \ket{0}_6 \ket{0}_7 \ket{0}_8 \ket{0}_9 & j \in G_m^{(c_1, c_2)}
    \end{cases}
\end{align}
If we consider the action of the above algorithm on registers $r_3, r_4, r_5, r_1, r_2$, it maps a state $\ket{b} \ket{c_1} \ket{c_2} \ket{0} \ket{j}$ to $(-1)^b \ket{b} \ket{c_1} \ket{c_2} \ket{0} \ket{j} $ if $j \notin G_m^{(c_1, c_2)}$, and $g(m,j,c_1,b) e^{i \arg \left(\braket{\chi_{f_{ind}(m, 0, j, c_1)}^{(\ell_{m+1})}}{\chi_j^{(\ell_m)}} \right) - i \lambda_j^{(\ell_m)} \tau_m \frac{t}{r}} \ket{b} \ket{c_1} \ket{c_2} \ket{0} \ket{j} $ if $j \in G_m^{(c_1, c_2)}$, which is exactly the action of SEL$(\vec{U}_m)$, as required.

\section{Proof of Lemma \ref{lemma:lcu}} \label{sec:lcu_appendix}
\begin{proof}
    Lemma \ref{lemma:lcu} claims that 
    \begin{align}
        \left( (\ketbra{0}{0})^{\otimes (1 + B + 2 \log d)} \otimes \normalfont{\openone} \right) \hat{W} \ket{0}^{\otimes (B + 2 \log d)} \ket{0} \ket{\psi} = \frac{1}{d^2} \ket{0}^{\otimes (B + 2 \log d)} \widetilde{A}_m \ket{0} \ket{\psi}
    \end{align}
    where $\hat{W} := (\text{\normalfont{PREP}}^\dagger \otimes \normalfont{\openone}) \text{\normalfont{SEL}} (\vec{U}_m) (\text{\normalfont{PREP}} \otimes \normalfont{\openone})$. If we plug the definitions of PREP \eqref{def:prep} and SEL \eqref{def:sel} into the left side, we get
    \begin{align}
        & \quad \left( \big(\ketbra{0}{0} \big)^{\otimes \log (1 + B + 2 \log d)} \otimes \openone \right) (\text{\normalfont{PREP}}^\dagger \otimes \openone) \left(\text{\normalfont{SEL}}(\vec{U}_m) \right) (\text{\normalfont{PREP}} \otimes \openone) \ket{0}^{\otimes (B + 2 \log d)} \ket{0} \ket{\psi} \\
        &= \left( \frac{1}{d \sqrt{2^B}} \ket{0}^{\otimes (B + 2 \log d)} \left( \sum_{b=0}^{2^B-1} \sum_{c_1=0}^{d-1} \sum_{c_2=0}^{d-1} \bra{b} \bra{c_1} \bra{c_2} \right) \otimes \ketbra{0}{0} \otimes \openone \right) \notag \\
        & \quad \left(\text{\normalfont{SEL}}(\vec{U}_m) \right) \left( \frac{1}{d \sqrt{2^B}} \sum_{b=0}^{2^B-1} \sum_{c_1=0}^{d-1} \sum_{c_2=0}^{d-1} \ket{b} \ket{c_1} \ket{c_2} \right) \ket{0} \ket{\psi} \\
        &= \frac{1}{d^2} \left( \ket{0}^{\otimes (B + 2 \log d)} \left( \sum_{b=0}^{2^B-1} \sum_{c_1=0}^{d-1} \sum_{c_2=0}^{d-1} \bra{b} \bra{c_1} \bra{c_2} \right) \otimes \ketbra{0}{0} \otimes \openone \right)  \left( \sum_{b=0}^{2^B-1} \ket{b} \ket{c_1} \ket{c_2} \frac{1}{2^B} (\hat{X} \otimes \openone) \hat{U}_{m,b,c_1,c_2}  \ket{0} \ket{\psi} \right)\\
        &= \frac{1}{d^2} \ket{0}^{\otimes (B + 2 \log d)} \sum_{b=0}^{2^B-1} \sum_{c_1=0}^{d-1} \sum_{c_2=0}^{d-1} \frac{1}{2^B} (\hat{X} \otimes \openone) \hat{U}_{m,b,c_1,c_2}  \ket{0} \ket{\psi} \\
        &= \frac{1}{d^2} \ket{0}^{\otimes (B + 2 \log d)} \widetilde{A}_m \ket{0} \ket{\psi}
    \end{align}
\end{proof}

\section{Robust Oblivious Amplitude Amplification} \label{sec:roaa_appendix}
We provide here for completeness, a thorough discussion of robust oblivious amplitude amplification and define more carefully some of the notation used earlier in its application to Hamiltonian path integral simulations. The implementation of the process is given in the following lemma, adapted from \cite{bck2015}:
\begin{lemma}
    Let $\hat{W}$ denote the circuit implementing $(\text{\normalfont{PREP}}^\dagger \otimes \normalfont{\openone} )\text{\normalfont{SEL}}(\vec{U}_m)(\text{\normalfont{PREP}} \otimes \normalfont{\openone})$, as shown in Figure \ref{fig:lcucircuit}, where {\normalfont{PREP}} and {\normalfont{SEL}} are defined in \ref{def:prep} and \ref{def:sel}, respectively. Let $\hat{R} := -  (\normalfont{\openone} - 2 \hat{W} \hat{\Pi} \hat{W}^\dagger) (\normalfont{\openone} - 2 \hat{\Pi})$, where $\hat{\Pi}:= (\ketbra{0}{0})^{\otimes (1 + B + 2 \log d)} \otimes \normalfont{\openone}$. Then,
\begin{align*}
    \left\| \left( \hat{\Pi} \hat{R}^p \hat{W} \hat{\Pi} \right) \ket{0}^{\otimes (B + 2 \log d)} \ket{0} \ket{\psi} -  \left( (\ketbra{0}{0} )^{\otimes (B + 2 \log d)} \otimes \widetilde{A}_m \right) \ket{0}^{\otimes (B + 2 \log d)} \ket{0} \ket{\psi} \right\| \in O\left( \frac{d^2}{2^B} \right)
\end{align*} 
where $p \in \mathbb{N}$ such that $\sin\left( \frac{\pi}{2(2p+1)} \right) = \frac{1}{d^2}$, $\widetilde{A}_m$ is defined as in \eqref{tildeA}, $d$ is the sparsity of the Hamiltonian decomposition (Def. \ref{def:dsparse}), and $B$ is the number of bits of precision for the inner product oracle (Def. \ref{def:O_IM}). 
\end{lemma}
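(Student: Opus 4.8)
The plan is to recognize this statement as the concrete, circuit-level instance of robust oblivious amplitude amplification (Lemma~\ref{lemma:roaa_lcu}), and to prove it directly through a Jordan's-lemma decomposition rather than by merely quoting the general result. First I would invoke Lemma~\ref{lemma:lcu}, which identifies the restriction of $\hat W$ to the ``good'' subspace $\mathcal G$ fixed by $\hat\Pi=(\ketbra{0}{0})^{\otimes(1+B+2\log d)}\otimes\openone$ with the (non-unitary) block $\frac{1}{d^2}\hat A'_m$, where $\hat A'_m:=(\bra{0}\otimes\openone)\widetilde A_m(\ket{0}\otimes\openone)$. I would then note that $\hat R=-(\openone-2\hat W\hat\Pi\hat W^\dagger)(\openone-2\hat\Pi)$ is, up to sign, the product of the reflection about $\mathcal G$ and the reflection about $\hat W\mathcal G$, hence the standard amplitude-amplification rotation.

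The next step is to diagonalize the geometry. Applying Jordan's lemma to the pair of projectors $\hat\Pi$ and $\hat Q:=\hat W\hat\Pi\hat W^\dagger$ splits the space into subspaces of dimension at most two that are invariant under $\hat R$ and aligned with the singular value decomposition of the good-subspace block $\frac{1}{d^2}\hat A'_m=\sum_k \sin\theta_k\,\ketbra{u_k}{v_k}$. In the two-dimensional block indexed by $k$ the principal angle is $\theta_k$, the reflections make $\hat R$ act as a rotation by $2\theta_k$, and a single application of $\hat W$ already deposits amplitude $\sin\theta_k$ on $\ket{u_k}$; hence $\hat\Pi\hat R^p\hat W\hat\Pi$ sends $\ket{v_k}\mapsto \sin\big((2p+1)\theta_k\big)\ket{u_k}$. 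Collecting the blocks, the operator actually realised on $\mathcal G$ is $\hat B=\sum_k \sin\big((2p+1)\theta_k\big)\ketbra{u_k}{v_k}$, which I must compare against the target $\hat A'_m=\sum_k \sigma_k\ketbra{u_k}{v_k}$ with $\sigma_k:=d^2\sin\theta_k$.

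The robustness bound is then a uniform estimate over $k$. Writing $\delta:=2d^2/2^B$, Claim~\ref{clm:ast} gives $\|\hat A'_m-\hat A_m\|\le\delta$ with $\hat A_m$ unitary, so Weyl's inequality for singular values yields $|\sigma_k-1|\le\delta$, equivalently $|\sin\theta_k-1/d^2|\le\delta/d^2$ and thus $|\theta_k-\theta|=O(\delta/d^2)$, where $\sin\theta=1/d^2$. Because $p$ is chosen so that $(2p+1)\theta=\pi/2$ with $2p+1=O(d^2)$, I obtain $(2p+1)\theta_k=\pi/2+O(\delta)$, whence $\sin\big((2p+1)\theta_k\big)=1-O(\delta^2)$ and $\big|\sin\big((2p+1)\theta_k\big)-\sigma_k\big|=O(\delta)$ uniformly in $k$. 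This gives $\|\hat B-\hat A'_m\|=O(\delta)=O(d^2/2^B)$, which is precisely the claimed bound once translated back to the statement about $\hat\Pi\hat R^p\hat W\hat\Pi$ acting on $\ket{0}^{\otimes(B+2\log d)}\ket{0}\ket{\psi}$ versus $\big((\ketbra{0}{0})^{\otimes(B+2\log d)}\otimes\widetilde A_m\big)$ applied to the same state.

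I expect the main obstacle to be showing that the error remains $O(\delta)$ rather than degrading to the naive $O(p\delta)=O(d^4/2^B)$ one would guess from accumulating an $O(\delta)$ discrepancy over $p=O(d^2)$ rotation steps. The saving feature, which the proof must make quantitative, is that the non-unitarity of $\hat A'_m$ perturbs the principal angles only by $O(\delta/d^2)$ --- suppressed by the small block amplitude $\sin\theta=1/d^2$ --- so that multiplying by the $O(d^2)$ amplification rounds returns an $O(\delta)$ error. Secondary care is needed to verify that the Jordan blocks genuinely align with the singular vectors of $\frac{1}{d^2}\hat A'_m$, to treat the one-dimensional blocks (principal angles $0$ or $\pi/2$) separately, and to confirm that the sign conventions in $\hat R$ produce a clean rotation by $2\theta_k$ about the correct reference axis.
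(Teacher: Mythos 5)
Your proof is correct, and it takes a genuinely different route from the paper: the paper never proves this lemma internally --- it defers to the proof of Lemma 6 of \cite{bck2015} and offers only the heuristic two-dimensional-rotation picture --- whereas you supply a self-contained geometric argument. Your steps check out. Lemma~\ref{lemma:lcu} identifies the good-subspace block of $\hat{W}$ as $\frac{1}{d^2}\hat{A}_m'$ with $\hat{A}_m':=(\bra{0}\otimes\openone)\widetilde{A}_m(\ket{0}\otimes\openone)$; writing $\hat{A}_m'=\sum_k\sigma_k\ketbra{u_k}{v_k}$ and $\sin\theta_k:=\sigma_k/d^2$, the Jordan blocks of the pair $(\hat{\Pi},\hat{W}\hat{\Pi}\hat{W}^\dagger)$ are spanned by the embedded left singular vectors $\ket{u_k}$ together with the bad components of $\hat{W}$ applied to the embedded $\ket{v_k}$: the alignment you flagged as needing care holds because $\hat{\Pi}\hat{W}^\dagger\hat{\Pi}$ restricted to the good subspace is the adjoint of $\hat{\Pi}\hat{W}\hat{\Pi}$, the blocks are mutually orthogonal because $\hat{W}$ preserves inner products, and with the minus sign in $\hat{R}$ one indeed gets a clean rotation by $2\theta_k$ in each block, so $\hat{\Pi}\hat{R}^p\hat{W}\hat{\Pi}$ realises $\sum_k\sin\bigl((2p+1)\theta_k\bigr)\ketbra{u_k}{v_k}$. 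The quantitative heart is exactly as you anticipated: with $\delta:=2d^2/2^B$ (taking $\delta\le 1$, else the bound is vacuous), Weyl's inequality plus Claim~\ref{clm:ast} give $|\sin\theta_k-1/d^2|\le\delta/d^2$, the derivative of $\arcsin$ is $O(1)$ on $[0,(1+\delta)/4]$ for $d\ge2$, and $2p+1\le\pi d^2/2$ since $\sin x\le x$, so $(2p+1)|\theta_k-\arcsin(1/d^2)|=O(\delta)$ and the error never degrades to $O(p\delta)$; the edge case $d=1$ forces $p=0$, where the statement is immediate, and $\sin\theta_k\le1$ is automatic from $\|\hat{\Pi}\hat{W}\hat{\Pi}\|\le1$. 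As for what each route buys: the proof the paper outsources to \cite{bck2015} is algebraic --- it expresses the amplified block as a polynomial in the block and its adjoint and perturbs around the nearby unitary --- and it covers arbitrary LCU coefficient vectors, whereas your geometric argument exploits the structure specific to this setting (uniform coefficients and $(2p+1)\arcsin(1/d^2)=\pi/2$ exactly) to expose the error-suppression mechanism and to make manifest that the bound is uniform over input states $\ket{\psi}$, a point the paper's informal sketch, phrased in an input-dependent two-dimensional subspace, leaves implicit.
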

For a rigorous proof, see proof of Lemma 6 from \cite{bck2015}. The rough idea of how the robust oblivious amplitude amplification process works is that each iteration of $\hat{R}$ (which is the product of two reflections) implements, up to $\frac{1}{2^B}$ error, a rotation of the state by an angle of $\frac{\pi}{2p+1}$ in the two-dimensional subspace spanned by $\hat{W} \hat{\Pi} \ket{0}^{\otimes (1 + B + 2 \log d)} \ket{\psi}$ and $\ket{0}^{\otimes (B + 2 \log d)} \widetilde{A}_m \ket{0} \ket{\psi}$. After $p$ rotations by that angle, the state has approximately been rotated to $\ket{0}^{\otimes (B + 2 \log d)} \widetilde{A}_m \ket{0} \ket{\psi}$. It works exactly the same as oblivious amplitude amplification except that there is some error that comes in due to the non-unitarity of the operator $(\bra{0} \otimes \openone) \widetilde{A}_m (\ket{0} \otimes \openone)$.

Each iteration of $\hat{R}$ takes one application of $\hat{W}$ and one application of $\hat{W}^\dagger$, so one iteration of $\hat{R}$ uses a constant number of Select($\vec{U}_m$) and Prepare operations and their inverse operations. Since $\sin \left( \frac{\pi}{2(2p+1)} \right) = \frac{1}{d^2}$, we have $p \in O(d^2)$, which means that the final error in implementing $(\bra{0} \otimes \openone) \widetilde{A}_m (\ket{0} \otimes \openone)$ after $p$ iterations of $\hat{R}$ is $\in O \left( \frac{d^2}{2^B} \right)$, using $O(d^2)$ Select, Prepare and their inverses. Each application of Select uses a constant number of oracle queries, as well as $O(B + \log d)$ additional two-qubit gates and $B$ extra qubits for the comparison circuit (which can be reused for each application of $\hat{W}$), and each application of Prepare requires $B + 2\log d$ two-qubit gates. Thus, approximating $\widetilde{A}_m$ to within error $O \left( \frac{d^2}{2^B} \right)$ takes $O(d^2)$ oracle queries, $O \left(d^2 (B + \log d) \right)$ additional two-qubit gates, and $O (B + \log d)$ additional qubits.

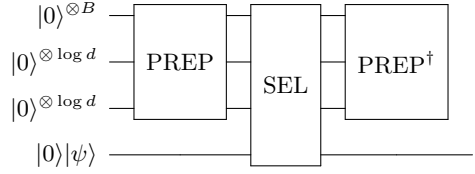
\begin{figure}
$\begin{array}{c}
    \Qcircuit @C=1em @R=1em {
        \lstick{\ket{0}^{\otimes B}} & \multigate{2}{\text{PREP}} & \multigate{3}{\text{SEL}} & \multigate{2}{\text{PREP}^\dagger} \\ 
        \lstick{\ket{0}^{\otimes \log d}} & \ghost{\text{PREP}} & \ghost{{\text{SEL}}} & \ghost{\text{PREP}^\dagger}\\
        \lstick{\ket{0}^{\otimes \log d}} & \ghost{\text{PREP}} & \ghost{\text{SEL}} & \ghost{\text{PREP}^\dagger}\\
        \lstick{\ket{0} \ket{\psi}} & \qw & \ghost{\text{SEL}} & \qw & \qw
    }
    \end{array}$
    \caption{Circuit to implement $\hat{W}$.} \label{fig:lcucircuit}
\end{figure}

\section{Long-time Hamiltonian path integral implementation} \label{sec:LTPIAppendix}
This section goes through the implementation of the time-dependent path integral in more detail. The ideas are almost exactly the same as the short-time path integral in \ref{sec:HamPathInt}, but we will provide some more details here.

Recall that the discretized truncated time evolution operator is
\begin{align}
    \bar{U}:= & \sum_{j_0} \ketbra{\chi_{j_0}(1)}{\chi_{j_0}(0)} e^{-i \frac{T}{r} \left(  \frac{1}{2}\lambda_{j_0} (0) + \frac{1}{2} \lambda_{j_0} (1) + \sum_{\ell=1}^{r-1} \lambda_{j_0} \left( \frac{\ell}{r} \right) \right)} \label{eq:0jump} \\
    & \quad + \sum_{j_0} \ketbra{\chi_{j_0}(1)}{\chi_{j_0}(0)} e^{-i \frac{T}{r} \left(  \frac{1}{2}\lambda_{j_0} (0) + \frac{1}{2} \lambda_{j_0} (1) + \sum_{\ell=1}^{r-1} \lambda_{j_0} \left( \frac{\ell}{r} \right) \right)}\sum_{j_1\ne j_0} \frac{1}{r} \Bigg( \frac{1}{2} \frac{\beta_{j_0,j_1} \left( 0 \right) \beta_{j_1,j_0}\left( 0 \right)}{iT \gamma_{j_1,j_0}\left( 0 \right)}  \nonumber \\
    & \quad + \frac{1}{2} \frac{\beta_{j_0,j_1} \left( 1 \right) \beta_{j_1,j_0}\left( 1 \right)}{iT \gamma_{j_1,j_0}\left( 1 \right)} + \sum_{\ell=0}^{r-1} \frac{\beta_{j_0,j_1} \left( \frac{\ell}{r} \right) \beta_{j_1,j_0}\left( \frac{\ell}{r} \right)}{iT \gamma_{j_1,j_0}\left( \frac{\ell}{r} \right)}\Bigg) \Biggr)\\
    & \quad + \sum_{j_1 \ne j_0} \ketbra{\chi_{j_1}(1)}{\chi_{j_0} (0)} e^{-i \frac{T}{r} \left( \frac{1}{2} \lambda_{j_0}(0) + \frac{1}{2} \lambda_{j_0} (1) + \sum_{k=1}^{r-1} \lambda_{j_0} \left(\frac{k}{r} \right) \right)} \Bigg(\frac{i \beta_{j_1,j_0}(1) e^{-i \frac{T}{r} \left( \frac{1}{2} \gamma_{j_1, j_0}(0) + \frac{1}{2} \gamma_{j_1, j_0} (1) + \sum_{\ell=1}^{r-1} \gamma_{j_1, j_0} \left( \frac{\ell}{r} \right) \right) }}{T \gamma_{j_1,j_0}(1)} \nonumber\\
    & \quad - \frac{i \beta_{j_1, j_0} (0)}{T \gamma_{j_1, j_0} (0)} \Bigg),
\end{align}
where
\begin{align}
\beta_{jk}(s) := \lim_{\delta \rightarrow 0} \frac{\braket{\chi_j(s+\delta)}{\chi_k(s)}}{\delta}= \frac{\bra{\chi_j(s)}\dot{\hat{H}}(s) \ket{\chi_k(s)}}{\lambda_j(s) - \lambda_k(s)} \label{eq:beta2}
\end{align}
and \begin{align}
    \gamma_{j_1, j_0} (s) := \lambda_{j_1}(s) - \lambda_{j_0}(s)
\end{align}
The Riemann sum error in the above expression from approximating the integrals in Eq. \eqref{eq:TDOperator} is
\begin{align}
    \|{\rm Trunc}(\mathcal{T} e^{-i T \int_0^1 \hat{H}(s) ds}) - \bar{U} \| \in O \left( \frac{T}{r^3} \Lambda \right),
\end{align}
where
\begin{align}
    \Lambda:= \max_j \left( \max |\lambda_j ''(s)| \right).
\end{align}
Using the triangle inequality, we have
\begin{align}
    & \quad \|\mathcal{T} e^{-i T \int_0^1 \hat{H}(s) ds} - \bar{U} \| \\
    &= \left\|\mathcal{T} e^{-i T \int_0^1 \hat{H}(s) ds} - \mathrm{Trunc} ( \mathcal{T} e^{-i T \int_0^1 \hat{H}(s) ds} ) - \left( \bar{U} - \mathrm{Trunc} )\mathcal{T} e^{-i T \int_0^1 \hat{H}(s) ds} ) \right) \right\| \\
    & \leq \left \|\mathcal{T} e^{-i T \int_0^1 \hat{H}(s) ds} - \mathrm{Trunc} ( \mathcal{T} e^{-i T \int_0^1 \hat{H}(s) ds} ) \right\| + \left\| \bar{U} - \mathrm{Trunc}(\mathcal{T} e^{-i T \int_0^1 \hat{H}(s) ds}) \right\| \\
    & \in O \left( \frac{\Gamma^4}{\gamma^2_{\rm min} T^2} + \frac{T}{r^3} \Lambda \right) \label{eq:lt_error2}
\end{align}
We assume that the Hamiltonian satisfies the $d$-sparse definition in Def. \ref{def:dsparse2}, and that we can enumerate the non-zero $\beta_{j_0, j_1}$ values in the same way at any reduced time $\frac{\ell}{r}$, i.e., the $0^{\rm th}$ non-zero $\beta_{j_0, j_1} \left( \frac{\ell}{r} \right)$ is the one with the smallest value of $j_1$, and the $\beta_{j_0, j_1}$ values are enumerated by ascending $j_1$. The non-zero $\beta_{j_0, j_1}  \left( \frac{\ell}{r} \right)$ values for $j_1$ are enumerated similarly, by ascending $j_0$. As with the time-independent case, we will represent the eigenvectors of $\hat{H} \left( \frac{\ell}{r} \right)$ with computational basis states $\ket{\ell} \ket{b} \ket{j}$, with the first register storing the time step number, the second register storing whether we are in the $\hat{H} \left( \frac{\ell-1}{r} \right)$ basis or the $\hat{H} \left( \frac{\ell}{r} \right)$ basis, and the third register storing the basis state number. In this new representation, the discretized truncated time evolution operator becomes 
\begin{align}
    \bar{V}:= & \sum_{j_0} \ketbra{j_0}{j_0} e^{-i \frac{T}{r} \left(  \frac{1}{2}\lambda_{j_0} (0) + \frac{1}{2} \lambda_{j_0} (1) + \sum_{\ell=1}^{r-1} \lambda_{j_0} \left( \frac{\ell}{r} \right) \right)} \\
    & \quad + \sum_{j_0} \ketbra{j_0}{j_0} e^{-i \frac{T}{r} \left(  \frac{1}{2}\lambda_{j_0} (0) + \frac{1}{2} \lambda_{j_0} (1) + \sum_{\ell=1}^{r-1} \lambda_{j_0} \left( \frac{\ell}{r} \right) \right)} \sum_{j_1 \neq j_0} \frac{1}{r} \Bigg( \frac{1}{2} \frac{\beta_{j_0,j_1} \left( 0 \right) \beta_{j_1,j_0}\left( 0 \right)}{iT \gamma_{j_1,j_0}\left( 0 \right)}  \nonumber \\
    & \quad + \frac{1}{2} \frac{\beta_{j_0,j_1} \left( 1 \right) \beta_{j_1,j_0}\left( 1 \right)}{iT \gamma_{j_1,j_0}\left( 1 \right)} + \sum_{\ell=0}^{r-1} \frac{\beta_{j_0,j_1} \left( \frac{\ell}{r} \right) \beta_{j_1,j_0}\left( \frac{\ell}{r} \right)}{iT \gamma_{j_1,j_0}\left( \frac{\ell}{r} \right)}\Bigg) \Biggr)\\
    & \quad + \sum_{j_1 \ne j_0} \ketbra{j_1}{j_0} e^{-i \frac{T}{r} \left( \frac{1}{2} \lambda_{j_0}(0) + \frac{1}{2} \lambda_{j_0} (1) + \sum_{\ell=1}^{r-1} \lambda_{j_0} \left(\frac{\ell}{r} \right) \right)} \Bigg(\frac{i \beta_{j_1,j_0}(1) e^{-i \frac{T}{r} \left( \frac{1}{2} \gamma_{j_1, j_0}(0) + \frac{1}{2} \gamma_{j_1, j_0} (1) + \sum_{\ell=1}^{r-1} \gamma_{j_1, j_0} \left( \frac{\ell}{r} \right) \right) }}{T \gamma_{j_1,j_0}(1)} \nonumber\\
    & \quad - \frac{i \beta_{j_1, j_0} (0)}{T \gamma_{j_1, j_0} (0)} \Bigg)
\end{align}
acting on the $\ket{j}$ register of the state. Since the Hamiltonian satisfies the $d$-sparse defintion in Def. \ref{def:dsparse2}, the time-dependent index function can be defined very similarly to the short-time one in Def.  \ref{def:f_ind}.
\begin{definition}[Time-dependent index function]
    Let $f_{ind,TD}: \{0, \ldots, r-1 \} \times \{ 0, 1\} \times \{ 0, \ldots 2^n-1 \} \times \{ 0, \ldots, d-1 \} \rightarrow \{ 0, \ldots , 2^n-1 \}$, $f_{ind,TD}(\ell, b, j_0, p)$ is defined to be the index $j_1$ such that $\beta_{j_0, j_1} \left( \frac{\ell}{r} \right)$ is the $p^{\rm th}$ non-zero $\beta_{j_0, j_1} \left( \frac{\ell}{r} \right)$ for $j_0$ if $b=0$ and the $p^{\rm th}$ non-zero $\beta_{j_0, j_1} \left( \frac{\ell}{r} \right)$ value for $j_1$ if $b=1$, according to the enumeration defined above.
\end{definition} 
 We can use the same graph colouring scheme from Section \ref{sec:graphcolouring} to sort the non-zero transitions into $d^2$ different colours, defining the graphs in a similar manner to the short-time case, with edges between the vertices corresponding to two states $j_0$ and $j_1$ if $\beta_{j_1,j_0}$ is nonzero. Then, we can write $\bar{V}$ as
\begin{align}
    \bar{V}:= & \sum_{j_0} \ketbra{j_0}{j_0} e^{-i \frac{T}{r} \left(  \frac{1}{2}\lambda_{j_0} (0) + \frac{1}{2} \lambda_{j_0} (1) + \sum_{\ell=1}^{r-1} \lambda_{j_0} \left( \frac{\ell}{r} \right) \right)} \\
    & \quad + \sum_{c_1=0}^{d-1} \sum_{c_2=0}^{d-1} \sum_{\substack{j_0: (0,j_0) \in \\ V(G_\ell^{(c_1, c_2)})}} e^{-i \frac{T}{r} \left(  \frac{1}{2}\lambda_{j_0} (0) + \frac{1}{2} \lambda_{j_0} (1) + \sum_{\ell=1}^{r-1} \lambda_{j_0} \left( \frac{\ell}{r} \right) \right)} \frac{1}{r} \Bigg( \frac{1}{2} \frac{\beta_{j_0,f_{ind,TD} (\ell,0,j_0,c_1)} \left( 0 \right) \beta_{f_{ind,TD} (\ell,0,j_0,c_1),j_0}\left( 0 \right)}{iT \gamma_{j_1,j_0}\left( 0 \right)}  \nonumber \\
    & \quad + \frac{1}{2} \frac{\beta_{j_0,j_1} \left( 1 \right) \beta_{f_{ind,TD} (\ell,0,j_0,c_1),j_0}\left( 1 \right)}{iT \gamma_{j_1,j_0}\left( 1 \right)} + \sum_{\ell=0}^{r-1} \frac{\beta_{j_0,f_{ind,TD} (\ell,0,j_0,c_1)} \left( \frac{\ell}{r} \right) \beta_{f_{ind,TD} (\ell,0,j_0,c_1),j_0}\left( \frac{\ell}{r} \right)}{iT \gamma_{f_{ind,TD} (\ell,0,j_0,c_1),j_0}\left( \frac{\ell}{r} \right)}\Bigg) \Biggr)\\
    & \quad + \ketbra{f_{ind,TD} (\ell,0,j_0,c_1)}{j_0} e^{-i \frac{T}{r} \left( \frac{1}{2} \lambda_{j_0}(0) + \frac{1}{2} \lambda_{j_0} (1) + \sum_{\ell=1}^{r-1} \lambda_{j_0} \left(\frac{\ell}{r} \right) \right)} \nonumber \\
    & \quad \Bigg(\frac{i \beta_{f_{ind,TD} (\ell,0,j_0,c_1),j_0}(1) e^{-i \frac{T}{r} \left( \frac{1}{2} \gamma_{f_{ind,TD} (\ell,0,j_0,c_1), j_0}(0) + \frac{1}{2} \gamma_{f_{ind,TD} (\ell,0,j_0,c_1), j_0} (1) + \sum_{\ell=1}^{r-1} \gamma_{f_{ind,TD} (\ell,0,j_0,c_1), j_0} \left( \frac{\ell}{r} \right) \right) }}{T \gamma_{f_{ind,TD} (\ell,0,j_0,c_1),j_0}(1)} \nonumber\\
    & \quad - \frac{i \beta_{f_{ind,TD} (\ell,0,j_0,c_1), j_0} (0)}{T \gamma_{f_{ind,TD} (\ell,0,j_0,c_1), j_0} (0)} \Bigg)
\end{align}
Unlike the short-time Hamiltonian path integral, which required repeated applications of the short-time step time evolution operator, this operator $\bar{V}$ implements the entire time evolution from start to finish.

We will define the index oracle as in  Def. \ref{def:O_ind}, but replacing the inner product in the definition of the index function with $\beta_{j_0, j_1}$. The inner product and eigenvalues oracles in Section \ref{sec:HamPathInt} will be replaced with oracles that implement the corresponding amplitudes and phases for the time-dependent time evolution:

\begin{definition}[Time-dependent index oracle]
    Let $O_{ind, TD} \in \mathcal{L} (\mathcal{H}_r \otimes \mathcal{H}_2 \otimes \mathcal{H}_{2^n} \otimes \mathcal{H}_d \otimes \mathcal{H}_{2^n})$ be a unitary operator, such that its action on an arbitrary computational basis state is
    \begin{align}
        O_{ind, TD} : \ket{\ell} \ket{b} \ket{j} \ket{p} \ket{c} \mapsto \ket{\ell} \ket{b} \ket{j} \ket{p} \ket{c \oplus f_{ind,TD}(\ell, b, j, p)}.
    \end{align}
\end{definition}

To simplify the transition amplitudes, let
 \begin{align}
     \eta_{j_0, j_1} (\ell) := \begin{cases}
         -\frac{i \beta_{j_1, j_0} \left( \frac{\ell}{r} \right)}{T \gamma_{j_1, j_0} \left( \frac{\ell}{r} \right)} & \ell=0 \\
         \frac{i \beta_{j_1, j_0} \left( \frac{\ell}{r} \right)}{T \gamma_{j_1, j_0} \left( \frac{\ell}{r} \right)} & \ell=r \\
         0 & \text{otherwise}
     \end{cases} ,
 \end{align}
 and let
 \begin{align}
     \zeta_{j_0, j_1} (\ell) := \begin{cases}
        \frac{i \beta_{j_0, j_1} \left( \frac{\ell}{r} \right) \beta_{j_1, j_0} \left( \frac{\ell}{r} \right) }{2 r T \gamma_{j_1, j_0} \left( \frac{\ell}{r} \right)} & \ell=0 \text{ or } \ell=r\\
         \frac{i \beta_{j_0, j_1}\left( \frac{\ell}{r} \right)  \beta_{j_1, j_0}\left( \frac{\ell}{r} \right) }{r T \gamma_{j_1, j_0} \left( \frac{\ell}{r} \right)} & \text{otherwise}
     \end{cases}.
 \end{align}

\begin{definition}[$\eta$ magnitude oracle]
    Let $B \in \mathbb{N}$. Let $O_{\eta M} \in \mathcal{L} (\mathcal{H}_{r+1} \otimes \mathcal{H}_{2^n} \otimes \mathcal{H}_{2^n} \otimes \mathcal{H}_B)$ be a unitary operator, such that its action on an arbitrary computational basis state is
    \begin{align}
        O_\eta : \ket{\ell} \ket{j_0}\ket{j_1} \ket{c} \mapsto \ket{\ell} \ket{j_0} \ket{j_1} \ket{c \oplus \left[ 2^B |\eta_{j_0, j_1}(\ell)| \right]_B},
    \end{align}
    where $\beta$ is defined in Eq. \eqref{eq:beta2} and $[\cdot]_B$ denotes the bitstring representation in $B$ bits.
\end{definition}

\begin{definition}[$\eta$ phase oracle]
    Let $B \in \mathbb{N}$. Let $O_{\eta M} \in \mathcal{L} (\mathcal{H}_{r+1} \otimes \mathcal{H}_{2^n} \otimes \mathcal{H}_{2^n})$ be a unitary operator, such that its action on an arbitrary computational basis state is
    \begin{align}
        O_{\eta P} : \ket{\ell} \ket{j_0} \ket{j_1} \mapsto e^{i \mathrm{arg} \left( \eta_{j_0, j_1}(\ell) \right)} \ket{\ell} \ket{j_0} \ket{j_1}.
    \end{align}
 \end{definition}

\begin{definition}[$\zeta$ magnitude oracle]
Let $B \in \mathbb{N}$. Let $O_{\zeta M} \in \mathcal{L} (\mathcal{H}_{r+1} \otimes \mathcal{H}_{2^n} \otimes \mathcal{H}_{2^n} \otimes \mathcal{H}_B)$ be a unitary operator, such that its action on an arbitrary computational basis state is
    \begin{align}
        O_\zeta : \ket{\ell} \ket{j_0}\ket{j_1} \ket{c} \mapsto \ket{\ell} \ket{j_0} \ket{j_1} \ket{c \oplus \left[ 2^B |\zeta_{j_0, j_1}(\ell)| \right]_B},
    \end{align}
    where $\beta$ is defined in Eq. \eqref{eq:beta2} and $[\cdot]_B$ denotes the bitstring representation in $B$ bits.
\end{definition}

\begin{definition}[$\zeta$ phase oracle]
    Let $B \in \mathbb{N}$. Let $O_{\eta M} \in \mathcal{L} (\mathcal{H}_{r+1} \otimes \mathcal{H}_{2^n} \otimes \mathcal{H}_{2^n})$ be a unitary operator, such that its action on an arbitrary computational basis state is
    \begin{align}
        O_{\zeta P} : \ket{\ell} \ket{j_0} \ket{j_1} \mapsto e^{i \mathrm{arg} \left( \zeta_{j_0, j_1}(\ell) \right)} \ket{\ell} \ket{j_0} \ket{j_1}.
    \end{align}
 \end{definition} 

\begin{definition}[Time-dependent eigenvalue phase oracle]
    Let $O_{TEP} \in \mathcal{L} (\mathcal{H}_{r+1} \otimes \mathcal{H}_{2^n})$ be a unitary operator, such that its action on an arbitrary basis state is
    \begin{align}
        O_{TEP} : \ket{j} \mapsto e^{-i \frac{T}{r} \left(  \frac{1}{2}\lambda_{j} (\ell) + \frac{1}{2} \lambda_{j} (1) + \sum_{\ell=1}^{r-1} \lambda_{j} \left( \frac{\ell}{r} \right) \right)} \ket{j}
    \end{align}
\end{definition}

\begin{definition}[$\gamma$ oracle]
    Let $O_\gamma \in \mathcal{L} (\mathcal{H}_{2^n} \otimes \mathcal{H}_{2^n})$ be a unitary operator, such that its action on an arbitrary basis state is
    \begin{align}
        O_\gamma :\ket{j_0} \ket{j_1} \mapsto e^{-i \frac{T}{r}  \left( \frac{1}{2} \gamma_{j_1, j_0}(0) + \frac{1}{2} \gamma_{j_1, j_0} (1) + \sum_{\ell=1}^{r-1} \gamma_{j_1, j_0} \left( \frac{\ell}{r} \right) \right) } \ket{j_0} \ket{j_1} 
    \end{align}
\end{definition}

We will consider three different parts of the operator separately. The first part, \eqref{eq:0jump}, implements the zero-jump paths. This part is 
\begin{align}
    \bar{V}_0 := \sum_{j_0} \ketbra{j_0}{j_0} e^{-i \frac{T}{r} \left( \frac{1}{2} \lambda_{j_0}(0) + \frac{1}{2} \lambda_{j_0}(1) + \sum_{\ell=1}^{r-1} \lambda_{j_0} \left( \frac{\ell}{r} \right) \right)}
\end{align}
Note that the zero-jump path can be implemented using one query to the oracle $O_{TEP}$.

The part of the operator that implements the one-jump paths is
\begin{align}
    \bar{V}_1 :&= \sum_{j_0=0}^{2^n-1} \sum_{j_1=0}^{2^n-1} \ketbra{j_1}{j_0} e^{-i \frac{T}{r} \left( \frac{1}{2} \lambda_{j_0}(0) + \frac{1}{2} \lambda_{j_0} (1) + \sum_{\ell=1}^{r-1} \lambda_{j_0} \left(\frac{\ell}{r} \right) \right)} \Bigg(\frac{i \beta_{j_1,j_0}(1) e^{-i \frac{T}{r} \left( \frac{1}{2} \gamma_{j_1, j_0}(0) + \frac{1}{2} \gamma_{j_1, j_0} (1) + \sum_{\ell=1}^{r-1} \gamma_{j_1, j_0} \left( \frac{\ell}{r} \right) \right) }}{T \gamma_{j_1,j_0}(1)} \nonumber\\
    & \quad - \frac{i \beta_{j_1, j_0} (0)}{T \gamma_{j_1, j_0} (0)} \Bigg)  \\
    &= \sum_{s=0}^{1} \sum_{j_0=0}^{2^n-1} \sum_{j_1=0}^{2^n-1} \ketbra{j_1}{j_0} e^{-i \frac{T}{r} \left( \frac{1}{2} \lambda_{j_0}(0) + \frac{1}{2} \lambda_{j_0} (1) + \sum_{\ell=1}^{r-1} \lambda_{j_0} \left(\frac{\ell}{r} \right) \right)} \Bigg( \eta_{j_1,j_0}(sr) e^{-i \frac{s T}{r}  \left( \frac{1}{2} \gamma_{j_1, j_0}(0) + \frac{1}{2} \gamma_{j_1, j_0} (1) + \sum_{\ell=1}^{r-1} \gamma_{j_1, j_0} \left( \frac{\ell}{r} \right) \right) }\Bigg) \label{eq:1jump}
\end{align}
Because we defined $\beta_{j_0, j_1}$ to be zero when $j_0 = j_1$, the second summation over $j_1 \neq j_0$ has been changed to a sum over all $j_1$, since the amplitude of the $j_0 = j_1$ paths is zero. We have an operator that implements a transition where each computational basis state is mapped to at most $d$ other computational basis states, and each computational basis state has at most $d$ computational basis states that are mapped to it. This is exactly the same scenario as the time-independent case, so the same technique of using graph colouring to split the transitions into $d^2$ groups will be used. The same colour oracle defined in Def. \ref{def:O_C} can be used, except now the graph that is being coloured is one with vertices representing $\ket{\chi_{j_0} \left( \frac{\ell}{r} \right)}$ and $\ket{\chi_{j_1} \left( \frac{\ell}{r} \right)}$ states, and the edges representing non-zero $\beta_{j_0, j_1} \left( \frac{\ell}{r} \right)$ values.

Using this graph colouring, we can implement $\bar{V}_1$ up to $O \left( \frac{d^2}{2^B} \right)$ error using the same alternating sign trick with linear combination of unitaries as in the time-independent case, querying the phase oracles as required to get the correct phases on the transitions. The $O \left( \frac{d^2}{2^B} \right)$ error comes in due to the precision of storing $\beta_{j_0, j_1} \left( \frac{\ell}{r} \right)$ in $B$ bits. The proof of this is pretty much the same as the proof of Claim \ref{clm:ast} in Appendix \ref{sec:ast_full_proof}. 

Finally, we will consider the two-jump paths that begin and end at the same state, i.e., paths stay on $\ket{\chi_{j_0}(t)}$ for time $t \in [0, t')$ and $\ket{\chi_{j_1}(t)}$ for time $t \in [t', T]$, where $j_0 \neq j_1$. The part of the operator that implements these paths is
\begin{align}
    \bar{V}_2 :&= \sum_{j_0=0}^{2^n-1} \sum_{j_1=0}^{2^n-1} \ketbra{j_0}{j_0} e^{-i \frac{T}{r} \left(  \frac{1}{2}\lambda_{j_0} (0) + \frac{1}{2} \lambda_{j_0} (1) + \sum_{\ell=1}^{r-1} \lambda_{j_0} \left( \frac{\ell}{r} \right) \right)}  \frac{1}{r} \Bigg( \frac{1}{2} \frac{\beta_{j_0,j_1} \left( 0 \right) \beta_{j_1,j_0}\left( 0 \right)}{iT \gamma_{j_1,j_0}\left( 0 \right)}  \nonumber \\
    & \quad + \frac{1}{2} \frac{\beta_{j_0,j_1} \left( 1 \right) \beta_{j_1,j_0}\left( 1 \right)}{iT \gamma_{j_1,j_0}\left( 1 \right)} + \sum_{\ell=1}^{r-1} \frac{\beta_{j_0,j_1} \left( \frac{\ell}{r} \right) \beta_{j_1,j_0}\left( \frac{\ell}{r} \right)}{iT \gamma_{j_1,j_0}\left( \frac{\ell}{r} \right)}\Bigg) \Biggr) \\
    &= \sum_{j_0=0}^{2^n-1} \sum_{j_1=0}^{2^n-1} \sum_{\ell=0}^{r} \ketbra{j_0}{j_0} e^{-i \frac{T}{r} \left(  \frac{1}{2}\lambda_{j_0} (0) + \frac{1}{2} \lambda_{j_0} (1) + \sum_{\ell'=1}^{r-1} \lambda_{j_0} \left( \frac{\ell'}{r} \right) \right)} \zeta_{j_0, j_1} (\ell)
\end{align}
Unlike the one-jump case, we also have a sum over $\ell$, which comes from the Riemann sum discretization of the integral
\begin{align}
    \int_0^1 \frac{\beta_{j_0, j_1} (s) \beta_{j_1, j_0} (s)}{i T \gamma_{j_1, j_0} (s)} \mathrm{d}s
\end{align}
in Eq. \ref{eq:TDOperator}. This does not affect the graph colouring, since the $d$-sparsity holds for all $\ell$, and the extra sum is easily implemented as part of the linear combination of unitaries process. However, now the error in implementing $\bar{V}_2$ using the LCU/alternating sign trick method scales with $\frac{\ell d^2}{2^B}$.

The entire sum $\bar{V} = \bar{V}_0 + \bar{V}_1 + \bar{V}_2$ can be approximately implemented using the linear combination of unitaries technique, since each $\bar{V}_i$ can be written as a linear combination of unitaries using the alternating sign trick, and $\bar{V}$ is simply the sum of these $\bar{V}_i$ terms. As with the time-independent case, we will need to use robust oblivious amplitude amplification in order to boost the amplitude of the correct final state arbitrarily close to 1.

Let $\ket{\psi_\ell} = \sum_{j=0}^{2^n-1} a_j \ket{\chi_{j} \left( \frac{\ell}{r} \right)}$ be the state at the $\ell^{\rm th}$ time step. As mentioned earlier, we will represent the state on a quantum computer as
\begin{align}
    \sum_{j=0}^{2^n-1} a_j \ket{\ell} \ket{0} \ket{j}
\end{align}
The Prepare and Select operations for the linear combination of unitaries are defined as follows:
\begin{definition}[Prepare operation]
    Let $\mathcal{H}_{\rm anc}$ be a finite dimensional space over our ancillary qubits, which we further specify as $\mathcal{H}_{\rm anc}=(\mathcal{H}_{r+1} \otimes \mathcal{H}_4 \otimes \mathcal{H}_{2^B} \otimes \mathcal{H}_d \otimes \mathcal{H}_d)$, then let the Prepare operation  $ {\rm PREP}\in \mathcal{L}(\mathcal{H}_{\rm anc})$
    be a unitary operator,
    \begin{align}
        \text{\normalfont{PREP}} &:  \ket{0}^{\otimes \log (r+1)} \ket{0}^{\otimes 2} \ket{0}^{B} \ket{0}^{\otimes \log d} \ket{0}^{\otimes \log d} \notag \\
        & \mapsto \frac{1}{d \sqrt{(r+1)2^B}} \sum_{\ell=0}^{r} \sum_{b=0}^{2^B-1} \sum_{c_1=0}^{d-1} \sum_{c_2=0}^{d-1} \ket{\ell} \frac{1}{\sqrt{\frac{1}{(r+1)d^2} + 2}} \left( \frac{1}{\sqrt{r+1}d} \ket{0} + \ket{1} + \ket{2} \right) \ket{b} \ket{c_1} \ket{c_2}
    \end{align}
\end{definition}
\begin{definition}[Select operation] \label{def:sel2}
    Let \normalfont{SEL} $\in \mathcal{L} (\mathcal{H}_{\rm anc} 
    \otimes \mathcal{H}_2 \otimes \mathcal{H}_{2^n})$ be a unitary operator,
    \begin{align}
        \text{\normalfont{SEL}} := \sum_{\ell=0}^{r} \sum_{p=0}^{2} \sum_{b=0}^{2^B-1} \sum_{c_1=0}^{d-1} \sum_{c_2=0}^{d-1} \ketbra{\ell}{\ell} \otimes \ketbra{p}{p} \otimes \ketbra{b}{b} \otimes \ketbra{c_1}{c_1} \otimes \ketbra{c_2}{c_2} \otimes \left( (\hat{X} \otimes \normalfont{\openone}) \widetilde{V}_{\ell, p, b, c_1, c_2} \right)
    \end{align}
    where $\widetilde{V}_{\ell, p, b, c_1, c_2}$ is the operator that implements the transition corresponding to the edge of colour $(c_1, c_2)$ with the proper phase for the $\bar{V}_p$ operator. More specifically,
    \begin{align}
         \widetilde{V}_{\ell, 0, b, c_1, c_2}&:= \sum_{j=0}^{2^n-1} \left(  e^{-i \frac{T}{r} \left( \frac{1}{2} \lambda_j (0) + \frac{1}{2} \lambda_j (1) + \sum_{\ell'=1}^{r-1} \lambda_j \left( \frac{\ell'}{r} \right) \right)} \ketbra{1}{0} + (-1)^b \ketbra{0}{1} \right) \otimes \ketbra{j}{j},
    \end{align}
    \begin{align}
        \widetilde{V}_{0, 1, b, c_1, c_2}&:= \sum_{\substack{j: (0,j) \in \\ V(G_\ell^{(c_1, c_2)})}} \bigg( g_1(m, j, c_1, b) e^{-i \frac{T}{r} \left( \frac{1}{2} \lambda_j (0) + \frac{1}{2} \lambda_j (1) + \sum_{\ell'=1}^{r-1} \lambda_j \left( \frac{\ell'}{r} \right) \right)} e^{i \arg (\eta_{j, f_{ind, TD}(\ell, 0, j, c_1)}(\ell))} \notag \\
        & \quad \ketbra{1}{0} \otimes \ketbra{f_{ind,TD}(\ell, 0, j, c_1)}{j}  + (-1)^b \ketbra{0}{1} \otimes \ketbra{j}{f_{ind,TD}(\ell, 0, j, c_1)} \bigg) + \sum_{\substack{j: (0,j) \notin \\ V(G_\ell^{(c_1, c_2)})}} (-1)^b \ketbra{0}{0} \otimes \ketbra{j}{j} \notag \\
        & \quad + \sum_{\substack{q:(1, q) \notin \\ V(G_\ell^{(c_1,c_2)})}} (-1)^b \ketbra{1}{1} \otimes \ketbra{q}{q},
    \end{align}
    \begin{align}
        \widetilde{V}_{r, 1, b, c_1, c_2}&:= \sum_{\substack{j: (0,j) \in \\ V(G_\ell^{(c_1, c_2)})}} \bigg( g_1(m, j, c_1, b) e^{-i \frac{T}{r} \left( \frac{1}{2} \lambda_j (0) + \frac{1}{2} \lambda_j (1) + \sum_{\ell'=1}^{r-1} \lambda_j \left( \frac{\ell'}{r} \right) \right)} e^{i \arg (\eta_{j, f_{ind, TD}(\ell, 0, j, c_1)}(\ell))} \notag \\
        & \quad  \left( e^{-i \frac{T}{r} \left( \frac{1}{2} \gamma_{f_{ind,TD}(\ell,0,j,c_1),j)}(0) +  \frac{1}{2} \gamma_{f_{ind,TD}(\ell,0,j,c_1),j)}(1) + \sum_{\ell'=0}^{r-1}  \gamma_{f_{ind,TD}(\ell,0,j,c_1),j}\left( \frac{\ell'}{r} \right)  \right)} \right) \ketbra{1}{0} \notag \\
        & \quad \otimes \ketbra{f_{ind,TD}(\ell, 0, j, c_1)}{j}  + (-1)^b \ketbra{0}{1} \otimes \ketbra{j}{f_{ind,TD}(\ell, 0, j, c_1)} \bigg) + \sum_{\substack{j: (0,j) \notin \\ V(G_\ell^{(c_1, c_2)})}} (-1)^b \ketbra{0}{0} \otimes \ketbra{j}{j} \notag \\
        & \quad + \sum_{\substack{q:(1, q) \notin \\ V(G_\ell^{(c_1,c_2)})}} (-1)^b \ketbra{1}{1} \otimes \ketbra{q}{q},
    \end{align}
        \begin{align}
        \widetilde{V}_{\ell, 1, b, c_1, c_2}&:= \sum_{j=0}^{2^n-1} (-1)^b \ketbra{1}{0} \otimes \ketbra{j}{j} + (-1)^b \ketbra{0}{1} \otimes \ketbra{j}{j} \text{ for } \ell \neq 0 \text{ or } r,
    \end{align}
    \begin{align}
        \widetilde{V}_{\ell, 2, b, c_1, c_2} &:=  \sum_{\substack{j: (0,j) \in \\ V(G_\ell^{(c_1, c_2)})}} \bigg( g_2(m, j, c_1, b) e^{-i \frac{T}{r} \left( \frac{1}{2} \lambda_j (0) + \frac{1}{2} \lambda_j (1) + \sum_{\ell'=1}^{r-1} \lambda_j \left( \frac{\ell'}{r} \right) \right)} e^{i \arg (\zeta_{j, f_{ind, TD}(\ell, 0, j, c_1)}(\ell))} \ketbra{1}{0} \notag \\
        & \quad \otimes \ketbra{f_{ind,TD}(\ell, 0, j, c_1)}{j}  + (-1)^b \ketbra{0}{1} \otimes \ketbra{j}{f_{ind,TD}(\ell, 0, j, c_1)} \bigg) + \sum_{\substack{j: (0,j) \notin \\ V(G_\ell^{(c_1, c_2)})}} (-1)^b \ketbra{0}{0} \otimes \ketbra{j}{j} \notag \\
        & \quad + \sum_{\substack{q:(1, q) \notin \\ V(G_\ell^{(c_1,c_2)})}} (-1)^b \ketbra{1}{1} \otimes \ketbra{q}{q},
    \end{align}
    where
    \begin{align}
        g_1(\ell, j, c_1, b) &= \begin{cases}
        1 & b < \left[ 2^B |\eta_{j_0, f_{ind, TD}(\ell, 0, j_0, c_1)}(\ell)| \right]_B \\
        (-1)^b & \text{otherwise}
        \end{cases} \\
        g_2(\ell, j, c_1, b) &= \begin{cases}
        1 & b < \left[ 2^B |\zeta_{j_0, f_{ind, TD}(\ell, 0, j_0, c_1)}(\ell)| \right]_B \\
        (-1)^b & \text{otherwise}
        \end{cases}
    \end{align}
\end{definition}
\begin{lemma} \label{lemma:lt_prepsel}
We can implement PREP and SEL on an input state in $\mathcal{L} (\mathcal{H}_{r+1} \otimes \mathcal{H}_4 \otimes \mathcal{H}_{2^B} \otimes \mathcal{H}_d \otimes \mathcal{H}_d \otimes \mathcal{H}_2 \otimes \mathcal{H}_{2^n} )$ using a constant number of queries to the oracles defined above and a constant number of additional two-qubit gates, using $n + B + 2$ ancilla qubits.
\end{lemma}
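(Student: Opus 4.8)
The plan is to treat PREP and SEL separately and, in both cases, to lean heavily on the short-time construction of Appendix~\ref{sec:lcu}, since the long-time Select differs from it only by branching on the extra register $p\in\{0,1,2\}$ and by substituting the time-dependent oracles for their time-independent counterparts. For PREP, the target state factorizes across the $\ell$, $p$, $b$, $c_1$, and $c_2$ registers, so it suffices to prepare each factor independently. The uniform superpositions over $b\in\{0,\dots,2^B-1\}$ and over $c_1,c_2\in\{0,\dots,d-1\}$ are produced by Hadamard gates on the $B+2\log d$ corresponding qubits. The uniform superposition over $\ell\in\{0,\dots,r\}$ is a standard uniform-state preparation over a register of dimension $r+1$, costing $O(\log r)$ gates (with a constant number of extra controlled rotations when $r+1$ is not a power of two). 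The only nontrivial factor is the fixed three-outcome state on the two-qubit $p$ register with amplitudes proportional to $\bigl(\tfrac{1}{\sqrt{r+1}\,d},1,1\bigr)$; since this is a single fixed two-qubit state it is prepared exactly by a constant-depth circuit of elementary rotations. Hence PREP uses no oracle queries.

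For SEL I would reproduce Algorithm~\ref{alg:sel} almost verbatim, controlling the whole block on the value of $p$ so that the correct $\widetilde V_{\ell,p,b,c_1,c_2}$ is applied. The $p=0$ branch is simplest: it realizes the pure eigenvalue-phase (zero-jump) operator using a single query to $O_{TEP}$ followed by the $\hat X$ on the flag qubit. The $p=1$ and $p=2$ branches each run the colour-and-alternating-sign routine of the short-time case: use $O_C$ (built from a constant number of $O_{ind,TD}$ queries exactly as in Definition~\ref{def:O_C} and Figure~\ref{fig:oraclecircuit}) to test membership in $V(G_\ell^{(c_1,c_2)})$; use $O_{ind,TD}$ to write $f_{ind,TD}(\ell,0,j,c_1)$ into an $n$-qubit ancilla; query the magnitude oracle ($O_{\eta M}$ for $p=1$, $O_{\zeta M}$ for $p=2$, selected by the $p$ control) into the $B$-qubit magnitude register; run the comparison circuit $\hat U_{\le}$ to set the alternating-sign flag; query the phase oracles ($O_{\eta P}$ or $O_{\zeta P}$, together with $O_{TEP}$, and $O_\gamma$ for the $\ell=r$ endpoint term) to deposit the phases; swap the $j$ and $f_{ind,TD}$ registers and flip the basis flag to realize the permutation; apply the $(-1)^b$ sign; and finally uncompute every ancilla by running the compute steps in reverse. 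Correctness of each branch follows step-by-step exactly as in the verification table of Appendix~\ref{sec:lcu}, and the approximation from storing magnitudes in $B$ bits is controlled by the same estimate as in Claim~\ref{clm:ast}.

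Counting then yields the stated resources. Each oracle ($O_C$, $O_{ind,TD}$, the magnitude and phase oracles, $O_{TEP}$, $O_\gamma$) is invoked a constant number of times per application of SEL, so the oracle-query count is $O(1)$; the remaining overhead consists of the comparison circuit $\hat U_{\le}$, the register swap, and the controls on $p$ and $\ell$, all of which are a constant number of multi-qubit gadgets (expanding into $O(B+\log d)$ elementary gates). The scratch registers are exactly those of the short-time table: $n$ qubits for $f_{ind,TD}$, $B$ qubits for the stored magnitude, one qubit for colour membership, and one qubit for the comparison output, for a total of $n+B+2$ reusable ancillas.

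The main obstacle is the case analysis hidden inside the $p=1$ branch. The one-jump boundary amplitude $\eta_{j_0,j_1}(\ell)$ is nonzero only at $\ell\in\{0,r\}$, the $\ell=r$ term additionally carries the accumulated-gap phase supplied by $O_\gamma$, and for $0<\ell<r$ the operator must act as the signed identity. Implementing $\widetilde V_{\ell,1,b,c_1,c_2}$ therefore requires the circuit to be controlled simultaneously on $\ell$ (to separate the interior steps from the two endpoints) and on $p$, invoking $O_\gamma$ only in the $\ell=r$ sub-case, while keeping the overall operator a genuine signed permutation so that every ancilla returns to $\ket{0}$ and the block-encoding is exactly unitary. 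Verifying this bookkeeping---that the dummy self-transitions on vertices outside the current colour subgraph combine with the endpoint-dependent phase insertions into a unitary whose $\ketbra{0}{0}$ block reproduces $\bar V$---is the delicate part of the argument, whereas the underlying gate constructions are all inherited unchanged from the short-time case.
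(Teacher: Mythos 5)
Your proposal follows essentially the same route as the paper's own construction: PREP is realized by Hadamard gates on the $\ell$, $b$, $c_1$, $c_2$ registers plus a single fixed-size state preparation on the two-qubit $p$ register, and SEL is the short-time Select algorithm of Appendix~\ref{sec:lcu} run with additional controls on $p$ (and on $\ell$ to separate the $\ell\in\{0,r\}$ endpoint terms, where $O_\gamma$ and the $\eta$ oracles act, from the interior signed-identity steps), with the same $n+B+2$ ancilla breakdown ($n$ for $f_{ind,TD}$, $B$ for the stored magnitude, one colour-membership qubit, one comparison qubit) and the same uncomputation discipline. The bookkeeping you flag as delicate is exactly what the paper's case-by-case verification tables for $p=0,1,2$ carry out, so your argument is correct and matches the paper's proof.
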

\begin{proof}
    See below for full algorithm to implement \normalfont{PREP} and \normalfont{SEL}.
\end{proof}
The input state is shown in the table below:
\begin{center}
    \begin{tabular}{ |c|c|c|c| } 
        \hline
        Register \# & Initial State & Size (Qubits) & Description \\ 
        \hline
        0 & $\ket{0}$ & $\log (r+1)$ & stores time step number $\ell$ \\
        1 & $\ket{0}$ & 1 & stores whether state corresponds to the $\hat{H} \left( \frac{\ell-1}{r} \right)$ or the $\hat{H} \left( \frac{\ell}{r} \right)$ basis\\ 
        2 & $\ket{\psi_m}$ & $n$ & stores basis state indices for $\ket{\psi}$ \\
        3 & $\ket{0}$ & 2 & stores number of jumps \\
        4 & $\ket{0}$ & $B$ & stores $b$ index for alternating sign trick\\ 
        5 & $\ket{0}$ & $\log d$ & stores $c_1$ \\ 
        6 & $\ket{0}$ & $\log d$ & stores $c_2$ \\ 
        7 & $\ket{0}$ & 1 & stores whether vertex is in graph of colour $(c_1, c_2)$ \\
        8 & $\ket{0}$ & $n$ & stores $f_{ind,TD}(m, 0, j, c_1)$ \\ 
        9 & $\ket{0}$ & $B$ & stores value of transition amplitude magnitude in $B$ bits \\
        10 & $\ket{0}$ & 1 & stores whether $b \geq$ transition amplitude magnitude\\
        \hline
    \end{tabular}
\end{center}
This is very similar to the table for the time-independent input state shown in Appendix \ref{sec:lcu}, except that there is an extra register (register \#3), which stores the number of jumps in the operator we are implementing (0, 1 or 2). The algorithm in \ref{alg:prep2} describes how to implement the PREP$(r_0, r_3, r_4, r_5, r_6)$, and the algorithm in \ref{alg:sel2} describes how to implement SEL$(r_0, r_3, r_4, r_5, r_6, r_1, r_2)$.
\begin{algorithm}[H]
\begin{algorithmic}[1] \label{alg:prep2}
\State Use a single two-qubit gate to prepare $\frac{1}{\sqrt{\frac{1}{(r+1)d^2}+2}}\left( \frac{1}{\sqrt{(r+1)}d} \ket{0} + \ket{1} + \ket{2} \right)$ on $r_3$.
\State Apply Hadamard gates to $r_0, r_4, r_5$ and $r_6$
\end{algorithmic}
\caption{Prepare}
\end{algorithm}
The Select (SEL) operation is almost exactly the same as the time-independent one shown in \ref{alg:sel}, but with the operations also controlled on $r_3$ to implement the sum of the three different $\bar{U}_i$ operators.
\begin{algorithm}[H]
\caption{Select} \label{alg:sel2}
\begin{algorithmic}[1]
\item Apply $O_C(r_0, r_1, r_2, r_5, r_6, r_7)$, controlled on $r_3 = 2$ \Comment{checks whether state is in graph of colour $(c_1, c_2)$ for two-jump paths}
\item Apply $O_C(r_0, r_1, r_2, r_5, r_6, r_7)$, controlled on $r_3 = 1$ and $r_0 = 0$ or $r$ \Comment{checks whether state is in graph of colour $(c_1, c_2)$ for one-jump paths}
\item Apply $O_{ind,TD} (r_0, r_1, r_2, r_5, r_8)$, controlled on $r_7$ \Comment{computes index of state that $\ket{j}$ gets mapped to if it's in the graph of colour $(c_1, c_2)$}
\item Apply $O_{\eta}(r_0, r_2, r_8, r_9)$, controlled on $r_1 =0$, $r_3=1$ and $r_7=1$ \Comment{computes inner product magnitude for $\bar{V}_1$ if edge is in graph}
\item Apply $O_{\zeta}(r_0, r_2, r_8, r_9)$, controlled on $r_1=0$, $r_3=2$ and $r_7=1$ \Comment{computes inner product magnitude for $\bar{V}_2$ if edge is in graph}
\item Apply $U_{\leq} (r_4, r_9, r_{10})$, controlled on $r_3 = 1$ or 2 \Comment{marks states where $b \geq 2^B \times $ amplitude magnitude}
\item Apply $O_{TEP} (r_2)$, controlled on $r_3=0$ or $r_7=1$ \Comment{applies eigenvalue phase for all jumps}
\item Apply $O_{\eta P}(r_0, r_2, r_8)$ controlled on $r_1=0$, $r_3=1$, $r_7=1$ \Comment{computes $\eta$ phase for $\bar{V}_1$} 
\item Apply $O_{\gamma}(r_2, r_8)$, controlled on $r_0=\ell$, $r_1 = 0$, $r_3=1$ and $r_7=1$ \Comment{computes $\gamma$ phase for $\bar{V}_1$}
\item Apply $O_{\zeta P}(r_0, r_2, r_8)$ and $O_P(r_0, r_2)$, controlled on $r_1=0$, $r_3=2$, $r_7=1$ \Comment{computes phases for $\bar{V}_2$} 
\item Swap $r_2$ and $r_8$ and apply $X(r_1)$, controlled on $r_3=1$ and $r_7=1$ \Comment{implements $\bar{V}_1$ transition if edge is in graph of colour $(c_1, c_2)$}
\item Apply $(-1)^b$ to $r_0$, controlled on $r_{10}$ \Comment{cancels out states with 0 inner product, applies amplitudes}
\item Apply $\hat{U}_{\leq} (r_4, r_9, r_{10})$, controlled on $r_3=1$ or 2 \Comment{Sets $r_{10}$ back to 0}
\item Apply $O_{\eta}(r_0, r_8, r_2, r_9)$ controlled on $r_1=1$, $r_3=1$ and $r_7=1$ \Comment{sets $r_9$ back to 0 for 1-jump paths}
\item Apply $O_{\zeta}(r_0, r_2, r_8, r_9)$ controlled on $r_1=0$, $r_3=2$ and $r_7=1$ \Comment{sets $r_9$ back to 0 for 2-jump paths}
\item Apply $O_{ind,TD}(r_0, r_1, r_2, r_6, r_8)$ controlled on $r_3=1$, $r_7=1$ \Comment{sets $r_8$ back to 0 for $\bar{V}_1$}
\item Apply $O_{ind,TD}(r_0, r_1, r_2, r_5, r_8)$ controlled on $r_3=2$, $r_7=1$ \Comment{sets $r_8$ back to 0 for $\bar{V}_2$}
\item Apply $O_C(r_0, r_1, r_2, r_5, r_6, r_7)$ controlled on $r_3=1$ or 2  \Comment{Sets $r_7$ back to 0}
\item Apply $X(r_1)$ controlled on $r_3=1$\Comment{Sets $r_1$ back to 0}
\end{algorithmic}
\end{algorithm}
To see that the algorithm above implements the Select operator described in Def \ref{def:sel2}, consider how the algorithm above would act on an input state $\ket{\ell}_0 \ket{0}_1 \ket{j}_2 \ket{q}_3 \ket{b}_4 \ket{c_1}_5 \ket{c_2}_6 \ket{0}_7 \ket{0}_8 \ket{0}_9 \ket{0}_{10}$ for $q = 0, 1$ and 2:

$q=0$:
\begin{align}
    \text{Input: } & \ket{\ell}_0 \ket{0}_1 \ket{j}_2 \ket{0}_3 \ket{b}_4 \ket{c_1}_5 \ket{c_2}_6 \ket{0}_7 \ket{0}_8 \ket{0}_9 \ket{0}_{10}\\
    \xrightarrow{\text{Steps 1-7}} & e^{-i \frac{T}{r} \left( \frac{1}{2} \lambda_j(0) + \frac{1}{2} \lambda_j (1) + \sum_{\ell'=0}^{r-1} \lambda_j \left( \frac{\ell'}{r} \right) \right)} \ket{\ell}_0 \ket{0}_1 \ket{j}_2 \ket{0}_3 \ket{b}_4 \ket{c_1}_5 \ket{c_2}_6 \ket{0}_7 \ket{0}_8 \ket{0}_9 \ket{0}_{10} 
\end{align}

$q=1$:
\begin{align}
    & \text{Input: } \ket{\ell}_0 \ket{0}_1 \ket{j}_2 \ket{1}_3 \ket{b}_4 \ket{c_1}_5 \ket{c_2}_6 \ket{0}_7 \ket{0}_8 \ket{0}_9 \ket{0}_{10}\\
    \xrightarrow{\text{1-2}} &  \begin{cases} \ket{\ell}_0 \ket{0}_1 \ket{j}_2 \ket{1}_3 \ket{b}_4 \ket{c_1}_5 \ket{c_2}_6 \ket{1}_7 \ket{0}_8 \ket{0}_9 \ket{0}_{10} & \ell = 0 \text{ or } r, j \in G_\ell^{(c_1, c_2)}\\
     \ket{\ell}_0 \ket{0}_1 \ket{j}_2 \ket{1}_3 \ket{b}_4 \ket{c_1}_5 \ket{c_2}_6 \ket{0}_7 \ket{0}_8 \ket{0}_9 \ket{0}_{10} & \text{otherwise}
    \end{cases} \\
    \xrightarrow{\text{3}} &  \begin{cases} \ket{\ell}_0 \ket{0}_1 \ket{j}_2 \ket{1}_3 \ket{b}_4 \ket{c_1}_5 \ket{c_2}_6 \ket{1}_7 \ket{0}_8 \ket{0}_9 \ket{0}_{10} & \ell = 0 \text{ or } r, j \in G_\ell^{(c_1, c_2)}\\
     \ket{\ell}_0 \ket{0}_1 \ket{j}_2 \ket{1}_3 \ket{b}_4 \ket{c_1}_5 \ket{c_2}_6 \ket{0}_7 \ket{0}_8 \ket{0}_9 \ket{0}_{10} & \text{otherwise}
    \end{cases} \\
    \xrightarrow{\text{4}} &  \begin{cases} \ket{\ell}_0 \ket{0}_1 \ket{j}_2 \ket{1}_3 \ket{b}_4 \ket{c_1}_5 \ket{c_2}_6 \ket{1}_7 \ket{f_{ind,TD} (\ell,0,j,c_1)}_8 \ket{[2^B |\eta_{j,f_{ind,TD}(\ell,0,j,c_1)} |]_B}_9 \ket{0}_{10} & \ell = 0 \text{ or } r, j \in G_\ell^{(c_1, c_2)}\\
     \ket{\ell}_0 \ket{0}_1 \ket{j}_2 \ket{1}_3 \ket{b}_4 \ket{c_1}_5 \ket{c_2}_6 \ket{0}_7 \ket{0}_8 \ket{0}_9 \ket{0}_{10} & \text{otherwise}
    \end{cases} \\
    \xrightarrow{\text{5-6}} &  \begin{cases} \ket{\ell}_0 \ket{0}_1 \ket{j}_2 \ket{1}_3 \ket{b}_4 \ket{c_1}_5 \ket{c_2}_6 \ket{1}_7 \ket{f_{ind,TD} (\ell,0,j,c_1)}_8 \\ 
    \ket{[2^B |\eta_{j,f_{ind,TD}(\ell,0,j,c_1)} |]_B}_9 \ket{1}_{10} & \ell = 0 \text{ or } r, j \in G_\ell^{(c_1, c_2)}, b \geq [2^B |\eta_{j,f_{ind,TD}(\ell,0,j,c_1)} |]_B\\
    \ket{\ell}_0 \ket{0}_1 \ket{j}_2 \ket{1}_3 \ket{b}_4 \ket{c_1}_5 \ket{c_2}_6 \ket{1}_7 \ket{f_{ind,TD} (\ell,0,j,c_1)}_8 \\
    \ket{[2^B |\eta_{j,f_{ind,TD}(\ell,0,j,c_1)} |]_B}_9 \ket{0}_{10} & \ell = 0 \text{ or } r, j \in G_\ell^{(c_1, c_2)}, b < [2^B |\eta_{j,f_{ind,TD}(\ell,0,j,c_1)} |]_B\\
     \ket{\ell}_0 \ket{0}_1 \ket{j}_2 \ket{1}_3 \ket{b}_4 \ket{c_1}_5 \ket{c_2}_6 \ket{0}_7 \ket{0}_8 \ket{0}_9 \ket{1}_{10} & \text{otherwise}
    \end{cases} \\
    \xrightarrow{\text{7}} &  \begin{cases} e^{-i \frac{T}{r} \left( \frac{1}{2} \lambda_j(0) + \frac{1}{2} \lambda_j (1) + \sum_{\ell'=0}^{r-1} \lambda_j \left( \frac{\ell'}{r} \right) \right)} \ket{\ell}_0 \ket{0}_1 \ket{j}_2 \ket{2}_3 \ket{b}_4 \ket{c_1}_5 \ket{c_2}_6 \ket{1}_7 \\
    \ket{f_{ind,TD} (\ell,0,j,c_1)}_8 \ket{[2^B |\eta_{j,f_{ind,TD}(\ell,0,j,c_1)} |]_B}_9 \ket{1}_{10} & \ell = 0 \text{ or } r, j \in G_\ell^{(c_1, c_2)},\\
    & b \geq [2^B |\eta_{j,f_{ind,TD}(\ell,0,j,c_1)} |]_B\\
    e^{-i \frac{T}{r} \left( \frac{1}{2} \lambda_j(0) + \frac{1}{2} \lambda_j (1) + \sum_{\ell'=0}^{r-1} \lambda_j \left( \frac{\ell'}{r} \right) \right)} \ket{\ell}_0 \ket{0}_1 \ket{j}_2 \ket{2}_3 \ket{b}_4 \ket{c_1}_5 \ket{c_2}_6 \ket{1}_7 \\
    \ket{f_{ind,TD} (\ell,0,j,c_1)}_8 \ket{[2^B |\zeta_{j,f_{ind,TD}(\ell,0,j,c_1)} |]_B}_9 \ket{0}_{10} & \ell = 0 \text{ or } r, j \in G_\ell^{(c_1, c_2)},\\
    & b < [2^B |\zeta_{j,f_{ind,TD}(\ell,0,j,c_1)} |]_B\\
     \ket{\ell}_0 \ket{0}_1 \ket{j}_2 \ket{2}_3 \ket{b}_4 \ket{c_1}_5 \ket{c_2}_6 \ket{0}_7 \ket{0}_8 \ket{0}_9 \ket{1}_{10} & \text{otherwise}
    \end{cases} \\
    \xrightarrow{8} & \begin{cases} e^{-i \frac{T}{r} \left( \frac{1}{2} \lambda_j(0) + \frac{1}{2} \lambda_j (1) + \sum_{\ell'=0}^{r-1} \lambda_j \left( \frac{\ell'}{r} \right) \right)} e^{i \arg (\eta_{j, f_{ind,TD}(\ell,0,j,c_1)})} \ket{\ell}_0 \ket{0}_1 \ket{j}_2 \ket{1}_3 \ket{b}_4 \\
     \ket{c_1}_5 \ket{c_2}_6 \ket{1}_7 \ket{f_{ind,TD} (\ell,0,j,c_1)}_8 \ket{[2^B |\eta_{j,f_{ind,TD}(\ell,0,j,c_1)} |]_B}_9 \ket{1}_{10} & \ell = 0 \text{ or } r, j \in G_\ell^{(c_1, c_2)},\\
    & b \geq [2^B |\eta_{j,f_{ind,TD}(\ell,0,j,c_1)} |]_B\\
    e^{-i \frac{T}{r} \left( \frac{1}{2} \lambda_j(0) + \frac{1}{2} \lambda_j (1) + \sum_{\ell'=0}^{r-1} \lambda_j \left( \frac{\ell'}{r} \right) \right)} e^{i \arg (\eta_{j, f_{ind,TD}(\ell,r,j,c_1)})} \ket{\ell}_0 \ket{0}_1 \ket{j}_2 \ket{1}_3 \ket{b}_4 \\
    \ket{c_1}_5 \ket{c_2}_6 \ket{1}_7 \ket{f_{ind,TD} (\ell,0,j,c_1)}_8 \ket{[2^B |\eta_{j,f_{ind,TD}(\ell,0,j,c_1)} |]_B}_9 \ket{0}_{10} & \ell = 0 \text{ or } r, j \in G_\ell^{(c_1, c_2)},\\
    & b < [2^B |\eta_{j,f_{ind,TD}(\ell,0,j,c_1)} |]_B\\
     \ket{\ell}_0 \ket{0}_1 \ket{j}_2 \ket{1}_3 \ket{b}_4 \ket{c_1}_5 \ket{c_2}_6 \ket{0}_7 \ket{0}_8 \ket{0}_9 \ket{1}_{10} & \text{otherwise}
    \end{cases} \\
    \xrightarrow{9} & \begin{cases} e^{-i \frac{T}{r} \left( \frac{1}{2} \lambda_j(0) + \frac{1}{2} \lambda_j (1) + \sum_{\ell'=0}^{r-1} \lambda_j \left( \frac{\ell'}{r} \right) \right)} e^{i \arg (\eta_{j, f_{ind,TD}(\ell,0,j,c_1)})}  \\
     e^{-i \frac{\ell T}{r^2} \left( \frac{1}{2} \gamma_{j, f_{ind,TD} (r,0,j,c_1)} (0) + \frac{1}{2} \gamma_{j, f_{ind,TD} (r,0,j,c_1)} (1) + \sum_{\ell'=0}^{r-1} \gamma_{j, f_{ind,TD} (r,0,j,c_1)} \left( \frac{\ell'}{r} \right) \right)} \\
     \ket{\ell}_0 \ket{0}_1 \ket{j}_2 \ket{1}_3 \ket{b}_4 \ket{c_1}_5 \ket{c_2}_6 \ket{1}_7 \ket{f_{ind,TD} (\ell,0,j,c_1)}_8 \\
     \ket{[2^B |\eta_{j,f_{ind,TD}(\ell,0,j,c_1)} |]_B}_9 \ket{1}_{10} & \ell = 0 \text{ or } r, j \in G_\ell^{(c_1, c_2)},\\
    & b \geq [2^B |\eta_{j,f_{ind,TD}(\ell,0,j,c_1)} |]_B\\
    e^{-i \frac{T}{r} \left( \frac{1}{2} \lambda_j(0) + \frac{1}{2} \lambda_j (1) + \sum_{\ell'=0}^{r-1} \lambda_j \left( \frac{\ell'}{r} \right) \right)} e^{i \arg (\eta_{j, f_{ind,TD}(\ell,0,j,c_1)})} \\
    e^{-i \frac{\ell T}{r^2} \left( \frac{1}{2} \gamma_{j, f_{ind,TD} (r,0,j,c_1)} (0) + \frac{1}{2} \gamma_{j, f_{ind,TD} (r,0,j,c_1)} (1) + \sum_{\ell'=0}^{r-1} \gamma_{j, f_{ind,TD} (r,0,j,c_1)} \left( \frac{\ell'}{r} \right) \right)} \\ 
    \ket{\ell}_0 \ket{0}_1 \ket{j}_2 \ket{1}_3 \ket{b}_4 \ket{c_1}_5 \ket{c_2}_6 \ket{1}_7 \ket{f_{ind,TD} (\ell,0,j,c_1)}_8 \\
    \ket{[2^B |\eta_{j,f_{ind,TD}(\ell,0,j,c_1)} |]_B}_9 \ket{0}_{10} & \ell = 0 \text{ or } r, j \in G_\ell^{(c_1, c_2)},\\
    & b < [2^B |\eta_{j,f_{ind,TD}(\ell,0,j,c_1)} |]_B\\
     \ket{\ell}_0 \ket{0}_1 \ket{j}_2 \ket{1}_3 \ket{b}_4 \ket{c_1}_5 \ket{c_2}_6 \ket{0}_7 \ket{0}_8 \ket{0}_9 \ket{1}_{10} & \text{otherwise}
    \end{cases} \\
    \xrightarrow{10-11} & \begin{cases} e^{-i \frac{T}{r} \left( \frac{1}{2} \lambda_j(0) + \frac{1}{2} \lambda_j (1) + \sum_{\ell'=0}^{r-1} \lambda_j \left( \frac{\ell'}{r} \right) \right)} e^{i \arg (\eta_{j, f_{ind,TD}(\ell,0,j,c_1)})} \\
    e^{-i \frac{\ell T}{r^2} \left( \frac{1}{2} \gamma_{j, f_{ind,TD} (r,0,j,c_1)} (0) + \frac{1}{2} \gamma_{j, f_{ind,TD} (r,0,j,c_1)} (1) + \sum_{\ell'=0}^{r-1} \gamma_{j, f_{ind,TD} (r,0,j,c_1)} \left( \frac{\ell'}{r} \right) \right)} \\ \ket{\ell}_0 \ket{1}_1 \ket{f_{ind,TD} (\ell,0,j,c_1)}_2 \ket{1}_3 \ket{b}_4 \ket{c_1}_5 \ket{c_2}_6 \ket{1}_7 \ket{j}_8 \\
    \ket{[2^B |\eta_{j,f_{ind,TD}(\ell,0,j,c_1)} |]_B}_9 \ket{1}_{10} & \ell = 0 \text{ or } r, j \in G_\ell^{(c_1, c_2)},\\
    & b \geq [2^B |\eta_{j,f_{ind,TD}(\ell,0,j,c_1)} |]_B\\
    e^{-i \frac{T}{r} \left( \frac{1}{2} \lambda_j(0) + \frac{1}{2} \lambda_j (1) + \sum_{\ell'=0}^{r-1} \lambda_j \left( \frac{\ell'}{r} \right) \right)} e^{i \arg (\eta_{j, f_{ind,TD}(\ell,0,j,c_1)})} \\
    e^{-i \frac{\ell T}{r^2} \left( \frac{1}{2} \gamma_{j, f_{ind,TD} (r,0,j,c_1)} (0) + \frac{1}{2} \gamma_{j, f_{ind,TD} (r,0,j,c_1)} (1) + \sum_{\ell'=0}^{r-1} \gamma_{j, f_{ind,TD} (r,0,j,c_1)} \left( \frac{\ell'}{r} \right) \right)} \\    \ket{\ell}_0 \ket{1}_1 \ket{f_{ind,TD} (\ell,0,j,c_1)}_2 \ket{1}_3 \ket{b}_4 \ket{c_1}_5 \ket{c_2}_6 \ket{1}_7 \ket{j}_8 \\
    \ket{[2^B |\eta_{j,f_{ind,TD}(\ell,0,j,c_1)} |]_B}_9 \ket{0}_{10} & \ell = 0 \text{ or } r, j \in G_\ell^{(c_1, c_2)},\\
    & b < [2^B |\eta_{j,f_{ind,TD}(\ell,0,j,c_1)} |]_B\\
     \ket{\ell}_0 \ket{0}_1 \ket{j}_2 \ket{1}_3 \ket{b}_4 \ket{c_1}_5 \ket{c_2}_6 \ket{0}_7 \ket{0}_8 \ket{0}_9 \ket{1}_{10} & \text{otherwise}
    \end{cases} \\
    \xrightarrow{12} & \begin{cases} (-1)^b e^{-i \frac{T}{r} \left( \frac{1}{2} \lambda_j(0) + \frac{1}{2} \lambda_j (1) + \sum_{\ell'=0}^{r-1} \lambda_j \left( \frac{\ell'}{r} \right) \right)} e^{i \arg (\eta_{j, f_{ind,TD}(\ell,0,j,c_1)})} \\
    e^{-i \frac{\ell T}{r^2} \left( \frac{1}{2} \gamma_{j, f_{ind,TD} (r,0,j,c_1)} (0) + \frac{1}{2} \gamma_{j, f_{ind,TD} (r,0,j,c_1)} (1) + \sum_{\ell'=0}^{r-1} \gamma_{j, f_{ind,TD} (r,0,j,c_1)} \left( \frac{\ell'}{r} \right) \right)} \\ \ket{\ell}_0 \ket{1}_1 \ket{f_{ind,TD} (\ell,0,j,c_1)}_2 \ket{1}_3 \ket{b}_4 \ket{c_1}_5 \ket{c_2}_6 \ket{1}_7 \ket{j}_8 \\
    \ket{[2^B |\eta_{j,f_{ind,TD}(\ell,0,j,c_1)} |]_B}_9 \ket{1}_{10} & \ell = 0 \text{ or } r, j \in G_\ell^{(c_1, c_2)},\\
    & b \geq [2^B |\eta_{j,f_{ind,TD}(\ell,0,j,c_1)} |]_B\\
    e^{-i \frac{T}{r} \left( \frac{1}{2} \lambda_j(0) + \frac{1}{2} \lambda_j (1) + \sum_{\ell'=0}^{r-1} \lambda_j \left( \frac{\ell'}{r} \right) \right)} e^{i \arg (\eta_{j, f_{ind,TD}(\ell,0,j,c_1)})} \\
    e^{-i \frac{\ell T}{r^2} \left( \frac{1}{2} \gamma_{j, f_{ind,TD} (r,0,j,c_1)} (0) + \frac{1}{2} \gamma_{j, f_{ind,TD} (r,0,j,c_1)} (1) + \sum_{\ell'=0}^{r-1} \gamma_{j, f_{ind,TD} (r,0,j,c_1)} \left( \frac{\ell'}{r} \right) \right)} \\    \ket{\ell}_0 \ket{1}_1 \ket{f_{ind,TD} (\ell,0,j,c_1)}_2 \ket{1}_3 \ket{b}_4 \ket{c_1}_5 \ket{c_2}_6 \ket{1}_7 \ket{j}_8 \\
    \ket{[2^B |\eta_{j,f_{ind,TD}(\ell,0,j,c_1)} |]_B}_9 \ket{0}_{10} & \ell = 0 \text{ or } r, j \in G_\ell^{(c_1, c_2)},\\
    & b < [2^B |\eta_{j,f_{ind,TD}(\ell,0,j,c_1)} |]_B\\
    (-1)^b \ket{\ell}_0 \ket{0}_1 \ket{j}_2 \ket{1}_3 \ket{b}_4 \ket{c_1}_5 \ket{c_2}_6 \ket{0}_7 \ket{0}_8 \ket{0}_9 \ket{1}_{10} & \text{otherwise}
    \end{cases} \\
    \xrightarrow{13} & \begin{cases} g_1(\ell,j,c_1,b) e^{-i \frac{T}{r} \left( \frac{1}{2} \lambda_j(0) + \frac{1}{2} \lambda_j (1) + \sum_{\ell'=0}^{r-1} \lambda_j \left( \frac{\ell'}{r} \right) \right)} e^{i \arg (\eta_{j, f_{ind,TD}(\ell,0,j,c_1)})} \\
    e^{-i \frac{\ell T}{r^2} \left( \frac{1}{2} \gamma_{j, f_{ind,TD} (r,0,j,c_1)} (0) + \frac{1}{2} \gamma_{j, f_{ind,TD} (r,0,j,c_1)} (1) + \sum_{\ell'=0}^{r-1} \gamma_{j, f_{ind,TD} (r,0,j,c_1)} \left( \frac{\ell'}{r} \right) \right)} \\\ket{\ell}_0 \ket{1}_1 \ket{f_{ind,TD} (\ell,0,j,c_1)}_2 \ket{1}_3 \ket{b}_4 \ket{c_1}_5 \ket{c_2}_6 \ket{1}_7 \ket{j}_8 \\
    \ket{[2^B |\eta_{j,f_{ind,TD}(\ell,0,j,c_1)} |]_B}_9 \ket{0}_{10} & \ell = 0 \text{ or } r, j \in G_\ell^{(c_1, c_2)}\\
    (-1)^b \ket{\ell}_0 \ket{0}_1 \ket{j}_2 \ket{1}_3 \ket{b}_4 \ket{c_1}_5 \ket{c_2}_6 \ket{0}_7 \ket{0}_8 \ket{0}_9 \ket{0}_{10} & \text{otherwise}
    \end{cases} \\
    \xrightarrow{14} & \begin{cases} g_1(\ell,j,c_1,b) e^{-i \frac{T}{r} \left( \frac{1}{2} \lambda_j(0) + \frac{1}{2} \lambda_j (1) + \sum_{\ell'=0}^{r-1} \lambda_j \left( \frac{\ell'}{r} \right) \right)} e^{i \arg (\eta_{j, f_{ind,TD}(\ell,0,j,c_1)})} \\
    e^{-i \frac{\ell T}{r^2} \left( \frac{1}{2} \gamma_{j, f_{ind,TD} (r,0,j,c_1)} (0) + \frac{1}{2} \gamma_{j, f_{ind,TD} (r,0,j,c_1)} (1) + \sum_{\ell'=0}^{r-1} \gamma_{j, f_{ind,TD} (r,0,j,c_1)} \left( \frac{\ell'}{r} \right) \right)} \\
    \ket{\ell}_0 \ket{1}_1 \ket{f_{ind,TD} (\ell,0,j,c_1)}_2 \ket{1}_3 \ket{b}_4 \ket{c_1}_5 \ket{c_2}_6 \ket{1}_7 \ket{j}_8 \ket{0}_9 \ket{0}_{10} & \ell = 0 \text{ or } r, j \in G_\ell^{(c_1, c_2)}\\
    (-1)^b \ket{\ell}_0 \ket{0}_1 \ket{j}_2 \ket{1}_3 \ket{b}_4 \ket{c_1}_5 \ket{c_2}_6 \ket{0}_7 \ket{0}_8 \ket{0}_9 \ket{0}_{10} & \text{otherwise}
    \end{cases} \\
    \xrightarrow{15-16} & \begin{cases} g_1(\ell,j,c_1,b) e^{-i \frac{T}{r} \left( \frac{1}{2} \lambda_j(0) + \frac{1}{2} \lambda_j (1) + \sum_{\ell'=0}^{r-1} \lambda_j \left( \frac{\ell'}{r} \right) \right)} e^{i \arg (\eta_{j, f_{ind,TD}(\ell,0,j,c_1)})} \\
    e^{-i \frac{\ell T}{r^2} \left( \frac{1}{2} \gamma_{j, f_{ind,TD} (r,0,j,c_1)} (0) + \frac{1}{2} \gamma_{j, f_{ind,TD} (r,0,j,c_1)} (1) + \sum_{\ell'=0}^{r-1} \gamma_{j, f_{ind,TD} (r,0,j,c_1)} \left( \frac{\ell'}{r} \right) \right)} \\
    \ket{\ell}_0 \ket{1}_1 \ket{f_{ind,TD} (\ell,0,j,c_1)}_2 \ket{1}_3 \ket{b}_4 \ket{c_1}_5 \ket{c_2}_6 \ket{1}_7 \ket{0}_8 \ket{0}_9 \ket{0}_{10} & \ell = 0 \text{ or } r, j \in G_\ell^{(c_1, c_2)}\\
    (-1)^b \ket{\ell}_0 \ket{0}_1 \ket{j}_2 \ket{1}_3 \ket{b}_4 \ket{c_1}_5 \ket{c_2}_6 \ket{0}_7 \ket{0}_8 \ket{0}_9 \ket{0}_{10} & \text{otherwise}
    \end{cases} \\
    \xrightarrow{17-18} & \begin{cases} g_1(\ell,j,c_1,b) e^{-i \frac{T}{r} \left( \frac{1}{2} \lambda_j(0) + \frac{1}{2} \lambda_j (1) + \sum_{\ell'=0}^{r-1} \lambda_j \left( \frac{\ell'}{r} \right) \right)} e^{i \arg (\eta_{j, f_{ind,TD}(\ell,0,j,c_1)})} \\
    e^{-i \frac{\ell T}{r^2} \left( \frac{1}{2} \gamma_{j, f_{ind,TD} (r,0,j,c_1)} (0) + \frac{1}{2} \gamma_{j, f_{ind,TD} (r,0,j,c_1)} (1) + \sum_{\ell'=0}^{r-1} \gamma_{j, f_{ind,TD} (r,0,j,c_1)} \left( \frac{\ell'}{r} \right) \right)} \\
    \ket{\ell}_0 \ket{1}_1 \ket{f_{ind,TD} (\ell,0,j,c_1)}_2 \ket{1}_3 \ket{b}_4 \ket{c_1}_5 \ket{c_2}_6 \ket{0}_7 \ket{0}_8 \ket{0}_9 \ket{0}_{10} & \ell = 0 \text{ or } r, j \in G_\ell^{(c_1, c_2)}\\
    (-1)^b \ket{\ell}_0 \ket{0}_1 \ket{j}_2 \ket{1}_3 \ket{b}_4 \ket{c_1}_5 \ket{c_2}_6 \ket{0}_7 \ket{0}_8 \ket{0}_9 \ket{0}_{10} & \text{otherwise}
    \end{cases} \\
    \xrightarrow{19} & \begin{cases} g_1(\ell,j,c_1,b) e^{-i \frac{T}{r} \left( \frac{1}{2} \lambda_j(0) + \frac{1}{2} \lambda_j (1) + \sum_{\ell'=0}^{r-1} \lambda_j \left( \frac{\ell'}{r} \right) \right)} e^{i \arg (\eta_{j, f_{ind,TD}(\ell,0,j,c_1)})} \\
    e^{-i \frac{\ell T}{r^2} \left( \frac{1}{2} \gamma_{j, f_{ind,TD} (r,0,j,c_1)} (0) + \frac{1}{2} \gamma_{j, f_{ind,TD} (r,0,j,c_1)} (1) + \sum_{\ell'=0}^{r-1} \gamma_{j, f_{ind,TD} (r,0,j,c_1)} \left( \frac{\ell'}{r} \right) \right)} \\
    \ket{\ell}_0 \ket{0}_1 \ket{f_{ind,TD} (\ell,0,j,c_1)}_2 \ket{1}_3 \ket{b}_4 \ket{c_1}_5 \ket{c_2}_6 \ket{0}_7 \ket{0}_8 \ket{0}_9 \ket{0}_{10} & \ell = 0 \text{ or } r, j \in G_\ell^{(c_1, c_2)}\\
    (-1)^b \ket{\ell}_0 \ket{1}_1 \ket{j}_2 \ket{1}_3 \ket{b}_4 \ket{c_1}_5 \ket{c_2}_6 \ket{0}_7 \ket{0}_8 \ket{0}_9 \ket{0}_{10} & \text{otherwise}
    \end{cases}
\end{align}

$q=2$:
\begin{align}
    & \text{Input: } \ket{\ell}_0 \ket{0}_1 \ket{j}_2 \ket{2}_3 \ket{b}_4 \ket{c_1}_5 \ket{c_2}_6 \ket{0}_7 \ket{0}_8 \ket{0}_9 \ket{0}_{10}\\
    \xrightarrow{1} &  \begin{cases} \ket{\ell}_0 \ket{0}_1 \ket{j}_2 \ket{2}_3 \ket{b}_4 \ket{c_1}_5 \ket{c_2}_6 \ket{1}_7 \ket{0}_8 \ket{0}_9 \ket{0}_{10} & j \in G_\ell^{(c_1, c_2)}\\
     \ket{\ell}_0 \ket{0}_1 \ket{j}_2 \ket{2}_3 \ket{b}_4 \ket{c_1}_5 \ket{c_2}_6 \ket{0}_7 \ket{0}_8 \ket{0}_9 \ket{0}_{10} & \text{otherwise}
    \end{cases} \\
    \xrightarrow{2-3} &  \begin{cases} \ket{\ell}_0 \ket{0}_1 \ket{j}_2 \ket{2}_3 \ket{b}_4 \ket{c_1}_5 \ket{c_2}_6 \ket{1}_7 \ket{0}_8 \ket{0}_9 \ket{0}_{10} & j \in G_\ell^{(c_1, c_2)}\\
     \ket{\ell}_0 \ket{0}_1 \ket{j}_2 \ket{2}_3 \ket{b}_4 \ket{c_1}_5 \ket{c_2}_6 \ket{0}_7 \ket{0}_8 \ket{0}_9 \ket{0}_{10} & \text{otherwise}
    \end{cases} \\
    \xrightarrow{\text{4-5}} &  \begin{cases} \ket{\ell}_0 \ket{0}_1 \ket{j}_2 \ket{2}_3 \ket{b}_4 \ket{c_1}_5 \ket{c_2}_6 \ket{1}_7 \ket{f_{ind,TD} (\ell,0,j,c_1)}_8 \ket{[2^B |\zeta_{j,f_{ind,TD}(\ell,0,j,c_1)} |]_B}_9 \ket{0}_{10} & j \in G_\ell^{(c_1, c_2)}\\
     \ket{\ell}_0 \ket{0}_1 \ket{j}_2 \ket{2}_3 \ket{b}_4 \ket{c_1}_5 \ket{c_2}_6 \ket{0}_7 \ket{0}_8 \ket{0}_9 \ket{0}_{10} & \text{otherwise}
    \end{cases} \\
    \xrightarrow{\text{6}} &  \begin{cases} \ket{\ell}_0 \ket{0}_1 \ket{j}_2 \ket{2}_3 \ket{b}_4 \ket{c_1}_5 \ket{c_2}_6 \ket{1}_7 \ket{f_{ind,TD} (\ell,0,j,c_1)}_8 \\ 
    \ket{[2^B |\zeta_{j,f_{ind,TD}(\ell,0,j,c_1)} |]_B}_9 \ket{1}_{10} & j \in G_\ell^{(c_1, c_2)}, b \geq [2^B |\zeta_{j,f_{ind,TD}(\ell,0,j,c_1)} |]_B\\
    \ket{\ell}_0 \ket{0}_1 \ket{j}_2 \ket{2}_3 \ket{b}_4 \ket{c_1}_5 \ket{c_2}_6 \ket{1}_7 \ket{f_{ind,TD} (\ell,0,j,c_1)}_8 \\
    \ket{[2^B |\zeta_{j,f_{ind,TD}(\ell,0,j,c_1)} |]_B}_9 \ket{0}_{10} & j \in G_\ell^{(c_1, c_2)}, b < [2^B |\zeta_{j,f_{ind,TD}(\ell,0,j,c_1)} |]_B\\
     \ket{\ell}_0 \ket{0}_1 \ket{j}_2 \ket{2}_3 \ket{b}_4 \ket{c_1}_5 \ket{c_2}_6 \ket{0}_7 \ket{0}_8 \ket{0}_9 \ket{1}_{10} & \text{otherwise}
    \end{cases} \\
    \xrightarrow{\text{7}} &  \begin{cases} e^{-i \frac{T}{r} \left( \frac{1}{2} \lambda_j(0) + \frac{1}{2} \lambda_j (1) + \sum_{\ell'=0}^{r-1} \lambda_j \left( \frac{\ell'}{r} \right) \right)} \ket{\ell}_0 \ket{0}_1 \ket{j}_2 \ket{2}_3 \ket{b}_4 \ket{c_1}_5 \ket{c_2}_6 \ket{1}_7 \\
    \ket{f_{ind,TD} (\ell,0,j,c_1)}_8 \ket{[2^B |\zeta_{j,f_{ind,TD}(\ell,0,j,c_1)} |]_B}_9 \ket{1}_{10} & j \in G_\ell^{(c_1, c_2)},\\
    & b \geq [2^B |\zeta_{j,f_{ind,TD}(\ell,0,j,c_1)} |]_B\\
    e^{-i \frac{T}{r} \left( \frac{1}{2} \lambda_j(0) + \frac{1}{2} \lambda_j (1) + \sum_{\ell'=0}^{r-1} \lambda_j \left( \frac{\ell'}{r} \right) \right)} \ket{\ell}_0 \ket{0}_1 \ket{j}_2 \ket{2}_3 \ket{b}_4 \ket{c_1}_5 \ket{c_2}_6 \ket{1}_7 \\
    \ket{f_{ind,TD} (\ell,0,j,c_1)}_8 \ket{[2^B |\zeta_{j,f_{ind,TD}(\ell,0,j,c_1)} |]_B}_9 \ket{0}_{10} & j \in G_\ell^{(c_1, c_2)},\\
    & b < [2^B |\zeta_{j,f_{ind,TD}(\ell,0,j,c_1)} |]_B\\
     \ket{\ell}_0 \ket{0}_1 \ket{j}_2 \ket{2}_3 \ket{b}_4 \ket{c_1}_5 \ket{c_2}_6 \ket{0}_7 \ket{0}_8 \ket{0}_9 \ket{1}_{10} & \text{otherwise}
    \end{cases} \\
    \xrightarrow{8-10} & \begin{cases} e^{-i \frac{T}{r} \left( \frac{1}{2} \lambda_j(0) + \frac{1}{2} \lambda_j (1) + \sum_{\ell'=0}^{r-1} \lambda_j \left( \frac{\ell'}{r} \right) \right)} e^{i \arg (\zeta_{j, f_{ind,TD}(\ell,0,j,c_1)})} \ket{\ell}_0 \ket{0}_1 \ket{j}_2 \ket{2}_3 \ket{b}_4 \\
     \ket{c_1}_5 \ket{c_2}_6 \ket{1}_7 \ket{f_{ind,TD} (\ell,0,j,c_1)}_8 \ket{[2^B |\zeta_{j,f_{ind,TD}(\ell,0,j,c_1)} |]_B}_9 \ket{1}_{10} & j \in G_\ell^{(c_1, c_2)},\\
    & b \geq [2^B |\zeta_{j,f_{ind,TD}(\ell,0,j,c_1)} |]_B\\
    e^{-i \frac{T}{r} \left( \frac{1}{2} \lambda_j(0) + \frac{1}{2} \lambda_j (1) + \sum_{\ell'=0}^{r-1} \lambda_j \left( \frac{\ell'}{r} \right) \right)} e^{i \arg (\zeta_{j, f_{ind,TD}(\ell,r,j,c_1)})} \ket{\ell}_0 \ket{0}_1 \ket{j}_2 \ket{2}_3 \ket{b}_4 \\
    \ket{c_1}_5 \ket{c_2}_6 \ket{1}_7 \ket{f_{ind,TD} (\ell,0,j,c_1)}_8 \ket{[2^B |\zeta_{j,f_{ind,TD}(\ell,0,j,c_1)} |]_B}_9 \ket{0}_{10} & j \in G_\ell^{(c_1, c_2)},\\
    & b < [2^B |\zeta_{j,f_{ind,TD}(\ell,0,j,c_1)} |]_B\\
     \ket{\ell}_0 \ket{0}_1 \ket{j}_2 \ket{2}_3 \ket{b}_4 \ket{c_1}_5 \ket{c_2}_6 \ket{0}_7 \ket{0}_8 \ket{0}_9 \ket{1}_{10} & \text{otherwise}
    \end{cases} \\
    \xrightarrow{11-12} & \begin{cases} (-1)^b e^{-i \frac{T}{r} \left( \frac{1}{2} \lambda_j(0) + \frac{1}{2} \lambda_j (1) + \sum_{\ell'=0}^{r-1} \lambda_j \left( \frac{\ell'}{r} \right) \right)} e^{i \arg (\zeta_{j, f_{ind,TD}(\ell,0,j,c_1)})} \ket{\ell}_0 \ket{0}_1 \ket{j}_2 \ket{2}_3 \ket{b}_4 \\
     \ket{c_1}_5 \ket{c_2}_6 \ket{1}_7 \ket{f_{ind,TD} (\ell,0,j,c_1)}_8 \ket{[2^B |\zeta_{j,f_{ind,TD}(\ell,0,j,c_1)} |]_B}_9 \ket{1}_{10} & j \in G_\ell^{(c_1, c_2)},\\
    & b \geq [2^B |\zeta_{j,f_{ind,TD}(\ell,0,j,c_1)} |]_B\\
    e^{-i \frac{T}{r} \left( \frac{1}{2} \lambda_j(0) + \frac{1}{2} \lambda_j (1) + \sum_{\ell'=0}^{r-1} \lambda_j \left( \frac{\ell'}{r} \right) \right)} e^{i \arg (\zeta_{j, f_{ind,TD}(\ell,r,j,c_1)})} \ket{\ell}_0 \ket{0}_1 \ket{j}_2 \ket{2}_3 \ket{b}_4 \\
    \ket{c_1}_5 \ket{c_2}_6 \ket{1}_7 \ket{f_{ind,TD} (\ell,0,j,c_1)}_8 \ket{[2^B |\zeta_{j,f_{ind,TD}(\ell,0,j,c_1)} |]_B}_9 \ket{0}_{10} & j \in G_\ell^{(c_1, c_2)},\\
    & b < [2^B |\zeta_{j,f_{ind,TD}(\ell,0,j,c_1)} |]_B\\
     (-1)^b \ket{\ell}_0 \ket{0}_1 \ket{j}_2 \ket{2}_3 \ket{b}_4 \ket{c_1}_5 \ket{c_2}_6 \ket{0}_7 \ket{0}_8 \ket{0}_9 \ket{1}_{10} & \text{otherwise}
    \end{cases} \\
    \xrightarrow{13} & \begin{cases} g_2 (\ell, j, c_1, b) e^{-i \frac{T}{r} \left( \frac{1}{2} \lambda_j(0) + \frac{1}{2} \lambda_j (1) + \sum_{\ell'=0}^{r-1} \lambda_j \left( \frac{\ell'}{r} \right) \right)} e^{i \arg (\zeta_{j, f_{ind,TD}(\ell,0,j,c_1)})} \ket{\ell}_0 \ket{0}_1 \ket{j}_2 \ket{2}_3 \ket{b}_4 \\
     \ket{c_1}_5 \ket{c_2}_6 \ket{1}_7 \ket{f_{ind,TD} (\ell,0,j,c_1)}_8 \ket{[2^B |\zeta_{j,f_{ind,TD}(\ell,0,j,c_1)} |]_B}_9 \ket{0}_{10} & j \in G_\ell^{(c_1, c_2)} \\
     (-1)^b \ket{\ell}_0 \ket{0}_1 \ket{j}_2 \ket{2}_3 \ket{b}_4 \ket{c_1}_5 \ket{c_2}_6 \ket{0}_7 \ket{0}_8 \ket{0}_9 \ket{0}_{10} & \text{otherwise}
    \end{cases} \\
    \xrightarrow{14-15} & \begin{cases} g_2 (\ell, j, c_1, b) e^{-i \frac{T}{r} \left( \frac{1}{2} \lambda_j(0) + \frac{1}{2} \lambda_j (1) + \sum_{\ell'=0}^{r-1} \lambda_j \left( \frac{\ell'}{r} \right) \right)} e^{i \arg (\zeta_{j, f_{ind,TD}(\ell,0,j,c_1)})} \ket{\ell}_0 \ket{0}_1 \ket{j}_2 \ket{2}_3 \ket{b}_4 \\
     \ket{c_1}_5 \ket{c_2}_6 \ket{1}_7 \ket{f_{ind,TD} (\ell,0,j,c_1)}_8 \ket{0}_9 \ket{0}_{10} & j \in G_\ell^{(c_1, c_2)} \\
     (-1)^b \ket{\ell}_0 \ket{0}_1 \ket{j}_2 \ket{2}_3 \ket{b}_4 \ket{c_1}_5 \ket{c_2}_6 \ket{0}_7 \ket{0}_8 \ket{0}_9 \ket{0}_{10} & \text{otherwise}
    \end{cases} \\
    \xrightarrow{16-17}& \begin{cases} g_2 (\ell, j, c_1, b) e^{-i \frac{T}{r} \left( \frac{1}{2} \lambda_j(0) + \frac{1}{2} \lambda_j (1) + \sum_{\ell'=0}^{r-1} \lambda_j \left( \frac{\ell'}{r} \right) \right)} e^{i \arg (\zeta_{j, f_{ind,TD}(\ell,0,j,c_1)})} \ket{\ell}_0 \ket{0}_1 \ket{j}_2 \ket{2}_3 \ket{b}_4 \\
     \ket{c_1}_5 \ket{c_2}_6 \ket{1}_7 \ket{0}_8 \ket{0}_9 \ket{0}_{10} & j \in G_\ell^{(c_1, c_2)} \\
     (-1)^b \ket{\ell}_0 \ket{0}_1 \ket{j}_2 \ket{2}_3 \ket{b}_4 \ket{c_1}_5 \ket{c_2}_6 \ket{0}_7 \ket{0}_8 \ket{0}_9 \ket{0}_{10} & \text{otherwise}
    \end{cases} \\
    \xrightarrow{18} & \begin{cases} g_2 (\ell, j, c_1, b) e^{-i \frac{T}{r} \left( \frac{1}{2} \lambda_j(0) + \frac{1}{2} \lambda_j (1) + \sum_{\ell'=0}^{r-1} \lambda_j \left( \frac{\ell'}{r} \right) \right)} e^{i \arg (\zeta_{j, f_{ind,TD}(\ell,0,j,c_1)})} \ket{\ell}_0 \ket{0}_1 \ket{j}_2 \ket{2}_3 \ket{b}_4 \\
     \ket{c_1}_5 \ket{c_2}_6 \ket{0}_7 \ket{0}_8 \ket{0}_9 \ket{0}_{10} & j \in G_\ell^{(c_1, c_2)} \\
     (-1)^b \ket{\ell}_0 \ket{0}_1 \ket{j}_2 \ket{2}_3 \ket{b}_4 \ket{c_1}_5 \ket{c_2}_6 \ket{0}_7 \ket{0}_8 \ket{0}_9 \ket{0}_{10} & \text{otherwise}
    \end{cases}
\end{align}

Thus, the algorithm described in \ref{alg:sel2} implements the Select operation defined in Def. \ref{def:sel2} on the input state $\ket{\ell}_0 \ket{0}_1 \ket{j}_2 \ket{p}_3 \ket{b}_4 \ket{c_1}_5 \ket{c_2}_6 \ket{0}_7 \ket{0}_8 \ket{0}_0 \ket{0}_{10}$. 

\begin{claim} \label{clm:lt_clm1}
    Let
    \begin{align}
        \widetilde{V} = \sum_{\ell=0}^r \sum_{b=0}^{2^B-1} \sum_{c_1=0}^{d-1} \sum_{c_2=0}^{d-1} \frac{1}{2^B} (\hat{X} \otimes \normalfont{\openone} ) \left( \frac{1}{(r+1)d^2} \widetilde{V}_{\ell, 0, b, c_1, c_2} + \widetilde{V}_{\ell, 1, b, c_1, c_2} + \widetilde{V}_{\ell, 2, b, c_1, c_2}  \right)
    \end{align}
Then,
    \begin{align}
        \| (\bra{0} \otimes \normalfont{\openone}) \widetilde{V} (\ket{0} \otimes \normalfont{\openone}) - \bar{V} \| \in O \left( \frac{d^2}{2^B} \right)
    \end{align}
\end{claim}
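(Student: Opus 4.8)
The plan is to mirror the argument of Claim~\ref{clm:ast} (proved in Appendix~\ref{sec:ast_full_proof}), treating the three jump sectors separately and recombining with the triangle inequality. Writing $\widetilde{V} = \widetilde{V}^{(0)} + \widetilde{V}^{(1)} + \widetilde{V}^{(2)}$ for the contributions of the $\widetilde{V}_{\ell,0,b,c_1,c_2}$, $\widetilde{V}_{\ell,1,b,c_1,c_2}$ and $\widetilde{V}_{\ell,2,b,c_1,c_2}$ operators respectively, and recalling $\bar{V} = \bar{V}_0 + \bar{V}_1 + \bar{V}_2$, the first step is to project with $\bra{0}\otimes\openone$ and $\ket{0}\otimes\openone$ onto the basis-indicator qubit. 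Exactly as in Appendix~\ref{sec:ast_full_proof}, only the $\ketbra{1}{0}$ pieces of each $\widetilde{V}_{\ell,p,b,c_1,c_2}$ survive after the $(\hat{X}\otimes\openone)$ flip and the projection, while the $\ketbra{0}{1}$, $\ketbra{1}{1}$ and $\ketbra{0}{0}$ pieces either vanish under the projection or are annihilated once the $\tfrac{1}{2^B}\sum_b(-1)^b$ sum is carried out. Throughout, the phases $\arg\eta_{j_0,j_1}(\ell)$, $\arg\zeta_{j_0,j_1}(\ell)$ and the eigenvalue and gap phases are implemented exactly by $O_{\eta P}$, $O_{\zeta P}$, $O_{TEP}$ and $O_\gamma$, so that only the magnitudes $|\eta|$ and $|\zeta|$ are approximated.

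First I would dispatch the zero-jump sector, which is exact. Since $\widetilde{V}_{\ell,0,b,c_1,c_2}$ carries only the eigenvalue phase and no magnitude that must be resolved by the alternating-sign trick, summing $\tfrac{1}{2^B}\sum_b 1 = 1$ over $b$ together with $\sum_{\ell=0}^{r}\sum_{c_1,c_2}1 = (r+1)d^2$ cancels the prefactor $\tfrac{1}{(r+1)d^2}$ exactly, giving $(\bra{0}\otimes\openone)\widetilde{V}^{(0)}(\ket{0}\otimes\openone) = \bar{V}_0$ with no error. The one-jump sector is then handled verbatim as in Claim~\ref{clm:ast}: only $\ell\in\{0,r\}$ contributes, because $\eta_{j_0,j_1}(\ell)=0$ otherwise and the $\ell\neq 0,r$ operators cancel under $\sum_b(-1)^b$; the alternating-sign construction produces an effective amplitude $\tfrac{1}{2^B}\left[2^B\,|\eta_{j_0,f_{ind,TD}(\ell,0,j_0,c_1)}(\ell)|\right]_B$ differing from $|\eta|$ by at most $\tfrac{2}{2^B}$; and the graph colouring guarantees that within each colour $(c_1,c_2)$ the map $j_0\mapsto f_{ind,TD}(\ell,0,j_0,c_1)$ is one-to-one. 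The resulting error operator is a sum over the $d^2$ colours of weighted partial permutations whose entries are bounded by $\tfrac{2}{2^B}$, so by the same supremum-over-unit-vectors estimate used in Appendix~\ref{sec:ast_full_proof} its spectral norm is at most $\tfrac{2d^2}{2^B}$.

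The two-jump sector is where the real work lies, and I expect it to be the main obstacle. Unlike $\bar{V}_1$, the operator $\bar{V}_2$ is diagonal and involves the Riemann sum $\sum_{\ell=0}^{r}\zeta_{j_0,j_1}(\ell)$, so each diagonal coefficient accumulates $B$-bit rounding errors from every $\ell$ and from every admissible $j_1$. The subtle point is that the alternating-sign trick encodes $|\zeta_{j_0,j_1}(\ell)|$ with an \emph{absolute} precision $\tfrac{2}{2^B}$, whereas $\zeta$ itself already carries the $\tfrac{1}{r}$ Riemann weight; consequently one must track carefully how the per-term error interacts with the sum over $\ell$ and the $d$-sparsity bound on the number of nonzero $j_1$. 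The plan is to fix a colour $(c_1,c_2)$, bound the diagonal error operator at each fixed $\ell$ in spectral norm by $\tfrac{2}{2^B}$ using that $j_0\mapsto f_{ind,TD}$ is one-to-one, and then sum these contributions over $\ell$ and over the $d^2$ colours. Assembling the three sectors by the triangle inequality then yields the bound $\|(\bra{0}\otimes\openone)\widetilde{V}(\ket{0}\otimes\openone) - \bar{V}\| \in O(d^2/2^B)$, where the honest accounting shows a factor of $(r+1)$ entering from the two-jump Riemann sum; the stated scaling is recovered once the precision $B$ is chosen large enough (scaling logarithmically with the number of time steps) to absorb this dependence.
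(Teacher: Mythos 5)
Your proposal follows the same route as the paper's own proof: split $\bar{V}$ into the three jump sectors, observe that the zero-jump sector is implemented exactly (the $\tfrac{1}{(r+1)d^2}$ prefactor cancels against the sum over $\ell, c_1, c_2$, and $\tfrac{1}{2^B}\sum_b 1 = 1$), handle the one-jump sector by the argument of Claim~\ref{clm:ast} restricted to $\ell \in \{0,r\}$, and recombine with the triangle inequality. Where you diverge is in the two-jump sector, and your extra care there is warranted: the paper's proof simply asserts that the Appendix~\ref{sec:ast_full_proof} argument yields $O(d^2/2^B)$ for $\bar{V}_2$ as well, but that argument does not account for the Riemann sum over $\ell$. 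Since the two-jump error operator is diagonal, the per-$(\ell,c_1,c_2)$ rounding errors --- each of absolute size up to $2/2^B$, independent of how small $\zeta_{j_0,j_1}(\ell)$ is --- can add coherently on a single $\ketbra{j}{j}$ coefficient, giving a worst-case bound of order $(r+1)d^2/2^B$. This is exactly the factor you flag, and it is one the paper itself concedes in the text preceding the claim (``the error in implementing $\bar{V}_2$ \ldots scales with $\frac{\ell d^2}{2^B}$'') but then drops in the claim statement and its proof. Your repair --- enlarging $B$ by $\log(r+1)$ so that the extra factor is absorbed --- is the natural one, costs only logarithmically many additional ancilla qubits and gates, and leaves the downstream argument (which already chooses $B \in \Theta(\log(\cdot))$) intact. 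In short, your proof is sound and in fact more honest than the paper's: as literally stated, the $O(d^2/2^B)$ bound holds for the zero- and one-jump sectors, but for the two-jump sector only after the $(r+1)$ factor is either written into the bound or absorbed into the choice of $B$.
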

\begin{proof}
The claim above is equivalent to Claim \ref{clm:ast} in the short-time Hamiltonian path integral algorithm, and can be proven in a very similar manner. We will highlight the differences in this proof, without repeating the parts that are effectively identical. Using the triangle inequality, we can break the error into the sum of errors from implementing each $\bar{V}_i$ term:
\begin{align}
    & \| (\bra{0} \otimes \normalfont{\openone}) \widetilde{V} (\ket{0} \otimes \normalfont{\openone}) - \bar{V} \| \\
    & \quad = \| (\bra{0} \otimes \normalfont{\openone}) \left( \sum_{\ell=0}^r \sum_{b=0}^{2^B-1} \sum_{c_1=0}^{d-1} \sum_{c_2=0}^{d-1} \frac{1}{2^B} (\hat{X} \otimes \normalfont{\openone} ) \left( \widetilde{V}_{\ell, 0, b, c_1, c_2} + \widetilde{V}_{\ell, 1, b, c_1, c_2} + \widetilde{V}_{\ell, 2, b, c_1, c_2} \right) \right) (\ket{0} \otimes \normalfont{\openone}) - (\bar{V}_0 + \bar{V}_1 + \bar{V}_2) \| \\
    & \quad \leq \| (\bra{0} \otimes \normalfont{\openone}) \left( \sum_{\ell=0}^r \sum_{b=0}^{2^B-1} \sum_{c_1=0}^{d-1} \sum_{c_2=0}^{d-1} \frac{1}{2^B} (\hat{X} \otimes \normalfont{\openone} ) \widetilde{V}_{\ell, 0, b, c_1, c_2} \right) (\ket{0} \otimes \normalfont{\openone}) - \bar{V}_0 \| \notag \\
    & \quad +  \| (\bra{0} \otimes \normalfont{\openone}) \left( \sum_{\ell=0}^r \sum_{b=0}^{2^B-1} \sum_{c_1=0}^{d-1} \sum_{c_2=0}^{d-1} \frac{1}{2^B} (\hat{X} \otimes \normalfont{\openone} ) \widetilde{V}_{\ell, 1, b, c_1, c_2} \right) (\ket{0} \otimes \normalfont{\openone}) - \bar{V}_1 \| \notag \\
    & \quad +  \| (\bra{0} \otimes \normalfont{\openone}) \left( \sum_{\ell=0}^r \sum_{b=0}^{2^B-1} \sum_{c_1=0}^{d-1} \sum_{c_2=0}^{d-1} \frac{1}{2^B} (\hat{X} \otimes \normalfont{\openone} ) \widetilde{V}_{\ell, 2, b, c_1, c_2} \right) (\ket{0} \otimes \normalfont{\openone}) - \bar{V}_2 \| \label{LT_error1}
\end{align}
$(\bra{0} \otimes \normalfont{\openone}) \left( \sum_{\ell,b,c_1,c_2} \frac{1}{2^B} (\hat{X} \otimes \normalfont{\openone} ) \widetilde{V}_{\ell, 0, b, c_1, c_2} \right) (\ket{0} \otimes \normalfont{\openone}) = \bar{V}_0$ exactly, so the first term is zero. As we saw in Section \ref{sec:ast_full_proof}, the error incurred from implementing the amplitudes using the bitstring encoding and alternating sign trick is $O \left( \frac{d^2}{2^B} \right)$. Thus, the error from the $\bar{V}_1$ and $\bar{V}_2$ terms (the second term and third terms in Eq. \eqref{LT_error1}) is $O \left( \frac{d^2}{2^B} \right)$. Thus, the entire expression in Eq. \eqref{LT_error1} is $O \left( \frac{d^2}{2^B} \right)$.
\end{proof}
\begin{claim}
    Let
    \begin{align}
        \hat{W} = \mathrm{PREP}^\dagger (r_0, r_3, r_4, r_5, r_6) \; \mathrm{SEL} (r_0, r_1, r_3, r_4, r_5, r_6) \; \mathrm{PREP} (r_0, r_3, r_4, r_5, r_6).
    \end{align}
    Let $\hat{\Pi} := (\ketbra{0}{0})^{\otimes (\log (r+1) + 2 + B + 2 \log d)}$ and $\hat{R} := - (\normalfont{\openone} - 2 \hat{W} \hat{\Pi} \hat{W}^\dagger)$. Then,
    \begin{align}
        & \Bigg\| \left( \hat{\Pi} \hat{R}^p \hat{W} \hat{\Pi} \right) \ket{0}^{\otimes (\log (r+1) + 2 + B + 2 \log d)} \ket{\psi} \notag \\
        & \quad - ( \ketbra{0}{0} )^{\otimes (\log (r+1) + 2 + B + 2 \log d)} \otimes \left( \hat{B}_2^\dagger \mathcal{T} e^{-i T \int_0^1 \hat{H}(s) \mathrm{d} s} \hat{B}_1 \right) \ket{0}^{\otimes (\log (r+1) + 2 + B + 2 \log d)} \ket{\psi} \Bigg\| \nonumber \\
        & \quad \in O \left( \frac{\Gamma^4}{\gamma_{\min}^2 T^2} \right),
    \end{align}
    where $\hat{B}_1$ and $\hat{B}_2$ are change of basis operators from the computational basis to the $\hat{H}(0)$ and to the $\hat{H}(T)$ bases, respectively.
\end{claim}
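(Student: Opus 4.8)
The plan is to establish correctness in the same manner as the short-time Theorem of Section~\ref{sec:HamPathInt}, decomposing the total error through the triangle inequality into a chain of four approximations and bounding each separately. Writing $N_{\rm anc} := \log(r+1) + 2 + B + 2\log d$ for the number of ancilla qubits projected by $\hat{\Pi}$ and $\hat{U}_{\rm ex} := \hat{B}_2^\dagger \mathcal{T} e^{-iT\int_0^1 \hat{H}(s)\,\mathrm{d}s}\,\hat{B}_1$ for the exact evolution expressed in the computational basis, I would first invoke the linear-combination-of-unitaries construction (the analogue of Lemma~\ref{lemma:lcu}, verified termwise by the PREP/SEL algorithms~\ref{alg:prep2} and~\ref{alg:sel2}) to show that $\hat{W}$ block-encodes the operator $\widetilde{V}$ of Claim~\ref{clm:lt_clm1}, so that the projected operator $(\bra{0}\otimes\openone)\widetilde{V}(\ket{0}\otimes\openone)$ is the object whose accuracy we must track.

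The central estimate is that $(\bra{0}\otimes\openone)\widetilde{V}(\ket{0}\otimes\openone)$ is $\delta$-close to the \emph{unitary} $\hat{U}_{\rm ex}$, with
\begin{equation}
\delta \in O\!\left(\frac{d^2}{2^B} + \frac{T\Lambda}{r^3} + \frac{\Gamma^4}{\gamma_{\min}^2 T^2}\right).
\end{equation}
I would obtain this by chaining three bounds: Claim~\ref{clm:lt_clm1} gives $\|(\bra{0}\otimes\openone)\widetilde{V}(\ket{0}\otimes\openone) - \bar{V}\|\in O(d^2/2^B)$; the trapezoidal Riemann-sum estimate of Appendix~\ref{sec:LTPIAppendix} gives $\|\bar{V} - {\rm Trunc}(\mathcal{T} e^{-iT\int\hat{H}})\|\in O(T\Lambda/r^3)$; and Lemma~\ref{lemma:integration}, together with the truncation bound of Lemma~\ref{lemma:Cbd}, gives $\|{\rm Trunc}(\mathcal{T} e^{-iT\int\hat{H}}) - \mathcal{T} e^{-iT\int\hat{H}}\|\in O(\Gamma^4/\gamma_{\min}^2 T^2)$, all transported to the computational basis by $\hat{B}_1,\hat{B}_2$. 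Since $\mathcal{T} e^{-iT\int\hat{H}}$ is exactly unitary and conjugation by the change-of-basis operators preserves unitarity, $\hat{U}_{\rm ex}$ is unitary and the $\delta$-closeness to a genuine unitary is established. I would then apply robust oblivious amplitude amplification (Lemma~\ref{lemma:roaa_lcu}, in the form adapted in Appendix~\ref{sec:roaa_appendix}) with this $\delta$ and the iteration count $p$ fixed by $\sin(\pi/(2(2p+1))) = 1/a$ for the one-norm $a$ determined by the PREP weights, concluding that $\hat{\Pi}\hat{R}^p\hat{W}\hat{\Pi}$ reproduces $(\ketbra{0}{0})^{\otimes N_{\rm anc}}\otimes\hat{U}_{\rm ex}$ on $\ket{0}^{\otimes N_{\rm anc}}\ket{\psi}$ up to error $O(\delta)$.

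Finally I would fix the free parameters to collapse $\delta$ to the advertised rate: choosing $B\in\Theta(\log(\gamma_{\min}^2 T^2 d^2/\Gamma^4))$ forces $d^2/2^B\in O(\Gamma^4/\gamma_{\min}^2 T^2)$, and choosing $r$ large enough that $T\Lambda/r^3\in O(\Gamma^4/\gamma_{\min}^2 T^2)$ renders the Riemann-sum term subdominant as well, leaving the truncation error $O(\Gamma^4/\gamma_{\min}^2 T^2)$ as the dominant contribution, exactly the claimed bound.

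The main obstacle I anticipate is the verification that the block-encoded operator is genuinely $\delta$-close to a unitary rather than merely close to $\bar{V}$: robust oblivious amplitude amplification only tolerates non-unitarity that is uniformly small in operator norm, so one must argue that the deviation of $\bar{V}$ from unitarity is itself controlled by the same $O(\Gamma^4/\gamma_{\min}^2 T^2 + T\Lambda/r^3)$ quantities, which follows because $\bar{V}$ differs from $\hat{U}_{\rm ex}$ only through the truncation and discretization errors already bounded above. A secondary subtlety is the bookkeeping of $\hat{B}_1,\hat{B}_2$: the operator $\bar{V}$ is defined on computational-basis labels standing for the instantaneous eigenvectors at times $0$ and $T$, so one must confirm that the SEL transitions $\ketbra{f_{ind,TD}(\ell,0,j,c_1)}{j}$ compose correctly across the three jump-sectors $\bar{V}_0,\bar{V}_1,\bar{V}_2$, so that their coherent LCU sum reproduces $\hat{B}_2^\dagger \mathcal{T} e^{-iT\int\hat{H}}\hat{B}_1$ and not a permuted variant.
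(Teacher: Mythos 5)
Your proposal follows essentially the same route as the paper's own proof: the identical triangle-inequality decomposition into the rounding error $O(d^2/2^B)$ from Claim~\ref{clm:lt_clm1}, the trapezoidal Riemann-sum error $O(T\Lambda/r^3)$, and the truncation error $O(\Gamma^4/\gamma_{\min}^2 T^2)$ from Lemma~\ref{lemma:integration}, followed by the same parameter choices $B \in \Theta(\log(\gamma_{\min}^2 T^2 d^2/\Gamma^4))$ and $r \in \Theta\bigl(T\sqrt[3]{\gamma_{\min}^2\Lambda/\Gamma^4}\bigr)$, and the same invocation of robust oblivious amplitude amplification (Lemma 6 of~\cite{bck2015}) against the genuine unitary $\hat{B}_2^\dagger \mathcal{T}e^{-iT\int_0^1 \hat{H}(s)\,\mathrm{d}s}\hat{B}_1$. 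Your anticipated obstacles (closeness to a true unitary rather than merely to $\bar{V}$, and the $\hat{B}_1,\hat{B}_2$ bookkeeping) are handled in the paper exactly as you suggest, so the proposal is correct.
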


\begin{proof}
    From Eq. \eqref{eq:lt_error2}, we have
    \begin{align}
        \left\| \mathcal{T} e^{-i T \int_0^1 \hat{H}(s) \mathrm{d} s} - \bar{U} \right\| \in O \left( \frac{\Gamma^4}{\gamma^2_{\rm min} T^2} + \frac{T}{r^3} \Lambda\right), \label{eq:lt_clm_proof1}
    \end{align}
    where
    \begin{align}
        \Lambda := \max_j ( \max | \lambda_{j}''(s)| ),
    \end{align}
    so
    \begin{align}
        \left\| \left( \hat{B}_2^\dagger \mathcal{T} e^{-i T \int_0^1 \hat{H}(s) \mathrm{d} s} \hat{B}_1 \right)  - \bar{V} \right\| \in O \left( \frac{\Gamma^4}{\gamma^2_{\rm min} T^2} + \frac{T}{r^3} \Lambda \right), \label{eq:lt_clm_proof2}
    \end{align}
    since \eqref{eq:lt_clm_proof2} is just \eqref{eq:lt_clm_proof1} in a different basis. By Claim \ref{clm:lt_clm1}, the rounding error from approximating the amplitudes in $B$ bits is $O \left( \frac{d^2}{2^B} \right)$. Thus, using the triangle inequality, we have
    \begin{align}
         & \quad \left\| \hat{B}_2^\dagger \mathcal{T} e^{-i T \int_0^1 \hat{H}(s) \mathrm{d} s} \hat{B}_1 - (\bra{0} \otimes \normalfont{\openone}) \widetilde{V} (\ket{0} \otimes \normalfont{\openone}) \right\| \\
         &= \left\| \hat{B}_2^\dagger \mathcal{T} e^{-i T \int_0^1 \hat{H}(s) \mathrm{d} s} \hat{B}_1 - \bar{V}  - \left( (\bra{0} \otimes \normalfont{\openone}) \widetilde{V} (\ket{0} \otimes \normalfont{\openone}) - \bar{V} \right) \right\| \\
         &\leq \left\| \hat{B}_2^\dagger \mathcal{T} e^{-i T \int_0^1 \hat{H}(s) \mathrm{d} s} \hat{B}_1 - \bar{V} \right\| + \left\| \left( (\bra{0} \otimes \normalfont{\openone}) \widetilde{V} (\ket{0} \otimes \normalfont{\openone}) - \bar{V} \right) \right\|
    \end{align}
    Since $\left\| \hat{B}_2^\dagger \mathcal{T} e^{-i T \int_0^1 \hat{H}(s) \mathrm{d} s} \hat{B}_1 - \bar{V} \right\| \in O \left( \frac{\Gamma^4}{\gamma^2_{\rm min} T^2} + \frac{T}{r^3} \Lambda \right) $ and $\left\| \left( (\bra{0} \otimes \normalfont{\openone}) \widetilde{V} (\ket{0} \otimes \normalfont{\openone}) - \bar{V} \right) \right\| \in O \left( \frac{d^2}{2^B} \right)$, if we choose
    \begin{align}
        r \in \Theta \left( T \sqrt[3]{\frac{\gamma_{\min}^2\Lambda}{\Gamma^4}} \right)
    \end{align}
    and
    \begin{align}
        B \in \Theta \left( \log \left( \frac{\gamma_{\min}^2 T^2 d^2}{\Gamma^4}\right) \right),
    \end{align} then the overall combined error from the truncation, rounding and Riemann sum is
    \begin{align}
        \left\| \hat{B}_2^\dagger \mathcal{T} e^{-i T \int_0^1 \hat{H}(s) \mathrm{d} s} \hat{B}_1 - (\bra{0} \otimes \normalfont{\openone}) \widetilde{V} (\ket{0} \otimes \normalfont{\openone}) \right\| \in O \left( \frac{\Gamma^4}{\gamma_{\min}^2 T^2} \right).
    \end{align} Then, the linear combination of unitaries $\widetilde{V} = \sum_{\ell=0}^r \sum_{p=0}^2 \sum_{b=0}^{2^B-1} \sum_{c_1=0}^{d-1} \sum_{c_2=0}^{d-1} \frac{1}{2^B} (\hat{X} \otimes \normalfont{\openone} ) \widetilde{V}_{\ell, p, b, c_1, c_2}.$ is $\delta$-close to the exact time evolution $\hat{B}_2^\dagger \mathcal{T} e^{-i T \int_0^1 \hat{H}(s) \mathrm{d} s} \hat{B}$ where $\delta \in O \left( \frac{\Gamma^4}{\gamma_{\min}^2 T^2} \right)$.

    By Lemma 6 from \cite{bck2015}, if we have some linear combination of unitaries $\widetilde{V}$ that is within error $\delta$ of a unitary (in this case the unitary is the exact time evolution operator $\hat{B}_2^\dagger \mathcal{T} e^{-i T \int_0^1 \hat{H} (s) \mathrm{d} s} \hat{B}_1$), we can approximate $\widetilde{V}$ to within error $\delta$ by applying $\hat{\Pi} \hat{R}^p \hat{W} \hat{\Pi}$, where $p \in \mathbb{N}$ such that $\sin \left( \frac{\pi}{2 (2p+1)} \right) = \frac{1}{s}$ for $s$ greater than the sum of the absolute value of the coefficients in the linear combination of unitaries. In this case, we have $s \in \Theta \left( d^2 r \right) = \Theta \left( d^2 T \sqrt[3]{\frac{\gamma_{\min}^2 \Lambda}{\Gamma^4}} \right)$. The robust oblivious amplitude amplification implements an operator that is equal to the linear combination of unitaries $\widetilde{V}$ within error $O \left( \frac{\Gamma^4}{\gamma_{\min}^2 T^2} \right)$, and $\widetilde{V}$ is equal to the exact time evolution operator $\hat{B}_2^\dagger \mathcal{T} e^{-i T \int_0^1 \hat{H}(s) \mathrm{d}s} \hat{B}$ up to error $O \left( \frac{\Gamma^4}{\gamma^2_{\rm min} T^2} \right)$, so the robust oblivious amplitude amplification algorithm implements $\hat{B}_2^\dagger \mathcal{T} e^{-i T \int_0^1 \hat{H}(s) \mathrm{d}s} \hat{B}$ to within error $O \left( \frac{\Gamma^4}{\gamma_{\min}^2 T^2} \right)$.

\end{proof}
\begin{theorem}
    Let $\hat{H}(t)$ be a time-dependent Hamiltonian satisfying the $d$-sparsity definition in Def. \ref{def:dsparse2}, with a minimum gap of $\gamma_{\min}$ between any two eigenvalues, and let $\Gamma := O \left( \frac{\max \{ \| \dot{\hat{H}} \|, \| \ddot{\hat{H}} \|, \| \dddot{\hat{H}} \| \} }{\gamma_{\min}} \right)$. Then, we can approximately implement the time evolution operator $\mathcal{T} e^{-i T \int_0^1 \hat{H}(s) \mathrm{d} s}$ up to error $\epsilon$ for $T \in \Theta  \left( \frac{\Gamma^2}{\gamma_{\min} \sqrt{\epsilon}} \right)$ using $O \left( \frac{d^2}{\sqrt{\epsilon}} \sqrt[3]{\frac{\Gamma^2 \Lambda}{\gamma_{\min}}} \right)$ queries to the oracles defined earlier.
\end{theorem}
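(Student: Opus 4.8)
The plan is to assemble the error and query bounds established earlier in this appendix and then optimize the free parameters $r$ (number of time steps) and $B$ (bits of precision) against the target error $\epsilon$. First I would fix the truncation order: by Lemma~\ref{lemma:integration} the time-ordered exponential agrees with its expansion over zero-, one-, and returning two-jump paths up to spectral-norm error $O(\Gamma^4/(\gamma_{\min}^2 T^2))$, and Lemma~\ref{lemma:Cbd} guarantees the discarded contributions $\sum_{m\ge 3}\|C_m\|$ are of the same or smaller order. Requiring this truncation error to be at most $\epsilon$ forces $T \in \Omega(\Gamma^2/(\gamma_{\min}\sqrt{\epsilon}))$, so I take $T \in \Theta(\Gamma^2/(\gamma_{\min}\sqrt{\epsilon}))$, which saturates the bound and fixes the asymptotic relation between $T$ and $\epsilon$.

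Second, I would control the discretization. Replacing the integrals in \eqref{eq:TDOperator} by their trapezoidal Riemann sums produces $\bar U$, equivalently $\bar V$ in the computational-basis encoding, with Riemann error $O(T\Lambda/r^3)$ where $\Lambda := \max_j \max_s |\lambda_j''(s)|$. Balancing $T\Lambda/r^3$ against the truncation term $\Gamma^4/(\gamma_{\min}^2 T^2)$ yields the choice $r \in \Theta(T\sqrt[3]{\gamma_{\min}^2\Lambda/\Gamma^4})$. Then, using the graph-colouring scheme of Section~\ref{sec:graphcolouring} with the alternating-sign trick, $\bar V$ is written as the linear combination of unitaries $\widetilde V$ of Claim~\ref{clm:lt_clm1}, whose block-encoded restriction $(\bra{0}\otimes\openone)\widetilde V(\ket{0}\otimes\openone)$ differs from $\bar V$ by $O(d^2/2^B)$; choosing $B \in \Theta(\log(\gamma_{\min}^2 T^2 d^2/\Gamma^4))$ makes this rounding negligible, so by the triangle inequality $\widetilde V$ is $\delta$-close to the exact basis-changed propagator with $\delta \in O(\Gamma^4/(\gamma_{\min}^2 T^2))$.

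Third, I would invoke robust oblivious amplitude amplification (Lemma~\ref{lemma:roaa_lcu}, exactly as in the preceding claim) to deamplify the $1/s$ block-encoding factor, where $s$ is the coefficient sum of $\widetilde V$. The crucial bookkeeping is that the PREP state spreads weight over the $\Theta(r)$ time slices, the $d^2$ colours, and the three jump sectors with the stated normalizations, giving $s \in \Theta(d^2 r)$. ROAA therefore uses $O(s)=O(d^2 r)$ applications of $\hat W$, and by Lemma~\ref{lemma:lt_prepsel} each $\hat W$ costs only a constant number of oracle queries, so the query complexity is $O(d^2 r) = O(d^2 T\sqrt[3]{\gamma_{\min}^2\Lambda/\Gamma^4})$. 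Substituting $T \in \Theta(\Gamma^2/(\gamma_{\min}\sqrt{\epsilon}))$ and collecting powers of $\Gamma$, $\gamma_{\min}$, and $\Lambda$ collapses this to $O\bigl(\frac{d^2}{\sqrt{\epsilon}}\sqrt[3]{\Gamma^2\Lambda/\gamma_{\min}}\bigr)$, the claimed bound.

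The main obstacle is the simultaneous optimization: three distinct error sources (jump truncation, Riemann discretization, and amplitude rounding) must be balanced against one another with a single consistent choice of $(T,r,B)$, and the final query count depends on $r$ only through the coefficient sum $s$ rather than through a product-formula depth. Verifying $s = \Theta(d^2 r)$ from the specific PREP normalization, and confirming that the ROAA error (itself $O(d^2/2^B)$ in Lemma~\ref{lemma:roaa_lcu}) never dominates the truncation error after the parameter choices are made, is the part requiring genuine care; the concluding substitution of $T$ into the query count is then routine algebra.
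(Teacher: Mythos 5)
Your proposal is correct and follows essentially the same route as the paper: the same truncation bound from Lemma~\ref{lemma:integration}, the same Riemann-sum and rounding balancing giving $r \in \Theta\bigl(T\sqrt[3]{\gamma_{\min}^2\Lambda/\Gamma^4}\bigr)$ and $B \in \Theta\bigl(\log(\gamma_{\min}^2 T^2 d^2/\Gamma^4)\bigr)$, and the same ROAA count $O(s) = O(d^2 r)$ with constant oracle cost per \normalfont{PREP}/\normalfont{SEL} via Lemma~\ref{lemma:lt_prepsel}, followed by substituting $T \in \Theta\bigl(\Gamma^2/(\gamma_{\min}\sqrt{\epsilon})\bigr)$. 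The only difference is organizational: the paper distributes the first two stages of your argument across the preceding claims of Appendix~\ref{sec:LTPIAppendix} and keeps only the ROAA bookkeeping in the theorem's proof, whereas you reassemble the full chain in one place.
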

\begin{proof}
dEach iteration of $\hat{R}$ in the ROAA process requires a constant number of PREP and SEL operations and their inverses to implement, and each iteration of PREP and SEL requires a constant number of oracle queries by Lemma \ref{lemma:lt_prepsel}, so the overall number of queries required is $O (p) \subseteq O (s) \subseteq O \left( d^2 T \sqrt[3]{\frac{\gamma_{\min}^2 \Lambda}{\Gamma^4}} \right)$. Since ROAA implements the exact operator up to $O \left( \frac{\Gamma^4}{\gamma_{\min}^2 T^2} \right)$ error, if we have some overall allowed error tolerance $\epsilon$, we will need the evolution time to be at least $\Omega \left( \frac{\Gamma^2}{\gamma_{\min} \sqrt{\epsilon}} \right)$. If we have $T \in \Theta \left( \frac{\Gamma^2}{\gamma_{\min} \sqrt{\epsilon}} \right)$, then the query complexity is $O \left( \frac{d^2}{\sqrt{\epsilon}} \sqrt[3]{\frac{\Gamma^2 \Lambda}{\gamma_{\min}}} \right)$.
\end{proof}

\section{Conventions and Notation}
We use natural units where $\hbar = 1$ throughout, except for Section \ref{sec:lagrange} where the $\hbar$ is explicitly written. All logarithms are taken base 2. $\| \cdot \|_p$ for a vector denotes the $p$-norm, and $\| \cdot \|_p$ for an operator denotes the induced $p$-norm. $\oplus$ denotes bitwise addition. $\mathcal{H}_d$ denotes the complex $d$-dimensional Hilbert space of $\log d$ qubits ($\hat{H}_d \cong \mathbb{C}^d$). $\mathcal{L}(\mathcal{H})$ denotes a linear operator on the Hilbert space $\mathcal{H}$. All states written as $\ket{x}$ or $\bra{x}$ are assumed to be normalized. $\openone$ refers to the identity operator acting on an appropriately sized space. If the definition of an operator only specifies its action on part of the full Hilbert space, it is assumed that it acts as the identity on the rest of the space. All products of operators written with the product symbol $\prod$ are assumed to be time-ordered, i.e.,
\begin{align}
    \prod_{\ell=0}^{L-1} \hat{U}_\ell := \hat{U}_{L-1} \hat{U}_{L-2} \cdots \hat{U}_0,
\end{align}
and operators in the opposite order will be written as
\begin{align}
    \prod_{\ell=L-1}^{0} \hat{U}_\ell := \hat{U}_0 \hat{U}_{2} \cdots \hat{U}_{L-1}.
\end{align}
The time-ordered exponential of a time-dependent operator $\hat{A}(t)$ is defined as
\begin{align}
    \mathcal{T} e^{-i \int_0^T A(t) t} := \lim_{r \rightarrow \infty} \prod_{k=0}^{r} e^{-i \hat{A} \left( \frac{kT}{r} \right) \frac{T}{r} }.
\end{align}
The time derivative of a function $x(t)$ is denoted as $\dot{x}(t) := \frac{\partial x(t)}{\partial t}$, with more dots denoting higher order time derivatives ($\ddot{x}(t) := \frac{\partial^2 x(t)}{\partial t^2}$, etc.).

\end{document}